\newcommand\tsup[2][2]{%
	\def\useanchorwidth{T}%
	\ifnum#1>1%
	\stackon[-.5pt]{\tsup[\numexpr#1-1\relax]{#2}}{\scriptscriptstyle\sim}%
	\else%
	\stackon[.5pt]{#2}{\scriptscriptstyle\sim}%
	\fi%
}
\providecommand{\keywords}[1]{{\textit{Keywords---}} #1}
\providecommand{\JELclassification}[1]{{\textit{JEL---}} #1}
\theoremstyle{plain}
\newtheorem{thm}{\protect\theoremname}
  \theoremstyle{plain}
  \newtheorem{assumption}{\protect\assumptionname}[section]
  \theoremstyle{plain}
  \newtheorem{prop}{\protect\propositionname}[section]
  \theoremstyle{plain}
  \newtheorem{lem}{\protect\lemmaname}[section]
  \theoremstyle{plain}
  \newtheorem{property}{Property}
  \theoremstyle{plain}
  \newtheorem{Corollary}{Corollary}[section]
\newtheorem{remark}{Remark}[section]
\newtheorem{definition}{Definition}[section]
  \providecommand{\assumptionname}{Assumption}
  \providecommand{\lemmaname}{Lemma}
  \providecommand{\propositionname}{Proposition}
\providecommand{\theoremname}{Theorem}
\renewcommand\thmcontinues[1]{Continued}
\numberwithin{equation}{section}
\declaretheorem[style=plain]{example}
\begin{document}
	\title{Extending Economic Models with Testable Assumptions: Theory and Applications}
	\author[1]{Moyu Liao \thanks{ I would like to thank Marc Henry for his invaluable advice and encouragement. I also thank Andres Aradillas-Lopez and Keisuke Hirano, Michael Gechter, Patrik Guggenberger, Sun Jae Jun, and Joris Pinkse for their useful comments.}}
	\affil[1]{Nanjing University}
	\maketitle

	\begin{abstract}
		\begin{spacing}{1.2}
		{This paper studies the identification and hypothesis testing in complete and incomplete economic models with testable assumptions. A testable assumption ($A$) gives interpretable empirical content to the economic model but it also carry the possibility that some distributions of observed outcomes may reject these assumptions. A way to avoid the data rejection problem is to find a relaxed assumptions ($\tilde{A}$) that cannot be rejected by any distribution of observed outcomes. We also want the identified set for the parameter of interest under $\tilde{A}$ is not changed when the original assumption $A$ is not rejected by the observed data distribution. I characterize the properties of such a relaxed assumption $\tilde{A}$ using a generalized notion of refutability and confirmability. I also propose a general method to construct such $\tilde{A}$. I apply my methodology to the instrument monotonicity assumption in Local Average Treatment Effect (LATE) estimation and to the sector selection assumption in a binary outcome Roy model of employment sector choice.  In the LATE application, I use my general method to construct a relaxed assumption $\tilde{A}$ that can never be rejected, and the identified set for LATE is unchanged when $A$ holds. LATE is point identified under my extension $\tilde{A}$ in the application. In the binary outcome Roy model, I use my method to relax Roy's sector selection assumption and characterize the identified set for the binary potential outcomes as a polyhedron.}
		\end{spacing}
	
	\keywords{\textit{Incomplete Models; Refutability;  LATE; Roy Model }}

	\JELclassification{ C12, C13, C18, C51, C52}

	\end{abstract}

	\pagebreak
	
	\section{Introduction}
		Empirical researchers often make convenient model assumptions in structural estimation. These assumptions usually come from economic theories or intuitions. For example, the `No Defiers' assumption in \citet{IA1994} assumes that the instrument has a monotone effect on the decision to take treatment; the `Pure Strategy Nash Equilibrium' assumption in \citet{BresnahanReiss1991} assumes only pure strategy Nash Equilibrium is played in a $2\times 2$ entry game;  the `Perfect Self Selection' assumption in \citet{Roy1951}  assumes employees perfectly observe their future earnings and choose a job sector to maximize discounted lifetime earnings. Such assumptions simplify the identification and estimation problems, and make the results easier to interpret. To study structural models and economic assumptions, I generalize the language of econometric structures \citep{koopmans1950identification} to incomplete structures. A (generalized) econometric structure includes a distribution of some exogenously given latent and observed variables, and a correspondence from the distribution of these exogenous variables to a set of distributions of observed variables. Assumptions are restrictions on the economic structures to reflect empirical researchers' understanding of the economic environment.
		
	 Unfortunately, assumptions in the three examples above, when combined with some other reasonable assumptions, can be rejected by some distributions of observables  \citep{kitagawa2015,mourifie2017testing,mourifie2018roy}. When the imposed assumption is refuted by data, the econometrician have an empty identified set for the parameter of interest. As a result, the econometrician cannot give a useful interpretation of the economic environment. A refutable assumption $A$ also imposes challenges to the interpretation of hypothesis testing of parameter values. For example, when we reject the null hypothesis that the parameter equals zero under $A$, it can either be the case that the true parameter equals another value, or the case that $A$ is rejected data. These two cases cannot be distinguished but they have quite different interpretations. 
		
	A way to prevent the data rejection problem is to find a relaxed assumption $\tilde{A}$ so that no distributions of observables can reject $\tilde{A}$. We call this the non-refutability criterion. By imposing a non-refutable $\tilde{A}$ before confronting the data, practitioners avoid the ex-post possibility of finding data evidence against their assumption. Therefore, non-refutability is the first criterion for the relaxed assumption $\tilde{A}$ to satisfy. On the other hand, we also do not want to deviate from the old assumption $A$, since it still reflects the economic theory behind it. Specifically, given a parameter of interest $\theta$ as a function of structures, we want the (sharp) identified set under the relaxed assumption $\tilde{A}$ to be equal to the (sharp) identified set under $A$, when $A$ is not rejected by the observed distribution. In other words, we want to preserve the identified set.  
		
	This paper aims to do three things: First, I formalize and extend the definition of refutability and confirmability of assumptions in \citet{breusch1986} using the language of generalized economic structures. These definitions are useful to characterize properties of a relaxed assumption $\tilde{A}$.  I characterize conditions that a relaxed assumption $\tilde{A}$ needs to satisfy so that: (a).No distribution of observables can be rejected under $\tilde{A}$; (b).The identified sets for the parameter of interest under $A$ and $\tilde{A}$  are the same whenever $A$ is not rejected by the observed data distribution.  I show that when structures are complete, a relaxed assumption $\tilde{A}$ that satisfies the two properties above always exists. I also characterize a sufficient condition for the existence of $\tilde{A}$ in case of incomplete structures. The possible failure to find $~\tilde{A}$ in incomplete structures encourages researchers to complete the structures, and then find a nice relaxed assumption in the completed structure universe. When the structures are complete, I also provide a general method to construct $\tilde{A}$ from $A$. 
	
	Second, I discuss the problem of testing hypothesis on structural parameters' values. I show that any statistical test cannot achieve pointwise size control and test consistency simultaneously when the null hypothesis of structural parameters does not induce a partition of the space of distribution of observables. This is an ill-behaved null hypothesis, and policy decisions based on the result of hypothesis testing can be problematic. Conversely, with a well-behaved null hypothesis, which induces a partition of the space of distribution of observables, I show the existence of statistical tests that achieve pointwise size control and test consistency simultaneously under mild conditions. I also show that when the parameter of interest $\theta$ is point identified, a null hypothesis on the value of $\theta$ is always well-behaved. When structures are complete, I also provide a way to minimally extend  (resp. shrink) the null hypothesis set such that the extend (resp. shrink) hypothesis is well-behaved.
	
	Third, I look at 2 applications with complete and incomplete 
	structures respectively. In the complete structure framework, I look at the identification of the local average treatment effects (LATE). \citet{kitagawa2015} provides the sharp testable implication of the Imbens and Angrist Monotonicity assumption (IA-M). Therefore, practitioners should anticipate the IA-M to be rejected by some distributions of observables. I provide several relaxations of the IA-M that cannot be rejected by any distribution of observables. The identified sets for LATE under these relaxed assumptions equal the identified set for LATE under the IA-M whenever the IA-M is not rejected by the distributions of observables. One relaxed assumption allows for defiers, and relaxes the independent instrument assumption. The logic of the relaxed assumption is to allow a minimal mass of defiers. It can be shown that the relaxed assumption not only preserves the identified set for LATE, but also preserves the identified set for other parameters of interests such as ATE or ATT. LATE is point identified under this relaxed assumption. I provide an estimator of the LATE which has a normal limit distribution. {To emphasis the fact that a non-refutable relaxation is not unique, I propose two other relaxations, each one of which may be preferable in some contexts.} I apply the method to \citet{card1993using}, and I show the local average treatment effect of education on earnings. Compared to naively using identification result under the IA-M assumption, my method delivers more reasonable sign and scale for the LATE estimates.
	
	For incomplete structures, I look at a binary outcome job sector selection model with a monotone instrument. In the sector choice model,  the Roy assumption does not specify the sector choice rule in case of ties, which may lead to multiple predicted distributions of observables.  After completing the structures, I then use a `minimal efficiency loss' criterion to characterize the relaxed assumption. The identified set of job sector potential outcome distribution can be characterized as a polyhedron. 

	\paragraph{Related Literature}  \citet{masten2018}
	propose an ex-post way to salvage a refutable assumption $A$. Their ex-post method characterizes a relaxation of $A$ after the distribution of observables is realized. This paper also relates to the literature that relaxes assumption make model robust to misspecification. In the macroeconomic literature, researchers use robust control to avoid the misspecification issue in their baseline model (see \citet{hansen2006robust} and \citet{hansen2007recursive}). The Robust control approach aims to accommodate local perturbations to the baseline model rather than to solve the refutability of the baseline model.\footnote{The perturbation is usually measured by relative entropy the in macroeconomic literature.} It may be true that the baseline model is not refutable by any data distribution. See  \citet{bonhomme2018minimizing} and \citet{christensen2019counterfactual} for more discussion. 
	
	{This paper also contributes to the literature that relaxes the IA-M assumption. \citet{chaisemartin2018Defier} discusses the economic meaning of the conventional LATE quantity $LATE^{Wald}\equiv ({E[Y_i|Z_i=1]-E[Y_i|Z_i=0]})/({E[D_i|Z_i=1]-E[D_i|Z_i=0]})$ when there are defiers. He shows that the $LATE^{Wald}$ identifies the net average treatment effect of a subgroup of compliers after deducting the average treatment effect of defiers.}
	
	The rest of the paper is organized as follows. Section 2 describes a theory of characterizing a refutable assumption $\tilde{A}$, finding a relaxed assumption $\tilde{A}$ and hypothesis testing. Core definitions in Section 2 are followed by shadowed links where their corresponding illustrations can be found. Section 3 applies complete structure theory to the model in \citet{IA1994}. Section 4 applies the theory to a binary outcome Roy model. Main proofs are collected in Appendices. Additional proofs and results are collected in the Online Appendices.

\subsubsection*{Notations}
Throughout this paper, I use $X$ to denote the vector of observed variables, and I use $F$ to denote the distribution of $X$. I use $\epsilon$ to denote the vector of latent variables and some observed variables, and I use $G$ to denote the distribution of $\epsilon$. I use $s$ to denote an economic structure. I use $A$ to denote a refutable assumption and use $\tilde{A}$ to denote a relaxed assumption. I use $\mathbb{F}_n$ to denote the empirical distribution of $F$.

\section{A Theory of Identification and Hypothesis Testing}
In this section, I develop a theory for identification and refutable assumptions. I will then discuss the problem of hypothesis testing. I start with a definition of the observation space.
\begin{definition}\label{def: observation space}
The observation space $\mathcal{F}$ is the collection of all possible distribution of $F(X)$. \colorbox{lightgray}{See \ref{eq: appli, observation space} for an illustration.}
\end{definition}

The distribution of observables $F(X)$ is generated by some distribution of underlying random \mbox{vectors $\epsilon$} through some mapping $M$.  A pair of  a distribution of $\epsilon$ and a mapping $M$ is called an econometric structure.  The following definition of econometric structure is a reformulation of the economic structure defined in \citet{koopmans1950identification} and \citet{jovanovic1989}. Since in most econometric problems, we focus on the distribution of outcomes $F$ instead of how each $X$ is related to $\epsilon$, I directly define the mapping $M$ as a correspondence from the space of underlying variable distributions to $\mathcal{F}$.

\begin{definition}\label{def: economic structure}
	An econometric structure (Model) $s=(G^s,M^s)$ consists of a distribution $G^s$, and an outcome mapping $M^s$. Let $\mathcal{G}$ denote the space of all possible regular distributions of $G^s(\epsilon)$. The outcome mapping $M^s$ is a correspondence $M^s:\mathcal{G}\rightrightarrows \mathcal{F}$. \colorbox{lightgray}{See (\ref{eq: appli, primitive space}),(\ref{eq: appli, M^s}), (\ref{eq: Roy appli, M^s}) for illustrations.}
\end{definition}
\begin{definition}\label{def: structure universe}
	A structure universe $\mathcal{S}$ is a collection of structures such that $\cup_{s\in\mathcal{S}} M^s(G^s)=\mathcal{F}$, and an assumption $A$ is a subset of $\mathcal{S}$. \colorbox{lightgray}{See (\ref{eq: appli, potential outcome model space}),(\ref{eq: IA instrument assumption }),(\ref{eq: roy model structure universe}),
		(\ref{eq: Roy Assumption})
		 for illustrations.}
\end{definition}

The mapping $M^s$ of structure $s$ relates the distribution $\epsilon$ to the distributions of $X$. Since the distributions of $X$ are generated by the distributions of $\epsilon$, we call $\epsilon$ the primitive variables. Definition \ref{def: structure universe} allows overlap between $X$ and $\epsilon$. A collection of structures is called a structure universe. We want to learn the distribution of $\epsilon$ and the mapping $M$ from the distribution of observables $F$.

Here I explicitly distinguish the  structure universe $\mathcal{S}$ and the assumption $A$, though both are just a collection of structures. The structure universe $\mathcal{S}$ is the paradigm that can span different empirical contexts.  On the other hand, an assumption $A$ places constraints that are suitable for a particular empirical context, or convenient for empirical analysis.

The condition $\cup_{s\in\mathcal{S}}M^s(G^s)=\mathcal{F}$ requires that  all possible distributions of observables can be generated by some structure in the universe. Moreover, no distribution outside $\mathcal{F}$ can be generated by $\mathcal{S}$.

\begin{definition}\label{def: complete and incomplete struc space}
	A structure $s:=(M^s,G^s)$ is called complete if $M^s(G^s)$ is a singleton. Otherwise it is called incomplete. A universe $\mathcal{S}$ is called complete if every structure $s$ in $\mathcal{S}$ is complete, otherwise it is incomplete.
\end{definition} 

The definition of completeness is slightly different from the definition of completeness in \citet{tamer2003incomplete}.  \citet{tamer2003incomplete} defines a model to be complete if the mapping from $\epsilon$ to $X$ is a singleton and non-empty, and incomplete if the mapping has multiple outputs, and incoherent if the mapping generates no output. Here, my definition of economic structure does not specify the mapping from each $\epsilon$ to $X$. Instead, I consider the mapping from the distribution of $\epsilon$ to the distribution of $X$. If the probability of multiple outcome is non-zero, an incomplete model in \citet{tamer2003incomplete} implies an incomplete structure in my definition.

My definition of structure, however, does not have a corresponding terminology for incoherent model. This is because I require $\cup_{s\in\mathcal{S}}M^s(G^s)=\mathcal{F}$ to hold. \citet{chesher2012simultaneous} propose four ways to deal with model incoherence and derive the distribution of observables under the model. My definition of $M^s$ as the mapping between distributions can be viewed as the consequences of \citet{chesher2012simultaneous}. 

\begin{definition} (Breusch) \label{def: complete theory, Breusch refutability}
	An assumption $A$ is called refutable if there exists an $F\in \mathcal{F}$ such that $F\notin\cup_{s\in A}M^s(G^s)$. An assumption $A$ is called confirmable if there exists an $F\in \mathcal{F}$ such that $F\notin\cup_{s\in A^c}M^s(G^s)$. 
\end{definition}

The notions of refutability and confirmability of a complete structure are given in \citet{breusch1986}. If an assumption is refutable, then there exists some $F$ that can reject $A$. The notions of refutability and confirmability are stated in terms of the observation space $\mathcal{F}$. Equivalently, we can characterize refutability and confirmability in terms of the structure universe $\mathcal{S}$. To do this, I first define the non-refutability and confirmation sets associated with $A$.
\begin{definition}\label{def: non-refutablity set in incomplete structure}
	(non-refutability set) \\
	The strong non-refutability set associated with $A$ under $\mathcal{S}$ is defined as 
	\[\mathcal{H}^{snf}_\mathcal{S}(A)=\left\{s\in\mathcal{S}: M^s(G^s)\subseteq \cup_{s^*\in A} M^{s^*}(G^{s^*}) \right\}.\]
	The weak non-refutability set associated with $A$ under $\mathcal{S}$ is defined as 
	\[\mathcal{H}^{wnf}_\mathcal{S}(A)=\left\{s\in\mathcal{S}: M^s(G^s)\cap \left(\cup_{s^*\in A} M^{s^*}(G^{s^*})\right)\ne \varnothing \right\}.\]
	\colorbox{lightgray}{See Lemma \ref{lem: testable implication of IA assumption}, (\ref{eq: testable implication of LATE}) and Proposition \ref{prop: roy model, non-refutable and confirmation set} for illustrations.}
\end{definition}
To accommodate the incomplete structures, we define two types of non-refutability sets. We call $\mathcal{H}_{\mathcal{S}}^{snf}(A)$ the strong non-refutability set associated with $A$, because if the true structure $s$ is in $\mathcal{H}_{\mathcal{S}}^{snf}(A)$, then for any distributions of observables in $M^s(G^s)$, we cannot refute $A$. In contrast, we call $\mathcal{H}_{\mathcal{S}}^{wnf}(A)$ the strong non-refutability set associated with $A$, because if the true structure $s$ is in $\mathcal{H}_{\mathcal{S}}^{wnf}(A)$, then for some distributions of observables in $M^s(G^s)$, we cannot refute $A$.

\begin{definition}\label{def: confirmation sets}
	(Confirmation set) \\
	The strong confirmation set associated with $A$ under ${S}$ is defined as 
	\[
	\mathcal{H}_{S}^{scon}(A)=\left\{s\in{S}: M^s(G^s)\subseteq \cap_{s^*\in A^c} \left(M^{s^*}(G^{s^*})^c\right)\right\}.
	\]
	The weak confirmation set associated with $A$ under ${S}$ is defined as 
	\[
	\mathcal{H}_{S}^{wcon}(A)=\left\{s\in{S}: M^s(G^s)\cap\left[ \cap_{s^*\in A^c} \left(M^{s^*}(G^{s^*})^c\right)\right]\ne \varnothing\right\}.
	\]
	\colorbox{lightgray}{See Proposition \ref{prop: roy model, non-refutable and confirmation set} for an illustration.}
\end{definition}
The strong confirmation set is the collection of structures that cannot be observationally equivalent to any structures outside $A$ for any observed distribution $F$. Weak confirmation set is the collection of structures that cannot be observationally equivalent to any structures outside $A$ for some observed \mbox{distribution $F$.} I call $\mathcal{H}^{scon}_{\mathcal{S}}(A)$ (resp. $\mathcal{H}^{wcon}_{\mathcal{S}}(A)$) the strong (resp. weak) confirmation set associated with $A$, because if the true structure $s$ is in $\mathcal{H}^{scon}_{\mathcal{S}}(A)$ (resp. $\mathcal{H}^{wcon}_{\mathcal{S}}(A)$), then for all (resp. some) distribution of observables in $M^s(G^s)$, we can confirm that the true structure must lies in $A$. In particular, 
\[\mathcal{H}_\mathcal{S}^{scon}(A)\subseteq \mathcal{H}_\mathcal{S}^{wcon}(A)\subseteq A\subseteq \mathcal{H}_\mathcal{S}^{snf}(A)\subseteq \mathcal{H}_\mathcal{S}^{wnf}(A).\]
When the structure universe $\mathcal{S}$ is complete, $M^s(G^s)$ is always a singleton, and  $\mathcal{H}_\mathcal{S}^{scon}(A)= \mathcal{H}_\mathcal{S}^{wcon}(A)$, $\mathcal{H}_\mathcal{S}^{snf}(A)= \mathcal{H}_\mathcal{S}^{wnf}(A)$. The following proposition helps to interpret the confirmation sets associated with $A$ as the non-refutability sets associated with $A^c$. It also shows that the strong non-refutability set and weak confirmation set as operation are idempotent. 
\begin{prop} \label{prop: operation of confirmable and refutable set }
	The following holds: 1. $\left[\mathcal{H}_\mathcal{S}^{snf}(A)\right]^c=\mathcal{H}_\mathcal{S}^{wcon}(A^c)$;  2. $\left[\mathcal{H}_\mathcal{S}^{wnf}(A)\right]^c=\mathcal{H}_\mathcal{S}^{scon}(A^c)$; 3. $\mathcal{H}_\mathcal{S}^{wcon}(\mathcal{H}_\mathcal{S}^{wcon}(A))=\mathcal{H}_\mathcal{S}^{wcon}(A)$; 4. $\mathcal{H}_\mathcal{S}^{snf}(\mathcal{H}_\mathcal{S}^{snf}(A))=\mathcal{H}_\mathcal{S}^{snf}(A)$.
\end{prop}
The definition of refutability and confirmability in \citet{breusch1986} is defined on the outcome space $\mathcal{F}$, but we can also characterize it on the structure universe $\mathcal{S}$.   
\begin{prop} \label{prop: refutable equivalent characterization}
		An assumption $A$ is refutable if and only if $\mathcal{H}_\mathcal{S}^{snf}(A)\ne \mathcal{S}$. An assumption $A$ is confirmable if and only if $\mathcal{H}_\mathcal{S}^{wcon}(A)\ne \varnothing$.
\end{prop}
By definition, we should have $\cup_{s\in \mathcal{H}_{\mathcal{S}}^{snf}(A)} M^{s}(G^{s})=\cup_{s\in A} M^{s}(G^{s})$, so $\mathcal{H}_{\mathcal{S}}^{snf}(A)$ is refutable if and only if $A$ is refutable. In many cases, it is easy to check whether $\mathcal{H}_{\mathcal{S}}^{snf}({A})=\mathcal{S}$ in Proposition \ref{prop: refutable equivalent characterization} than to check Definition \ref{def: complete theory, Breusch refutability}.

\subsection{Identification Problem}
In many empirical studies, we want to find the value of a parameter of interest rather than a class of structures that are consistent with data. This parameter can be a moment of unobserved primitive variables, or a counterfactual outcome of the structure. The parameter of interest can also give interpretation on the causal relation between outcome variables and primitive variables. Imposing strong assumptions helps to restrict the set of data-consistent parameter values, but an imposed assumption $A$ as in Definition \ref{def: structure universe} may be rejected by some distribution of observables. Therefore, in many empirical studies, researchers often first present some summary statistics that justify the assumption. If the assumption is rejected by the data, researchers can move to another assumption. This is an ex-post way of choosing a relaxed assumption. There are two major problems with this approach. First, such justifications are heuristic pre-testing procedures of assumption $A$, and any subsequent inference on the parameter of interest may have incorrect size control due to pre-testing. Second, researchers do not specify what they will do if $A$ is rejected. Most likely they will choose another assumption that will not be rejected by the data. To avoid the pre-testing issue, I propose to solve the problem from an ex-ante perspective, i.e. impose a non-refutable assumption before any distribution of observables is realized. 


I first formalize the definition of an identification system and discuss how to deal with an existing situation, where assumption $A$ may be rejected by the data.
\begin{definition} \label{def: Identification System}
	A parameter of interest $\theta$ is a function $\theta: \mathcal{S}\rightarrow \Theta$, where $\Theta$ is the parameter space. The identified set for $\theta$ is a correspondence $\Theta^{ID}_A:\mathcal{F} \rightrightarrows \Theta$ such that  
	\begin{equation}\label{eq: definition of identified set}
	\Theta_A^{ID}(F)=\{\theta(s): s\in A \quad and \quad F\in M^s(G^s) \}.
	\end{equation}
	We call $(\mathcal{S},A,\theta,\Theta^{ID}_A)$ an identification system. \colorbox{lightgray}{See (\ref{eq: appli, formular}) for $\theta$ and (\ref{eq: appli, ID set}) for an illustration of $\Theta^{ID}_A$.}
\end{definition}
A parameter of interest can take a very general form. It can be the structure $s$ itself, or it can be a counterfactual outcome. For example, suppose $M^s$ is known up to a finite dimensional  vector: $M^s(\cdot)=M(\cdot\,;(\beta_1,...,\beta_k))$. Further suppose the objective of our counterfactual analysis is to find the predicted distribution of observables when $\beta_1=0$. Then the parameter of interest $\theta$ can be defined as $\theta(s)= M(G^s;(0,...,\beta_k))$. For a parameter of interest $\theta$,   $\Theta^{ID}_A(F)$ is the set of parameters that are compatible with the data. For an empirical researcher, the main concern of the partial identification method is the possibility of an empty identified set. Here I characterize the equivalent condition of an empty identified set. 
\begin{prop}\label{prop: equivalence well-defined identification and non refutable}
	An assumption $A$ is non-refutable if and only if for any parameter of interest $\theta$, the associated identified set $\Theta_A^{ID}(F)\ne\varnothing$ holds  $\forall F\in \mathcal{F}$.
\end{prop}
Here is an intuition of Proposition \ref{prop: equivalence well-defined identification and non refutable}: If we have a non-empty identified set for an $F$, then there must exist a structure in $A$ that rationalizes $F$. Since this is true for all $F$, $A$ is non-refutable. Conversely, if $A$ is non-refutable, then for any $F$ we can find a structure $s\in A$ to rationalize $F$, and the corresponding $\theta(s)$ must lie in the identified set.

\subsection{Relaxed Assumption Approach}
For a refutable assumption $A$, there exist some $F$ such that $\Theta_A^{ID}(F)=\varnothing$.  This can be unsatisfying because empirical researchers cannot directly interpret the distribution of $\epsilon$. To avoid this, before seeing any outcome distribution, a practitioner can impose a relaxed assumption $\tilde{A}$ such that $\mathcal{H}_\mathcal{S}^{snf}(\tilde{A})= \mathcal{S}$ and $A\subseteq \tilde{A}$. 

\begin{definition}\label{def: consistent extension, theta and strong}
	Given structural universe $\mathcal{S}$, a refutable assumption $A$ and $\theta$, we call $\tilde{A}$
	\begin{enumerate}
		\item a well-defined extension, if $A\subseteq \tilde{A}$ and $\mathcal{H}_\mathcal{S}^{snf}(\tilde{A})=\mathcal{S}$;
		\item a $\theta$-consistent extension, if $\tilde{A}$ is a well-defined extension, and $\Theta^{ID}_{\tilde{A}}(F)=\Theta_A^{ID}(F)$ whenever $\Theta_A^{ID}(F)\ne \varnothing$; \colorbox{lightgray}{See Proposition \ref{prop: minimal marg ind as consistent extension} for an illustration.}
		\item a strong extension, if for any parameter of interest $\theta^*$ defined in Definition \ref{def: Identification System}, $\tilde{A}$ is a $\theta^*$-consistent extension.\colorbox{lightgray}{See Proposition \ref{prop: minimal deviation as strong extension} and \ref{prop: roy model, sharp characterization of identified set} for an illustration.}
	\end{enumerate}
\end{definition}
These three definitions are nested. A well-defined extension ensures that the identified set will never be empty; A $\theta$-consistent extension preserves the identified set for a parameter of interest $\theta$. A strong extension moreover ensures that the identified set for any parameter of interest will be preserved. In different empirical settings, researchers' parameters of interest can differ. If a strong extension is found, researchers can use this extension across different empirical contexts. The following proposition gives a characterization of whether $\tilde{A}$ is a strong extension. 
\begin{prop}\label{prop: conditions of strong extension}
	Given $\mathcal{S}$, suppose $A$ is refutable and $\tilde{A}$ is a well-defined extension, then $\tilde{A}$ is a strong consistent extension if and only if $\mathcal{H}_{\mathcal{S}}^{wnf}(A)\cap \tilde{A} =A$.
\end{prop}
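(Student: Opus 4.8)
The plan is to translate the ``strong extension'' condition, which quantifies over \emph{all} parameters of interest, into a statement about the sets of structures that rationalize each $F$, and then match that statement against the set equality $\mathcal{H}_{\mathcal{S}}^{nf}(A)\cap\tilde{A}=A$. Since the universe is complete, every $M^s(G^s)$ is a singleton, so I would write $M^s(G^s)=\{F_s\}$ and set $\mathcal{F}_A=\{F_s:s\in A\}$. With this notation $\mathcal{H}_{\mathcal{S}}^{nf}(A)=\{s\in\mathcal{S}:F_s\in\mathcal{F}_A\}$, the identified set becomes $\Theta_B^{ID}(F)=\{\theta(s):s\in B,\ F_s=F\}$ for $B\in\{A,\tilde{A}\}$, and $\Theta_A^{ID}(F)\ne\emptyset$ is equivalent to $F\in\mathcal{F}_A$. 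Before either direction I would record the free inclusion $A\subseteq\mathcal{H}_{\mathcal{S}}^{nf}(A)\cap\tilde{A}$ (each $s\in A$ lies in $\tilde{A}$ because $A\subseteq\tilde{A}$, and trivially satisfies $F_s\in\mathcal{F}_A$), so that the claimed equality reduces to the single inclusion $\mathcal{H}_{\mathcal{S}}^{nf}(A)\cap\tilde{A}\subseteq A$.

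For the forward direction (strong extension $\Rightarrow$ the set equality), the key device is to instantiate the arbitrary parameter of interest at the identity map $\theta=\mathrm{id}$, which is admissible under Definition \ref{def: Identification System} by taking $\Theta=\mathcal{S}$. Then $\Theta_B^{ID}(F)$ is literally the set of structures in $B$ predicting $F$, and $\theta$-consistency forces $\{s\in\tilde{A}:F_s=F\}=\{s\in A:F_s=F\}$ for every $F\in\mathcal{F}_A$. Given any $s\in\mathcal{H}_{\mathcal{S}}^{nf}(A)\cap\tilde{A}$, its predicted $F_s$ lies in $\mathcal{F}_A$, so applying the previous equality at $F=F_s$ places $s$ in $\{s'\in A:F_{s'}=F_s\}\subseteq A$, which is exactly the needed inclusion.

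For the converse ($\mathcal{H}_{\mathcal{S}}^{nf}(A)\cap\tilde{A}=A\Rightarrow$ strong), I would fix an arbitrary parameter $\theta$ and an $F$ with $\Theta_A^{ID}(F)\ne\emptyset$, i.e.\ $F\in\mathcal{F}_A$. One inclusion $\Theta_A^{ID}(F)\subseteq\Theta_{\tilde{A}}^{ID}(F)$ is immediate from $A\subseteq\tilde{A}$. For the reverse, any $s\in\tilde{A}$ with $F_s=F$ has $F_s\in\mathcal{F}_A$, hence $s\in\mathcal{H}_{\mathcal{S}}^{nf}(A)\cap\tilde{A}=A$, so $\theta(s)\in\Theta_A^{ID}(F)$; therefore $\Theta_{\tilde{A}}^{ID}(F)=\Theta_A^{ID}(F)$. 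Since $\theta$ was arbitrary and $\tilde{A}$ is already a well defined extension by hypothesis, $\tilde{A}$ is a strong extension.

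The step I expect to be the crux is the forward direction: recognizing that the quantification over all parameters of interest must be exploited through a maximally informative (injective) choice such as the identity, so that equality of identified sets upgrades from equality of $\theta$-images to equality of the underlying rationalizing structure sets. The remaining manipulations are routine set-containment arguments, and completeness enters only to guarantee that each structure pins down a single $F_s$, which is what lets $\mathcal{H}_{\mathcal{S}}^{nf}(A)$ be described purely through the membership $F_s\in\mathcal{F}_A$.
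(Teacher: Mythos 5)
Your proof is correct and follows essentially the same route as the paper's: the converse is the same direct containment argument (any $s\in\tilde{A}$ rationalizing an $F$ with $\Theta_A^{ID}(F)\ne\emptyset$ must lie in $\mathcal{H}_{\mathcal{S}}^{nf}(A)\cap\tilde{A}=A$), and the forward direction uses exactly the paper's key device of instantiating the parameter of interest at $\theta(s)=s$. Your explicit use of completeness to write $M^s(G^s)=\{F_s\}$ only makes precise a step the paper leaves implicit when it concludes $s\in\mathcal{H}_{\mathcal{S}}^{nf}(A)$ from $F\in M^s(G^s)$ and $\Theta_A^{ID}(F)\ne\emptyset$.
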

In a complete structure universe, we can always find a strong extension $\tilde{A}$. This is a major difference between complete and incomplete structure universe. 
\begin{prop}\label{prop: exist of strong extension}
	If $\mathcal{S}$ is a complete structure universe, then $\tilde{A}=A\cup  [\mathcal{H}_\mathcal{S}^{snf}(A)]^c$ is a strong extension of $A$. Moreover, any strong extension $\tilde{A}'$ is a subset of $\tilde{A}$. We call this $\tilde{A}$ the maximal strong extension of $A$.
\end{prop}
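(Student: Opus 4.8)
The plan is to treat both halves of the statement as corollaries of Proposition \ref{prop: conditions of strong extension}, so that everything reduces to elementary set algebra on $\mathcal{S}$. Throughout I work under the standing hypothesis attached to the notion of extension, namely that $(\mathcal{S},A,\theta,\Theta^{ID}_A)$ is not well defined; in the degenerate well-defined case Proposition \ref{prop: equivalence well defined identification and non refutable} gives $\mathcal{H}_\mathcal{S}^{nf}(A)=\mathcal{S}$, so $[\mathcal{H}_\mathcal{S}^{nf}(A)]^c=\emptyset$, $\tilde{A}=A$, and the claim is immediate. Writing $H:=\mathcal{H}_\mathcal{S}^{nf}(A)$ for brevity, the one structural fact I lean on repeatedly is $A\subseteq H$: for any $s\in A$ we trivially have $M^s(G^s)\subseteq\cup_{s^*\in A}M^{s^*}(G^{s^*})$, so $s$ lies in the non-refutable set. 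The candidate is $\tilde{A}=A\cup H^c$, and I will show (i) it is a well defined extension and (ii) $H\cap\tilde{A}=A$; by Proposition \ref{prop: conditions of strong extension} these two facts together certify that $\tilde{A}$ is a strong extension.

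For the well-definedness in (i), $A\subseteq\tilde{A}$ is immediate from the union, so the content is $\mathcal{H}_\mathcal{S}^{nf}(\tilde{A})=\mathcal{S}$. I would verify this by a two-case argument on an arbitrary $s\in\mathcal{S}$. If $s\notin H$, then $s\in H^c\subseteq\tilde{A}$, so $M^s(G^s)\subseteq\cup_{s^*\in\tilde{A}}M^{s^*}(G^{s^*})$ holds trivially because $s$ itself belongs to $\tilde{A}$. If instead $s\in H$, then by definition $M^s(G^s)\subseteq\cup_{s^*\in A}M^{s^*}(G^{s^*})\subseteq\cup_{s^*\in\tilde{A}}M^{s^*}(G^{s^*})$, using $A\subseteq\tilde{A}$ for the second inclusion. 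Either way $s\in\mathcal{H}_\mathcal{S}^{nf}(\tilde{A})$, giving $\mathcal{H}_\mathcal{S}^{nf}(\tilde{A})=\mathcal{S}$. For (ii), distribute the intersection: $H\cap\tilde{A}=H\cap(A\cup H^c)=(H\cap A)\cup(H\cap H^c)=(H\cap A)\cup\emptyset=H\cap A=A$, where the final equality is exactly $A\subseteq H$. Invoking Proposition \ref{prop: conditions of strong extension} then yields that $\tilde{A}$ is a strong extension.

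For maximality, let $\tilde{A}'$ be any strong extension. Being a strong extension it is in particular a well defined extension, so Proposition \ref{prop: conditions of strong extension} applies in the reverse direction and gives $H\cap\tilde{A}'=A$. I would then show $\tilde{A}'\subseteq\tilde{A}=A\cup H^c$ by testing an arbitrary $s\in\tilde{A}'$ against membership in $H$. If $s\in H$, then $s\in H\cap\tilde{A}'=A\subseteq\tilde{A}$; if $s\notin H$, then $s\in H^c\subseteq\tilde{A}$. Hence every element of $\tilde{A}'$ lies in $\tilde{A}$, so $\tilde{A}'\subseteq\tilde{A}$, establishing maximality.

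The genuine mathematical content is entirely front-loaded into Proposition \ref{prop: conditions of strong extension}, so the proof itself is bookkeeping; the one place to be careful is that this proposition can only be invoked after $\tilde{A}$ (respectively $\tilde{A}'$) has been shown to be a well defined extension, so I must establish $\mathcal{H}_\mathcal{S}^{nf}(\tilde{A})=\mathcal{S}$ on its own before appealing to the characterization. It is also worth flagging where completeness is actually consumed: it is \emph{not} needed for the set-algebraic identities above, which hold verbatim for incomplete universes, but is built into Proposition \ref{prop: conditions of strong extension} --- in a complete universe $F\in M^s(G^s)$ pins down the single prediction of $s$, so $\theta$-consistency under the identity parameter $\theta(s)=s$ becomes the binding constraint and forces $\theta$-consistency for every other parameter. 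This is the step that has no incomplete-universe analogue, and is precisely why existence of a strong extension can fail once $M^s(G^s)$ is allowed to be multi-valued.
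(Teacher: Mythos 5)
Your proof is correct and follows essentially the same route as the paper's: both reduce the first claim to verifying well-definedness plus the identity $\mathcal{H}_\mathcal{S}^{nf}(A)\cap\tilde{A}=A$ and then invoke Proposition \ref{prop: conditions of strong extension}, and both derive maximality from the same characterization. Your two-case membership check for $\mathcal{H}_\mathcal{S}^{nf}(\tilde{A})=\mathcal{S}$ and your direct (rather than by-contradiction) maximality argument are only cosmetic variations on the paper's proof.
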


In other words, $\tilde{A}=A\cup  [\mathcal{H}_\mathcal{S}^{snf}(A)]^c$ is the strong extension that puts the least structural assumption outside $\mathcal{H}_\mathcal{S}^{snf}(A)$. Should we always use $\tilde{A}=A\cup [ \mathcal{H}_\mathcal{S}^{snf}(A)]^c$ as the choice of strong extension when the structural universe $\mathcal{S}$ is complete? Unfortunately, using the maximal strong extension will lead to very badly behaved identified set $\Theta_{\tilde{A}}^{ID}(F)$.  Suppose $\mathcal{F}$ is equipped with some metric $d$, and $F_0$ is on some part of the boundary of the set of predicted observable distribution $\cup_{s\in {A}} M^s(G^s)$. It is possible that $\Theta_{\tilde{A}}^{ID}(F_0)$ gives an informative bound (i.e. $\Theta_{\tilde{A}}^{ID}(F_0)\ne \Theta$) on the parameter of interest , but for an $F'\in\left[\cup_{s\in {A}} M^s(G^s) \right]^c$ that is arbitrarily close to $F_0$, the identified set $\Theta_{\tilde{A}}^{ID}(F')$ is uninformative (i.e. $\Theta_{\tilde{A}}^{ID}(F')=\Theta$). See the identification result in Proposition \ref{prop: appli, maximal extension arbitrary instrument} for an illustration. This raises two concerns. First, at the identification level, the interpretation of an uninformative identified set $\Theta_{\tilde{A}}^{ID}(F')$ under the maximal strong extension $\tilde{A}$ is not very different from an empty identified set $\Theta_{A}^{ID}(F')=\varnothing$ under the original assumption $A$. An uninformative identified set says that any parameter value of $\theta$ is compatible with data, while an empty identified set  says that no parameter value of $\theta$ is compatible with data. In either case, the identification result does not help us to interpret the environment. Second, we may get spurious informative inference result due to sampling error. When $F'$ is close to the boundary of $\cup_{s\in {A}} M^s(G^s)$ but not in it, sampling error may lead us to a spurious but informative bound $\Theta_{\tilde{A}}^{ID}(F')$, even if the true identified set should have been uninformative. In other words, estimated identified set is not consistent. The maximal strong extension $\tilde{A}=A\cup  [\mathcal{H}_\mathcal{S}^{snf}(A)]^c$ solves the refutability issue, but it imposes too few constraints outside $A$ to generate informative result on $\theta$. 

For an incomplete structure universe, there is a gap between the strong and weak non-refutability sets. As a result, we may not be able to find a strong extension of $A$. 

\begin{prop} \label{prop: incomplete model impossible to find strong ext}
	If {\footnotesize $\left(\cup_{s\in \mathcal{H}_\mathcal{S}^{wnf}(A)}M^s(G^s)\right)\cap \left(\cup_{s\in [\mathcal{H}_\mathcal{S}^{wnf}(A)]^c}M^s(G^s)\right)=\varnothing $} and $\mathcal{H}_{\mathcal{S}}^{wnf}(A)\backslash \mathcal{H}_{\mathcal{S}}^{snf}(A)\ne\varnothing$  hold, then there does not exist a strong extension of $A$. 
\end{prop}

This situation happens when there is a nesting relation between $A$ and $A^c$:  suppose for each structure $s\in A$, we can find an $s^*\in A^c$ such that $M^s(G^s)\subseteq M^{s^*}(G^{s^*})$; and for every $s^*\in A^c$ we can find an $s\in A$ such that $M^s(G^s)\subseteq M^{s^*}(G^{s^*})$. If $A$ is refutable, then $\mathcal{H}_{\mathcal{S}}^{snf}(A)\ne \mathcal{S}$. However the nesting relation implies $\mathcal{H}_{\mathcal{S}}^{wnf}(A)=\mathcal{S}$. Then both conditions in Proposition \ref{prop: incomplete model impossible to find strong ext} hold, and there exists no strong extension of $A$. 
\begin{example}
	Figure \ref{fig:no-strong-extension-exists} illustrates a simple case where $\mathcal{S}=\{s_1,s_2,s_3\}$ and the assumption set $A=\{s_1\}$. There is a nesting relation between the predicted outcome distributions of $s_2$ and $s_1$, and the predicted outcome distributions of $\{s_1,s_2\}$ and $s_3$ are disjoint. This satisfies the condition in Proposition \ref{prop: incomplete model impossible to find strong ext}. The only well-defined extension of $A$ must be $\tilde{A}=\{s_1,s_2,s_3\}$, since we need to include $s_2$ to predict $\tilde{F}$. It is easy to check $\tilde{A}\cap \mathcal{H}_{\mathcal{S}}^{wnf}(A)=\{s_1,s_2\}\ne A$, so $\tilde{A}$ cannot be a strong extension.
	\begin{figure}[H]
		\centering
		\includegraphics[width=0.8\linewidth]{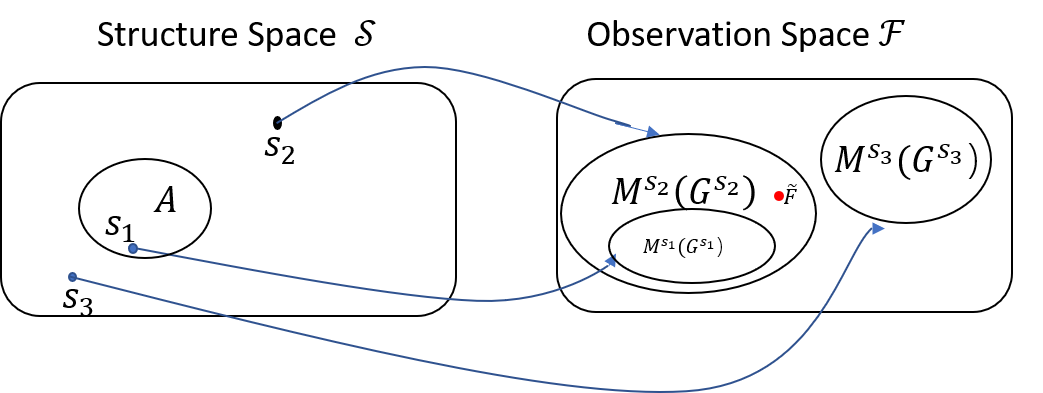}
		\caption{Impossibility to find a strong extension.}
		\label{fig:no-strong-extension-exists}
	\end{figure}
\end{example}

\subsection{Minimal Deviation Method}\label{section: minimal deviation method}
In many cases, we can find a function $m:\mathcal{S}\rightarrow \mathbb{R}_+$ such that $m(s)=0$ for all $s\in A$ \footnote{See (\ref{eq: defiers measure}), (\ref{eq: Roy model, efficiency loss}) for example.}.  While an assumption $A$ may be refutable, we impose $A$ in the first place because it reflects economic theory suitable in the empirical context. Therefore we would consider a departure from $A$ is abnormal and is against the economic intuition behind $A$.  I extend our assumption to allow for a minimal departure from the baseline assumption $A$. This way to relax assumption $A$ is called the minimal deviation method. Formally, suppose the refutable assumption $A$ can be written as an intersection of several larger assumptions: $A=\cap_{j=1}^J A_j$. This representation allows us to consider a departure from a \mbox{particular $A_j$}.  

\begin{definition}\label{def: well-defined relaxation measure}
	Fix an index $j\in \{1,2,...J\}$. A relaxation measure of departure from $A_j$ with respect to $\{A_l\}_{l\ne j}$ is a function $m_j: \mathcal{S}\rightarrow \mathbb{R}_+\cup\{+\infty\}$ such that $m_j(s)=0$ for all $s\in A_j$. We say $m_j$ is well-behaved if for any $F\in \mathcal{F}$, there exists a structure $s^*\in \cap_{l\ne j}A_l$ such that 
	\[
	m_j(s^*)=\inf\{m_j(s): F\in M^s(G^s) \quad and \quad s\in \cap_{l\ne j}A_l\},
	\]
	and $m_j(s^*)<\infty$.
	\colorbox{lightgray}{See (\ref{eq: defiers measure}) and (\ref{eq: Roy model, efficiency loss}) for illustrations.}
\end{definition}
We want the relaxation measure to be well-behaved such that when we push the deviation to infinity, we can generate any distributions in $\mathcal{F}$. The well-behaved condition ensures that there exists a structure in $\cap_{l\ne j}A_l$ that can achieve the minimal measure. This is essential for the construction of a extension using $m_j$. See Proposition \ref{prop: m^{MD} is not well behaved if use full independence} for examples of ill-behaved measures. Now I construct the minimal deviation extension $\tilde{A}$.

\begin{definition}\label{def: minimal deviation extension}
	Fix an index $j\in \{1,2,...J\}$ and a well-behaved relaxation measure $m_j$. For any $F$, let $m^{min}(F)\equiv \inf\{m_j(s): F\in M^s(G^s) \,\, and \,\, s\in \cap_{l\ne j}A_l\}$ be the minimal deviation for the observable distribution $F$. We call $\tilde{A}=\cup_{F\in \mathcal{F}} \left\{s\in \cap_{l\ne j}A_l: m_j(s)=m^{min}(F),\,\,F\in M^s(G^s)\right\}$ the minimal deviation extension of $A$ under $m_j$. \colorbox{lightgray}{See the constructions of Assumptions \ref{assumption:type ind instru} and \ref{assump: minimal efficiency loss Roy} for illustrations.}
\end{definition}
The construction in Definition \ref{def: minimal deviation extension} only relaxes the assumption $A_j$ and keeps other assumptions unchanged. The following proposition shows a way to check whether a minimal deviation extension is $\theta$-consistent or strong consistent. 

\begin{prop}\label{prop: minimal deviation as strong extension}
	Let $A=\cap_{l=1}^J A_l$ and fix an index $j\in \{1,2,...,J\}$. Suppose $m_j$ is a well-behaved relaxation measure with respect to $\{A_l\}_{l\ne j}$. Then a minimal deviation extension $\tilde{A}$ under $m_j$ is 
	\begin{enumerate}
		\item $\theta$-consistent if $\Theta_A^{ID}(F)$ defined in (\ref{eq: definition of identified set}) satisfies 
		\[\Theta_A^{ID}(F)=\{\theta(s):\quad F\in M^s(G^s)\quad and \quad  s\in \left(\cap_{l\ne j}A_l\right)  \quad and \quad m_j(s)=0\};\]
		\item a strong extension if $A_j=\{s\in \cap_{l\ne j}A_l: \,\,m_j(s)=0\}$. 
	\end{enumerate}
\end{prop}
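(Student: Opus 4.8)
My plan is to exploit completeness to pass from structures to their unique predicted distributions, and then to reduce everything to a single observation about the minimal-deviation value. Since $\mathcal{S}$ is complete, each $s$ has a singleton $M^s(G^s)$, and for $s,s^*$ the equality $M^{s^*}(G^{s^*})=M^s(G^s)$ just means they predict the same distribution. For each $F\in\mathcal{F}$ write
\[
v(F)=\inf\{m_j(s):\ F\in M^s(G^s)\ \text{and}\ s\in\cap_{l\neq j}A_l\}.
\]
Well-behavedness of $m_j$ guarantees that $v(F)$ is finite and is attained by some $s\in\cap_{l\neq j}A_l$ with $F\in M^s(G^s)$ for every $F\in\mathcal{F}$. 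By condition~2 of Definition~\ref{def: minimal deviation extension}, every member of $\tilde{A}$ lies in $\cap_{l\neq j}A_l$ and attains $v(F)$ at its own predicted $F$; I will use (and, under the natural reading of the construction, take as given) that $\tilde{A}$ contains all such minimizers, so that for each $F$ the slice $\{s\in\tilde{A}:\ F\in M^s(G^s)\}$ equals the set of minimizers of $m_j$ over $\cap_{l\neq j}A_l$ predicting $F$.

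For part~1 I first check that $\tilde{A}$ is a well-defined extension. The inclusion $A\subseteq\tilde{A}$ follows because any $s\in A$ has $s\in\cap_{l\neq j}A_l$ and $m_j(s)=0$ (as $s\in A_j$); since $m_j\geq0$ this forces $v(F)=0=m_j(s)$ at $F\in M^s(G^s)$, so $s$ is a minimizer and hence lies in $\tilde{A}$. For $\mathcal{H}_\mathcal{S}^{nf}(\tilde{A})=\mathcal{S}$ I use attainment: for every $F\in\mathcal{F}$ there is a minimizer in $\tilde{A}$ predicting $F$, so $\cup_{s\in\tilde{A}}M^s(G^s)=\mathcal{F}$; combined with $\cup_{s\in\mathcal{S}}M^s(G^s)=\mathcal{F}$ this gives $M^s(G^s)\subseteq\cup_{s'\in\tilde{A}}M^{s'}(G^{s'})$ for every $s\in\mathcal{S}$, i.e.\ $\mathcal{H}_\mathcal{S}^{nf}(\tilde{A})=\mathcal{S}$.

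The heart of part~1 is the identified-set equality when $\Theta_A^{ID}(F)\neq\emptyset$. The key observation is that $\Theta_A^{ID}(F)\neq\emptyset$ supplies some $s\in A$ predicting $F$, and since $s\in A_j$ we have $m_j(s)=0$, forcing $v(F)=0$. Consequently the $F$-slice of $\tilde{A}$, which consists of the minimizers of $m_j$ over $\cap_{l\neq j}A_l$ at level $F$, is exactly $\{s\in\cap_{l\neq j}A_l:\ F\in M^s(G^s),\ m_j(s)=0\}=\{s\in(\cap_{l\neq j}A_l)\cap A_j^*(0):\ F\in M^s(G^s)\}$, using $A_j^*(0)=\{s:m_j(s)=0\}$. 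Taking images under $\theta$ and invoking the hypothesis $\Theta_A^{ID}(F)=\{\theta(s):F\in M^s(G^s),\ s\in(\cap_{l\neq j}A_l)\cap A_j^*(0)\}$ yields $\Theta_{\tilde{A}}^{ID}(F)=\Theta_A^{ID}(F)$, so $\tilde{A}$ is $\theta$-consistent.

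For part~2, note that $A_j^*(0)=A_j$ gives $(\cap_{l\neq j}A_l)\cap A_j^*(0)=\cap_l A_l=A$, so the hypothesis of part~1 collapses to $\Theta_A^{ID}(F)=\{\theta(s):F\in M^s(G^s),\ s\in A\}$, which is simply Definition~\ref{def: Identification System} and therefore holds for \emph{every} parameter of interest. Hence $\tilde{A}$ is $\theta^*$-consistent for all $\theta^*$, i.e.\ a strong extension; equivalently one may verify $\mathcal{H}_\mathcal{S}^{nf}(A)\cap\tilde{A}=A$ (the nontrivial inclusion again uses $v(F)=0$ whenever $F$ is rationalized by $A$, forcing $m_j(s)=0$ and thus $s\in A_j$) and appeal to Proposition~\ref{prop: conditions of strong extension}. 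The main obstacle is the coverage/attainment step: the whole argument rests on well-behavedness guaranteeing that the infimum $v(F)$ is finite and attained inside $\cap_{l\neq j}A_l$ for every $F$, since without attainment $\tilde{A}$ could fail to predict some $F$ and hence fail to be a well-defined extension. The second delicate point is the equivalence between the minimal-deviation slice and the zero-deviation slice, which hinges on the observation that rationalizability under $A$ forces $v(F)=0$.
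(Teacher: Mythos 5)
Your proof is correct and follows essentially the same route as the paper's: well-behavedness gives attainment and hence a well-defined extension, rationalizability of $F$ under $A$ forces the minimal deviation value to be zero so that the $F$-slice of $\tilde{A}$ coincides with the zero-deviation slice $(\cap_{l\ne j}A_l)\cap A_j^*(0)$, and part 2 follows from $A_j^*(0)=A_j$. Your part-2 derivation (reducing to part 1, whose hypothesis collapses to the definition of the identified set and therefore holds for every parameter of interest) is a slightly more direct packaging than the paper's contradiction argument via $\mathcal{H}_\mathcal{S}^{nf}(A)\cap\tilde{A}=A$, and the assumption you flag explicitly --- that $\tilde{A}$ contains \emph{all} minimizers, not just some selection of them --- is one the paper's proof also relies on, only implicitly.
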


\subsubsection*{Multiplicity of Extensions and the Extension Choice}
To facilitate the discussion of criteria of choosing among multiple extensions, I assume $\mathcal{F}$ is endowed with a metric $d_F$. Moreover, I fix the parameter of interest $\theta$, and assume $\Theta$ is endowed with a \mbox{metric $d_\theta$}. 

In some cases, there can be multiple ways to write an assumption, i.e. $A=A_j\cap(\cap_{l\ne j} A_l)=A_j\cap(\cap_{l\ne j} A_l^\prime)$. Fix a $j$, even if we use the same measure $m_j$, since the minimal deviation is defined with respect to $\{A_l\}_{l\ne j}$, the extension can differ when using a different representation. It should also be noted that to check whether  $m_j$ is a well-defined relaxation measure, we need to look at $\{A_l\}_{l\ne j}$. This means $m_j$ can be a well-defined relaxation measure with respect to $\{A_l\}_{l\ne j}$ but not $\{A_l^\prime\}_{l\ne j}$. Here I leave the choice of representation of the assumption to researchers and discuss the issue of multiple extensions that arises from two aspect: which assumption to relax and the choice of relaxation measure. 

Each minimal deviation extension $\tilde{A}$ corresponds to an index $j$ and a relaxation measure $m_j$. Given a representation of $A=\cap_{j=1}^J A_j$, we can choose which sub-assumption $A_j$ to relax, and we can also choose the relaxation measure $m_j$. Different choices of which assumption $j$ to relax, and different relaxation measures $m_j$ will result in different relaxed assumptions. Moreover, relaxed assumptions constructed from different relaxation measures can be non-nested with each other. Here, I discuss two criteria to choose an assumption among non-nested relaxed assumptions.

First, the relaxed assumption $\tilde{A}$ should be suitable for the empirical context. Given the empirical context, if we can find an economic story such that the $j$-th assumption in $A=\cap_{j=1}^J A_j$ fails, we will focus on finding a well-behaved relaxation measure $m_j$ corresponding to $A_j$. Second, we want some continuity property of the identified set with respect to the outcome distribution $F$. 
\begin{property}[Identified Set Continuity]\label{property: continuity}
	The identified set $\Theta_{\tilde{A}}^{ID}(F)$ under $\tilde{A}$ is a continuous correspondence\footnote{Recall that a correspondence $\Gamma :A\rightrightarrows B$ is called upper hemicontinuous at the point $a$ if for any open neighborhood $V$ of $\Gamma(a)$ there exists a neighborhood $U$ of $a$ such that for all $a'\in U$, $\Gamma(a')$ is a subset of $V$. A correspondence $\Gamma :A\rightrightarrows B$ is called lower hemicontinuous  at the point $a$ if for any open set $V$ intersecting $\Gamma(a)$ there exists a neighborhood $U$ of a such that $\Gamma(a')$ intersects $V$ for all $a'\in U$. A continuous correspondence is both upper and lower hemicontinuous.} from $\mathcal{F}$ to $\Theta$.  
\end{property}
\noindent Recall that we fix the parameter of interest $\theta$ in the beginning of this section. An $\tilde{A}$ that satisfies Property \ref{property: continuity} for $\theta$ may fail Property \ref{property: continuity} for a different parameter of interest. Without the continuity property, a consistent estimator of the identified set may not exist, and the identified set can be spuriously informative due to sampling error. Examples of discontinuous and continuous identified set correspondences can be found in Proposition \ref{prop: continuity of estimator in different cases}. Sufficient conditions to check Property \ref{property: continuity} and the further reasoning of Property \ref{property: continuity}  can be found in Appendix \ref{section: Discussion of Continuous ID Set Correspondence}.

\subsection{Structure Completion}\label{section: Structural Completion}
We have seen that for an incomplete structure universe, a refutable assumption $A$ may not have a strong extension. This is because in incomplete structures, we are agnostic about how distributions of outcomes are selected. If a structure $s$ has two predicted outcome distribution $M^s(G^s)=\{F_1,F_2\}$, we consider a completion procedure that separates $s$ into two complete structures $s_1^*$ and $s_2^*$ such that $M^{s_1^*}(G^{s_1^*})=\{F_1\}$ and $M^{s_2^*}(G^{s_2^*})=\{F_2\}$. The completion procedure then allows us to distinguish $s_1^*$ from $s_2^*$ by observing either $F_1$ \mbox{or $F_2$}.

\begin{definition}\label{def: completion of an incomplete structure universe}
	Given a structure $s=(M^s,G^s)$, let $\mathcal{C}(s)$ be the collection of all single-valued correspondences from $\mathcal{G}$ to $\mathcal{F}$ such that $M^*\in \mathcal{C}(s)$ implies $M^*(G^s)\subseteq M^s(G^s)$. We call 
	\begin{equation}\label{eq: completion of S}
	{\mathcal{S}}^*=\{({M}^{*},G): s\in \mathcal{S},\quad G=G^s,\quad   M^{*}\in \mathcal{C}(s)\}
	\end{equation}
	the completion of $\mathcal{S}$. \colorbox{lightgray}{See (\ref{eq: roy model, completed structure mapping}) for an illustration of $\mathcal{C}(s)$.}
	

\end{definition}

Definition \ref{def: completion of an incomplete structure universe} considers all possible completions $C$.  The cardinality of $\mathcal{C}(s)$ is the same as that of $M^s(G^s)$. The completion procedure is without loss of generality, since all possible selections are considered. The key property is that for any parameter of interest, the identified set is not changed if the parameter of interest in the completed structure is properly defined in the following way. 

\begin{prop}\label{prop: incomplete theory, preserve the id set after completion}
	Let $\mathcal{S}$ be an incomplete structural universe, let $A$ be any assumption and $\theta$ be any parameter parameter of interest. Let $\mathcal{S}^*,A^*,\theta^*,{\Theta}_{A^*}^{*ID}$ be defined in the following way:
	\begin{equation}
	\begin{split}
	&{\mathcal{S}}^* \,\, \text{is the completion of } \,\,\mathcal{S},\\
	&{A}^*=\{s^*:s^*=({M}^{s^*},G^{s^*}),\quad s\in \mathcal{S}\cap A,\quad G^{s^*}=G^s, \quad {M}^{s^*}\in \mathcal{C}(s)\},\\
	&{\theta}^*(s^*)=\theta(s)\quad for\,\,all\,\, s=(M^s,G^s),\, s^*=({M}^{s^*},G^{s^*})   \text{ such that } M^{s*}\in \mathcal{C}(s)\,\,and\,\, G^{s^*}=G^{s}\\
	&\Theta^{*ID}_{A^*}(F)= \{\theta^*(s^*): F\in M^{s^*}(G^s),\,\, s^*\in A^*\}.
	\end{split}
	\end{equation}
	Then $\Theta_A^{ID}(F)={\Theta}^{*ID}(F)$ for all $F$. 
\end{prop}

In many cases, finding a strong extension is not feasible for an incomplete structure universe, but feasible for its completion. See Proposition \ref{prop: roy model, sharp characterization of identified set} for an illustration.

\subsection{The Hypothesis Testing Problem}\label{section: binary decision and hypothesis testing of structures}
In empirical research, a commonly asked question is whether we can tell if the true value of the parameter of interest lies in a set, which can be written as a hypothesis $H$ on the parameter value. In this section, I consider the following formulation of a hypothesis $H$ on a structural \mbox{parameter $\theta$} under a non-refutable assumption $\tilde{A}$: $H=\{s\in \tilde{A}: \theta(s)\in \Theta^0\}$, where $ \Theta^0$ is a parameter value set. The implicit alternative is $H^c\cap \tilde{A}$. Here I only consider non-refutable assumption $\tilde{A}$. If an assumption $A$ is refutable and cannot generate all distributions of observables,  then for some distributions of observables, we cannot make say at least one of $H$ and $H^c\cap A$ holds true.

Policy makers sometimes use the result of hypothesis testing of a parameter value to guide their policy decisions. This decision procedure is called the `inference-based' approach in \citet{manski2019econometrics} and is a conventional practice in medical treatment policy decision \citep{manski2020covid}. However, the `inference-based' policy decision approach can be problematic if the hypothesis on parameter value $H$ does not induce a partition on the observation space $\mathcal{F}$: if \mbox{both $H$} and $H^c\cap \tilde{A}$ can generate some observed distribution $F_0$, then we cannot tell whether $H$ holds by observing $F_0$. To formally discuss this issue, I first discuss the `hypothesis testing' problem assuming that I know the distribution of observables. I call this the binary decision problem \footnote{The same problem is called the binary choice problem in \citet{manski2019econometrics}. To avoid the confusion with concepts in the discrete choice literature, I slightly change the name.}.



\begin{definition}\label{def: weakly binary decidable}
	We say a hypothesis $H$ can be decided by $F$ under $\tilde{A}$ if   either of the following conditions holds:
	\begin{enumerate}
		\item  $F\notin\cup_{s^*\in [H^c\cap \tilde{A}] }  M^{s^*}(G^{s^*})$;
		\item  $F\notin \cup_{s^*\in H }  M^{s^*}(G^{s^*}).$
	\end{enumerate}
	$H$ is called weakly binary decidable under $\tilde{A}$ if there exists an $F\in \mathcal{F}$ such that $H$ can be decided \mbox{by $F$.} $H$ is called strongly binary decidable under $\tilde{A}$, if for all $F\in\mathcal{F}$, $H$ can be decided by $F$.
\end{definition}
If condition 1 in Definition \ref{def: weakly binary decidable} holds, it implies that the true structure $s$ that generates $F$ must be in $H$, since $F$ cannot be predicted by $H^c\cap \tilde{A}$, and this confirms $s\in H$; if condition 2 in Definition \ref{def: weakly binary decidable} holds, it implies that the true structure $s$ cannot be in $H$, since $F$ cannot be predicted by $H$, and this refutes $s\in H$. If both conditions fail, it means $F$ can be predicted by structures both  inside and outside $H$, which creates an ambiguity in the binary decision problem. If $H$ can be decided by any $F$, we say it is strongly binary decidable.

\begin{example}
	Consider a simple linear regression model 
	\begin{equation}\label{eq: Binary Decision Example, linear regression}
	Y_i=\beta_0+\beta_1 Z_i+\eta_i,
	\end{equation}
	where primitive variables are $(Z_i,\eta_i)$, observed variables are $(Y_i,Z_i)$. The correspondence $M^s$ is determined by (\ref{eq: Binary Decision Example, linear regression}) \footnote{ The image of mapping $M^s$ is the push-forward measure of $(Y_i,Z_i)$ under the linear function.}. $M^s$ is determined by two parameters $\beta^s_0,\beta^s_1$.  Assumption $\tilde{A}$ is the classical zero conditional mean restriction: $
	\tilde{A}=\{s: E_{G^s}[\eta_i|Z_i]=0,\,\, (\beta^s_0,\beta^s_1)\in \mathbb{R}^2\}.$
	
	We can show that the hypothesis $H=\{s\in \tilde{A}: \beta^s_1\ge 0\}$ is strongly binary decidable. Indeed, we have $\cup_{s^*\in H }  M^{s^*}(G^{s^*})=\{F: Cov_F(Y_i,Z_i)\ge 0\}$ and $\cup_{s^*\in {H^c\cap\tilde{A}} }  M^{s^*}(G^{s^*})=\{F: Cov_F(Y_i,Z_i)< 0\}$. These two sets do not intersect, so conditions in Definition \ref{def: weakly binary decidable} can be verified for all $F$. We will see another strongly binary decidable hypothesis in an interval data example later.
\end{example}
The following lemma provides an equivalent condition to check whether $H$ is strongly binary decidable. The lemma below uses the definition of non-refutability set (Definition \ref{def: non-refutablity set in incomplete structure}) and confirmation set (Definition \ref{def: confirmation sets}) under $\tilde{A}$ instead of $\mathcal{S}$.

\begin{lem}\label{lem: charaterization of strong binary detectable}
	A hypothesis $H$ is strongly binary decidable under $\tilde{A}$ if and only if  $\mathcal{H}_{\tilde{A}}^{scon}(H)=\mathcal{H}_{\tilde{A}}^{wnf}(H)$.
\end{lem}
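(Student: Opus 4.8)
The plan is to push the whole equivalence through the outcome space, where strong binary decidability becomes a clean disjointness condition, and then translate back into the structure space. Write $U_H := \cup_{s \in H} M^s(G^s)$ and $U_{H^c} := \cup_{s \in H^c} M^s(G^s)$ for the predicted-outcome sets of $H$ and its complement; since $H \cup H^c = \mathcal{S}$ and $\cup_{s \in \mathcal{S}} M^s(G^s) = \mathcal{F}$, these two sets cover $\mathcal{F}$. First I would reformulate strong binary decidability: negating the two alternatives in Definition \ref{def: weakly binary decidable}, $H$ is decided by a given $F$ exactly when $F \notin U_{H^c}$ or $F \notin U_H$, i.e. when $F \notin U_H \cap U_{H^c}$. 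Requiring this for every $F \in \mathcal{F}$ is therefore equivalent to the single condition $U_H \cap U_{H^c} = \emptyset$.

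Next I would rewrite both structure-space sets in the same language. Using $\cap_{s^* \in H^c}(M^{s^*}(G^{s^*}))^c = (U_{H^c})^c$, Definition \ref{def: confirmable sets} gives $\mathcal{H}_\mathcal{S}^{scon}(H) = \{s : M^s(G^s) \cap U_{H^c} = \emptyset\}$, while $\mathcal{H}_\mathcal{S}^{wnf}(H) = \{s : M^s(G^s) \cap U_H \ne \emptyset\}$. Because the inclusion $\mathcal{H}_\mathcal{S}^{scon}(H) \subseteq \mathcal{H}_\mathcal{S}^{wnf}(H)$ always holds (the chain just below Definition \ref{def: confirmable sets}), proving the lemma reduces to showing that $U_H \cap U_{H^c} = \emptyset$ holds if and only if the reverse inclusion $\mathcal{H}_\mathcal{S}^{wnf}(H) \subseteq \mathcal{H}_\mathcal{S}^{scon}(H)$ holds.

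The engine of both directions is that each $s \in \mathcal{S}$ lies in $H$ or in $H^c$, so $M^s(G^s) \subseteq U_H$ or $M^s(G^s) \subseteq U_{H^c}$ accordingly. For the forward direction, assume $U_H \cap U_{H^c} = \emptyset$ and take any $s \in \mathcal{H}_\mathcal{S}^{wnf}(H)$, so $M^s(G^s)$ meets $U_H$. This rules out $s \in H^c$ (that would put $M^s(G^s) \subseteq U_{H^c}$ while meeting $U_H$, contradicting disjointness), hence $s \in H$, so $M^s(G^s) \subseteq U_H$ and is disjoint from $U_{H^c}$; that is $s \in \mathcal{H}_\mathcal{S}^{scon}(H)$. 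For the converse, assume the inclusion and suppose toward a contradiction that some $F \in U_H \cap U_{H^c}$; pick a witness $s_2 \in H^c$ with $F \in M^{s_2}(G^{s_2})$. Then $M^{s_2}(G^{s_2})$ meets $U_H$, so $s_2 \in \mathcal{H}_\mathcal{S}^{wnf}(H)$, hence by hypothesis $s_2 \in \mathcal{H}_\mathcal{S}^{scon}(H)$, forcing $M^{s_2}(G^{s_2}) \cap U_{H^c} = \emptyset$ — impossible since $s_2 \in H^c$ gives $F \in M^{s_2}(G^{s_2}) \subseteq U_{H^c}$.

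I expect the only real subtlety to be the translation step rather than the logic: one must keep straight that confirmability of $H$ is phrased through $H^c$ (so the complement union $U_{H^c}$, not $U_H$, appears inside $\mathcal{H}_\mathcal{S}^{scon}$), and that the asymmetric case split ``$s \in H$ versus $s \in H^c$'' is exactly what converts an outcome-space disjointness into a structure-space inclusion. No nonemptiness caveat is needed, since every $M^s(G^s)$ arising in the argument is exhibited to contain a specific $F$. Once the two sets are expressed via $U_H$ and $U_{H^c}$, the equivalence is immediate.
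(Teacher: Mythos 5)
Your proof is correct and follows essentially the same route as the paper's: both arguments come down to the observation that strong binary decidability is exactly disjointness of $\cup_{s\in H}M^s(G^s)$ and $\cup_{s\in H^c}M^s(G^s)$, witnessed (or violated) by a structure in $H^c$ whose predicted set meets $\cup_{s\in H}M^s(G^s)$. The only cosmetic difference is that the paper routes the set identity through its auxiliary Lemma \ref{lem: charaterization of A set with scon} (reducing $\mathcal{H}_\mathcal{S}^{scon}(H)=\mathcal{H}_\mathcal{S}^{wnf}(H)$ to $H=\mathcal{H}_\mathcal{S}^{wnf}(H)$), whereas you verify the missing inclusion $\mathcal{H}_\mathcal{S}^{wnf}(H)\subseteq\mathcal{H}_\mathcal{S}^{scon}(H)$ directly via the case split $s\in H$ versus $s\in H^c$ --- the same underlying mechanics.
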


Intuitively, Lemma \ref{lem: charaterization of strong binary detectable} says that if we can confirm that the true structure $s$ is in $H$ for all distributions of observables, then we can refute $H^c\cap \tilde{A}$ for all distributions of observables. 

\subsubsection*{Finite Sample Testing}
Now I consider statistical testing of $H$ based on a finite sample. We want to test the null hypothesis that the true structure $s_0$, which generates the outcome distribution $F$, satisfies \mbox{hypothesis $H$} against its complement in $\tilde{A}$:
\[
\mathcal{H}_0: s_0\in H\quad \quad v.s. \quad \quad  \mathcal{H}_1: s_0\in H^c\cap \tilde{A}.
\]
We have a finite sample of $i.i.d$ realizations from  $F$ with empirical distribution $\mathbb{F}_n$ that converges weakly to $F$. A statistical test $T_n$ is a binary function that maps the empirical distribution and some random vector $\mathbf{\eta}$ to $\{0,1\}$:
\[
T_n(\mathbb{F}_n,\mathbf{\eta})=\begin{cases}
&1\quad \quad \text{means we faile to reject } \mathcal{H}_0,\\
&0\quad \quad \text{means we reject } \mathcal{H}_0.
\end{cases}
\]
\begin{definition}
	We say a test statistic $T_n$ achieves pointwise  structural size control at  level  $\alpha$  if:
	\begin{equation}\label{eq: pointwise size control}
	\inf_{F\in\cup_{s\in H} M^s(G^s)}{\lim\inf}_{n\rightarrow \infty} Pr(T_n(\mathbb{F}_n,\mathbf{\eta})=1)\ge 1-\alpha,
	\end{equation}
	and achieves structural test consistency if:
	\begin{equation} \label{eq: test consistency}
	\inf_{F\in\cup_{s\in [H^c\cap \tilde{A}]} M^s(G^s)} {\lim \sup}_{n\rightarrow \infty } Pr(T_n(\mathbb{F}_n,\mathbf{\eta})=0) =1.
	\end{equation}
\end{definition}
The names `structural size' and `structural test consistency' come from the fact that we construct the criteria  (\ref{eq: pointwise size control}), (\ref{eq: test consistency}) through a partition of the assumption  $\tilde{A}=H\cup[H^c\cap\tilde{A}]$ rather than a partition of the observation space $\mathcal{F}$. Structural size and structural power are what we care about since we aim to make a statement on the true structural parameter value. In particular, we may want to make binary decision on counterfactual outcomes.  As we discuss after Definition \ref{def: Identification System}, a counterfactual analysis can be written as a parameter of interest.

The following proposition shows strongly that binary decidability is closely related to structural size control and structural test consistency.
\begin{prop} \label{prop: binary decidable and size-power issue}
	If $H$ is not strongly binary decidable under $\tilde{A}$, then no statistic can simultaneously achieve pointwise structural size control (\ref{eq: pointwise size control}) for $\alpha<1$ and structural test consistency (\ref{eq: test consistency}).
\end{prop}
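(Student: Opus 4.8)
The plan is to isolate the single outcome distribution that witnesses the failure of strong binary decidability and to exploit the fact that any statistical test sees only the empirical distribution, never the structure that produced it.

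\smallskip

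First I would unpack the hypothesis. By Definition~\ref{def: weakly binary decidable} and the definition of strong binary decidability, $H$ is strongly binary decidable exactly when every $F\in\mathcal{F}$ satisfies at least one of conditions 1 and 2 of Definition~\ref{def: weakly binary decidable}. Negating this, the failure of strong binary decidability furnishes some $F_0\in\mathcal{F}$ for which \emph{both} conditions fail, i.e. $F_0\in \cup_{s\in H}M^s(G^s)$ and $F_0\in \cup_{s\in H^c}M^s(G^s)$ simultaneously. This common $F_0$ is the crux of the argument: it is compatible with a structure in $H$ and with a structure in $H^c$ at once.

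\smallskip

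Next I would record the key structural fact that a test $T_n(\mathbb{F}_n,\mathbf{\eta})$ is a function of the empirical distribution $\mathbb{F}_n$ (drawn from the true outcome distribution) and the auxiliary randomness $\mathbf{\eta}$ only. Hence, when the data are generated from $F_0$, the law of $T_n(\mathbb{F}_n,\mathbf{\eta})$ is pinned down by $F_0$ alone and is indifferent to whether we regard $F_0$ as arising from the null region $H$ or from $H^c$. In particular, under $F_0$ we have $Pr(T_n=1)+Pr(T_n=0)=1$ for every $n$.

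\smallskip

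Then I would evaluate the two criteria at this single $F_0$. Because the size-control infimum in (\ref{eq: pointwise size control}) ranges over $\cup_{s\in H}M^s(G^s)\ni F_0$, size control forces $\liminf_{n}Pr(T_n=1)\ge 1-\alpha$ under $F_0$. Because the consistency infimum in (\ref{eq: test consistency}) equals $1$, ranges over $\cup_{s\in H^c}M^s(G^s)\ni F_0$, and probabilities are bounded by $1$, every member of that set—hence $F_0$—attains $\limsup_{n}Pr(T_n=0)=1$. Writing $Pr(T_n=0)=1-Pr(T_n=1)$ and using $\limsup_n(1-a_n)=1-\liminf_n a_n$ turns this into $\liminf_{n}Pr(T_n=1)=0$ under $F_0$. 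Combining the two conclusions gives $1-\alpha\le 0$, contradicting $\alpha<1$; therefore no test can achieve both properties.

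\smallskip

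I expect the only delicate point to be the bookkeeping: correctly passing from the infimum formulations of size and consistency to pointwise statements at $F_0$ (an infimum that equals or is bounded by $1$ over a set forces the bound on each member), together with the elementary $\liminf/\limsup$ conversion. There is no analytic obstacle—the entire force of the result is the observational-equivalence fact that one and the same outcome distribution is consistent with both $H$ and $H^c$, so no data-driven rule built from $\mathbb{F}_n$ can separate them.
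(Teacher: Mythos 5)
Your proof is correct and follows essentially the same route as the paper's: both arguments hinge on producing a single outcome distribution that lies in $\cup_{s\in H}M^s(G^s)\cap\cup_{s\in H^c}M^s(G^s)$ and then playing the size-control bound $\liminf_n Pr(T_n=1)\ge 1-\alpha$ against the consistency requirement $\limsup_n Pr(T_n=0)=1$ at that one distribution via $\limsup_n(1-a_n)=1-\liminf_n a_n$. The only cosmetic difference is that you obtain the witnessing $F_0$ by directly negating the definition of strong binary decidability, whereas the paper routes through Lemmas \ref{lem: charaterization of strong binary detectable} and \ref{lem: charaterization of A set with scon} to get $H\ne\mathcal{H}_\mathcal{S}^{wnf}(H)$ and extract the common $\tilde{F}$ from there; these are equivalent.
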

The converse of this proposition also holds under further regularity conditions: if $H$ is strongly binary decidable under $\tilde{A}$, we can always find a test statistic that achieves pointwise size control and test consistency. Let 
\[
\begin{split}
\mathcal{F}^d=\cup_{n=1}^\infty\big\{&\mathbb{F}_n: \quad \mathbb{F}_n \text{ supported on a finite subset of } supp(X),\\
&\quad Pr_{\mathbb{F}_n}(X_i=x)=\frac{m}{n},\quad m\in \mathbb{N}  \quad m\le n,\quad and \quad x\in supp(X)\big\}
\end{split}
\]
be the collection of all empirical distributions supported on a finite subset of $supp(X)$, and $Pr_{\mathbb{F}_n}(X_i=x)$ can be written as a fraction.
\begin{assumption}\label{assumption: consistency}
	Let $F$ be the true distribution of outcomes and $\Theta_A^{ID}(F)$ is the identified set. Let $d_{\tilde{\mathcal{F}}}$ be  a metric on $\tilde{\mathcal{F}}=\mathcal{F}\cup\mathcal{F}^d$. The following two conditions hold:
	\begin{enumerate}
		\item $\Theta_{\tilde{A}}^{ID}(F)$ is upper hemicontinuous at $F$;
		\item There exists a sequence of $a_n$ such that  $\mathcal{C}_n= \sqrt{a_n}d_{\tilde{\mathcal{F}}}(\mathbb{F}_n,F)=O_p(1)$, and a sequence of constant $c_n$ such that $c_n\ge \mathcal{C}_n$ holds with probability converging to 1, and $c_n/\sqrt{a_n}\rightarrow 0$. 
	\end{enumerate}
\end{assumption}
\begin{prop}\label{prop: existence of size-consistecy test stat for strongly bin decidable}
	Let Assumption \ref{assumption: consistency} hold. If $\Theta^0$ is a closed set, then there exists a test statistic $T_1(\mathbb{F}_n,\eta)$ that achieves pointwise structural size control (\ref{eq: pointwise size control}) for any  $\alpha\ge 0$ and structural test consistency (\ref{eq: test consistency}) simultaneously. 
\end{prop}

\subsubsection*{Point Identified and Partially Identified Models}

The following proposition shows that hypotheses about a point identified parameter of interest are always strongly binary decidable.
\begin{prop}
	\label{prop: point identification implies strong binary decidable}
	Let $\tilde{A}$ be non-refutable. If $\theta$ is point identified under $\tilde{A}$, i.e. $\Theta_{\tilde{A}}^{ID}(F)$ is a singleton for all $F\in \mathcal{F}$, then $H= \{s\in \tilde{A}: \theta(s)\in \Theta_0\}$ is strongly binary decidable for any parameter value set $\Theta_0\subseteq \Theta$.
	
	Conversely, suppose $\theta$ is partially identified under $\tilde{A}$, and there exist $F$ and $F'$ such that $\Theta_{\tilde{A}}^{ID}(F)\cap \Theta_{\tilde{A}}^{ID}(F')\ne \emptyset$, $\Theta_{\tilde{A}}^{ID}(F)\ne \Theta_{\tilde{A}}^{ID}(F')$. Then there exists a parameter value set $\Theta_0$ such that $H= \{s\in \tilde{A}: \theta(s)\in \Theta_0\}$ is not strongly binary decidable.
\end{prop}

Proposition \ref{prop: point identification implies strong binary decidable} shows that the traditional hypothesis testing approach works in a point identified model, regardless of the hypotheses on the parameter of interest. However, for a partially identified model, the formulation of a hypothesis is crucial. Let's consider the following policy decision rule: `we implement a policy $P$ if and only if the true structure is in $H$. When $H$ is not strongly binary decidable, we have size and power issue for any test statistic $T_n(\mathbb{F},\eta)$. If we decide to implement $P$ if and only if $T_n(\mathbb{F},\eta)=1$, we also know that the testing procedure cannot reject structures in $H^c$ consistently. If the policy $P$ is harmful when the true structure $s$ does not satisfy the parameter constraint of $H$, and we implement $P$ when $T_n(\mathbb{F},\eta)=1$, the policy $P$ can be harmful to the economy.

The problem does not arise from the sampling error but arises from the intrinsic inability to distinguish $H$ and $H^c$ by the distribution of observables. If we want to use a decision rule based on a hypothesis $\tilde{H}$ such that `we implement the policy $P$ if and only if the true structure is in $\tilde{H}$, the hypothesis $\tilde{H}$ must be strongly binary decidable. \footnote{An alternative approach is to formulate the hypothesis testing problem as a statistical decision problem, see Section 2.3 in \citet{manski2019econometrics} for discussion.}

\subsubsection*{Extended and Subset Hypotheses}
The next question is whether we can find a strongly binary decidable extended set or subset.
\begin{definition}
	A strongly binary decidable extension $H^{ext}$ is a strongly binary decidable set such that $H\subseteq H^{ext}\subseteq \tilde{A}$. 	A strongly binary decidable subset $H^{sub}$ is a strongly binary decidable set such that $H^{sub}\subseteq H$. 
\end{definition}


If the benefit to correctly implement a policy $P$ when $H$ is true is large, and the cost of mistakenly implementing $P$ when the true structure is in $H^{ext}\backslash H$ is small, we may want to test $H^{ext}$. Conversely, if there is a huge cost when we implement $P$ if $H^c$ is true, we may want to test $H^{sub}$. In this case, we sacrifice the benefit when the true structure is in $H\backslash H^{sub}$ to avoid the risk of mistakenly implementing $P$. The following proposition provides the minimal (resp. maximal) strongly binary decidable extension (resp. subset set).
\begin{prop} \label{Prop: smallest binary decidable extension (shrinkage)}
	If $\mathcal{H}_{\tilde{A}}^{snf}(H)=\mathcal{H}_{\tilde{A}}^{wnf}(H)$, then $\mathcal{H}_{\tilde{A}}^{snf}(H)$ is the smallest  strongly binary decidable extension.
	
	If $\mathcal{H}_{\tilde{A}}^{scon}(H)=\mathcal{H}_{\tilde{A}}^{wcon}(H)$, then $\mathcal{H}_{\tilde{A}}^{wcon}(H)$ is the largest strongly binary decidable subset set.	
\end{prop}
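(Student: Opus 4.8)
The plan is to treat $\mathcal{H}_\mathcal{S}^{snf}(H)$ (resp.\ $\mathcal{H}_\mathcal{S}^{wcon}(H)$) as a fixed point of the refutability operator and then extract minimality (resp.\ maximality) from monotonicity, using Lemma \ref{lem: charaterization of strong binary detectable} to certify strong binary decidability. Throughout I write $B=\mathcal{H}_\mathcal{S}^{snf}(H)$ and $U_A=\cup_{s\in A}M^s(G^s)$ for any $A\subseteq\mathcal{S}$.

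For the first claim I would first record that $U_B=U_H$: the inclusion $U_B\subseteq U_H$ is immediate from the definition of $\mathcal{H}_\mathcal{S}^{snf}$, while $U_H\subseteq U_B$ follows from $H\subseteq B$. Hence $\mathcal{H}_\mathcal{S}^{wnf}(B)=\mathcal{H}_\mathcal{S}^{wnf}(H)$ and $\mathcal{H}_\mathcal{S}^{snf}(B)=\mathcal{H}_\mathcal{S}^{snf}(H)=B$ (the latter also being the idempotency in part (4) of Proposition \ref{prop: operation of confirmable and refutable set }), so under the hypothesis $\mathcal{H}_\mathcal{S}^{snf}(H)=\mathcal{H}_\mathcal{S}^{wnf}(H)$ we get $\mathcal{H}_\mathcal{S}^{wnf}(B)=B$. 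The crux is to upgrade this to $\mathcal{H}_\mathcal{S}^{scon}(B)=B$, which by Lemma \ref{lem: charaterization of strong binary detectable} certifies that $B$ is strongly binary decidable. Since $\mathcal{H}_\mathcal{S}^{scon}(B)\subseteq B$ always holds, it suffices to show no $s\in B$ is observationally equivalent to any $s^*\in B^c$: if $F\in M^s(G^s)\cap M^{s^*}(G^{s^*})$ for such a pair, then $F\in U_H$ because $s\in B$, which forces $s^*\in\mathcal{H}_\mathcal{S}^{wnf}(H)=B$, contradicting $s^*\in B^c$. This observational-separation step, which is exactly where the assumption $\mathcal{H}_\mathcal{S}^{snf}(H)=\mathcal{H}_\mathcal{S}^{wnf}(H)$ is used, is the main obstacle; everything else is bookkeeping with the containment chain $\mathcal{H}_\mathcal{S}^{scon}(A)\subseteq\mathcal{H}_\mathcal{S}^{wcon}(A)\subseteq A\subseteq\mathcal{H}_\mathcal{S}^{snf}(A)\subseteq\mathcal{H}_\mathcal{S}^{wnf}(A)$.

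Non-triviality is then immediate: $H\subseteq B$ since $H\subseteq\mathcal{H}_\mathcal{S}^{snf}(H)$, and $B\ne\mathcal{S}$ by hypothesis. For minimality I would use that every strongly binary decidable extension $H^{ext}$ is a fixed point of $\mathcal{H}_\mathcal{S}^{snf}$: by Lemma \ref{lem: charaterization of strong binary detectable} the two endpoints of the containment chain for $H^{ext}$ coincide, squeezing $\mathcal{H}_\mathcal{S}^{snf}(H^{ext})=H^{ext}$. Since $\mathcal{H}_\mathcal{S}^{snf}$ is monotone (clear from $U_A\subseteq U_{A'}$ whenever $A\subseteq A'$) and $H\subseteq H^{ext}$, we conclude $B=\mathcal{H}_\mathcal{S}^{snf}(H)\subseteq\mathcal{H}_\mathcal{S}^{snf}(H^{ext})=H^{ext}$, so $B$ sits inside every such extension.

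The second claim I would obtain by duality rather than repeating the argument. Applying parts (1) and (2) of Proposition \ref{prop: operation of confirmable and refutable set } with $A=H^c$ gives $\mathcal{H}_\mathcal{S}^{wcon}(H)=[\mathcal{H}_\mathcal{S}^{snf}(H^c)]^c$ and $\mathcal{H}_\mathcal{S}^{scon}(H)=[\mathcal{H}_\mathcal{S}^{wnf}(H^c)]^c$, so the hypothesis $\mathcal{H}_\mathcal{S}^{scon}(H)=\mathcal{H}_\mathcal{S}^{wcon}(H)\ne\emptyset$ is precisely $\mathcal{H}_\mathcal{S}^{snf}(H^c)=\mathcal{H}_\mathcal{S}^{wnf}(H^c)\ne\mathcal{S}$. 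The first claim then makes $\mathcal{H}_\mathcal{S}^{snf}(H^c)$ the smallest non-trivial strongly binary decidable extension of $H^c$. Finally I would observe that strong binary decidability is complement-symmetric, since the two conditions of Definition \ref{def: weakly binary decidable} merely interchange $H$ and $H^c$, and that complementation is an inclusion-reversing bijection sending non-trivial extensions of $H^c$ to non-trivial shrunken sets of $H$ (indeed $H^c\subseteq K\iff K^c\subseteq H$ and $K\ne\mathcal{S}\iff K^c\ne\emptyset$). Because complement reverses $\subseteq$, the smallest extension of $H^c$ maps to the largest shrunken set of $H$, namely $[\mathcal{H}_\mathcal{S}^{snf}(H^c)]^c=\mathcal{H}_\mathcal{S}^{wcon}(H)$; the only point needing care is checking that the extremal property transports faithfully under complementation, which is direct once inclusion-reversal is noted.
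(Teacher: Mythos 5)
Your proof is correct. The first half is essentially the paper's argument: you certify that $B=\mathcal{H}_{\mathcal{S}}^{snf}(H)$ is strongly binary decidable by showing it is a fixed point of the non-refutability/confirmability operators, and you get minimality from the fact that any strongly binary decidable extension $H^{ext}$ satisfies $\mathcal{H}_{\mathcal{S}}^{snf}(H^{ext})=H^{ext}$ together with monotonicity of $\mathcal{H}_{\mathcal{S}}^{snf}$. The only cosmetic difference there is that your ``observational-separation step'' re-derives by hand the implication $A=\mathcal{H}_{\mathcal{S}}^{wnf}(A)\Rightarrow A=\mathcal{H}_{\mathcal{S}}^{scon}(A)$, which is already available as Lemma \ref{lem: charaterization of A set with scon}; once you have $\mathcal{H}_{\mathcal{S}}^{wnf}(B)=B$ you could cite that lemma and skip the hand argument. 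Where you genuinely diverge is the second claim: the paper proves it directly by mirroring the extension argument (every strongly binary decidable shrunken set is a fixed point of $\mathcal{H}_{\mathcal{S}}^{scon}$, monotonicity places it inside $\mathcal{H}_{\mathcal{S}}^{scon}(H)$, and one then re-verifies decidability of $\mathcal{H}_{\mathcal{S}}^{wcon}(H)$), whereas you deduce it from the first claim by complementation, using parts (1)--(2) of Proposition \ref{prop: operation of confirmable and refutable set } to translate the hypothesis into $\mathcal{H}_{\mathcal{S}}^{snf}(H^c)=\mathcal{H}_{\mathcal{S}}^{wnf}(H^c)\ne\mathcal{S}$ and the symmetry of Definition \ref{def: weakly binary decidable} under $H\leftrightarrow H^c$ to transport decidability. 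Your route halves the work and makes the refutability/confirmability duality do the bookkeeping, at the cost of having to check carefully that complementation is an inclusion-reversing bijection between non-trivial extensions of $H^c$ and non-trivial shrunken sets of $H$ that preserves strong binary decidability; the paper's direct mirror avoids that meta-step but repeats the fixed-point computation. Both are valid, and the duality observation is arguably the cleaner way to see why the two halves of the proposition are really one statement.
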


For complete structure universes,  $\mathcal{H}_{\tilde{A}}^{snf}(H)=\mathcal{H}_{\tilde{A}}^{wnf}(H)$ and $\mathcal{H}_{\tilde{A}}^{scon}(H)=\mathcal{H}_{\tilde{A}}^{wcon}(H)$ hold automatically, so we can always find a non-trivial strongly binary decidable extension (subset set). In the following, I present an example with a complete structure universe.

\begin{example}[label=exa: Interval Data] (Interval Data) Consider a classical missing data problem where $Y_i^*$ is the unobserved real random variable, bounded above and below by observed variables $Y_i^u$ and $Y_i^l$. In this case, we can consider two primitive random variables $\epsilon_i^u$ and $\epsilon_i^l$, such that $\epsilon_i^u$ is supported on $[0,\infty)$ and $\epsilon_i^l$ is supported on $(-\infty,0]$. Observed variables $Y_i^u$ and $Y_i^l$ are generated through:
	\begin{equation}\label{eq: interval data}
	Y_i^u=Y_i^*+\epsilon_i^u\quad \quad and\quad  \quad Y_i^l=Y_i^*+\epsilon_i^l.
	\end{equation}
	A structure consists of a joint distribution $G^s$ of $(Y_i^*,\epsilon_i^u,\epsilon_i^l)$ that satisfies the support conditions, and the mapping (\ref{eq: interval data}). The structure universe $\mathcal{S}$ contains  all structures with a distribution of $(Y_i^*,\epsilon_i^u,\epsilon_i^l)$ and the mapping (\ref{eq: interval data}). We impose no further assumption, so $\tilde{A}=\mathcal{S}$. Since the mapping outcome in (\ref{eq: interval data}) is unique, the structure universe is complete. Our hypothesis set is $H=\{s: E_{G^s}(Y_i^*)\in[a,b]\}$ and the corresponding hypothesis testing problem is:
	\[
	\mathcal{H}_0: \, E[Y_i^*]\in [a,b] \quad \quad v.s.\quad \quad \mathcal{H}_1: \, E[Y_i^*]\notin [a,b].
	\]
	The non-refutability set associated with $H$ is
	\begin{equation}\label{eq: Interval Data eg, intersec condi}
	\mathcal{H}_{\mathcal{S}}^{snf}(H)=\mathcal{H}_{\mathcal{S}}^{wnf}(H)=\left\{s:\, \left[E_{G^s}(Y_i^*+\epsilon_i^l),E_{G^s}(Y_i^*+\epsilon_i^u)\right]\cap [a,b]\ne \varnothing\right\}.
	\end{equation}
	Indeed, for any $s$ that satisfies the intersection condition above, suppose without loss of generality that $a\in\left[E_{G^s}(Y_i^*+\epsilon_i^l),E_{G^s}(Y_i^*+\epsilon_i^u)\right]$.
	We can construct $\tilde{s}$ such that 
	\[
	\begin{split}
	\tilde{Y}_i^*=Y_i^*+\epsilon_i^u,\quad 
	\tilde{\epsilon}_i^u=0, \quad \text{and}\quad 
	\tilde{\epsilon}_i^l=\epsilon_i^l-\epsilon_i^u\quad a.s.,
	\end{split}
	\]
	and $G^{\tilde{s}}$ is the distribution of $(\tilde{Y}_i^*,\tilde{\epsilon}_i^u,\tilde{\epsilon}_i^l)$. It is easy to see $E_{G^{\tilde{s}}}(\tilde{Y}_i^*)=a$ and $\tilde{\epsilon}_i^u\ge 0$, $\tilde{\epsilon}_i^l\le 0$ almost surely, so support conditions of $(\epsilon_i^u,\epsilon_i^l)$ are satisfied. This implies $\tilde{s}\in \mathcal{H}_{\mathcal{S}}^{snf}(H)$. Conversely, for any $s$ that fails the intersection condition (\ref{eq: Interval Data eg, intersec condi}), for example $E_{G^s}(Y_i^*+\epsilon_i^u)<a$, then for any $\tilde{s}$ such that $M^{\tilde{s}}(G^{\tilde{s}})=M^{{s}}(G^{{s}})$, we have  $E_{\tilde{s}}[Y_i^*]\le E_{\tilde{s}}[Y_i^u]=E_{G^{\tilde{s}}}(Y_i^*+\epsilon_i^u)<a$. As a result, $\tilde{s}\notin H$. The confirmation set can be derived through Proposition \ref{prop: operation of confirmable and refutable set }:
	\[
	\mathcal{H}_{\mathcal{S}}^{scon}(H)=\mathcal{H}_{\mathcal{S}}^{wcon}(H)=\left\{s:\, \left[E_{G^s}(Y_i^*+\epsilon_i^l),E_{G^s}(Y_i^*+\epsilon_i^u)\right]\subseteq [a,b]\right\}.
	\]
	If we want to test $\mathcal{H}_{\mathcal{S}}^{snf}(H)$, a natural statistic is 
	\[
	T_n^{nf}(\mathbb{F}_n)= \sqrt{n}\left[\left(\frac{1}{n}\sum_{i=1}^n Y_i^u-a\right)_-^2+\left(b-\frac{1}{n}\sum_{i=1}^n Y_i^l\right)_-^2\right],
	\]
	where  $(x)_-=\min(0,x)$.
	If we want to test $\mathcal{H}_{\mathcal{S}}^{scon}(H)$, a natural statistic is 
	\[
	T_n^{con}(\mathbb{F}_n)= \sqrt{n}\left[\left(b-\frac{1}{n}\sum_{i=1}^n Y_i^u\right)_-^2+\left(\frac{1}{n}\sum_{i=1}^n Y_i^l-a\right)_-^2\right].
	\]
\end{example}

In the example above, the non-refutability set and confirmation set associated with $H$ are easy to find, while in more complicated structural models, the non-refutable and confirmation sets can be hard to characterize. In a complete structure universe, if $ H$ is not a strongly binary decidable, and $H$ is refutable (resp. confirmable), we want to instead test $s_0\in \mathcal{H}_{\tilde{A}}^{snf}(H)$ (resp. $s_0\in \mathcal{H}_{\tilde{A}}^{wcon}(H)$), which is strongly binary decidable. The following proposition shows that in a complete structure universe, testing $s_0\in \mathcal{H}_{\tilde{A}}^{snf}(H)$  can be equivalently written as a test of the existence of a structure that rationalizes data.

\begin{prop} \label{prop: equivalent decision}
	(Equivalent Decision) Let $s^0$ be the true structure that generates $F$.  If $\mathcal{H}_{\tilde{A}}^{wnf}(H)= \mathcal{H}_{\tilde{A}}^{snf}(H)$, then the following two conditions are equivalent :
	\begin{enumerate}
		\item $\exists s\in H$ such that $F\in M^s(G^s)$.
		\item The true structure $s_0\in \mathcal{H}_{\tilde{A}}^{snf}(H)$.
	\end{enumerate}
	If $\mathcal{H}_{\tilde{A}}^{wcon}(H)= \mathcal{H}_{\tilde{A}}^{scon}(H)$, then the following two conditions are equivalent :
	\begin{enumerate}
		\item $\{s\in \tilde{A}: \,\,F\in M^s(G^s)\}\subseteq H$.
		\item The true structure $s_0\in \mathcal{H}_{\tilde{A}}^{scon}(H)$.
	\end{enumerate}
\end{prop}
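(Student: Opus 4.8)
The plan is to prove the two biconditionals in parallel, exploiting that they have identical logical structure: in each one, the implication $2\Rightarrow 1$ is a direct unwinding of the definition of the relevant strong set evaluated at $F$, while the reverse implication $1\Rightarrow 2$ is exactly where the hypothesized equality of the weak and strong sets is used. Throughout I would record at the outset that ``$s_0$ is the true structure generating $F$'' means precisely $F\in M^{s_0}(G^{s_0})$, since this membership is the only fact about $s_0$ that I need.

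For the first equivalence, assume $\mathcal{H}_\mathcal{S}^{wnf}(H)=\mathcal{H}_\mathcal{S}^{snf}(H)$. To see $2\Rightarrow 1$: if $s_0\in\mathcal{H}_\mathcal{S}^{snf}(H)$ then by definition $M^{s_0}(G^{s_0})\subseteq\cup_{s^*\in H}M^{s^*}(G^{s^*})$, and since $F\in M^{s_0}(G^{s_0})$ we get $F\in\cup_{s^*\in H}M^{s^*}(G^{s^*})$, which furnishes an $s\in H$ with $F\in M^s(G^s)$. To see $1\Rightarrow 2$: if some $s\in H$ rationalizes $F$, then $F\in M^{s_0}(G^{s_0})\cap\big(\cup_{s^*\in H}M^{s^*}(G^{s^*})\big)$, so this intersection is nonempty and $s_0\in\mathcal{H}_\mathcal{S}^{wnf}(H)$; the hypothesis then upgrades this to $s_0\in\mathcal{H}_\mathcal{S}^{snf}(H)$.

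For the second equivalence the only extra ingredient is a De Morgan translation, which I would isolate as the first step. Note $\cap_{s^*\in H^c}\big(M^{s^*}(G^{s^*})^c\big)=\big(\cup_{s^*\in H^c}M^{s^*}(G^{s^*})\big)^c$, and that condition $1$, namely $\{s:F\in M^s(G^s)\}\subseteq H$, says exactly that no structure outside $H$ predicts $F$, i.e. $F\notin\cup_{s^*\in H^c}M^{s^*}(G^{s^*})$, equivalently $F\in\cap_{s^*\in H^c}\big(M^{s^*}(G^{s^*})^c\big)$. With this reformulation in hand the argument mirrors the first part: for $2\Rightarrow 1$ I unwind $M^{s_0}(G^{s_0})\subseteq\cap_{s^*\in H^c}\big(M^{s^*}(G^{s^*})^c\big)$ and evaluate at $F$; for $1\Rightarrow 2$ I observe $F\in M^{s_0}(G^{s_0})\cap\big[\cap_{s^*\in H^c}(M^{s^*}(G^{s^*})^c)\big]$, so $s_0\in\mathcal{H}_\mathcal{S}^{wcon}(H)$, and the hypothesis $\mathcal{H}_\mathcal{S}^{wcon}(H)=\mathcal{H}_\mathcal{S}^{scon}(H)$ promotes weak to strong.

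The main thing to get right is the correct reading of condition $1$ in the second part: the subset statement about rationalizing structures must be converted into a statement about $F$ avoiding all outcome sets generated by $H^c$, which is precisely the content of the confirmable-set definition. Once that dictionary is in place, the only genuinely nontrivial input is the hypothesized equality of the weak and strong sets, which does the work of converting the cheap membership in the weak set---obtained by exhibiting the single witness $F$---into membership in the strong set. No completeness of $\mathcal{S}$ beyond what is already encoded in that equality is required.
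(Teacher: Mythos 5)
Your proposal is correct and follows essentially the same route as the paper's own proof: in each equivalence, one direction unwinds the definition of the strong set at the witness $F\in M^{s_0}(G^{s_0})$, and the other exhibits $s_0$ as a member of the weak set and invokes the hypothesized equality to upgrade it. The only cosmetic difference is that you prove the confirmable-set direction $2\Rightarrow 1$ directly via the De Morgan identity rather than by contraposition as the paper does.
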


\begin{example}[continues=exa: Interval Data]
	In the interval data example above, if there exists a structure $s$ such that $F\in M^s(G^s)$, and $E_{G^s}(Y_i^*)\in [a,b]$, the structure $s$ implies 
	\[
	E_{F}(Y_i^u)\ge E_{G^s}(Y_i^*)\ge a\quad \quad and \quad \quad  E_{F}(Y_i^l)\le E_{G^s}(Y_i^*)\le b.
	\]
	A possible test statistic to test this implication is to use $T_n^{nf}(\mathbb{F}_n)$ defined above. 
	
	On the other hand, if any struture $s$ that can generate $F$ is contained in $H$, the following two  extreme cases:
	\[
	\begin{split}
	Y_i^*= Y_i^u \quad \epsilon_i^u\equiv 0 \quad \epsilon_i^l=Y_i^l-Y_i^u,\\
	Y_i^*= Y_i^l \quad \epsilon_i^l\equiv 0 \quad \epsilon_i^u=Y_i^u-Y_i^l,
	\end{split}
	\]
	must also be included in $H$, which means 
	$E_F[Y_i^u]\le b$ and $E_F[Y_i^l]\ge a 	$
	must hold. A possible statistic to test this implication is to use $T_n^{con}(\mathbb{F}_n)$ defined above. 
\end{example}
%

\section{Application to Treatment Effects}
In this section, I apply the method to \citet{IA1994} with a binary treatment and a binary instrument. The observed outcome variable $Y_i$ and treatment decision $D_i$ are generated through 
\begin{align} \label{eq: potential outcome}
\begin{split}
Y_i&= Y_i(1,1)D_iZ_i+Y_i(0,1)(1-D_i)Z_i+Y_i(1,0)D_i(1-Z_i)+Y_i(0,0)(1-D_i)(1-Z_i),\\
D_i&= D_i(1)Z_i +D_i(0)(1-Z_i),
\end{split}
\end{align}
where $D_i(1),D_i(0)$ are potential treatment decisions, $Y_i(d,z)$ are the potential outcome and $Z_i$ is a binary instrument.

Primitive variables are $\epsilon_i=(D_i(1),D_i(0),Y_i(0,0),Y_i(1,0),Y_i(0,1),Y_i(1,1),Z_i)$ and observed variables are $X_i=(Y_i,D_i,Z_i)$.  Let $\mathcal{Y}$ be the space of $Y_i$ and let $\mathcal{B}$ be a Borel-sigma algebra on $\mathcal{Y}$. The observation space is 
\begin{equation}\label{eq: appli, observation space}
\mathcal{F}=\{F_X(y,d,z):\quad D_i,Z_i\in\{0,1\}\},
\end{equation}
and space of potential distribution 
\begin{equation}\label{eq: appli, primitive space}
\begin{split}
\mathcal{G}=\big\{&G_{\epsilon} \text{ is distribution of $\epsilon$}: \epsilon\in\{0,1\}^2\times\mathcal{Y}^4\times \{0,1\}  \big\}.
\end{split}
\end{equation}
All structures agrees on the functional relation between $X_i$ and $\epsilon_i$ specified in (\ref{eq: potential outcome}). The mapping\footnote{See Definition \ref{def: economic structure}}  $M^s$ is defined as:
\begin{equation}\label{eq: appli, M^s}
\begin{split}
M^s(G^s)=\big\{F\in \mathcal{F}:\, &Pr_{F}(Y_i\in B,D_i=d,Z_i=z)= Pr_{G^s}(Y_i(d,z)\in B,D_i(z)=d,Z_i=z),\\
& \forall B\in \mathcal{B}, \quad d,z\in\{0,1\}
\big\}.
\end{split}
\end{equation}
$M^s$ contains exactly one predicted distribution of observables and all structures are complete. The structure universe $\mathcal{S}$ is:
\begin{equation} \label{eq: appli, potential outcome model space}
\mathcal{S}=\left\{s|\,\, G^s\in \mathcal{G},\,\, M^s \,\,satisfies \,\, (\ref{eq: appli, M^s}) \right\}.
\end{equation}      

Following \citet{kitagawa2015}, I define the following two quantities for all $B\in \mathcal{B}$ and $d\in \{0,1\}$:
	\begin{equation}\label{eq: potential outcome, link between F and G}
	\begin{split}
	P(B,d)\equiv Pr_F(Y_i\in B,D_i=d|Z_i=1), &\\
	Q(B,d)\equiv Pr_F(Y_i\in B,D_i=d|Z_i=0). &\\
	\end{split}
	\end{equation}

The Imbens-Angrist Monotonicity assumption (IA-M) assumes exogeneity, exclusion and monotonicity of the instrument $Z_i$:
\begin{equation} \label{eq: IA instrument assumption }
\begin{split}
A=\bigg\{ s\bigg| &G^s\,\,\text{satisfies}: D_i(1)\ge D_i(0)\,\,\text{a.s.,}\\
& Z_i\perp \left(Y_i(1,1),Y_i(0,1),Y_i(1,0),Y_i(0,0),D_i(1),D_i(0)\right),\\
& Y_i(1,1)=Y_i(1,0) \quad \text{and} \quad Y_i(0,1)=Y_i(0,0)\bigg\}.
\end{split}
\end{equation}
\citet{kitagawa2015} derives the sharp testable implications of the IA-M assumption (\ref{eq: IA instrument assumption }). I reformulate the result in the language of non-refutability sets in the following lemma. 

\begin{lem}\label{lem: testable implication of IA assumption}
	Let $P(\cdot,d)$ and $Q(\cdot,d)$, $d\in\{0,1\}$, be absolutely continuous with respect to some measure $\mu_F$.\footnote{Such dominating measure always exists, for example define $\mu_F(B)=P(B,1)+Q(B,1)+P(B,0)+Q(B,0)$ for all $B\in \mathcal{B}$.} The  non-refutability set associated with IA-M assumption 
	$\mathcal{H}_\mathcal{S}^{snf}(A)$ is the collection of structures $s$ such that if $F\in M^s(G^s)$, then for all Borel set $B$:
	\begin{equation} \label{eq: testable implication of LATE}
	\begin{split}
	P(B,1)&\ge Q(B,1),\\
	Q(B,0)&\ge P(B,0).
	\end{split}
	\end{equation}
\end{lem}

\citet{kitagawa2015} proposes using the core determining class \citep{galichon2011set} such as the class of closed intervals to test  (\ref{eq: testable implication of LATE}). Alternatively, (\ref{eq: testable implication of LATE}) can be equivalently formulated using Radon-Nikodym densities. We will see the advantage of densities when we construct extensions. 
\begin{thm}\label{thm: testable implication in density form}
	 Let $\mu_F$ be the common dominating measure in Lemma \ref{lem: testable implication of IA assumption}.  Let $p(y,d)$ and $q(y,d)$ be defined as:
	\begin{equation} \label{eq: R-D derivatives of P Q}
	p(y,d)=\frac{dP(B,d)}{d\mu_F}\quad\quad q(y,d)=\frac{dQ(B,d)}{d\mu_F}.
	\end{equation}
	Then the testable implication (\ref{eq: testable implication of LATE}) holds if and only if 
	\begin{equation}\label{eq: testable implication in density form}
	\begin{split}
	p(y,1)-q(y,1)\ge 0 \quad \quad &\mu_F-\text{a.s}.\,,\\
	q(y,0)-p(y,0)\ge 0 \quad \quad &\mu_F-\text{a.s}.\, .
	\end{split}
	\end{equation}
\end{thm}
 Our main parameter of interest $\theta$ is the local average treatment effect for compliers:
\begin{equation}\label{eq: appli, formular}
LATE\equiv E[Y_i(1,1)-Y_i(0,0)|D_i(1)=1,D_i(0)=0].
\end{equation}
Under the IA-M assumption $A$, the identified set for LATE is characterized by
\begin{equation}\label{eq: appli, ID set}
LATE_{A}^{ID}(F)=\begin{cases}
\frac{E[Y_i|Z_i=1]-E[Y_i|Z_i=0]}{E[D_i|Z_i=1]-E[D_i|Z_i=0]}\quad \text{if  (\ref{eq: testable implication in density form})  holds},\\
\varnothing \quad \text{otherwise}.\\
\end{cases}
\end{equation}
As shown in Lemma \ref{lem: testable implication of IA assumption}, the IA-M assumption $A$ is refutable. In most empirical applications, researchers do  not test this implication, neither do they specify what should be done when the testable implication is rejected. In the next section, I use the relaxed assumption approach to find relaxed assumptions $\tilde{A}$ such that $\tilde{A}$ is non-refutable, find the identified set under $\tilde{A}$, and discuss the estimation and inference on LATE under $\tilde{A}$.

\subsection{Extensions of the IA-M Assumption}
In this section, I will first show that the IA-M assumption have an alternative representation. As discussed in Section \ref{section: minimal deviation method}, different representations of an assumption can result in different extensions. 
The canonical representation (\ref{eq: IA instrument assumption }) and the alternative representation will be used to constructed different extensions. I will then look at the maximal relaxation in Definition \ref{prop: exist of strong extension} and show that the identified set for LATE under the maximal relaxation does not satisfy Property \ref{property: continuity}. Then I proceed to construct extensions using the minimal deviation method.

The following is an alternative representation of the IA-M assumption that will be used throughout this section.

\begin{lem} \label{lem: alternative representation of IA-M assumption}
	The IA-M assumption defined in (\ref{eq: IA instrument assumption }) can be equivalently written as the intersection: $A=A^{ER}\cap A^{TI}\cap A^{EM-NTAT}\cap A^{ND}$ where:\begin{enumerate}
		\item $A^{ER}=\left\{s\big|Y_i(1,1)=Y_i(1,0) \quad and \quad Y_i(0,1)=Y_i(0,0) \right\}$ is the exclusion restriction;
		\item $A^{TI}=\left\{s\big| Z_i\perp \left(Y_i(1,1),Y_i(0,1),Y_i(1,0),Y_i(0,0)\right)|D_i(1),D_i(0) \right\}$ is the type independent instrument assumption;
		\item Assumption $A^{EM-NTAT}$ is the set of structures $s$ such that the measures of always takers and never takers are independent of $Z_i$, i.e. \[
		\begin{split}
		E_{G^s}[\mathbbm{1}(D_i(1)=D_i(0)=1)|Z_i=1]&=E_{G^s}[\mathbbm{1}(D_i(1)=D_i(0)=1)|Z_i=0],\\
		E_{G^s}[\mathbbm{1}(D_i(1)=D_i(0)=0)|Z_i=1]&=E_{G^s}[\mathbbm{1}(D_i(1)=D_i(0)=0)|Z_i=0].
		\end{split}\]
		\item $A^{ND}=\left\{ s\big| G^s\,\,satisfies: D_i(1)\ge D_i(0)\right\}$ is the no defiers assumption.
	\end{enumerate}
\end{lem}

\subsubsection{The Maximal Extension with Exclusion Restriction and `No Defiers'}
To fix the idea, let's consider the case that $A^{ER}$ and $A^{ND}$ holds but we relax the independent instrument assumption. Moreover we consider the extension set $\tilde{A}^{max}\equiv \left(A\cup[{\mathcal{H}_{\mathcal{S}}^{snf}(A)}]^c\right) \cap A^{ER}\cap A^{ND}$. $\tilde{A}^{max}$ is the maximal strong extension defined in Proposition \ref{prop: exist of strong extension} intersected with the exclusion restriction and the `No Defiers' assumption. 

\begin{prop}\label{prop: appli, maximal extension arbitrary instrument}
$\tilde{A}^{max}\equiv \left(A\cup[{\mathcal{H}_{\mathcal{S}}^{snf}(A)}]^c\right) \cap A^{ER}\cap A^{ND}$ is a strong extension. The closure of the identified set for LATE under $\tilde{A}^{max}$ is 
\[
\overline{{LATE}^{ID}_{\tilde{A}^{max}}(F)}=\begin{cases}
\frac{E[Y_i|Z_i=1]-E[Y_i|Z_i=0]}{E[D_i|Z_i=1]-E[D_i|Z_i=0]}\quad &\text{if (\ref{eq: testable implication in density form}) holds for } \,\,F,\\
\left[ \underline{\mathcal{Y}}_{P(B,1)}-\bar{\mathcal{Y}}_{Q(B,0)},\bar{\mathcal{Y}}_{P(B,1)}-\underline{\mathcal{Y}}_{Q(B,0)}\right] &\text{otherwise},
\end{cases}
\]
where for $V\in\{P,Q\}$, $\underline{\mathcal{Y}}_{V(B,0)}$  is the lower bound of the support of $Y_i$ under measure $V(B,0)$, and $\bar{\mathcal{Y}}_{V(B,1)}$ is the upper bound of the support of $Y_i$ under measure $V(B,1)$.
\end{prop}
The $\tilde{A}^{max}$ above allows arbitrary dependence of the instrument  on the potential outcomes whenever the testable implication (\ref{eq: testable implication in density form}) fails. First, we should note that the identified set for LATE is very unstable when $F$ satisfies $p(y,1)-q(y,1)=0$ for some $y\in \mathcal{Y}$. Whenever we perturb $F$ slightly such that $p(y,1)-q(y,1)<0$ and (\ref{eq: testable implication in density form})  fails, the identified set for LATE explodes. Second, the identification set ${LATE}^{ID}_{\tilde{A}^{max}}(F)$ is not any better than the $LATE^{ID}_A(F)$ in equation (\ref{eq: appli, ID set}). In terms of interpretation, an uninformative identified set\footnote{Note that the identification result in Proposition \ref{prop: appli, maximal extension arbitrary instrument} contains only support information when  (\ref{eq: testable implication in density form}) fails.} for LATE is not different from an empty identified set. This is because whenever $F$ fails (\ref{eq: testable implication in density form}), we give up the `Independent Instrument' assumption. Therefore the remaining assumptions $A^{ER}$ and $A^{ND}$ cannot generate any restrictions on the parameter of interest. As a result, I focus on deriving extensions using minimal deviation method in Definition \ref{def: minimal deviation extension}. I will relax the `No Defiers' and the independent instrument assumption in the following. An extension that relaxes the independent instrument assumption and an extension that relaxes  the exclusion restriction are given Appendix \ref{append: minimal marginal difference}.

\subsubsection{The Minimal Defiers Extension}
Recall that $\mathcal{S}$ is a complete structure universe. I consider a strong extension that use measure of defiers as deviation from the no defiers assumption.  The extension relaxes the independent instrument to a type independent instrument assumption. I first define the measure of defiers in $G^s$ as  
\begin{equation}\label{eq: defiers measure}
m^d(s) = E_{G^s}\left[\mathbbm{1}\{D_i(1)=0,D_i(0)=1\} \right].
\end{equation}

%

\begin{assumption}\label{assumption:type ind instru}
	 Let
	$m^{min}(F)\equiv \inf\{m^d(s): F\in M^s(G^s) \,\, and \,\, s\in A^{ER}\cap A^{TI}\cap A^{EM-NTAT}\}$ be the minimal defier amount under $F$. We call \[\tilde{A}=\cup_{F\in \mathcal{F}} \left\{s\in A^{ER}\cap A^{TI}\cap A^{EM-NTAT}: m^d(s)=m^{min}(F),\,\,F\in M^s(G^s)\right\}\] the minimal defiers extension with type independent instrument. 
\end{assumption}

In the above extension, I also relax the independent instrument condition. The type independence assumption is also used in other empirical contexts to study LATE (e.g. see \citet{kedagni2019}). This is because, by \citet{kitagawa2009identification}, exclusion restrictions (ER) and instrument condition (IV) has testable implication, so any non-refutable relaxation should relax either ER or IV condition. The second condition in Assumption \ref{assumption:type ind instru} requires the measure of always takers (AT) and never takers (NT) to be independent of the instrument. 

\begin{prop} \label{prop: IA strong extension}
	The extension $\tilde{A}$ defined in Assumption \ref{assumption:type ind instru}  is a strong extension of $A$. 
\end{prop}
\begin{proof}
	Suffice to check conditions in Proposition \ref{prop: minimal deviation as strong extension}, see  Appendix \ref{section: Proofs in section 3}.
\end{proof}
\begin{remark}
 To emphasize that extensions $\tilde{A}$ constructed by minimal deviation method may not be strong extensions, I provide two examples of extensions in Appendix \ref{append: minimal marginal difference} that are LATE-consistent extension but are not strong  extensions. 
\end{remark}

\subsection{Identified Set under Different Extensions}
This section describes the identified set for LATE under $\tilde{A}$ in Assumption \ref{assumption:type ind instru}.  Let
$
\mathcal{Y}_d\equiv\{y\in \mathcal{Y}: (-1)^{d}[q(y,d)-p(y,d)]\ge 0\}
$
be the collection of $y\in \mathcal{Y}$ such that the density differences are positive. 
\begin{assumption}\label{assumption: well-defined LATE}
	There exists a constant $c\ge 0$ such that: (i) $Pr_F(Z_i=1)\in (c,1-c)$; (ii) $Q(\mathcal{Y}_0,0)-P(\mathcal{Y}_0,0)>c$ and $P(\mathcal{Y}_1,1)-Q(\mathcal{Y}_1,1)>c$.
\end{assumption}

Assumption \ref{assumption: well-defined LATE} is a regularity assumption. For the identification result, we only need it to hold with $c=0$ so that LATE is well-defined. For inference purpose, I require $c>0$ to avoid weak instrument issue.

\begin{prop}\label{prop: identified under minimal defiers type ind inst}
	Let the extension $\tilde{A}$ satisfies Assumptions \ref{assumption:type ind instru}. If Assumption \ref{assumption: well-defined LATE} holds for $c=0$, then the identified $ {LATE}^{ID}_{\tilde{A}}$ satisfies
	\begin{equation}\label{eq: Identified LATE, type indep}
	{LATE}_{\tilde{A}}^{ID}(F)=\frac{\int_{\mathcal{Y}_1}{y (p(y,1)-q(y,1))}d\mu_F(y)}{P(\mathcal{Y}_1,1)-Q(\mathcal{Y}_1,1)}
	-\frac{\int_{\mathcal{Y}_0}{y (q(y,0)-p(y,0))}d\mu_F(y)}{Q(\mathcal{Y}_0,0)-P(\mathcal{Y}_0,0)},
	\end{equation}
	where $\mathcal{Y}_1=\{y\in\mathcal{Y}: p(y,1)-q(y,1)\ge 0\}$ and $\mathcal{Y}_0=\{y\in\mathcal{Y}: q(y,0)-p(y,0)\ge 0\}$.
	Note that LATE is point identified.
\end{prop}

In Assumptions \ref{assumption:type ind instru}, I require that potential outcomes are independent of the instrument conditional on compliers, i.e. $\{Y_i(d,z)\}_{d,z\in\{0,1\}}\perp Z_i\big| (D_i(1)=1,D_i(0)=0)$. As a result, the identified probability of compliers conditioned on $Z_i=1$ is $P(\mathcal{Y}_1,1)-Q(\mathcal{Y}_1,1)$ and the identified probability of compliers conditioned on $Z_i=0$ is $Q(\mathcal{Y}_0,0)-P(\mathcal{Y}_0,0)$. 


As we discussed in Section 2, we want the identified set for LATE under the chosen extensions to be a continuous correspondence with respect to the distribution of observables $F$. If the identified set for LATE is discontinuous under the extension, we may get spuriously informative bound for LATE due to sampling error. I compare the identified set for LATE under the maximal extension in Proposition \ref{prop: appli, maximal extension arbitrary instrument} and the identified set for LATE under the minimal defiers extension in Proposition \ref{prop: identified under minimal defiers type ind inst} in terms of Property \ref{property: continuity}. I equip $\mathcal{F}$ with the Sobolev norm: $||F||_{1,\infty}\equiv \max_{i=0,1} ||F^{(i)}||_{\infty}$, where $F^{(i)}$ is the $i$-th Radon-Nikodym density of $F$ with respect \mbox{to $\mu_F$}. 
\begin{prop} \label{prop: continuity of estimator in different cases}
	Let $\tilde{A}_1$ be the maximal extension and ${LATE}^{ID}_{\tilde{A}_1}(F)$ be the corresponding identified set defined in Proposition \ref{prop: appli, maximal extension arbitrary instrument}. Let $\tilde{A}_2$ be the minimal defiers extension defined in Assumption \ref{assumption:type ind instru} and ${LATE}^{ID}_{\tilde{A}_2}(F)$ be the corresponding identified set defined in (\ref{eq: Identified LATE, type indep}). Suppose $\forall F\in \mathcal{F}$, the support of $Y_i$ is bounded above by $M^u_s$ and bounded below by $M^l_s$, then ${LATE}^{ID}_{\tilde{A}_1}(F)$ is not upper hemicontinuous with respect to the Sobolev norm $||\cdot ||_{1,\infty}$, and ${LATE}^{ID}_{\tilde{A}_2}(F)$ is continuous with respect to $||\cdot||_{1,\infty}$. 
\end{prop} 

Proposition \ref{prop: continuity of estimator in different cases} shows that the maximal extension is not a good choice if the parameter of interest is LATE.

\subsection{Estimation and Inference}
The identification result in Proposition \ref{prop: identified under minimal defiers type ind inst} relies on the sets $\mathcal{Y}_1,\mathcal{Y}_0$. Throughout this section, I focus on the estimation and inference problem when $Y_i$ is continuously distributed on, and $\mu_F$ is the Lebesgue measure. 

\begin{assumption}\label{assump: continuous Y}
	$Y_i$ is continuously distributed with unbounded support and the measure $P(B,d)$, $Q(B,d)$ is absolutely continuous with respect to the Lebesgue measure. 
\end{assumption}
To estimate $\hat{\mathcal{Y}}_0,\hat{\mathcal{Y}}_1$, I estimate the density $p(y,d)$ and $q(y,d)$ using kernel density estimators:
\begin{equation} \label{eq: kernel density estimator}
\begin{split}
f_h(y,1)&=\frac{\frac{1}{h_n}\sum_{i=1}^n K\left(\frac{Y_i-y}{h}\right)\mathbbm{1}(D_j=1,Z_j=1)}{\sum_{i=1}^n \mathbbm{1}(Z_j=1)}-\frac{\frac{1}{h_n}\sum_{i=1}^n K\left(\frac{Y_i-y}{h}\right)\mathbbm{1}(D_j=1,Z_j=0)}{\sum_{i=1}^n \mathbbm{1}(Z_j=0)},\\
f_h(y,0)&=\frac{\frac{1}{h_n}\sum_{i=1}^n K\left(\frac{Y_i-y}{h}\right)\mathbbm{1}(D_j=0,Z_j=0)}{\sum_{i=1}^n \mathbbm{1}(Z_j=0)}-\frac{\frac{1}{h_n}\sum_{i=1}^n K\left(\frac{Y_i-y}{h}\right)\mathbbm{1}(D_j=0,Z_j=1)}{\sum_{i=1}^n \mathbbm{1}(Z_j=1)}.
\end{split}
\end{equation}

\begin{assumption}\label{assump: tail density sign}
	There exist constants $M_l$ and $M_u$ such that for $d=0,1$ such that  $\mathcal{Y}_d\cap [M_u,\infty) \in \{\varnothing, [M_u,\infty)\}$ and $\mathcal{Y}_d\cap (-\infty,M_l] \in \{\varnothing, (-\infty,M_l]\}$. Moreover, we know  $\mathcal{Y}_d\cap [M_u,\infty)$ and $\mathcal{Y}_d\cap (-\infty,M_l]$. 
\end{assumption}
The assumption above assumes that the sign of $p(y,d)-q(y,d)$ is known and fixed in the large value of $y$. As a result, we only need to estimate the set $\mathcal{Y}_d\cap [M_l,M_u]$. 
\begin{example}[Gaussian tail dominance.]
Suppose $p(y,d)$ and $q(y,d)$ have Gaussian tails:   $p(y,d)=C_pe^{-y^2/\sigma_p(d)^2}$ and $q(y,d)=C_qe^{-y^2/\sigma_q(d)^2}$  for $|y|>C^{tail}>0$. If $\sigma_p(1)>\sigma_q(1)$, then $\mathcal{Y}_1\cap [C^{tail},\infty)=[C^{tail},\infty)$ and $\mathcal{Y}_1\cap (-\infty,-C^{tail}]=(-\infty,-C^{tail}]$.
\end{example}

  Define the upper tail set $\mathcal{Y}_d^{ut}= \mathcal{Y}_d\cap [M_u,\infty)$ and the lower tail set $\mathcal{Y}_d^{lt}= \mathcal{Y}_d\cap (-\infty,M_l]$ and we estimate 
\[
\hat{\mathcal{Y}}_d(b_n)= \{y\in (M_l,M_u): f_h(y,d)\ge b_n\}\cup \mathcal{Y}_d^{ut}\cup \mathcal{Y}_d^{lt},
\]
where $b_n$ is a sequence of positive constants that converges to zero. The estimated set above only uses density $f_h(y,d)$ to distinguish whether $y\in \mathcal{Y}_d$ in the range $(M_l,M_u)$, and uses the known tail sign in Assumption \ref{assump: tail density sign} directly. When the relaxed assumption is defined in Assumption \ref{assumption:type ind instru}, I construct an  estimator of $LATE^{ID}_{\tilde{A}}(F)$ (\ref{eq: Identified LATE, type indep}) as:
\begin{equation}\label{eq: estimator, type indep}
\begin{split}
\widehat{LATE} &= \frac{\frac{1}{n} \sum_{i=1}^n Y_i \left[ \frac{\mathbbm{1}(D_i=1,Z_i=1)}{\frac{1}{n}\sum_{j=1}^n \mathbbm{1}(Z_j=1)}- \frac{\mathbbm{1}(D_i=1,Z_i=0)}{\frac{1}{n}\sum_{j=1}^n \mathbbm{1}(Z_j=0)} \right] \mathbbm{1}(Y_i\in \hat{\mathcal{Y}}_1(b_n))}{\frac{1}{n}\sum_{i=1}^n\left[ \frac{\mathbbm{1}(D_i=1,Z_i=1)}{\frac{1}{n}\sum_{j=1}^n \mathbbm{1}(Z_j=1)}- \frac{\mathbbm{1}(D_i=1,Z_i=0)}{\frac{1}{n}\sum_{j=1}^n \mathbbm{1}(Z_j=0)} \right]\mathbbm{1}(Y_i\in \hat{\mathcal{Y}}_1(b_n))}\\
&- \frac{\frac{1}{n} \sum_{i=1}^n Y_i \left[ \frac{\mathbbm{1}(D_i=0,Z_i=0)}{\frac{1}{n}\sum_{j=1}^n \mathbbm{1}(Z_j=0)}- \frac{\mathbbm{1}(D_i=0,Z_i=1)}{\frac{1}{n}\sum_{j=1}^n \mathbbm{1}(Z_j=1)} \right] \mathbbm{1}(Y_i\in \hat{\mathcal{Y}}_0(b_n))}{\frac{1}{n}\sum_{i=1}^n\left[ \frac{\mathbbm{1}(D_i=0,Z_i=0)}{\frac{1}{n}\sum_{j=1}^n \mathbbm{1}(Z_j=0)}- \frac{\mathbbm{1}(D_i=0,Z_i=1)}{\frac{1}{n}\sum_{j=1}^n \mathbbm{1}(Z_j=1)} \right]\mathbbm{1}(Y_i\in \hat{\mathcal{Y}}_0(b_n))}.
\end{split}
\end{equation}

\subsubsection*{Limit Distribution of $\widehat{LATE}$}
I present the limit distribution of $\widehat{LATE}$ defined in (\ref{eq: estimator, type indep}). The following assumptions are sufficient to guarantee $\widehat{LATE}$ in (\ref{eq: estimator, type indep}) will converge to a normal distribution.
\begin{assumption} \label{assumption: kernel }
	The kernel function $K$ satisfies: (i) $K(u)$ is continuous and supported on $[-A,A]$ and  $\int_u K(u) du=1$; (ii) $\int_u uK(u)du=0$; (iii) $\int u^2K(u)du<\infty$. 
\end{assumption}

\begin{assumption}\label{assumption: density}
	The conditional distribution $F(y|D_i=k,Z_i=l)$ has a density $f(y|k,l)$ for all $k,l\in\{0,1\}$, and $f''(y|k,l)$ exists and is uniformly bounded by a constant $c_f$; (iii) $E(Y_i^{2+\delta})<\infty$ for some $\delta>0$. 
\end{assumption}
The above two assumptions are standard in literature and guarantee the density difference estimator $f_h(y,d)$ will converges uniformly in probability to its limit $(-1)^{1-d}(p(y,d)-q(y,d))$ at polynomial rate. 
\begin{assumption}\label{assumption: trimming bias 1} 
	Let $f(y,1)=p(y,1)-q(y,1)$ and $f(y,0)=q(y,0)-p(y,0)$. The following condition holds for any sequence $b_n\rightarrow 0_+$:  $\int_{M_l}^{M_u} |f(y,d)| \mathbbm{1}(-b_n\le f(y,d)\le b_n)dy =O(b_n^2).$ 
\end{assumption}
Assumption  \ref{assumption: trimming bias 1} controls the bias from trimming $\{y\in[M_l,M_u]:0<f_h(y,d)<b_n\}$. Essentially, we rule out all outcome distributions such that  $\{y:f(y,d)=0\}$ has a positive measure.  This assumption is imposed to remove the bias from sampling error in kernel estimator $f_h(y,d)$. Assumption \ref{assumption: trimming bias 1} can be replaced by a sufficient primitive condition.
\begin{assumption}\label{assumption: trimming bias 2}
	Let $M_0<\infty$ be a positive integer. For $d=0,1$, the set $\mathcal{C}_d=\{y: f(y,d)=0,\,\,y\in[M_l,M_u]\}$ has at most $M_0$ points. Let $B(\mathcal{C}_d,\delta)=\cup_{y\in\mathcal{C}_d} B(y,\delta)$ be the $\delta$-neighborhood of $\mathcal{C}_d$ for $d=0,1$. For both $d=0,1$, we have $\sup_{y\in B(\mathcal{C}_d,\delta)}|\frac{d(f(y,d))}{dy}|>1/C$ for some $C,\delta>0$.
\end{assumption}
\begin{lem} \label{lem: appli, primitive condition on trimming bias}
	Assumption \ref{assumption: trimming bias 2} implies Assumption \ref{assumption: trimming bias 1}.
\end{lem}
\begin{proof}
	By the bounded density condition, the Lebesgue measure of set $\{y:\mathbbm{1}(-b_n\le f(y,d)\le b_n)\}$ is less than $CM_0b_n$. Therefore
	\[\int_{M_l}^{M_u} |f(y,d)| \mathbbm{1}(-b_n\le f(y,d)\le b_n)dy\le CM_0b_n^2.\]
	So Assumption \ref{assumption: trimming bias 2} implies Assumption \ref{assumption: trimming bias 1}.
\end{proof}

\begin{thm} \label{thm: asymptotic property of LATE}
	Let $\widehat{LATE}$ be defined in (\ref{eq: estimator, type indep}) and ${LATE}^{ID}_{\tilde{A}}(F)$ be defined in (\ref{eq: Identified LATE, type indep}). Suppose Assumption \ref{assumption: well-defined LATE} holds for $c>0$ and 
	Assumption
	 \ref{assump: continuous Y} -\ref{assumption: trimming bias 1} hold. Let $b_n\asymp n^{-1/4}/\log n$ and $h_n\asymp n^{-1/5}$, then 
	$\sqrt{n}(\widehat{LATE}-{LATE}_{\tilde{A}}^{ID}(F))\rightarrow_d N(0,\Pi'\Gamma\Sigma\Gamma'\Pi)$, where 
	\begin{equation*}
	\footnotesize
	\Sigma=Var\begin{pmatrix}
	\mathbbm{1}(Z_i=0)\\
	\mathbbm{1}(Z_i=1)\\
	Y_i \mathbbm{1}(D_i=1,Z_i=1)\mathbbm{1}(Y_i\in \mathcal{Y}_1)\\
	Y_i \mathbbm{1}(D_i=1,Z_i=0)\mathbbm{1}(Y_i\in \mathcal{Y}_1)\\
	Y_i \mathbbm{1}(D_i=0,Z_i=0)\mathbbm{1}(Y_i\in \mathcal{Y}_0)\\
	Y_i \mathbbm{1}(D_i=0,Z_i=1)\mathbbm{1}(Y_i\in \mathcal{Y}_0)\\
	\mathbbm{1}(D_i=1,Z_i=1)\mathbbm{1}(Y_i\in \mathcal{Y}_1)\\
	\mathbbm{1}(D_i=1,Z_i=0)\mathbbm{1}(Y_i\in \mathcal{Y}_1)\\
	\mathbbm{1}(D_i=0,Z_i=0)\mathbbm{1}(Y_i\in \mathcal{Y}_0)\\
	\mathbbm{1}(D_i=0,Z_i=1)\mathbbm{1}(Y_i\in \mathcal{Y}_0)\end{pmatrix},
	\end{equation*}
 	and matrices $\Pi$ and $\Gamma$ are specified as by
	{\footnotesize
		\[
		\Pi=\begin{pmatrix}
		\frac{1}{\pi_3}\\  -\frac{1}{\pi_4}\\ -\frac{\pi_1}{\pi_3^2}\\ \frac{\pi_2}{\pi_4^2}
		\end{pmatrix},\quad\pi\equiv\begin{pmatrix} \pi_1\\ \pi_2\\ \pi_3\\ \pi_4
		\end{pmatrix}= \begin{pmatrix}
		\int_{\mathcal{Y}_1} y(p(y,1)-q(y,1))dy\\
		\int_{\mathcal{Y}_0} y(q(y,0)-p(y,0))dy\\
		\int_{\mathcal{Y}_1} (p(y,1)-q(y,1))dy\\
		\int_{\mathcal{Y}_0} (q(y,0)-p(y,0))dy
		\end{pmatrix},	 \quad 	\Gamma=\frac{1}{Pr(Z_i=1)Pr(Z_i=0)}\begin{pmatrix}
		\Gamma_1 &\Gamma_3 &\bm{0}_{2\times 4}\\
		\Gamma_2 &\bm{0}_{2\times 4} &\Gamma_3 
		\end{pmatrix},
		\]where 
		\begin{equation*}
		\Gamma_1=
		\begin{pmatrix}
		E[Y_i\mathbbm{1}(D_i=1,Z_i=1)\mathbbm{1}(Y_i\in \mathcal{Y}_1)] & -E[Y_i\mathbbm{1}(D_i=1,Z_i=0)\mathbbm{1}(Y_i\in \mathcal{Y}_1)]\\
		-E[Y_i\mathbbm{1}(D_i=0,Z_i=1)\mathbbm{1}(Y_i\in \mathcal{Y}_0)] & E[Y_i\mathbbm{1}(D_i=0,Z_i=0)\mathbbm{1}(Y_i\in \mathcal{Y}_0)]
		\end{pmatrix},
		\end{equation*}
		\begin{equation*}
		\Gamma_2=\begin{pmatrix}
		E[\mathbbm{1}(D_i=1,Z_i=1)\mathbbm{1}(Y_i\in \mathcal{Y}_1)] & -E[\mathbbm{1}(D_i=1,Z_i=0)\mathbbm{1}(Y_i\in \mathcal{Y}_1)]\\
		-E[\mathbbm{1}(D_i=0,Z_i=1)\mathbbm{1}(Y_i\in \mathcal{Y}_0)] & E[\mathbbm{1}(D_i=0,Z_i=0)\mathbbm{1}(Y_i\in \mathcal{Y}_0)]
		\end{pmatrix},
		\end{equation*}
		\begin{equation*}
		\Gamma_3=\begin{pmatrix}
		Pr(Z_i=0) & -Pr(Z_i=1) &0 &0\\
		0 &0 &Pr(Z_i=1) & -Pr(Z_i=0)
		\end{pmatrix}.
		\end{equation*}
	
}
\end{thm}

\begin{Corollary}\label{Corollary: CI for LATE}
	Let $(\hat{\Gamma},\hat{\Pi},\hat{\Sigma})\rightarrow_p ({\Gamma},{\Pi},{\Sigma})$, and let $\hat{\sigma}=\sqrt{\hat{\Pi}'\hat{\Gamma}\hat{\Sigma}\hat{\Gamma}'\hat{\Pi}}$. Then the set 
	\begin{equation}\label{eq: appli, confi interval for LATE}
	\left[\widehat{LATE}-\frac{\hat{\sigma}}{\sqrt{n}}\Phi(\frac{\alpha}{2}),\widehat{LATE}+\frac{\hat{\sigma}}{\sqrt{n}}\Phi(1-\frac{\alpha}{2})\right]
	\end{equation}
	is a valid $\alpha$-confidence interval for ${LATE}_{\tilde{A}}^{ID}(F)$, where $\Phi$ is the normal CDF function.
\end{Corollary}

Theorem \ref{thm: asymptotic property of LATE} shows that the LATE estimator in (\ref{eq: Identified LATE, type indep}) is $\sqrt{n}$ consistent. Once the matrices $\Pi$, $\Gamma$ and $\Sigma$ are estimated by consistent estimators, we can test hypothesis such as $H_0:{LATE}_{\tilde{A}}^{ID}(F)=0$. Since LATE is point identified, conventional hypothesis testing method can achieve structural size control and test consistency simultaneously. However, Assumption \ref{assump: tail density sign} requires the econometrician to know the sign of tail behavior of $p(y,1)-q(y,1)$ and $q(y,0)-p(y,0)$. In some empirical application, we may want to be agnostic about tail signs or only impose less restrictive conditions on tail signs. In this case, we can calculate the confidence interval for each possible tail condition, and then take the union, but this confidence interval will be conservative. 


\subsubsection*{Simulation}
This section illustrates the finite sample performance of the proposed inference method. I consider two simulation settings. In the first setting, the IA-M assumption is violated, and the goal of the simulation is to see how the inference method works under the known and unknown tail signs. In the second setting, the IA-M assumption is not violated, and the goal of the simulation is to compare the numerical difference of the 2SLS estimator of (\ref{eq: appli, ID set}) and estimator (\ref{eq: estimator, type indep}).

\subsubsection*{\textit{Simulation Setting I}}
 Instead of simulating the primitive variable $Y_i(d,z),D_i(z)$, I directly simulate the distribution of observed variable such that $Pr(Z_i=1)=0.6$ and 
\[
\begin{split}
p(y,1)=p(y,0)=\frac{1}{2\sqrt{2\pi}}exp(-\frac{(x-3)^2}{2}),\\
q(y,1)=q(y,0)=\frac{1}{2\sqrt{6\pi}}exp(-\frac{(x-2.5)^2}{6}).
\end{split}
\]
In this simulation, Assumptions \ref{assumption: well-defined LATE}, \ref{assump: continuous Y} and \ref{assumption: density}  are satisfied. The trimming band is $[M_l,M_u]=[-2.5,7]$ and Assumption \ref{assump: tail density sign} is satisfied since $\mathcal{Y}^{ut}_1=\mathcal{Y}^{lt}_1=\varnothing$,  $\mathcal{Y}^{ut}_0=[M_u,\infty)$, and $\mathcal{Y}^{lt}_0=(-\infty,M_l]$.  Assumption \ref{assumption: trimming bias 2} is satisfied since $p(y,1)-q(y,1)=0$ and $q(y,0)-p(y,1)=0$ have two solutions in interval $[M_l,M_u]$ and the derivatives are bounded away from zero (see Figure \ref{fig: density in the simulation setting 1}).
\begin{figure}[H]
	\centering
	\includegraphics[width=0.7\linewidth]{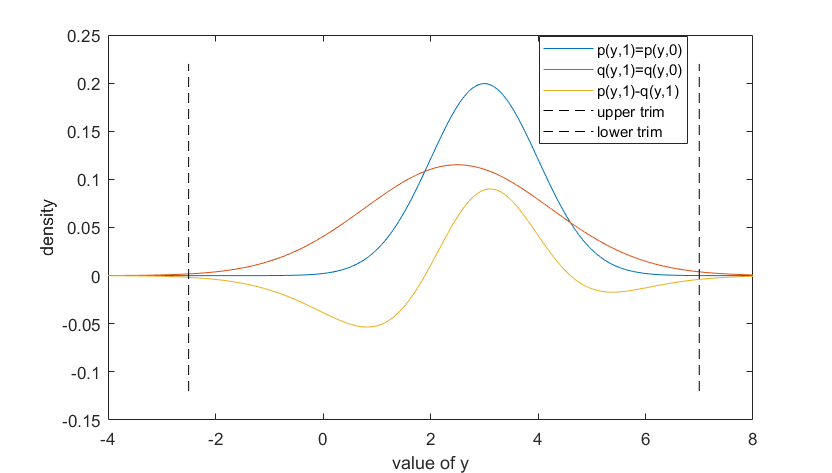}
	\caption{Density of $p(y,d)$ and $q(y,d)$ of $d\in\{0,1\}$, and their difference.}
	\label{fig: density in the simulation setting 1}
\end{figure}
The true identified value of ${LATE}^{ID}_{\tilde{A}}(F)$ is  $1.7385$. Simulation results are given in Table \ref{table: simulation for inference}. Coverage probability are calculated from $1000$ replications, and I compare the coverage probability under different sample size $n$ in each replication and the choice of trimming constant $b$. The row of `known tail' in Table \ref{table: simulation for inference} corresponds to the constraints $\mathcal{Y}^{ut}_1=\mathcal{Y}^{lt}_1=\varnothing$,  $\mathcal{Y}^{ut}_0=[M_u,\infty)$, and $\mathcal{Y}^{lt}_0=(-\infty,M_l]$
as in the simulation design. The row of `Conservative' in Table \ref{table: simulation for inference} corresponds to the case that the tail set is unknown, and I take the union of confidence intervals under all $16$ possible tail conditions.
\begin{table}[H]
	\centering
	\caption{Finite Sample Coverage Probability}\label{table: simulation for inference}
	\begin{tabular}{lccc}
		$\alpha=0.05$ & n=1000 & n=5000 & n=5000 \\
		&(b=0.02,h=0.4) & (b=0.012,h=0.2) &(b=0.0135,h=0.2)\\\hline
		Known Tail    & 0.965  &0.963  &0.935       \\
		Conservative  & 0.9825 &0.988  &0.975     
	\end{tabular}
\end{table}

The simulation result shows that if we can correctly impose the tail condition as in the known tail case, inference on the true LATE value based on (\ref{eq: appli, confi interval for LATE}) is asymptotically exact but can be sensitive to the choice of trimming sequence $b_n$. If we want to be agnostic about the true tail condition, the union method is conservative. 

\subsubsection*{Simulation Setting II}
When the IA-M assumption holds, by \citet{vytlacil2002independence}, the potential outcome model is equivalent to the latent index model. In this simulation setting, let $U_i\sim U[0,1]$ and $D_i=\mathbbm{1}(0.2+0.6Z_i>U_i)$, where $U[0,1]$ is a uniform distribution on interval $[0,1]$, and $Z_i\sim \mathtt{Bernoulli}(0.5)$ is independent of $U_i$. The potential outcome $(Y_i(1),Y_i(0))\sim N(\mu,\Sigma)$, where $\mu=[2,1.5]$ and $Var(Y_i(1))=2$, $Var(Y_i(0))=1.5$, $Corr(Y_i(1),Y_i(0))=0.7$. We can show $LATE^{ID}(F)=0.5$.

Let $\widehat{LATE}^{Wald}$ denote the Wald estimator of $LATE^{ID}(F)$, and let $\widehat{LATE}^{New}$ denote the estimator in equation (\ref{eq: estimator, type indep}). Table \ref{table: comparison of LATE-wald and my method when IAM hold} shows some summary statistics of the numerical difference between these two estimators over $m=1000$ simulation replications. The first row shows that when the sample size increase, the numerical difference of the two estimator converges to zero in the second moment. The second row in Table \ref{table: comparison of LATE-wald and my method when IAM hold} reports the efficiency loss of $\widehat{LATE}^{New}$. We see that at finite sample, not imposing IA-M assumption when it holds can lead to efficiency loss, but the efficiency loss decreases with sample size. 

\begin{table}[H]
	\centering
	\caption{Monte Carlo simulation over $1000$ replications.}
	\label{table: comparison of LATE-wald and my method when IAM hold}
\begin{threeparttable}
\begin{tabular}{lcccc}
			&$n=500$ & $n=1000$ & $n=2500$ & $n=5000$ \\ \hline 
	 $Var\left[\widehat{LATE}^{Wald}-\widehat{LATE}^{New}\right]$ &0.0097 & 0.0031    & 0.0011    &0.0004   \\[1ex]
	$\sigma(\widehat{LATE}^{New})-\sigma(\widehat{LATE}^{Wald})$ &0.020 &.008 &0.006 &0.003\\[1ex]
	MSE($\widehat{LATE}^{New}$)$-$MSE($\widehat{LATE}^{Wald}$) &0.0082 &0.0023 &0.0011 &0.0004
\end{tabular}
	\end{threeparttable}
\end{table}

\subsection{Empirical Illustration}
In this section, I apply my results in Proposition \ref{prop: identified under minimal defiers type ind inst} and Theorem \ref{thm: asymptotic property of LATE} to \citet{card1993using}, who studied the causal effect of college attendance on earnings. In this application, the outcome variable $Y_i$ is an individual $i$'s log wage in 1976, $D_i=1$ means individual $i$ attended a four-year college, and $Z_i=1$ means the individual was born near a four-year college. This data set has been used by both \citet{kitagawa2015} and \citet{mourifie2017testing} to test the IA-M assumption, and they both reject the IA-M assumption. If a child grew up near a college, he or she may hear more stories of heavy tuition burden, which may discourage him or her from attending college. On the other hand, if this child grew up far away from a college, he or she may instead choose to attend college. Therefore, we would expect defiers to exist in this empirical setting.  Moreover, it is unclear why this instrument is fully independent of the potential income, since the choice of residence may depend on parents' potential income, which may be correlated with their children's income. 

I conditioned $(Y_i,D_i,Z_i)$ on three characteristics: living in the south (S/NS), living in a metropolitan area (M/NM), and ethnic group (B/NB). I follow \citet{mourifie2017testing} in excluding subgroup NS/NM/B due to the small sample size, and also exclude subgroup NS/M/B due to the high frequency of $Z=1$. I conduct estimation and inference on each of the remaining 6 subgroups and the pooled sample. The choices of trimming sequence $b_n$, kernel bandwidth $h$, upper and lower band $M_u,M_l$, tail set $\mathcal{Y}_d^{ut},\mathcal{Y}_d^{lt}$ are specified in Appendix \ref{section: Details of Implementation of Empirical Illustration}. 

Estimation results are reported in Table \ref{table: estimation and inference for LATE}. I also report the LATE estimates when we directly use the IA-M assumption and Wald statistics. The estimated measure of compliers under $\tilde{A}$ satisfying \ref{assumption:type ind instru} conditioned on $Z_i=1$ and $Z_i=0$ are reported as $P(\mathcal{Y}_1,1)-Q(\mathcal{Y}_1,1)$ and $Q(\mathcal{Y}_0,0)-P(\mathcal{Y}_0,0)$, while the estimated measure of compliers under IA-M assumption is $E[D_i|Z_i=1]-E[D_i|Z_i=0]$.  The estimates of ${LATE}^{ID}_{\tilde{A}}$ and $LATE^{Wald}$ differ the most for three groups: S/NM/NB, S/M/NB and S/M/B. It should be noted that for all these three groups, estimated  $E[D_i|Z_i=1]-E[D_i|Z_i=0]$ differs from $P(\mathcal{Y}_1,1)-Q(\mathcal{Y}_1,1)$ and $Q(\mathcal{Y}_0,0)-P(\mathcal{Y}_0,0)$. If we blindly use the identification result under the IA-M assumption and use the standard LATE Wald estimator, the `identified' local average treatment effect can be negative (subgroups S/NM/NB and S/M/NB), or be unrealistically large (subgroup S/M/B). Once we use a strong extension, the estimated LATE for each of the 6 subgroups is positive, and the value of LATE is all between zero and one. We fail to reject education decrease future earning for the complier group ($LATE<0$) for all 6 subgroups for the $LATE^{wald}$. On the other hand, my method can reject the hypothesis $LATE<0$ for the NS/NM/NB and the S/NM/B group at 95\% confidence level. When I look at the African-American only, while $LATE^{wald}$ is large, the hypothesis fail to reject that education is harmful to their earning, while my method will reject the hypothesis $LATE^{wald}$ for the  African-American is negative.

{
	\begin{table}[H]
		\renewcommand{\arraystretch}{1.5}
		\footnotesize
		
		\caption{Estimation Result under Extensions  Assumption \ref{assumption:type ind instru}}
		\scalebox{0.9}{
			\begin{tabular}{lccccccc}\label{table: estimation and inference for LATE}
				Group                                                                    & NS,NM,NB         & NS,M,NB              & S,NM,NB            & S,NM,B             & S,M,NB               & S,M,B                & B-Group Only              \\
				$Pr(Z_i=1)$                                                             & 0.464            & 0.879                & 0.349              & 0.322              & 0.608                & 0.802                & 0.6188           \\
				Observations                                                             & 429              & 1191                 & 307                & 314                & 380                  & 246                  & 703            \\ \hline
				$LATE^{ID}_{\tilde{A}}(F)$                                                       & 0.5599           & 0.1546               & 0.2524             & 0.4773             & 0.5276               & 0.4358               & 1.0993           \\
				CI for $LATE^{ID}_{\tilde{A}}(F)$                                                & {[}0.01, 1.11{]} & {[}-0.51, 0.82{]} & {[}-1.22,	1.73{]} & {[}-0.09,1.04{]} & {[}-2.54,	3.59{]}     & {[}-5.15,	6.02{]} & {[}0.58, 1.62{]}  \\
				$P(\mathcal{Y}_1,1)-Q(\mathcal{Y}_1,1)$                                  & 0.1120           & 0.1084               & 0.0265             & 0.0739             & 0.0164               & 0.0338               & 0.0375           \\
				$Q(\mathcal{Y}_0,0)-P(\mathcal{Y}_0,0)$                                  & 0.1148           & 0.0960               & 0.0684             & 0.1495             & 0.0922               & 0.0308               & 0.1583           \\ \hline
				$LATE^{wald}(F)$                                                            & 0.5976           & 0.0761               & -6.4251            & 1.1873             & -1.5412              & 17.9620              & 5.0499           \\
				CI for ${LATE}^{wald}(F)$                                                   & {[}-0.20	1.39{]} & {[}-1.24,	1.39{]} & {[}-105,92{]}     & {[}-0.53,2.90{]} & {[}-5.09,2.01{]} & {[}-1.7e4,1.7e4{]}   & {[}-6.16,16.26{]} \\
				\begin{tabular}[c]{@{}l@{}}$E[D_i|Z_i=1]$\\ -$E[D_i|Z_i=0]$\end{tabular} & 0.1080           & 0.1084               & -0.0070            & 0.0692             & -0.0697              & 0.0002               & 0.0317          
		\end{tabular}}

\end{table}}

\section{Application to Binary Outcome Sector Choice}
In this section, I apply the method to a binary outcome sector choice model with a binary instrument \citep{mourifie2018roy}. This model is incomplete. Observed outcome $Y_i\in \{0,1\}$ is binary and the observed job sector choice  $D_i\in\{0,1\}$ is binary. A binary instrument $Z_i$ is observed. The $\epsilon$ variables include $Y_i(1)$ and $Y_i(0)$, which are the potential outcome in job sector 1 and 0 respectively, and the instrument variable $Z_i$. Observed sector outcome $Y_i$ is generated through:
\begin{equation}\label{eq: roy model potential outcome}
Y_i=Y_i(1)D_i+Y_i(0)(1-D_i).
\end{equation}
Without imposing further assumptions, equation (\ref{eq: roy model potential outcome}) does not specify how sector choice $D_i$ is determined, and $D_i$ can take
values in one of the three sets $\{0\},\{1\},\{0,1\}$. If we impose the classical Roy sector selection rule, for all  $Z_i=z\in\{0,1\}$ we have:
\begin{equation}\label{eq: roy model, perfect selection D_i mapping}
D_i\in\begin{cases}
\{1\} \quad &if \quad Y_i(1)>Y_i(0),\\
\{0\} \quad &if  \quad Y_i(0)<Y_i(1),\\
\{0,1\}\quad &if \quad Y_i(1)=Y_i(0).
\end{cases}
\end{equation}
The Roy sector selection rule (\ref{eq: roy model, perfect selection D_i mapping}) is just a special case of how $D_i$ is determined. To specify the structure universe $\mathcal{S}$, we consider all possible sector selection rules. \footnote{For each $Y_i(0)=y_0,Y_i(1)=y_1,Z_i=z$, $D_i$ can take values in three sets $\{0\},\{1\},\{0,1\}$. Therefore, there are $3^8$ ways to specify the sector selection rule.}
\begin{definition}\label{def: definition of M^s for job sector model}
	A sector selection rule is a set-valued function
	\[
	\begin{split}
	D^{sel}: \{0,1\}^3&\rightarrow  \{\{0\},\{1\},\{0,1\}\},\\
	(y_1,y_0,z)&\rightarrow D^{sel}(y_1,y_0,z).
	\end{split} \] Let $\mathcal{D}^{sel}$  be the collection of all possible sector selection rules. Let $D^{s,sel}\in \mathcal{D}^{sel}$ be a sector selection rule. Given a distribution $G^s(\epsilon)$ and a $D^{s,sel}$, the associated correspondence $M^s$ is defined as:
	{\footnotesize
	\begin{equation}\label{eq: Roy appli, M^s}
	\begin{split}
		M^s(G^s)\equiv \bigg\{ F\in \mathcal{F}: &Pr_F(Y_i=y,D_i=d,Z_i=z)= 
		(C^{y0z}_1+C^{y1z}_1)\mathbbm{1}(d=1)+(C^{0yz}_0+C^{1yz}_0)\mathbbm{1}(d=0),\\
		&\text{holds for some vector}  \left(C^{ykz}_{d}\right)_{y,k,z,d\in\{0,1\}} \text{ such that }\\
		&C^{ykz}_1+C^{ykz}_0=Pr_{G^s}(Y_i(1)=y,Y_i(0)=k,Z_i=z),\\
		&C^{ykz}_d = 0 \quad if \quad D^{s,sel}(y,k,z)=\{1-d\}\quad and \quad 
		C^{ykz}_d\ge 0	\quad \quad \forall y,k,z,d\in\{0,1\} 
		\bigg\}.
	\end{split}
	\end{equation}}
\end{definition}
In Definition \ref{def: definition of M^s for job sector model}, $C^{ykz}_d$ is the probability of choosing sector $d$ when $Y_i(1)=y,Y_i(0)=k,Z_i=z$. When $D^{s,sel}(y,k,z)$ is the set $\{0,1\}$, a structure $s$ associated with $D^{s,sel}$ does not specify how a sector choice is determined, so the only constraint is $C^{ykz}_1+C^{ykz}_0=Pr_{G^s}(Y_i(1)=y,Y_i(1)=k,Z_i=z)$. When $D^{s,sel}(y,k,z)=\{d\}$, the constraint $C^{ykz}_{1-d}=0$ in the last row of (\ref{eq: Roy appli, M^s})  requires the probability of choosing sector $1-d$ is zero. Given the set of sector selection rules $\mathcal{D}^{sel}$, we can specify the structure universe $\mathcal{S}$ in this application as follows:
\begin{equation}\label{eq: roy model structure universe}
\begin{split}
\mathcal{S}=\bigg\{s=(M^s,G^s)\bigg| &G^s \,\, \text{is  a distribution of } (Y_i(1),Y_i(0),Z_i),\quad \\
&M^s \,\,\text{is associated with a } D^{sel}\in \mathcal{D}^{s,sel}\bigg\}.
\end{split}
\end{equation}
Instead of imposing the strong independent instrument condition $\left(Y_i(1),Y_i(0)\right)\perp Z_i$, we require the instrument to have monotone effects on the potential outcomes:

\begin{definition} \label{def: roy model, monotone instrument at best and worst outcome}
	We say $(Y_i(1),Y_i(0))|Z_i=1$ dominates $(Y_i(1),Y_i(0))|Z_i=0$ at the best and worst outcomes if 
	\begin{equation}\label{eq: roy model, dominating instrument at best and worst outcome}
	\begin{split}
	Pr(Y_i(1)=Y_i(0)=1|Z_i=1)&\ge Pr(Y_i(1)=Y_i(0)=1|Z_i=0),\\ 	Pr(Y_i(1)=Y_i(0)=0|Z_i=1)&\le Pr(Y_i(1)=Y_i(0)=0|Z_i=0).
	\end{split}
	\end{equation}
\end{definition}
Definition \ref{def: roy model, monotone instrument at best and worst outcome} only requires the instrument to generate the best (resp. worst) potential outcome $Y_i(1)=Y_i(0)=1$ (resp. $Y_i(1)=Y_i(0)=0$) with higher (resp. lower) probability at $Z_i=1$ than that at $Z_i=0$.   This requirement is weaker than Assumption 5 in \citep{mourifie2018roy}, where they  require $Pr(Y_i(d)=1|Z_i=1)\ge Pr(Y_i(d)=1|Z_i=0)$ for $d\in \{0,1\}$ in addition to  (\ref{eq: roy model, dominating instrument at best and worst outcome}). \footnote{ Condition (\ref{eq: roy model, dominating instrument at best and worst outcome}) along with the additional requirement $Pr(Y_i(1)=1|Z_i=1)\ge Pr(Y_i(1)=1|Z_i=0)$ will imply $1-Pr_F(Y_i=0,D_i=1|Z_i=1)\ge Pr_F(Y_i=1,D_i=1|Z_i=0)$ for all outcome distributions $F$. This implication holds even in the absence of the Roy selection condition. On the other hand, equation (\ref{eq: roy model, dominating instrument at best and worst outcome}) alone does not imply any constraints on $F$.}

Dominance at the best and worst outcome in Definition \ref{def: roy model, monotone instrument at best and worst outcome} can accommodate broader empirical scenarios compared with Assumption 5 in \citep{mourifie2018roy}. For example, suppose $Y_i(d)=1$ means  individual $i$ gets tenure in sector $d$, and $Z_i=1$ means individual $i$ participates in a job training program. If the skill obtained from the training program can be applied to both sectors, we would expect $Pr(Y_i(1)=Y_i(0)=1|Z_i=1)\ge Pr(Y_i(1)=Y_i(0)=1|Z_i=0)$. On the other hand, each job sector may require specific skill that cannot be obtained from the job training program. If the training program is costly and prevents an individual $i$ from developing skills specific to the sector $d$, it is possible that $Pr(Y_i(d)=1|Z_i=1)\le Pr(Y_i(d)=1|Z_i=0)$. Such a scenario  violates Assumption 5 in \citet{mourifie2018roy} but not (\ref{eq: roy model, dominating instrument at best and worst outcome}). Lastly, $Pr(Y_i(1)=Y_i(0)=0|Z_i=1)\le Pr(Y_i(1)=Y_i(0)=0|Z_i=0)$ ensures the training program is beneficial: it increases the probability of success in at least one sector.

However, when (\ref{eq: roy model, dominating instrument at best and worst outcome}) is combined with Roy's selection assumption, they are jointly refutable. Formally, the assumption with Roy's selection rule and a monotone instrument satisfying (\ref{eq: roy model, dominating instrument at best and worst outcome}) is defined in the following.
\begin{assumption}
	The assumption of Roy's selection rule with instrument condition (\ref{eq: roy model, dominating instrument at best and worst outcome}), denoted as  $A^{Roy}$, is the collection of structures such that 
	\begin{equation}\label{eq: Roy Assumption}
	\begin{split}
	A^{Roy}=\bigg\{s\in \mathcal{S}:\,\,& M^s\,\,\text{is associated with the }\,\,D^{sel}\,\,\text{in}\,\,(\ref{eq: roy model, perfect selection D_i mapping}),\\
	&M^s(G^s)\,\,\text{satisfies} (\ref{eq: Roy appli, M^s}),\quad 
	G^s\,\, \text{satisfies} (\ref{eq: roy model, dominating instrument at best and worst outcome}) 
	\bigg\}.
	\end{split}	
	\end{equation}
\end{assumption}
The following proposition characterizes the non-refutability and confirmation sets associated with $A^{Roy}$.

\begin{prop}\label{prop: roy model, non-refutable and confirmation set}
	Let $\mathcal{F}^{nf}$ be the collection of outcome distributions such that 
	\begin{equation}\label{eq: roy model, testable implication}
	\begin{split}
	Pr_F(Y_i=0|Z_i=1)\le Pr_F(Y_i=0|Z_i=0).
	\end{split}
	\end{equation}
	The  non-refutability and confirmation sets of $A^{Roy}$ are
	\[
	\begin{split}
	\mathcal{H}_{\mathcal{S}}^{snf}(A^{Roy})&= \{s:\, M^s(G^s)\subseteq \mathcal{F}^{nf}\},\\
	\mathcal{H}_{\mathcal{S}}^{wnf}(A^{Roy})&= \{s:\, M^s(G^s)\cap  \mathcal{F}^{nf}\ne \varnothing \},\\
	\mathcal{H}_{\mathcal{S}}^{scon}(A^{Roy})&=\mathcal{H}_{\mathcal{S}}^{wcon}(A^{Roy})=\varnothing .
	\end{split}
	\]
\end{prop}
Proposition \ref{prop: roy model, non-refutable and confirmation set} reveals several things: first $\mathcal{H}_{\mathcal{S}}^{snf}(A^{Roy})\ne \mathcal{S}$, so the assumption $A^{Roy}$ is refutable; second, $\mathcal{H}_{\mathcal{S}}^{wnf}(A^{Roy})\ne \mathcal{H}_{\mathcal{S}}^{snf}(A^{Roy})$ so we cannot find a strong extension of $A^{Roy}$.  Therefore, I look at the completion of the binary sector choice model. The completion given in Definition \ref{def: completion of an incomplete structure universe} is abstract and in the following I give an explicit form of $\mathcal{C}(s)$.  

Recall that each $s$ in the incomplete space is associated with a sector decision rule, denoted by $D^{s,sel}$. We consider a tie breaking rule associated with $s$: $\left(C^{s^*,tb}_{d}(y,k,z)\right)_{y,k,z,d\in\{0,1\}}$, which specifies the probability of choosing sector $d$ for different values of $Y_i(0)=y,Y_i(1)=k,Z_i=z$.  Different values of the vector $\left(C^{s^*,tb}_{d}(y,k,z)\right)_{y,k,z,d\in\{0,1\}}$ correspond to different selections. The completion $\mathcal{C}(s)$ is the collection of $M^{s^*}(\cdot)$ such that
{
	\begin{equation}\label{eq: roy model, completed structure mapping}
	\begin{split}
	M^{s^*}(G^{s})=\bigg\{ &F\in \mathcal{F}\bigg| \forall y,k,z,d\in\{0,1\}: C^{ykz}_d=C^{s^*,tb}_d(y,k,z)\times Pr_{G^s}(Y_i(1)=y,Y_i(0)=k,Z_i=z),\\
	&Pr_F(Y_i=y,D_i=d,Z_i=z)= 
	(C^{y0z}_d+C^{y1z}_d)\mathbbm{1}(d=1)+(C^{0yz}_d+C^{1yz}_d)\mathbbm{1}(d=0) \\
	&\text{holds for a vector }  \left(C^{s^*,tb}_{d}(y,k,z)\right)_{y,k,z,d\in\{0,1\}} \text{ such that }\\
	&\sum_{d=0}^1 C^{s^*,tb}_{d}(y,k,z)\equiv 1,\quad C^{s^*,tb}_{d}(y,k,z)= 1 \text{ if } D^{s,sel}(y,k,z)=\{d\}
	\bigg\}.
	\end{split}
	\end{equation}}
Compared with the incomplete mapping in (\ref{eq: Roy appli, M^s}), the tie breaking rule  $C_{d}^{s^*,tb}$ makes $M^{s^*}(G^{s})$ a singleton. The completed universe is given in Definition \ref{def: completion of an incomplete structure universe} and the corresponding Roy assumption set in the completed structural universe is given in Definition Proposition \ref{prop: incomplete theory, preserve the id set after completion}. 

\subsubsection*{Minimal Efficiency Loss and the Corresponding Strong Extension}
I now consider a strong extension under the completed structure universe $\mathcal{S}^*$. I first define the efficiency loss of a structure $s^*$, which can be viewed as a deviation from Roy's sector selection assumption.

\begin{definition}\label{def: minimal efficiency loss}
	The efficiency loss of a structure $s^*\in\mathcal{S}^*$
	\begin{equation} \label{eq: Roy model, efficiency loss}
	m^{EL}(s^*)=E_{G^{s^*}}[\max\{Y_i(1),Y_i(0)\}]-E_{F}[Y_i] \quad \quad for \quad F\in M^{s^*}(G^{s^*})
	\end{equation} 
	is the difference between the expected optimal sector selection outcome and the expected predicted outcome.	
\end{definition}
The efficiency loss is a function since  $M^{s^*}(G^{s^*})$ is a singleton. It is easy to see that when the Roy sector selection rule holds, $m^{EL}(s^*)=0$. Conversely, by (\ref{eq: roy model potential outcome}),  $m^{EL}(s^*)=0$ implies 
\[
D_i=\begin{cases}
1 \quad \quad \text{if} \quad Y_i(1)>Y_i(0)\\
0\quad \quad \text{if} \quad Y_i(1)<Y_i(0)\\
\end{cases}
\]
with probability 1, so the Roy sector selection rule holds. Once we verify $m^{EL}(s^*)$ is a well-behaved minimal deviation measure (see Definition \ref{def: well-defined relaxation measure}), we can use minimal efficiency loss to construct a strong extension. 
\begin{assumption} \label{assump: minimal efficiency loss Roy} Let
	$m^{min}(F)\equiv \inf\{m^{EL}(s^*): F\in M^{s^*}(G^{s^*}) \,\, and \,\, G^{s^*}\text{ satisfies }(\ref{eq: roy model, dominating instrument at best and worst outcome}) \}$ be the minimal efficiency loss under $F$. We call 
	\[\tilde{A}=\cup_{F\in \mathcal{F}} \left\{s^*: m^{EL}(s^*)=m^{min}(F),\,G^{s^*}\text{ satisfies }(\ref{eq: roy model, dominating instrument at best and worst outcome})\right\}\]
	the minimal efficiency loss extension of $A^{Roy}$.
\end{assumption}

\begin{prop}\label{prop: roy model, sharp characterization of identified set}
	$\tilde{A}^{*Roy}$ is a strong extension of $A^{*Roy}$. Moreover, given an observed distribution $F$, the identified set for $G^{s^*}$ under $\tilde{A}^{*Roy}$ is { \footnotesize
	\begin{equation} \label{eq: roy model, sharp indeitified set of G^s}
	\begin{split}
	\Bigg\{ G^{s^*}\bigg| & \text{ there exists } \{C_d^{ykz}\} \quad \forall {y,k,z,d\in\{0,1\}}\,\,s.t.\,\, C_d^{ykz}\ge 0,\\ 
	&Pr_F(Y_i=y,D_i=d,Z_i=z)= 
	(C^{y0z}_d+C^{y1z}_d)\mathbbm{1}(d=1)+(C^{0yz}_d+C^{1yz}_d)\mathbbm{1}(d=0),\\
	 &C^{ykz}_1+C^{ykz}_0=Pr_{G^{s*}}(Y_i(1)=y,Y_i(0)=k,Z_i=z),\\
	 & C^{010}_1=C^{100}_0=0,\quad\frac{C^{110}_1+C^{110}_0}{Pr_F(Z_i=0)}\le \frac{C^{111}_1+C^{111}_0}{Pr_F(Z_i=1)},\\
	 &C^{101}_0+C^{011}_1=\max\left\{Pr_F(Y_i=0,Z_i=1)-\frac{Pr_F(Y_i=0,Z_i=0)Pr_F(Z_i=1)}{Pr_F(Z_i=0)},0\right\}\Bigg\}.
	\end{split}
	\end{equation}}  
\end{prop}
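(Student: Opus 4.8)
The plan is to treat this as two separate claims. For the strong-extension part I will invoke the general machinery of Section 2; for the polyhedral characterization I will reduce everything to a finite-dimensional linear program in the completed joint probabilities and solve it explicitly.

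First, the strong-extension part. Since $\mathcal{S}^*$ is a complete universe by construction of the completion, Proposition \ref{prop: minimal deviation as strong extension}(2) applies once I check two things: that $m^{EL}$ is a well-behaved relaxation measure of Roy's selection $A_j$ relative to the dominating-instrument condition $\{A_l\}_{l\ne j}$, and that $A^*_j(0)=A_j$. Well-behavedness amounts to showing that, for every $F$, the infimum of $m^{EL}$ over completed structures that satisfy (\ref{eq: roy model, dominating instrument at best and worst outcome}) and predict $F$ is attained and finite; this is immediate because the feasible set is a nonempty compact polytope in the finitely many variables $C^{ykz}_d$ (nonempty since (\ref{eq: roy model, dominating instrument at best and worst outcome}) can always be met by loading enough mass on the wrong cells, and $m^{EL}\le 1$). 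The identity $A^*_j(0)=A_j$ is exactly the already-established equivalence ``$m^{EL}(s^*)=0$ iff Roy's selection holds'' recorded just before the assumption. Proposition \ref{prop: minimal deviation as strong extension}(2) then yields the strong-extension conclusion.

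Second, the identified set. I would parametrize a completed structure by the eight-per-$z$ joint masses $C^{ykz}_d=Pr(Y_i(1)=y,Y_i(0)=k,Z_i=z,D_i=d)$, so that $G^{s^*}$ is recovered through $C^{ykz}_1+C^{ykz}_0=Pr_{G^{s^*}}(Y_i(1)=y,Y_i(0)=k,Z_i=z)$ and the observed law is matched by the second line of (\ref{eq: roy model, sharp indeitified set of G^s}). A direct expansion of (\ref{eq: Roy model, efficiency loss}) shows that only the two wrong-choice cells survive, giving $m^{EL}(s^*)=\sum_{z}(C^{01z}_1+C^{10z}_0)$. Writing $L_z=C^{01z}_1+C^{10z}_0$, the key bookkeeping identity is $Pr_F(Y_i=0,Z_i=z)=p^{00z}+L_z$, where $p^{00z}$ is the worst-outcome mass $Pr(Y_i(1)=Y_i(0)=0,Z_i=z)$; symmetrically the best-outcome mass $p^{11z}$ is governed by the correct-choice cells $C^{10z}_1,C^{01z}_0$. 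The crucial structural observation is that, given the observed-law constraints, the masses controlling $p^{00z}$ (hence $L_z$) and those controlling $p^{11z}$ are disjoint free coordinates of the $F$-matching system, so the best-outcome half of (\ref{eq: roy model, dominating instrument at best and worst outcome}) can be met without spending any efficiency loss and simply survives as the inequality in line four of (\ref{eq: roy model, sharp indeitified set of G^s}), while the worst-outcome half $p^{001}/Pr(Z_i=1)\le p^{000}/Pr(Z_i=0)$ becomes the only coupling between $L_0$ and $L_1$.

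Minimizing the total loss then reduces to a two-variable linear program in $(L_0,L_1)$: minimize $L_0+L_1$ subject to the worst-outcome inequality after the substitution $p^{00z}=Pr_F(Y_i=0,Z_i=z)-L_z$, together with $0\le L_z\le Pr_F(Y_i=0,Z_i=z)$. I would solve this by inspection: because raising $L_0$ both increases the objective and tightens the single coupling constraint, the unique optimizer sets $L_0=0$, and the binding constraint then forces $L_1=\max\{\,\cdot\,,0\}$ to equal the value displayed in the last line of (\ref{eq: roy model, sharp indeitified set of G^s}). Hence at any minimal-deviation structure $C^{010}_1=C^{100}_0=0$ and $C^{101}_0+C^{011}_1$ equals that value. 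Conversely, for sharpness I must show that every $G^{s^*}$ admitting nonnegative $C^{ykz}_d$ meeting these constraints is realized by an actual minimal-loss structure, which I obtain by reading a tie-breaking rule $C^{s,tb}_d$ off the $C^{ykz}_d$ and invoking (\ref{eq: roy model, completed structure mapping}). The main obstacle is precisely this decoupling-plus-sharpness step: establishing that the best-outcome constraint never forces positive efficiency loss, so that it separates off as a free-standing restriction on $G^{s^*}$ while the worst-outcome constraint alone pins down the minimizing $(L_0,L_1)$, requires carefully exhibiting the disjoint free coordinates and checking that each extreme point of the loss program is attainable by a genuine completion; once the separation is in hand, the remaining LP computation is routine.
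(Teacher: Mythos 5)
Your proposal is correct and follows essentially the same route as the paper: parametrize completed structures by the finitely many masses $C^{ykz}_d$, observe that the best-outcome half of the dominating-instrument condition lives on coordinates disjoint from the loss cells and so decouples, show the worst-outcome half forces zero loss at $Z_i=0$ and pins down the minimal loss at $Z_i=1$ via a small linear program, and obtain the strong-extension claim from Proposition \ref{prop: minimal deviation as strong extension} once $m^{EL}$ is verified to be well-behaved. The only difference is organizational: the paper reaches the same two-variable LP through explicit mass-reallocation constructions (its Lemmas \ref{lem: no efficiency loss for z=0} and \ref{lem: appen, roy model, a particular construction}) before solving it in Lemma \ref{lem: appen, roy model, min eff loss}, whereas you aggregate the losses into $(L_0,L_1)$ and argue the decoupling directly.
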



Proposition \ref{prop: roy model, sharp characterization of identified set} characterizes the sharp identified set of distributions of $(Y_i(1),Y_i(0),Z_i)$.
The identified set (\ref{eq: roy model, sharp indeitified set of G^s}) under $\tilde{A}^{*Roy}$ satisfies: (1). There is no efficiency loss when $Z_i=0$ ($C^{010}_1=C^{100}_0=0$); (2). The minimal efficiency loss is $\max\left\{Pr_F(Y_i=0,Z_i=1)-\frac{Pr_F(Y_i=0,Z_i=0)Pr_F(Z_i=1)}{Pr_F(Z_i=0)},0\right\}$; (3). Condition (\ref{eq: roy model,  dominating instrument at best and worst outcome}) holds as long as $(C^{110}_1+C^{110}_0){Pr_F(Z_i=1)}\le (C^{111}_1+C^{111}_0){Pr_F(Z_i=0)}$ holds. The identified set of $G^{s^*}$ is a polyhedron characterized by the 16-dimensional vector $(C_d^{ykz})_{y,k,z,d\in\{0,1\}}$.  Many parameters of interest are linear functions of $(C^{jkz}_d)$, and linear-programming can be used to find the identified set.  One example is given in Corollary \ref{Corollary: Roy model, ID set of $Pr(Y_i(1)=1|Z_i=z)$}.
\begin{Corollary}\label{Corollary: Roy model, ID set of $Pr(Y_i(1)=1|Z_i=z)$}
	The identified set for $Pr(Y_i(1)=1|Z_i=z)$ under $\tilde{A}^{*Roy}$ is 
	\[
	\begin{split}
	Pr(Y_i=1,D_i=1|Z_i=0)\le &Pr(Y_i(1)=1|Z_i=0)\le Pr(Y_i=1|Z_i=0),\\
	Pr(Y_i=1,D_i=1|Z_i=1)\le &Pr(Y_i(1)=1|Z_i=1)\le Pr(Y_i=1|Z_i=1)+ \frac{m^{EL,min}(F)}{Pr(Z_i=1)},\\
	\end{split}
	\]
	where $ m^{EL,min}(F)=\max\left\{Pr_F(Y_i=0,Z_i=1)-\frac{Pr_F(Y_i=0,Z_i=0)Pr_F(Z_i=1)}{Pr_F(Z_i=0)},0\right\}$.
\end{Corollary}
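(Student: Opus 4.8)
The plan is to read off the identified set as the image of the sharp identified polyhedron for $G^{s^*}$ under a linear map. By Proposition \ref{prop: roy model, sharp characterization of identified set}, the identified set of $G^{s^*}$ is the polyhedron $P$ in the variables $\{C_d^{ykz}\}$ cut out by (\ref{eq: roy model, sharp indeitified set of G^s}); since $Pr(Y_i(1)=1|Z_i=z)$ is a linear functional of $G^{s^*}$, its identified set is the projection of $P$ onto this functional, which is a closed interval, so it suffices to compute the two endpoints. First I would rewrite the functional using $C^{ykz}_1+C^{ykz}_0=Pr_{G^{s^*}}(Y_i(1)=y,Y_i(0)=k,Z_i=z)$: summing over $k$,
\[
Pr(Y_i(1)=1,Z_i=z)=\left(C^{10z}_1+C^{11z}_1\right)+\left(C^{10z}_0+C^{11z}_0\right).
\]
The marginal identity $C^{10z}_1+C^{11z}_1=Pr_F(Y_i=1,D_i=1,Z_i=z)$ from (\ref{eq: roy model, sharp indeitified set of G^s}) shows the first bracket is the fixed observed quantity, so the identified set is governed entirely by the attainable range of the sector-$0$ mass $C^{10z}_0+C^{11z}_0$.

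Next I would bound that range using the constraints of (\ref{eq: roy model, sharp indeitified set of G^s}), case by case. For $z=0$ the constraint $C^{100}_0=0$ removes one term, leaving $C^{110}_0$, which obeys $0\le C^{110}_0\le Pr_F(Y_i=1,D_i=0,Z_i=0)$ (lower bound by nonnegativity, upper bound because $C^{010}_0=Pr_F(Y_i=1,D_i=0,Z_i=0)-C^{110}_0\ge 0$); dividing by $Pr_F(Z_i=0)$ yields the claimed interval $[Pr(Y_i=1,D_i=1|Z_i=0),\,Pr(Y_i=1|Z_i=0)]$. For $z=1$ both $C^{101}_0$ and $C^{111}_0$ are free: nonnegativity of $C^{011}_0$ gives $C^{111}_0\le Pr_F(Y_i=1,D_i=0,Z_i=1)$, while the efficiency-loss equality $C^{101}_0+C^{011}_1=m^{EL,min}$ with $C^{011}_1\ge 0$ gives $C^{101}_0\le m^{EL,min}$; sending both free variables to $0$ and to their maxima and dividing by $Pr_F(Z_i=1)$ produces the lower endpoint $Pr(Y_i=1,D_i=1|Z_i=1)$ and the upper endpoint $Pr(Y_i=1|Z_i=1)+m^{EL,min}/Pr_F(Z_i=1)$.

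The steps so far only show containment in the claimed intervals; the substantive step, and the one I expect to be the main obstacle, is sharpness: exhibiting, at each endpoint, a complete nonnegative assignment of all sixteen $C_d^{ykz}$ that satisfies every constraint of (\ref{eq: roy model, sharp indeitified set of G^s}) simultaneously, in particular the coupling dominating-instrument inequality $\frac{C^{110}_1+C^{110}_0}{Pr_F(Z_i=0)}\le\frac{C^{111}_1+C^{111}_0}{Pr_F(Z_i=1)}$. My plan is to spend the degrees of freedom that do not enter the functional — the split of $Pr_F(Y_i=1,D_i=1,Z_i=z)$ between $C^{10z}_1$ and $C^{11z}_1$, the split of $Pr_F(Y_i=1,D_i=0,Z_i=z)$ governing $C^{11z}_0$, and the whole $Y_i=0$ block $C^{0kz}_d$ (pinned only by $Pr_F(Y_i=0,D_i=d,Z_i=z)$) — so as to make the right-hand side of this inequality large and its left-hand side small at the endpoint being realized; for the upper endpoints, for instance, I would route the best-outcome mass into $C^{111}_1,C^{111}_0$ and zero out $C^{110}_1$. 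The delicate case is the $z=0$ upper endpoint, where pushing $C^{110}_0$ to its maximum enlarges the left-hand side and caps it at $Pr_F(Z_i=1)^{-1}(C^{111}_1+C^{111}_0)\le Pr_F(Y_i=1|Z_i=1)$; achievability of $Pr_F(Y_i=1|Z_i=0)$ there requires $Pr_F(Y_i=1,D_i=0|Z_i=0)\le Pr_F(Y_i=1|Z_i=1)$, which is exactly what the testable implication (\ref{eq: roy model, testable implication}) secures, so I would invoke the non-refutable structure (or record the cap the inequality imposes) at this endpoint. Having constructed witnesses at both endpoints, convexity of $P$ and linearity of the functional then deliver the entire closed interval, completing the identification and the corollary's two displays.
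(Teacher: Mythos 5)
Your overall route is the paper's: write $Pr(Y_i(1)=1,Z_i=z)=\bigl(C^{10z}_1+C^{11z}_1\bigr)+\bigl(C^{10z}_0+C^{11z}_0\bigr)$, observe that the first bracket is pinned at $Pr_F(Y_i=1,D_i=1,Z_i=z)$, bound the second bracket using the polyhedron of Proposition \ref{prop: roy model, sharp characterization of identified set}, and then exhibit a feasible assignment of all the $C^{ykz}_d$ at each endpoint (the paper does exactly this, with four explicit witness tables). The containment half of your argument and the witnesses at the two lower endpoints go through as you describe.

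The gap is precisely where you located it, but your proposed repair does not work. Attaining $Pr(Y_i(1)=1|Z_i=0)=Pr(Y_i=1|Z_i=0)$ forces $C^{100}_0=0$ and $C^{110}_0=Pr_F(Y_i=1,D_i=0,Z_i=0)$, and since $C^{111}_1+C^{111}_0\le Pr_F(Y_i=1,Z_i=1)$ always, the dominating-instrument constraint then requires $Pr_F(Y_i=1,D_i=0|Z_i=0)\le Pr_F(Y_i=1|Z_i=1)$. You justify this via the testable implication (\ref{eq: roy model, testable implication}), but that implication is exactly what $\tilde{A}^{*Roy}$ is constructed \emph{not} to impose: the corollary must describe the identified set for every $F\in\mathcal{F}$, including those with $Pr_F(Y_i=0|Z_i=1)>Pr_F(Y_i=0|Z_i=0)$, and for such $F$ one can have $Pr_F(Y_i=1,D_i=0|Z_i=0)>Pr_F(Y_i=1|Z_i=1)$ (take $Pr_F(Y_i=1|Z_i=1)$ near zero while $Pr_F(Y_i=1,D_i=0|Z_i=0)$ is large), in which case the stated upper endpoint is not attained and the sharp bound is instead $Pr(Y_i=1,D_i=1|Z_i=0)+\min\{Pr(Y_i=1,D_i=0|Z_i=0),\,Pr_F(Y_i=1|Z_i=1)\}$. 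So either the endpoint must carry this extra cap or the testable implication must be added as a hypothesis; ``invoke the non-refutable structure'' is not a proof step. A smaller instance of the same kind of issue sits at the $z=1$ upper endpoint, where setting $C^{101}_0=m^{EL,min}$ additionally needs $m^{EL,min}\le Pr_F(Y_i=0,D_i=0,Z_i=1)$ for $C^{001}_0\ge 0$. For what it is worth, the paper's own witness tables assert these same assignments without verifying the dominating-instrument inequality, so the published proof is silent on the identical point; your write-up at least flags the obstacle, but as written the sharpness argument at the $z=0$ upper endpoint is incomplete.
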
 

It is worth noticing that if $A^{*Roy}$ cannot be rejected by $F$, the identified set for $\theta\equiv Pr(Y_i(1)=1|Z_i=1)$ is given by $[Pr(Y_i=1,D_i=1|Z_i=1),Pr(Y_i=1|Z_i=1)]$. However, suppose we ignore the testable implication of $A^{*Roy}$ and directly use $[Pr(Y_i=1,D_i=1|Z_i=1),Pr(Y_i=1|Z_i=1)]$ as the identified set for $\theta$, we get a spuriously informative identified set  when $m^{EL,min}(F)>0$. This happens when the true structure that generates the data lies in $\tilde{A}^{*Roy}\backslash {A}^{*Roy}$, and the spuriously informative identified set is a proper subset of the true identified set.

The upper bound of the identified set for $Pr(Y_i(1)=1|Z_i=1)$ is not Fr\'echet differentiable with respect to $F$ due to the $\max$ operator. However, by Example 2 in \citet{fang2019inference}, the upper bound $Pr(Y_i=1|Z_i=1)+ \frac{m^{EL,min}(F)}{Pr(Z_i=1)}$ is directionally differentiable in $F$. The bootstrap method in \citet{fang2019inference} can be used to construct confidence interval for $Pr(Y_i(1)=1|Z_i=1)$. However, since $Pr(Y_i(1)=1|Z_i=1)$ is partially identified and satisfies the conditions in Proposition \ref{prop: point identification implies strong binary decidable}, we cannot directly test a hypothesis on the value of $Pr(Y_i(1)=1|Z_i=1)$. Instead, we should test the equivalent existence hypothesis as in Proposition \ref{prop: equivalent decision}.

	\bibliography{testable_implication_reference}

\begin{thebibliography}{}

\bibitem[\protect\citeauthoryear{Bonhomme and Weidner}{Bonhomme and
  Weidner}{2018}]{bonhomme2018minimizing}
Bonhomme, S. and M.~Weidner (2018).
\newblock Minimizing sensitivity to model misspecification.
\newblock {\em arXiv preprint arXiv:1807.02161\/}.

\bibitem[\protect\citeauthoryear{Bresnahan and Reiss}{Bresnahan and
  Reiss}{1991}]{BresnahanReiss1991}
Bresnahan, T.~F. and P.~C. Reiss (1991).
\newblock Empirical models of discrete games.
\newblock {\em Journal of Econometrics\/}~{\em 48\/}(1-2), 57--81.

\bibitem[\protect\citeauthoryear{Breusch}{Breusch}{1986}]{breusch1986}
Breusch, T.~S. (1986).
\newblock Hypothesis testing in unidentified models.
\newblock {\em The Review of Economic Studies\/}~{\em 53\/}(4), 635--651.

\bibitem[\protect\citeauthoryear{Card}{Card}{1993}]{card1993using}
Card, D. (1993).
\newblock Using geographic variation in college proximity to estimate the
  return to schooling.

\bibitem[\protect\citeauthoryear{Chesher and Rosen}{Chesher and
  Rosen}{2012}]{chesher2012simultaneous}
Chesher, A. and A.~M. Rosen (2012).
\newblock Simultaneous equations models for discrete outcomes: coherence,
  completeness, and identification.
\newblock Technical report, cemmap working paper.

\bibitem[\protect\citeauthoryear{Christensen and Connault}{Christensen and
  Connault}{2019}]{christensen2019counterfactual}
Christensen, T. and B.~Connault (2019).
\newblock Counterfactual sensitivity and robustness.
\newblock {\em arXiv preprint arXiv:1904.00989\/}.

\bibitem[\protect\citeauthoryear{De~Chaisemartin}{De~Chaisemartin}{2017}]{chaisemartin2018Defier}
De~Chaisemartin, C. (2017).
\newblock Tolerating defiance? local average treatment effects without
  monotonicity.
\newblock {\em Quantitative Economics\/}~{\em 8\/}(2), 367--396.

\bibitem[\protect\citeauthoryear{Fang and Santos}{Fang and
  Santos}{2019}]{fang2019inference}
Fang, Z. and A.~Santos (2019).
\newblock Inference on directionally differentiable functions.
\newblock {\em The Review of Economic Studies\/}~{\em 86\/}(1), 377--412.

\bibitem[\protect\citeauthoryear{Galichon and Henry}{Galichon and
  Henry}{2011}]{galichon2011set}
Galichon, A. and M.~Henry (2011).
\newblock Set identification in models with multiple equilibria.
\newblock {\em The Review of Economic Studies\/}~{\em 78\/}(4), 1264--1298.

\bibitem[\protect\citeauthoryear{Galichon and Henry}{Galichon and
  Henry}{2013}]{Galichon2013DilationBoostrap}
Galichon, A. and M.~Henry (2013).
\newblock Dilation bootstrap.
\newblock {\em Journal of Econometrics\/}~{\em 177\/}(1), 109--115.

\bibitem[\protect\citeauthoryear{Hansen and Sargent}{Hansen and
  Sargent}{2007}]{hansen2007recursive}
Hansen, L.~P. and T.~J. Sargent (2007).
\newblock Recursive robust estimation and control without commitment.
\newblock {\em Journal of Economic Theory\/}~{\em 136\/}(1), 1--27.

\bibitem[\protect\citeauthoryear{Hansen, Sargent, Turmuhambetova, and
  Williams}{Hansen et~al.}{2006}]{hansen2006robust}
Hansen, L.~P., T.~J. Sargent, G.~Turmuhambetova, and N.~Williams (2006).
\newblock Robust control and model misspecification.
\newblock {\em Journal of Economic Theory\/}~{\em 128\/}(1), 45--90.

\bibitem[\protect\citeauthoryear{Imbens and Angrist}{Imbens and
  Angrist}{1994}]{IA1994}
Imbens, G. and J.~Angrist (1994).
\newblock Identification and estimation of local average treatment effects.
\newblock {\em Econometrica\/}~{\em 62\/}(2).

\bibitem[\protect\citeauthoryear{Jovanovic}{Jovanovic}{1989}]{jovanovic1989}
Jovanovic, B. (1989).
\newblock Observable implications of models with multiple equilibria.
\newblock {\em Econometrica: Journal of the Econometric Society\/}, 1431--1437.

\bibitem[\protect\citeauthoryear{Kedagni}{Kedagni}{2019}]{kedagni2019}
Kedagni, D. (2019).
\newblock Identification of treatment effects with mismeasured imperfect
  instruments.
\newblock {\em Available at SSRN 3388373\/}.

\bibitem[\protect\citeauthoryear{Kitagawa}{Kitagawa}{2009}]{kitagawa2009identification}
Kitagawa, T. (2009).
\newblock Identification region of the potential outcome distributions under
  instrument independence.
\newblock {\em Working Paper\/}.

\bibitem[\protect\citeauthoryear{Kitagawa}{Kitagawa}{2015}]{kitagawa2015}
Kitagawa, T. (2015).
\newblock A test for instrument validity.
\newblock {\em Econometrica\/}~{\em 83\/}(5), 2043--2063.

\bibitem[\protect\citeauthoryear{Koopmans and Reiersol}{Koopmans and
  Reiersol}{1950}]{koopmans1950identification}
Koopmans, T.~C. and O.~Reiersol (1950).
\newblock The identification of structural characteristics.
\newblock {\em The Annals of Mathematical Statistics\/}~{\em 21\/}(2),
  165--181.

\bibitem[\protect\citeauthoryear{Manski}{Manski}{2019}]{manski2019econometrics}
Manski, C.~F. (2019).
\newblock Econometrics for decision making: Building foundations sketched by
  haavelmo and wald.

\bibitem[\protect\citeauthoryear{Manski and Tetenov}{Manski and
  Tetenov}{2020}]{manski2020covid}
Manski, C.~F. and A.~Tetenov (2020).
\newblock Statistical decision properties of imprecise trials assessing
  covid-19 drugs.

\bibitem[\protect\citeauthoryear{Masten and Poirier}{Masten and
  Poirier}{2018}]{masten2018}
Masten, M. and A.~Poirier (2018).
\newblock Salvaging falsified instrumental variable models.
\newblock {\em Working Paper\/}.

\bibitem[\protect\citeauthoryear{Mourifie, Henry, and M{\'e}ango}{Mourifie
  et~al.}{2018}]{mourifie2018roy}
Mourifie, I., M.~Henry, and R.~M{\'e}ango (2018).
\newblock Sharp bounds and testability of a roy model of stem major choices.
\newblock {\em Available at SSRN 2043117\/}.

\bibitem[\protect\citeauthoryear{Mourifi{\'e} and Wan}{Mourifi{\'e} and
  Wan}{2017}]{mourifie2017testing}
Mourifi{\'e}, I. and Y.~Wan (2017).
\newblock Testing local average treatment effect assumptions.
\newblock {\em Review of Economics and Statistics\/}~{\em 99\/}(2), 305--313.

\bibitem[\protect\citeauthoryear{Roy}{Roy}{1951}]{Roy1951}
Roy, A.~D. (1951).
\newblock Some thoughts on the distribution of earnings.
\newblock {\em Oxford economic papers\/}~{\em 3\/}(2), 135--146.

\bibitem[\protect\citeauthoryear{Tamer}{Tamer}{2003}]{tamer2003incomplete}
Tamer, E. (2003).
\newblock Incomplete simultaneous discrete response model with multiple
  equilibria.
\newblock {\em The Review of Economic Studies\/}~{\em 70\/}(1), 147--165.

\bibitem[\protect\citeauthoryear{Vytlacil}{Vytlacil}{2002}]{vytlacil2002independence}
Vytlacil, E. (2002).
\newblock Independence, monotonicity, and latent index models: An equivalence
  result.
\newblock {\em Econometrica\/}~{\em 70\/}(1), 331--341.

\end{thebibliography}
	\bibliographystyle{chicago}
	\appendix
\section{Discussion of Property \ref{property: continuity}}\label{section: Discussion of Continuous ID Set Correspondence}
	When the structure universe $\mathcal{S}$ is complete, the following proposition provides sufficient high level conditions to check whether $\Theta_{\tilde{A}}^{ID}$ is a continuous correspondence in $F$. 
	
	\begin{prop}\label{prop: continuity of the identified set}
		Let $(\mathcal{F},d_F)$ and $(\Theta,d_\theta)$ be metric spaces, and let $\mathcal{S}$ be a complete structure universe. Let $\tau_{\mathcal{F}}$ be the topology on $\mathcal{F}$ induced by $d_F$. Let $A=A_j\cap(\cap_{l\ne j}A_l)$, and let $j$ be the index, $m_j$ be the well-defined relaxation measure in Definition \ref{def: minimal deviation extension}. We equip the $\cap_{l\ne j}A_l$ space with the weak topology $\tau_{A_{-j}}$ induced by the mapping $h(s)=M^s(G^s)$ \footnote{This is a slight abuse of the notation since $h(s)$ is a single-valued correspondence and its image space is $2^{\mathcal{F}}$. I abuse the notation and use $h(s)$ to denote the $M^s(G^s)$ mapping composited with the unique selection from the image set $M^s(G^s)$.  }:
		\[
		\tau_{A_{-j}}\equiv\{O\subset \cap_{l\ne j}A_l:\,O=h^{-1}(P)\cap (\cap_{l\ne j}A_l)\,\,for\,\,some\,\,P\in \tau_{\mathcal{F}}\}.
		\]
		Suppose: (1) $\theta(s):\cap_{l\ne j}A_l\rightarrow \Theta$ is a continuous function; and (2) the function 
		\[
		\min(F;m_j) \equiv \inf\{m_j(s): F\in M^s(G^s) \quad and \quad s\in \cap_{l\ne j}A_l\}
		\]  
		is a continuous mapping from  $\mathcal{F}$ to $\mathbb{R}$. If $m_j^{-1}:\mathbb{R}\rightrightarrows \cap_{l\ne j}A_l$ is an upper (resp. lower) hemicontinuous correspondence, then $\Theta^{ID}_{\tilde{A}}(F)$ is an upper (resp. lower) hemicontinuous correspondence from $(\mathcal{F},d_F)$ to $(\Theta,d_\theta)$.
	\end{prop}
\begin{proof}
	See Online Appendix \ref{Proof of continuous id set condition}.
\end{proof}
	Here is a reasoning behind Property \ref{property: continuity}: We may want a continuous relation between the structure universe $\mathcal{S}$ and the observation space $\mathcal{F}$. When the true structure $s$ change a little, the predicted observation distribution should not change drastically. Similarly, the parameter of interest $\theta$ should also be continuous with respect to change in the true structure. The relation can be represented as $\mathcal{F} \xleftarrow{M^s(G^s)} \mathcal{S} \xrightarrow{\,\,\theta(s)\,\,} \Theta$.
	Unfortunately, there may not exist a natural topology embedded in $\mathcal{S}$. The weak topology defined in Proposition \ref{prop: continuity of the identified set} is the smallest topology such that the mapping $M^s(G^s)$ is continuous. The construction of $\tau_{A_{-j}}$ in Proposition \ref{prop: continuity of the identified set} uses the inverse of $M^s(G^s)$ to induce a topology on the structure universe $\mathcal{S}$. With this construction, the relations become: $ \mathcal{F} \xrightarrow{(M^s(G^s))^{-1}} \mathcal{S} \xrightarrow{\,\,\theta(s)\,\,} \Theta$. The identified set can then be viewed as the composite mapping of $(M^s(G^s))^{-1}$ and $\theta$, defined on the extended assumption $\tilde{A}\subset \mathcal{S}$. If $\theta(s)$ is continuous, the composite map should also be continuous. Therefore, Property \ref{property: continuity} can be viewed as a consequence of the continuity of $\mathcal{F}\rightarrow\mathcal{S} $ and the continuity of $\mathcal{S}\rightarrow \Theta$.

\section{Two LATE-Consistent Extensions} \label{append: minimal marginal difference}
\subsection{Minimal Distance to Marginal Independence as LATE-consistent Extension}

Testable implications (\ref{eq: testable implication of LATE}) also arise from the independent instrument assumption. In this section, I provide a relaxed assumption  that relax the independent instrument assumption while keeping the `No Defiers' assumption. However, independence of an instrument on the potential outcomes is an infinite-dimensional constraint. As a result, there are infinitely many ways to relax it and will result in different identified sets when the IA-M Assumption is rejected. 

I keep the exclusion restriction $Y_i(d,z)=Y_i(d,1-z)$, and only consider the marginal distribution of $Y_i(d,z)$. By the $M^s$ mapping defined in (\ref{eq: appli, M^s}), the probability measure  $Pr_{G^s}(Y_i(d_1,1)\in B_{d_11},D_i(1)=d_1,D_i(0)=d_0|Z_i=1)$ and $Pr_{G^s}(Y_i(d_0,0)\in B_{d_11},D_i(1)=d_1,D_i(0)=d_0|Z_i=0)$ are absolutely continuous with respect to $\mu_F$, and denote  for $d_1,d_0,z\in\{0,1\}$
\[
\begin{split}
g^s_{y_{d_1 1}}(y,d_1,d_0|Z_i=1)&= \frac{d Pr_{G^s}(Y_i(d_1,1)\in B_{d_11},D_i(1)=d_1,D_i(0)=d_0|Z_i=1) }{d\mu_F},\\
g^s_{y_{d_0 0}}(y,d_1,d_0|Z_i=0)&= \frac{d Pr_{G^s}(Y_i(d_0,0)\in B_{d_00},D_i(1)=d_1,D_i(0)=d_0|Z_i=0) }{d\mu_F}
\end{split}
\]
as the Radon-Nikodym derivatives with respect to $\mu_F$. Throughout this section, I use $g^s_{y_{dz}}(y,d,d'|Z_i)$ to denote the density of $G^s(Y_i(d,z)\in B_{dz},D_i(1)=d,D_i(0)=0|Z_i)$.

We consider the following deviation measure:
{\footnotesize
	\[
	m^{MI}(s)= \begin{cases}
	\begin{split}
	&\sum_{d=0}^1\int_{\mathcal{Y}} \left[g^s_{y_{d1}}(y,d,d|Z_i=1)-g^s_{y_{d0}}(y,d,d|Z_i=0)\right]^2 d\mu_F(y)
	\end{split} \quad &if\quad Y_i(d,z)=Y_i(d,1-z)\,\, \,\,G^s-a.s,\\
	+\infty \quad &otherwise.
	\end{cases}
	\]}
The  $m^{MI}$ measures the difference of marginal distributions of the potential outcomes when the instrument $Z_i$ takes different values. When $d=1$, $g^s_{y_{11}}(y,1,1|Z_i=1)$ is the marginal density of $Y_i(1,1)=Y_i(1,0)$ and $D_i(1)=D_i(0)=1$ conditional on $Z_i=1$, and $g^s_{y_{10}}(y,1,1|Z_i=0)$ is the same object but conditional on $Z_i=0$. Note that when the instrument $Z_i$ is independent of the potential outcomes, for $d\in\{0,1\}$ and almost all $y$,
$g^s_{y_{d1}}(y,d,d|Z_i=1)-g^s_{y_{d0}}(y,d,d|Z_i=0)=0$ holds. Therefore,
 $m^{MI}(s)=0$ whenever $s$ satisfies the independent instrument assumption.

\begin{assumption} \label{assump: minimal dist to marg ind inst}
	(Minimal Distance to Marginal Independent Instrument) Let $m^{min}(F)\equiv \inf\{m^{MI}(s): F\in M^s(G^s) \,\, and \,\, s\in A^{ER}\cap A^{TI-CP}\cap A^{ND}\}$ be the minimal distance. We call 
	\[\tilde{A}=\cup_{F\in \mathcal{F}} \left\{s\in A^{ER}\cap A^{TI-CP}\cap A^{ND}: m^{MI}(s)=m^{min}(F),\,\,F\in M^s(G^s)\right\}\]
	the minimal marginal dependence extension.
\end{assumption}
We do not give up the independent instrument assumption completely: We still keep type independent instrument assumption for the compliers ($A^{TI-CP}$). With $A^{TI-CP}$, we can show this relaxation is LATE-consistent.
\begin{prop} \label{prop: minimal marg ind as consistent extension}
	The $\tilde{A}$ defined in Assumption \ref{assump: minimal dist to marg ind inst} is a LATE-consistent extension of $A$. 
\end{prop}

The $\tilde{A}$ defined in Assumption \ref{assump: minimal dist to marg ind inst} is LATE-consistent but not a strong extension. This is because when $m^{MI}(s)=0$, we cannot say $Z_i$ is an independent instrument under $s$.\footnote{In particular, consider the indirect effect of instrument on ATE:
	$
	\tilde{\theta}(s)=E[Y_i(1)-Y_i(0)|Z_i=1]-E[Y_i(1)-Y_i(0)|Z_i=0].
	$
	Whenever the IA-M assumption $A$ is not rejected by $F$, the identified set for 
	is $\tilde{\Theta}_{A}^{ID}(F)=\{0\}$. However, if we use the extension $\tilde{A}$ in Assumption \ref{assump: minimal dist to marg ind inst}, the identified set is not a singleton under $F$. 
}

\subsection{Minimal Marginal Difference Extension}
This section considers an extension that relaxes the exclusion restriction. The exclusion restriction fails when the instrument $z_i$ has a direct effect on potential outcomes. Like the independent instrument assumption, the exclusion restriction is a distributional assumption and there are infinitely many ways to relax it.  Let 
\[
\begin{split}
g_{y_{d_1 1}}^s(y,d_1,d_0|Z_i=1)= \frac{d Pr_{G^s}(Y_i(d_1,1),D_i(1)=d_1,D_i(0)=d_0|Z_i=1)}{d \mu_F}\\
g_{y_{d_0 0}}^s(y,d_1,d_0|Z_i=0)= \frac{d Pr_{G^s}(Y_i(d_0,0),D_i(1)=d_1,D_i(0)=d_0|Z_i=0)}{d \mu_F}
\end{split}
\]
be the Radon-Nikodym derivatives of marginal distributions of $G$ with respect to $\mu_F$. We consider the following deviation measure:
\[
\begin{split}
m^{MD}(s)&= \int_y [g^s_{y_{10}}(y,1,1|Z_i=0)-g^s_{y_{11}}(y,1,1|Z_i=1)]^2d\mu_F(y)\\
& +\int_y [g^s_{y_{00}}(y,0,0|Z_i=0)-g^s_{y_{01}}(y,0,0|Z_i=1)]^2d\mu_F(y).
\end{split}
\]
The quantity $m^{MD}(s)$ measures the marginal distributions difference for  potential outcomes. Under exclusion restriction and the independent instrument assumptions,  $Y_i(1,0)=Y_i(1,1)$ holds almost surely and $m^{MD}$ equals zero. The converse is not true: when the independent instrument assumption holds, $m^{MD}(s)=0$ does not imply exclusion restriction.\footnote{This is because in the construction of $m^{MD}$, we ignore the compliers and defiers.}
\begin{assumption} \label{assump: minimal dist to zero marg diff }
	(Minimal Marginal Difference) Let $A^{ER-CP}=\{s: Y_{i}(d,1)= Y_{i}(d,0)|D_{i}(1)-D_i(0)=1\,a.s.\}$ ge the exclusion restriction for the compliers. Let
	 $m^{min}(F)\equiv \inf\{m^{MD}(s): F\in M^s(G^s) \,\, and \,\, s\in A^{ER-CP}\cap A^{TI}\cap A^{ND}\}$ be the minimal distance. We call 
	\[\tilde{A}=\cup_{F\in \mathcal{F}} \left\{s\in A^{ER-CP}\cap A^{TI}\cap A^{ND}: m^{MD}(s)=m^{min}(F),\,\,F\in M^s(G^s)\right\}\]
	the minimal marginal difference extension.
\end{assumption}
Condition 3 in Assumption \ref{assump: minimal dist to zero marg diff }  is similar to the type independence for compliers condition in Assumption \ref{assump: minimal dist to marg ind inst}, under which we can generate informative constraint on LATE. We can show this extension is LATE-consistent.
\begin{prop} \label{prop: minimal marg diff as LATE-consistent}
	The $\tilde{A}$ defined in Assumption \ref{assump: minimal dist to zero marg diff } is a LATE-consistent extension of $A$. 
\end{prop}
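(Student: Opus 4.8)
The plan is to apply Proposition \ref{prop: minimal deviation as strong extension}(1) with parameter of interest $\theta = LATE$, treating the full exclusion restriction $A^{ER}$ as the relaxed component $A_j$ and $m^{MD}$ as the relaxation measure, while holding the no-defiers assumption $A^{ND}$ and the type-independence assumption $A^{TI}$ fixed among $\{A_l\}_{l\ne j}$. This mirrors the argument for the marginal-independence extension in Proposition \ref{prop: minimal marg ind as LATE consistent extension}, with the roles of the relaxed assumption interchanged. First I would confirm that $\tilde{A}$ in Assumption \ref{assump: minimal dist to zero marg diff } is a minimal deviation extension in the sense of Definition \ref{def: minimal deviation extension}: conditions 1--3 place every $s\in\tilde{A}$ in $A^{ND}\cap A^{TI}$ together with the complier-level exclusion restriction, and condition 4 is exactly the minimal-deviation requirement on $m^{MD}$. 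Since any IA-M structure has $m^{MD}=0$ (its minimal value), $A\subseteq\tilde{A}$ follows immediately.

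The second step is to verify that $m^{MD}$ is a well-behaved relaxation measure (Definition \ref{def: well defined relaxation measure}): for every $F\in\mathcal{F}$ there must exist $s^*\in A^{ND}\cap A^{TI}$ attaining the infimum of $m^{MD}$ subject to $F\in M^{s^*}(G^{s^*})$, with finite value. I would construct $s^*$ explicitly by allocating the observed mass, assigning the density differences $p(y,1)-q(y,1)$ on $\mathcal{Y}_1$ and $q(y,0)-p(y,0)$ on $\mathcal{Y}_0$ to compliers and the remaining mass to always-takers and never-takers, then equating the always-taker (respectively never-taker) conditional marginals across $Z$ wherever feasible to drive each squared-difference integral to its minimum. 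Because $m^{MD}$ is a sum of two nonnegative integrals bounded by the total observed mass, the infimum is finite and attained; this also yields $\cup_{s\in\tilde{A}}M^s(G^s)=\mathcal{F}$, so $\tilde{A}$ is a well-defined extension with $\mathcal{H}_\mathcal{S}^{nf}(\tilde{A})=\mathcal{S}$.

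The crux is the $\theta$-consistency condition of Proposition \ref{prop: minimal deviation as strong extension}(1): I must show that $LATE_A^{ID}(F)$ equals the set of $LATE(s)$ ranging over $s\in(A^{ND}\cap A^{TI})\cap\{m^{MD}=0\}$ compatible with $F$. Since $A^{ER}\subseteq\{m^{MD}=0\}$, this set contains $LATE_A^{ID}(F)$, so only the reverse inclusion is at issue, and it is nontrivial precisely when $LATE_A^{ID}(F)\ne\emptyset$, i.e. when $F$ satisfies the testable implication (\ref{eq: testable implication in density form}) and hence $\mathcal{Y}_1=\mathcal{Y}_0=\mathcal{Y}$ up to $\mu_F$-null sets. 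For any such $s$ with $m^{MD}(s)=0$, zero marginal difference forces the always-taker marginals of $Y_i(1,0)$ and $Y_i(1,1)$ to agree across $Z$, and likewise for never-takers; since under no-defiers $q(y,1)$ carries only always-taker mass and $p(y,0)$ only never-taker mass, this makes $p(y,1)-q(y,1)$ the unnormalized complier density of $Y_i(1,1)$ under $Z_i=1$ and $q(y,0)-p(y,0)$ that of $Y_i(0,0)$ under $Z_i=0$. Type independence and the complier exclusion restriction of condition 3 then identify the complier outcome marginals, so $LATE(s)$ equals formula (\ref{eq: Identified LATE, type indep}) with $\mathcal{Y}_1=\mathcal{Y}_0=\mathcal{Y}$. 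Using the identity $P(\mathcal{Y},1)-Q(\mathcal{Y},1)=Q(\mathcal{Y},0)-P(\mathcal{Y},0)=\Pr(D_i=1|Z_i=1)-\Pr(D_i=1|Z_i=0)$, this collapses to the IA-M Wald value in (\ref{eq: LATE appli, LATE ID set}), so every such $LATE(s)$ lies in $LATE_A^{ID}(F)$ and the condition holds.

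I expect the main obstacle to be this reverse inclusion: showing that weakening full exclusion to the constraint $m^{MD}(s)=0$ introduces no new complier outcome marginals, and hence no new $LATE$ values, whenever $A$ is not rejected. Unlike a strong extension, $\{m^{MD}=0\}$ is strictly larger than $A^{ER}\cap A^{TI}$, so the argument cannot invoke $A^*_j(0)=A_j$ as in Proposition \ref{prop: minimal deviation as strong extension}(2); instead it rests on the facts that $LATE$ depends on a structure only through the complier outcome marginals and that, once the always-taker and never-taker marginals are $Z$-invariant, those complier marginals are \emph{forced} (not merely feasible) to equal the density differences. Care is needed to show the decomposition of $p(y,1)$ and $q(y,1)$ into complier and always-taker parts is uniquely pinned down under $m^{MD}(s)=0$ together with (\ref{eq: testable implication in density form}), and the same nonzero-complier-mass caveat as in Assumption \ref{assumption: well defined LATE} is needed for $LATE(s)$ to be well defined.
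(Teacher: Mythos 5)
Your proposal is correct and follows essentially the same route as the paper: the paper deduces this proposition from Proposition \ref{prop: identified LATE under minimal defiers type ind inst} (whose proof, via Lemma \ref{lem: minimal marginal difference construction}, is exactly your key step that $m^{MD}(s)=0$ combined with no defiers, type independence and complier exclusion \emph{forces} the complier marginals to equal $\max\{p(y,1)-q(y,1),0\}$ and $\max\{q(y,0)-p(y,0),0\}$) and then notes that when (\ref{eq: testable implication in density form}) holds, $\mathcal{Y}_1=\mathcal{Y}_0=\mathcal{Y}$ and the point-identified value collapses to the Wald ratio in (\ref{eq: LATE appli, LATE ID set}). One small correction: your intermediate claim $A^{ER}\subseteq\{s:m^{MD}(s)=0\}$ is not true, since the densities entering $m^{MD}$ carry the type masses and the $Z$-conditional distributions of always-takers and never-takers, so exclusion alone does not kill the squared differences; however, the inclusion you actually need, $A\subseteq\{s:m^{MD}(s)=0\}$ (full independence plus exclusion), does hold, so nothing in your argument breaks.
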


We should note that the extension in Assumption \ref{assump: minimal dist to zero marg diff } also relaxes the independent instrument assumption, since I only require the instrument to be type independent. This is because, if we use the fully independent instrument $Z_i$, the measure $m^{MD}(s)$ is not a well-behaved relaxation measure. 

\begin{prop}\label{prop: m^{MD} is not well behaved if use full independence}
	(I)The measure $m^{MD}(s)$ is not a well-behaved measure with respect to $\{A^{ND},A^{FI}\}$, where 
	\[
	A^{FI}=\{s:\quad(Y_i(1,1),Y_i(0,1),Y_i(0,1),Y_i(0,0),D_i(1),D_i(0))\perp Z_i \}.
	\]
	(II)
	The measure $m^{MD}(s)$ is not a well-behaved measure with respect to $\{A^{ND},A^{TI},A^{EM-C}\}$, where $A^{EM-C}$ is the assumption that measure of compliers does not change with the value of $Z_i$
	\[
	A^{EM-C}=\{s:\quad Pr_{G^s}(D_i(1)=1,D_i(0)=0|Z_i=1)=Pr_{G^s}(D_i(1)=1,D_i(0)=0|Z_i=0)\}.
	\]
\end{prop}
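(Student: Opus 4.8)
The plan is to verify the negation of the well-behavedness requirement in Definition \ref{def: well defined relaxation measure}: for each part I exhibit a regular $F\in\mathcal{F}$ for which the minimization
\[
\inf\{m^{MD}(s):\ F\in M^s(G^s),\ s\in\textstyle\bigcap_{l\ne j}A_l\}
\]
either has empty feasible set or is not attained by any admissible structure with a finite value. The organizing observation is that $m^{MD}$ relaxes only the exclusion restriction: it concerns the conditional laws of the potential outcomes and leaves the joint law of $(D_i(1),D_i(0),Z_i)$ — hence the type composition — entirely governed by the maintained assumptions $\bigcap_{l\ne j}A_l$.

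For part (I), I would first note that under $A^{FI}\cap A^{ND}$ the type probabilities are independent of $Z_i$ and defiers are excluded, so every such structure satisfies $Pr_F(D_i=1\mid Z_i=1)-Pr_F(D_i=1\mid Z_i=0)=Pr_{G^s}(D_i(1)=1)-Pr_{G^s}(D_i(0)=1)=Pr_{G^s}(\text{compliers})\ge 0$. Because $m^{MD}$ only perturbs potential outcomes, taking $m^{MD}$ arbitrarily large cannot overturn this inequality. Since $\mathcal{F}$ contains distributions with $Pr_F(D_i=1\mid Z_i=1)<Pr_F(D_i=1\mid Z_i=0)$, for any such $F$ the feasible set $\{s\in A^{FI}\cap A^{ND}:F\in M^s(G^s)\}$ is empty, so no minimizer exists and the well-behaved condition fails. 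This is the clean part.

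For part (II) the monotonicity obstruction disappears: $A^{TI}$ lets the always-taker and never-taker probabilities depend on $Z_i$ (only the complier probability is pinned across $Z_i$ by $A^{EM-C}$), and a no-complier structure reproduces any $F\in\mathcal{F}$, so the feasible set is always nonempty. The failure is instead that $m^{MD}$ cannot be driven to its infimum within the admissible class. Translating the match-$F$ constraints through (\ref{eq: potential outcome, link between F and G}), the conditional always-taker outcome law under $Z_i=0$ equals $q(\cdot,1)$ while under $Z_i=1$ it equals $p(\cdot,1)$ minus the complier contribution, and $A^{EM-C}$ forces a single complier mass to serve both the always-taker term and the never-taker term of $m^{MD}$. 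I would construct an $F$ violating the IA-M density implications (\ref{eq: testable implication in density form}) for which equalizing the always-taker conditional law across $Z_i$ (the only route to the infimum of the first term) requires sending the complier mass to the boundary value $Pr_F(D_i=1\mid Z_i=1)$, at which the always-taker sub-population under $Z_i=1$ degenerates; the infimum is then approached along admissible structures but attained by none in $A^{ND}\cap A^{TI}\cap A^{EM-C}$ that generates $F$.

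The main obstacle is precisely this part (II) construction: one must produce a concrete $F$ and check that the coupled minimization — a single complier mass feeding both the always-taker and never-taker discrepancies under $A^{EM-C}$ — has its infimum on the boundary of the admissible set (a degenerate conditional type measure) rather than at an interior, attainable configuration. The contrast that makes the result informative is that Assumption \ref{assump: minimal dist to zero marg diff } handles compliers through a conditional exclusion restriction rather than by fixing their mass, and it is this difference that restores attainment, and hence well-behavedness, there.
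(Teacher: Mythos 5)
Your part (I) is correct and is essentially the paper's argument: under $A^{ND}\cap A^{FI}$ every generated $F$ must satisfy $Pr_F(D_i=1\mid Z_i=1)\ge Pr_F(D_i=1\mid Z_i=0)$, $m^{MD}$ only touches potential-outcome laws and cannot repair this, so for an $F$ violating the inequality the feasible set is empty and the infimum is $+\infty$.

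Part (II) has a genuine gap, in two places. First, the obstruction you describe is not the right one. Attaining the infimum of $m^{MD}$ does not push the complier mass to the boundary $Pr_F(D_i=1\mid Z_i=1)$ or degenerate the always-taker population. Using the decomposition $p(y,1)-q(y,1)=g^{s}_{y_{11}}(y,1,1\mid Z_i=1)-g^{s}_{y_{10}}(y,1,1\mid Z_i=0)+g^{s}_{y_{11}}(y,1,0\mid Z_i=1)$, the pointwise minimizer of the always-taker term forces the complier sub-density at $Z_i=1$ to equal $\max\{p(y,1)-q(y,1),0\}$, hence the complier mass conditional on $Z_i=1$ to equal $P(\mathcal{Y}_1,1)-Q(\mathcal{Y}_1,1)$; symmetrically the never-taker term forces the complier mass conditional on $Z_i=0$ to equal $Q(\mathcal{Y}_0,0)-P(\mathcal{Y}_0,0)$. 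Both are interior, non-degenerate values; the contradiction with $A^{EM-C}$ is simply that they differ for a generic $F$ failing (\ref{eq: testable implication in density form}). Second, and more importantly, you never establish that the infimum over the \emph{feasible} set actually equals this unattainable pointwise lower bound $\int\max\{-(p(y,1)-q(y,1)),0\}^2d\mu_F+\int\max\{-(q(y,0)-p(y,0)),0\}^2d\mu_F$. If the constrained infimum were strictly larger it could perfectly well be attained, and $m^{MD}$ would be well behaved. The paper closes this by constructing an explicit sequence $s^n\in A^{ND}\cap A^{TI}\cap A^{EM-C}$ generating $F$, in which a correction density $g_c^n$ of fixed total mass (the gap between the two complier measures) is spread uniformly over an interval of length $n$, so its contribution to the squared criterion is $\int_{\underline{y}}^{\underline{y}+n}(1/n)^2dy=1/n\to 0$. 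Without exhibiting such an approximating sequence, the non-attainment claim is not proved; "infimum on the boundary" is an assertion, not an argument, and as stated you are pointing at the wrong boundary.
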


\begin{remark}
	Recall by Definition \ref{def: well-defined relaxation measure}, a relaxation measure is well-defined if $\inf\{m_j(s): F\in M^s(G^s),\,\, s\in \cap_{l\ne j}A_l\}$  is finite and achievable by some $s\in \cap_{l\ne j}A_l$ for all $F$. In the case (I) of Proposition \ref{prop: m^{MD} is not well behaved if use full independence}, there exists an $F_0$ such that $\{m^{MD}(s): F\in M^s(G^s) ,\,\, s\in A^{ND}\cap A^{FI}\}=\varnothing$ and infimum over $\varnothing$ is $+\infty$; In the case (II) of  Proposition \ref{prop: m^{MD} is not well behaved if use full independence}, while $\inf\{m_j(s): F\in M^s(G^s) ,\,\, s\in A^{ND}\cap A^{TI}\cap A^{EM-C}\}$ exists, it is not achievable by any structure $s\in A^{ND}\cap A^{TI}\cap A^{EM-C}$.
\end{remark}

\section{Proofs in Section 2}\label{section: Proofs in section 2}

\subsection{Lemmas}
\begin{lem}\label{lem: charaterization of A set with scon}
	The following three conditions are equivalent:
	\begin{enumerate}
		\item $\mathcal{H}_\mathcal{S}^{wnf}(A)=\mathcal{H}_\mathcal{S}^{scon}(A)$;
		\item $A=\mathcal{H}_\mathcal{S}^{wnf}(A)$;
		\item $A=\mathcal{H}_\mathcal{S}^{scon}(A)$.
	\end{enumerate}
\end{lem}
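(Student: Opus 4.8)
The plan is to reduce all three conditions to a single disjointness statement about outcome sets. For any collection $B\subseteq\mathcal{S}$ write $U_B\equiv\bigcup_{s\in B}M^s(G^s)$ for the set of outcome distributions that $B$ can generate. The key preliminary observation is that each operator admits a transparent reformulation directly from Definition of the refutable/confirmable sets: $s\in\mathcal{H}^{wnf}_\mathcal{S}(A)$ iff $M^s(G^s)\cap U_A\neq\emptyset$, while by De Morgan $\bigcap_{s^*\in A^c}\big(M^{s^*}(G^{s^*})\big)^c=(U_{A^c})^c$, so that $s\in\mathcal{H}^{scon}_\mathcal{S}(A)$ iff $M^s(G^s)\cap U_{A^c}=\emptyset$. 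I would record these two equivalences first, since everything else is inclusion-chasing built on top of them.

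First I would handle condition (2). The universal chain $\mathcal{H}_\mathcal{S}^{scon}(A)\subseteq A\subseteq\mathcal{H}_\mathcal{S}^{wnf}(A)$ stated in the excerpt gives $A\subseteq\mathcal{H}_\mathcal{S}^{wnf}(A)$ for free, so condition (2) is equivalent to the reverse inclusion $\mathcal{H}_\mathcal{S}^{wnf}(A)\subseteq A$, i.e. that no $s\in A^c$ lies in $\mathcal{H}_\mathcal{S}^{wnf}(A)$. By the reformulation this means $M^s(G^s)\cap U_A=\emptyset$ for every $s\in A^c$, and taking the union over $s\in A^c$ this is exactly $U_A\cap U_{A^c}=\emptyset$.

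Next I would treat condition (3) symmetrically. The same chain gives $\mathcal{H}_\mathcal{S}^{scon}(A)\subseteq A$, so (3) is equivalent to $A\subseteq\mathcal{H}_\mathcal{S}^{scon}(A)$, i.e. $M^s(G^s)\cap U_{A^c}=\emptyset$ for every $s\in A$; taking the union over $s\in A$ again yields $U_A\cap U_{A^c}=\emptyset$. Hence (2) and (3) are each equivalent to the single condition $U_A\cap U_{A^c}=\emptyset$.

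Finally, for condition (1) I would invoke the chain once more. Because $\mathcal{H}_\mathcal{S}^{scon}(A)\subseteq A\subseteq\mathcal{H}_\mathcal{S}^{wnf}(A)$, the equality $\mathcal{H}_\mathcal{S}^{wnf}(A)=\mathcal{H}_\mathcal{S}^{scon}(A)$ forces the whole chain to collapse, giving $A=\mathcal{H}_\mathcal{S}^{scon}(A)=\mathcal{H}_\mathcal{S}^{wnf}(A)$ and therefore both (2) and (3); conversely, (2) and (3) together give $\mathcal{H}_\mathcal{S}^{wnf}(A)=A=\mathcal{H}_\mathcal{S}^{scon}(A)$, which is (1). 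Since (2) and (3) coincide, all three conditions are equivalent to $U_A\cap U_{A^c}=\emptyset$. The step I expect to require the most care is the De Morgan reformulation of $\mathcal{H}^{scon}_\mathcal{S}$ together with keeping the directions of the two free inclusions straight, since the entire argument is containment-chasing and a single reversed inclusion would break the collapse; there are no analytic difficulties to overcome.
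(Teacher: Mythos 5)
Your proof is correct. The two reformulations you record at the outset — $s\in\mathcal{H}_\mathcal{S}^{wnf}(A)$ iff $M^s(G^s)\cap U_A\neq\emptyset$, and, via De Morgan, $s\in\mathcal{H}_\mathcal{S}^{scon}(A)$ iff $M^s(G^s)\cap U_{A^c}=\emptyset$ — are exactly right, and the rest follows from the universal chain $\mathcal{H}_\mathcal{S}^{scon}(A)\subseteq A\subseteq\mathcal{H}_\mathcal{S}^{wnf}(A)$ as you say. Your organization differs from the paper's: the paper runs a cyclic implication $1\Rightarrow 2\Rightarrow 3\Rightarrow 1$, with the sandwich chain handling $1\Rightarrow 2$ and two separate contradiction arguments handling $2\Rightarrow 3$ and $3\Rightarrow 1$ (in each case producing a structure $s^*$ that is weakly observationally equivalent to something on the wrong side of $A$). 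You instead reduce both (2) and (3) to the single symmetric condition $U_A\cap U_{A^c}=\emptyset$ and then obtain (1) by collapsing the chain. The ingredients are identical — the chain plus unwinding the definitions — but your version makes the symmetry between (2) and (3) explicit and exhibits the common content of all three conditions as a disjointness statement about the images $U_A$ and $U_{A^c}$, which is a cleaner way to see why the lemma is true; the paper's version has the minor advantage of not requiring the De Morgan rewriting as a separate step, since it works directly with pairs of structures. No gaps.
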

\begin{proof}
	Recall that the following set relations hold:
	\begin{equation}\label{eq: append, set relation for nf and con}
	\mathcal{H}_\mathcal{S}^{scon}(A)\subseteq \mathcal{H}_\mathcal{S}^{wcon}(A)\subseteq A
	\subseteq \mathcal{H}_\mathcal{S}^{snf}(A)\subseteq \mathcal{H}_\mathcal{S}^{wnf}(A).
	\end{equation}
	
	$1\Rightarrow 2$ holds by the sandwich form (\ref{eq: append, set relation for nf and con}).
	
	To show $2\Rightarrow3$, it suffices to show that $A\subseteq \mathcal{H}_\mathcal{S}^{scon}(A)$, because (\ref{eq: append, set relation for nf and con}) holds. Suppose $A\nsubseteq \mathcal{H}_\mathcal{S}^{scon}(A)$, so there exists a $s\in A\backslash\mathcal{H}_\mathcal{S}^{scon}(A)$. Since $s\notin \mathcal{H}_\mathcal{S}^{scon}(A)$, by Definition \ref{def: confirmation sets}, there exists an $s^*\in A^c$ such that $M^s(G^s)\cap M^{s^*}(G^{s^*})\ne \varnothing$.
	Now, since $s\in A$, $M^s(G^s)\cap M^{s^*}(G^{s^*})\ne \varnothing$ implies that $s^*\in \mathcal{H}^{wnf}_{\mathcal{S}}(A)$. However, by condition 2 in this Lemma, $\mathcal{H}^{wnf}_{\mathcal{S}}(A)=A$, $s^*\in A$, so this yields the contradiction.
	
	To show $3\Rightarrow 1$,  it suffices to show that $\mathcal{H}_\mathcal{S}^{wnf}(A)\subseteq A$, because (\ref{eq: append, set relation for nf and con}) holds. Suppose $\mathcal{H}_\mathcal{S}^{wnf}(A)\nsubseteq A$, there exists an $s\in \mathcal{H}_\mathcal{S}^{wnf}(A)\backslash A$. By the definition of $\mathcal{H}_\mathcal{S}^{wnf}(A)$:
	\[M^s(G^s)\cap [\cup_{s^*\in A}M^{s^*}(G^{s^*})]\ne \varnothing.
	\]
	By condition 3 in this Lemma, $A=\mathcal{H}_{\mathcal{S}}^{scon}(A)$, we have
	\[M^s(G^s)\cap [\cup_{s^*\in \mathcal{H}_\mathcal{S}^{scon}(A)}M^{s^*}(G^{s^*})]\ne\varnothing.\] 
	So we can find an $s^*\in \mathcal{H}_\mathcal{S}^{scon}(A)$ such that  $M^{s^*}(G^{s^*})\cap M^s(G^s)\ne\varnothing$. However, by the definition of $\mathcal{H}_\mathcal{S}^{scon}(A)$, $M^{s^*}(G^{s^*})\cap \left( \cup_{s^\prime\in A^c} M^{s^\prime}(G^{s^\prime})\right)=\varnothing$. As a result,  $s\in A$ must hold. This contradicts $s\in \mathcal{H}_\mathcal{S}^{wnf}(A)\backslash A$.
\end{proof}

\subsection{Proof of Proposition \ref{prop: operation of confirmable and refutable set }}
\begin{proof}

	1. If $s\in \left[\mathcal{H}_\mathcal{S}^{snf}(A)\right]^c$, by Definition \ref{def: non-refutablity set in incomplete structure} $M^s(G^s)\cap \left[\cup_{s^*\in A} M^{s^*}(G^{s^*}) \right]^c\ne \varnothing $. Since $(A^c)^c=A$, we have 
	\begin{equation} \label{eq: append, set relation1}
	\left[\cup_{s^*\in A} M^{s^*}(G^{s^*}) \right]^c=\cap_{s^*\in (A^c)^c} [M^{s^*}(G^{s^*})]^c.
	\end{equation}
	Therefore, $s\in \mathcal{H}_\mathcal{S}^{wcon}(A^c)$ and $\mathcal{H}_\mathcal{S}^{snf}(A)^c\subseteq\mathcal{H}_\mathcal{S}^{wcon}(A^c)$.
	
	Similarly, if $s\in \mathcal{H}_\mathcal{S}^{wcon}(A^c)$, by Definition \ref{def: confirmation sets} $M^s(G^s)\cap \left(\cap_{s^*\in (A^c)^c}[ M^{s^*}(G^{s^*})]^c\right)=\varnothing$.  Since $(A^c)^c=A$, we have 
	$
	 \left[\cup_{s^*\in A} M^{s^*}(G^{s^*}) \right]^c=\cap_{s^*\in (A^c)^c} [M^{s^*}(G^{s^*})]^c.
	$. We can use equation \ref{eq: append, set relation1} to show 
	$\mathcal{H}_\mathcal{S}^{snf}(A)^c\supseteq\mathcal{H}_\mathcal{S}^{wcon}(A^c)$.
	
	2.If $s\in [\mathcal{H}_\mathcal{S}^{wnf}(A)]^c$, by Definition \ref{def: non-refutablity set in incomplete structure} $\forall s^*\in A$, $M^s(G^s)\cap M^{s^*}(G^{s^*})=\varnothing$.  As a result, $M^s(G^s)\subseteq \left[\cup_{s^*\in A} M^{s^*}(G^{s^*})\right]$. Since $(A^c)^c=A$ and 
	\[
	M^s(G^s)\subseteq \cap_{s^*\in (A^c)^c} \left(M^{s^*}(G^{s^*})^c\right).
	\]
	By the definition of $\mathcal{H}_\mathcal{S}^{scon}(A^c)$,  we have $s\in \mathcal{H}_\mathcal{S}^{scon}(A^c)$. We can use the the same set operation to find the reversed inclusion.
	
	3. Suppose not, we can find $s\in \mathcal{H}_\mathcal{S}^{wcon}(A)$ but $s\notin \mathcal{H}_\mathcal{S}^{wcon}(\mathcal{H}_\mathcal{S}^{wcon}(A))$. By the definition of weak confirmation set, it means there exists $s^*\in \mathcal{H}_\mathcal{S}^{wcon}(A)^c$ such that 
	\[M^s(G^s)\cap [M^{s^*}(G^{s^*})]^c=\varnothing \Leftrightarrow M^s(G^s)\subseteq M^{s^*}(G^{s^*}).\]
	Now, since $s\in \mathcal{H}_\mathcal{S}^{wcon}(A)$, by definition
$M^s(G^s)\cap [ \cap_{\tilde{s}\in A^c} M^{\tilde{s}}(G^{\tilde{s}})^c]\ne\varnothing.$
	Since $M^s(G^s)\subseteq M^{s^*}(G^{s^*})$, we have $  M^{s^*}(G^{s^*})\cap [ \cap_{\tilde{s}\in A^c} M^{\tilde{s}}(G^{\tilde{s}})^c]\ne\varnothing$, 
	which by definition implies $s^*\in \mathcal{H}_\mathcal{S}^{wcon}(A)$. This is a contradiction.
	
	4. The last statement follows from 3 and 1 by set operation.
\end{proof}

\subsection{Proof of Proposition \ref{prop: refutable equivalent characterization}}
\begin{proof}
	First we note that by the definition of the strong non-refutability set, we have \[\cup_{s\in A} M^s(G^s) = \cup_{s\in \mathcal{H}^{snf}_\mathcal{S}(A)} M^s(G^s).\]
	If $A$ is refutable in the Breusch sense (Definition \ref{def: complete theory, Breusch refutability}), then there exists $F_0$ that can reject $A$, so $\mathcal{F}\ne \cup_{s\in A} M^s(G^s)$. Since $\mathcal{F}= \cup_{s\in \mathcal{S}} M^s(G^s)$ by the definition of structure universe, so $\cup_{s\in \mathcal{H}^{snf}_\mathcal{S}(A)} M^s(G^s)\ne \cup_{s\in\mathcal{S}} M^s(G^s)$. Therefore, $\mathcal{H}_\mathcal{S}^{snf}(A)\ne \mathcal{S}$ holds.
	
	Conversely, if $A$ is non-refutable, then $\cup_{s\in A} M^s(G^s) =\mathcal{F}$ must hold. Then by definition $\mathcal{H}^{snf}_\mathcal{S}(A)=\{s\in\mathcal{S}: M^s(G^s)\subseteq \mathcal{F} \}=\mathcal{S}$.
\end{proof}

\subsection{Proof of Proposition \ref{prop: equivalence well-defined identification and non refutable}} 
\begin{proof}
	Define $\mathcal{S}^{-1}(F)=\{s\in S: \quad F\in M^s(G^s)\}$ as the pre-image of $F$. When $\mathcal{H}_\mathcal{S}^{snf}(A)=\mathcal{S}$, $A$ is non-refutable, so $A\cap\mathcal{S}^{-1}(F)\ne \varnothing $ holds for all $F$. By the definition of the identified set, $\Theta_A^{ID}(F)=\{\theta(s)|\, s\in A\cap\mathcal{S}^{-1}(F)\} \ne \varnothing $  holds for all $\theta$. 
	
	Conversely, if $\Theta_A^{ID}(F)=\varnothing$ for some $\theta$ and $F$, that means $A\cap\mathcal{S}^{-1}(F) =\varnothing$ by definition of identified set. As a result, $A$ is refutable, since $F\notin \cup_{s\in A} M^s(G^s)$, or equivalently $\mathcal{H}_\mathcal{S}^{snf}(A)\ne \mathcal{S}$.
\end{proof} 

\subsection{Proof of Proposition \ref{prop: conditions of strong extension} }
\begin{proof}
	First note that $A\subseteq \tilde{A}$ and ${\Theta}^{ID}_{\tilde{A}}(F)\supsetneq \Theta^{ID}_{A}(F)$. We prove the proposition  by contradiction.	Suppose $\tilde{A}$ is not a strong extension, then there exists a parameter of interest $\theta$ and $F$ such that: 1.$\Theta^{ID}_{A}(F)\ne \varnothing$, and 2.${\Theta}^{ID}_{\tilde{A}}(F)\backslash \Theta^{ID}_{A}(F)\ne \varnothing$.
	Therefore, we can find some $s\in \tilde{A}\backslash A$ such that $F\in M^s(G^s)$ and $\theta(s)\in{\Theta}^{ID}_{\tilde{A}}(F)\backslash \Theta^{ID}_A(F)$. By Definition \ref{def: non-refutablity set in incomplete structure}, $F\in M^s(G^s)\cap (\cup_{s'\in A} M^{s'}(G^{s'})\ne \varnothing$ implies $s\in \mathcal{H}_\mathcal{S}^{wnf}(A)$. As a result,  $\mathcal{H}_\mathcal{S}^{wnf}(A)\cap \tilde{A}\ne A$.
	
	Conversely, if $\mathcal{H}_\mathcal{S}^{wnf}(A)\cap \tilde{A}\ne A$, there exists $\tilde{s}\in (\tilde{A}\backslash A)\cap \mathcal{H}_\mathcal{S}^{wnf}(A)$. By the definition of $\mathcal{H}_\mathcal{S}^{wnf}(A)$, we can find some $s^*\in A$ and an $F$ such that $F\in M^{s^*}(G^{s^*})\cap M^{\tilde{s}}(G^{\tilde{s}})$.
	Let the parameter of interest $\theta$ be the structure itself: $\theta(s)=s$. Then 
	$\tilde{s}\in \Theta^{ID}_{\tilde{A}}(F)\backslash \Theta_A^{ID}(F).$
	So $\tilde{A}$ is not $\theta$-consistent hence is not a strong extension.
\end{proof}

\subsection{Proof of Proposition \ref{prop: exist of strong extension}}
\begin{proof}
	First,the maximal extension $\tilde{A}=A\cup[\mathcal{H}_{\mathcal{S}}^{snf}(A)]^c$,  so we can write \begin{equation}\label{eq: append, decomposition of maximal extension}
	\cup_{s\in \tilde{A}}M^s(G^s)= [\cup_{s\in {A}}M^s(G^s)]\cup [\cup_{s\in [\mathcal{H}_\mathcal{S}^{snf}(A)]^c}M^s(G^s)]
	\end{equation}.
	By Definition \ref{def: non-refutablity set in incomplete structure}, $\cup_{s\in {A}}M^s(G^s)=\cup_{s\in \mathcal{H}_\mathcal{S}^{snf}(A)}M^s(G^s)$, so along with (\ref{eq: append, decomposition of maximal extension}), we have
	\[\cup_{s\in \tilde{A}}M^s(G^s)=\cup_{s\in \mathcal{S}}M^s(G^s)=\mathcal{F}. \]
	As a result, $\tilde{A}$ is non-refutable and hence a well-defined extension. And by construction $\tilde{A}\cap \mathcal{H}_\mathcal{S}^{snf}(A)=A$, so $\tilde{A}$ is a strong extension.
	
	To show $\tilde{A}=A\cup[\mathcal{H}_{\mathcal{S}}^{snf}(A)]^c$ is maximal, let $\tilde{A}'$ be any strong extension. Suppose $\tilde{A}'\nsubseteq \tilde{A}$, then we can find $ s\in \tilde{A}'\backslash \tilde{A}$. By $\tilde{A}=A\cup[\mathcal{H}_{\mathcal{S}}^{snf}(A)]^c$, $s\notin A$ and $s\notin [\mathcal{H}_{\mathcal{S}}^{snf}(A)]^c$ holds. As a result $s\in \mathcal{H}_\mathcal{S}^{snf}(A)\backslash A$ holds, and $\tilde{A}'\cap\mathcal{H}_\mathcal{S}^{snf}(A)\ne A$. Since $\mathcal{S}$ is a complete structure universe, $\mathcal{H}_\mathcal{S}^{snf}(\tilde{A}')=\mathcal{H}_\mathcal{S}^{wnf}(\tilde{A}')$. As a result, $\tilde{A}'\cap\mathcal{H}_\mathcal{S}^{wnf}(A)\ne A$. By Proposition \ref{prop: conditions of strong extension}, $\tilde{A}'$ is not a strong extension. This leads to a contradiction. 
\end{proof}

\subsection{Proof of Proposition \ref{prop: incomplete model impossible to find strong ext}}
\begin{proof}
	 By  $\mathcal{H}_{\mathcal{S}}^{wnf}(A)\backslash \mathcal{H}_{\mathcal{S}}^{snf}(A)\ne ~\varnothing$ and Definition \ref{def: non-refutablity set in incomplete structure}, we can find an $F^{*}$ such that \[F^*\in \left(\cup_{s\in \mathcal{H}_\mathcal{S}^{wnf}(A)}M^s(G^s)\right)\bigg\backslash \left(\cup_{s\in \mathcal{H}_\mathcal{S}^{snf}(A)}M^s(G^s)\right).\] 
	 This means that $F^*$ cannot be generated by any structures in $\mathcal{H}_{\mathcal{S}}^{snf}(A)$.
	 
	 Let $\tilde{A}$ be any well-defined extension, and it must satisfy $F^*\in \cup_{s\in \tilde{A}} M^s(G^s)$. By the condition {\footnotesize $\left(\cup_{s\in \mathcal{H}_\mathcal{S}^{wnf}(A)}M^s(G^s)\right)\cap \left(\cup_{s\in [\mathcal{H}_\mathcal{S}^{wnf}(A)]^c}M^s(G^s)\right)=\varnothing $}, $F^*$ can only be generated by structures in $\mathcal{H}_{\mathcal{S}}^{wnf}(A)$. As a result, $\tilde{A}$ must include a structure $s^*\in \mathcal{H}_{\mathcal{S}}^{wnf}(A)\backslash \mathcal{H}_{\mathcal{S}}^{snf}(A)$ such that $F^*\in M^{s^*}(G^{s^*})$. This means $\tilde{A}\cap \mathcal{H}_\mathcal{S}^{wnf}(A)\ne A$. By Proposition  \ref{prop: conditions of strong extension}, $\tilde{A}$ cannot be a strong extension. 
\end{proof}

\subsection{Proof of Proposition \ref{prop: minimal deviation as strong extension}}
\begin{proof}
	First I show $\tilde{A}$ defined through minimal deviation extension is a well-defined extension. For any $F\in \mathcal{F}$, let
	$s_1\in \cap_{l\ne j}A_l$ be the structure that achieves the minimal deviation, i.e. $ m_j(s_1)=m^{min}(F)$ and $F\in M^{s^1}(G^{s^1})$. By the definition of the minimal deviation extension,  $s_1\in \tilde{A}$. Therefore, $\tilde{A}$ is a well-defined extension.
	
	Second, we show the $\theta$-consistent result. For any $F$ such that $\Theta_A^{ID}(F)\ne \varnothing$, there exists some structure $s_2\in A$ such that $F\in M^{s_2}(G^{s_2})$. Moreover, $s_2\in A$, by Definition \ref{def: well-defined relaxation measure} $m_j(s_2)=0$. Let $s_2^*$ be any structure in $\tilde{A}$ that also rationalizes $F$, i.e. $F\in M^{s_2^*}(G^{s_2^*})$. By construction of $\tilde{A}$, $s^*_2$ achieves minimal deviation ($m_j(s_2^*)=m^{min}(F)$), so $s_2^*$ must satisfy $0\le m_j(s_2^*)\le m_j(s_2)$. This implies that $m_j(s_2^*)=0$ holds for any $s_2^*\in \tilde{A}$ such that $F\in M^{s_2^*}(G^{s_2^*})$. Then the identified set $\Theta^{ID}_{\tilde{A}}(F)$ satisfies
	\[
	\begin{split}
	\Theta^{ID}_{\tilde{A}}(F)&= \{\theta(s): s\in \tilde{A}\quad and \quad F\in M^s(G^s)\}\\
	&=\{\theta(s): s\in \tilde{A}\quad and \quad F\in M^s(G^s)\quad and \quad m_j(s)=0\}\\
	&=\{\theta(s): s\in \cap_{l\ne j}A_l \quad and \quad F\in M^s(G^s)\quad and \quad m_j(s)=0\}
	\end{split}
	\]
	The first equality holds by definition of the identified set, the second holds by $m_j(s_2^*)=0$ for all $s_2^*\in \tilde{A}$, the third equality holds by the definition of $\tilde{A}$ and $m_j\ge 0$. So $\tilde{A}$ is $\theta$-consistent if 
	\[\Theta_A^{ID}(F)=\{\theta(s): s\in \cap_{l\ne j}A_l\quad and \quad F\in M^s(G^s)\quad and \quad m_j(s)=0\}\]
	holds.
	
	Last, I prove the strong extension statement by contradiction. Suppose $\tilde{A}$ is not a strong extension, then there exist some $s_3\in (\mathcal{H}_\mathcal{S}^{wnf}(A)\backslash A)\cap \tilde{A}$, $s_4\in A$, and an observable distribution $F$ such that $F\in M^{s_3}(G^{s_3})\cap M^{s_4}(G^{s_4})\ne \varnothing$. Since $s_4\in A$, $m_j(s_4)=m^{min}(F)=0$. By construction, $s_3\in \tilde{A}$, and $s_3$ achieves the minimal deviation, so $m_j(s_3)=m^{min}(F)=0$ must hold. Since $s_3\notin A=\cap_{l}A_l$, but at the same time $s_3\in \cap_{l\ne j}A_l$, then it must be the case that  $ A_j\ne \{s\in \cap_{l\ne j}A_l: \,\,m_j(s)=0\}$ holds. The result follows by contradiction.
\end{proof}

\subsection{Proof of Lemma \ref{lem: charaterization of strong binary detectable}}

\begin{proof}
	$\Rightarrow$:
	Suppose $\mathcal{H}_{\tilde{A}}^{scon}(H)\ne \mathcal{H}_{\tilde{A}}^{wnf}(H)$, by Lemma \ref{lem: charaterization of A set with scon}, there exists an
	$s\in \mathcal{H}_{\tilde{A}}^{wnf}(H)$ but $s\notin H$. By definition of $s\in\mathcal{H}_{\tilde{A}}^{wnf}(H)$,  $\exists F\in M^s(G^s)\cap (\cup_{s^*\in H}M^{s^*}(G^{s^*}))$, it means $H$ cannot be decided by $F$, because $F\in \cup_{s^*\in H^c}M^{s^*}(G^{s^*})$ and there exists some $s^\prime\in H$ such that $F\in M^{s^\prime}(G^{s^\prime})$.
	
	$\Leftarrow$: Suppose there exists $F$ such that $H$ can not be decided by $F$, then $F\in\cup_{s^*\in [H^c\cap {\tilde{A}}]}  M^{s^*}(G^{s^*})$ and $F\in \cup_{s\in H}M^s(G^s)$. This means we can find an $s\in [H^c\cap \tilde{A}]$, such that $F\in M^{s}(G^{s})$, and find $\tilde{s}\in H$ such that $F\in M^{\tilde{s}}(G^{\tilde{s}})$. By Definition \ref{def: non-refutablity set in incomplete structure}, $s\in \mathcal{H}_{\tilde{A}}^{wnf}(H)$. As a result, $H\ne \mathcal{H}_{\tilde{A}}^{wnf}(H)$. By Lemma \ref{lem: charaterization of A set with scon},  $\mathcal{H}_{\tilde{A}}^{scon}(H)\subsetneq H$, we have $\mathcal{H}_{\tilde{A}}^{scon}(H)\ne \mathcal{H}_{\tilde{A}}^{wnf}(H)$. 
\end{proof}

\subsection{Proof of Proposition \ref{prop: binary decidable and size-power issue}}

\begin{proof}
	By Lemma \ref{lem: charaterization of strong binary detectable} and \ref{lem: charaterization of A set with scon}, $H$ is not strongly binary decidable implies $H\ne \mathcal{H}_{\tilde{A}}^{wnf}(H)$. So we can find $s\in H$ and $\tilde{s}\in \mathcal{H}_{\tilde{A}}^{wnf}(H)\backslash H$ and an $\tilde{F}$ such that $\tilde{F}\in M^s(G^s)\cap M^{\tilde                                                                                                                                                                                                                                                                                                                                                                                                                                                                                                                                                                                                                                                                                             {s}}(G^{\tilde{s}})$.
	Let $\tilde{\mathbb{F}}_n$ be any empirical distribution sampled from $\tilde{F}$ such that $\tilde{\mathbb{F}}_n$ converges to $\tilde{F}$ weakly. If (\ref{eq: pointwise size control}) holds for some $\alpha<1$, then since $\tilde{s}\in H^c$, we look at the LHS of equation (\ref{eq: test consistency}):
	\[
	\begin{split}
	\quad \inf_{F\in\cup_{s\in A^c} M^s(G^s)} &{\lim \sup}_{n\rightarrow \infty } Pr(T_n(\mathbb{F}_n,\mathbf{\eta})=0)\\
	&\le  {\lim \sup}_{n\rightarrow \infty } Pr(T_n(\tilde{\mathbb{F}}_n,\tilde{\eta})=0)\\
	&= {\lim \sup}_{n\rightarrow \infty } 1-Pr(T_n(\tilde{\mathbb{F}}_n,\tilde{\eta})=1)\\
	&= 1-{\lim \inf}_{n\rightarrow \infty } Pr(T_n(\tilde{\mathbb{F}}_n,\tilde{\eta})=1)\\
	&\le_{(1)} 1-(1-\alpha)=\alpha<1,
	\end{split}
	\]
	where inequality $(1)$ follows by the test consistency requirement (\ref{eq: pointwise size control}) for $s\in H$ and $\tilde{F}\in M^s(G^s)$. Therefore, we cannot achieve pointwise size control and test consistency simultaneously. 
\end{proof}

\subsection{Proof of Proposition \ref{prop: existence of size-consistecy test stat for strongly bin decidable}}
\subsubsection{Some Lemmas and Additional Propositions}
\begin{lem}\label{lem: append, strong binary decidable and id set}
	If the hypothesis $H=\{s\in {\tilde{A}}: \,\theta(s)\in \Theta^0\}$ is strongly binary decidable, then for all $F\in \mathcal{F}$, exactly one of the following holds:
	\begin{enumerate}
		\item $\Theta^0\cap \Theta^{ID}_{{\tilde{A}}}(F)=\varnothing$;
		\item $\Theta^{ID}_{{\tilde{A}}}(F)\subseteq \Theta^0$.
	\end{enumerate}
\end{lem}
\begin{proof}
	Note that 1 and 2 in the Lemma cannot hold simultaneously, since $\tilde{A}$ is non-refutable (see Proposition \ref{prop: equivalence well-defined identification and non refutable}). Now suppose both 1 and 2 do not hold. Then we can find an $F$ and parameter values $\theta^{val},\tilde{\theta}^{val}$ such that: ($i$). $\theta^{val}\in\Theta^0\cap \Theta^{ID}_{{\tilde{A}}}(F)$; and ($ii$). $\tilde{\theta}^{val}\in \Theta^{ID}_{{\tilde{A}}}(F)\backslash \Theta^0$. Condition ($i$) implies that $F$ can be generated by a structure in $H$:$
	F\in \cup_{s\in H}M^s(G^s).
	$ Condition  ($ii$) implies that there exists a $\tilde{\theta}\in H^c\cap {\tilde{A}}$ such that $F\in M^s(G^s)$. As a result, $
	F\in \cup_{s\in H^c\cap {\tilde{A}}}M^s(G^s).$
	Therefore, by Definition \ref{def: weakly binary decidable}, $H$ is not binary decidable by $F$. 
\end{proof}

\begin{lem} \label{lem: implication of upper hemi-continuity}
	Let $\Theta^{ID,\epsilon}_{\tilde{A}}(F)=\{\theta\in\Theta\big| d_{\theta}(\theta,\Theta_{\tilde{A}}^{ID}(F))<\epsilon\}$ be the $\epsilon$-enlargement of identified set $\Theta_{\tilde{A}}^{ID}(F)$.  Let Assumption \ref{assumption: consistency} holds, then $\forall \epsilon>0$, there exists a $\delta(\epsilon)>0$ such that for all $\tilde{\theta}\in \Theta\backslash \Theta^{ID,\epsilon}_{\tilde{A}}(F)$ and $\forall s$ such that $\theta(s)=\tilde{\theta}$, the following holds:
	\[\inf_{F^*\in M^s(G^s)} d_{\tilde{\mathcal{F}}}(F^*,F) \ge \delta(\epsilon).\]	
\end{lem}
\begin{proof}
	Since $\Theta_{\tilde{A}}^{ID}(F)$ is upper hemicontinuous at $F$, and $\Theta^{ID,\epsilon}_{\tilde{A}}(F)$ is an open neighborhood of $\Theta_{\tilde{A}}^{ID}(F)$, there exists an open neighborhood $U$ of $F$ such that $\Theta^{ID}_{\tilde{A}}(F^*)\subseteq \Theta^{ID,\epsilon}_{\tilde{A}}({F})$ for all $\tilde{F}\in U$. Let $\delta(\epsilon)= \sup_{F_1,F_2\in U} d_{\tilde{\mathcal{F}}}(F_1,F_2)/2 $ be the diameter of $U$. Consider any $\tilde{\theta}\in \Theta\backslash \Theta^{ID,\epsilon}_{\tilde{A}}(F)$ and $s$ such that $\theta(s)=\tilde{\theta}$. 
	
	We claim that $M^s(G^s)\cap U=\varnothing$. Suppose not, there exists an $\tilde{F} \in M^s(G^s)\cap U$ and by the upper hemi-continuity property in the previous paragraph, $\tilde{\theta}=\theta(s)\in \Theta^{ID}_{\tilde{A}}(\tilde{F})\subseteq \Theta^{ID,\epsilon}_{\tilde{A}}(F)$ $\Rightarrow$, which is a contradiction. Since $M^s(G^s)\cap U=\varnothing$ holds, $\inf_{F^*\in M^s(G^s)} d_{\tilde{\mathcal{F}}}(F^*,F) \ge \delta(\epsilon)$ holds. 
\end{proof}
We now introduce the following notation:
\[J_{n}(F;c_n,a_n)\equiv\left\{ F^*\in \mathcal{F}\cup\mathcal{F}^d\big| d_{\tilde{\mathcal{F}}}(F^*,F)<c_n/\sqrt{a_n}	\right\},\]
which is a generalization of the dilation map in \citet{Galichon2013DilationBoostrap}. We consider the following set:
\begin{equation}\label{eq: General Est Inf method, est ID set}
\hat{\Theta}^{ID}=\{\theta(s)\big| s\in {\tilde{A}}\,,\,\, \mathbb{F}_n\in J_{n}(M^s(G^s);c_n,a_n)\}. 
\end{equation}
\begin{prop}\label{prop: consistency of estimated identified set}
	Under Assumption \ref{assumption: consistency}, $\Theta_{\tilde{A}}^{ID}(F)\subseteq \hat{\Theta}^{ID}$ with probability approaching 1, and $d_H(\Theta_{\tilde{A}}^{ID}(F), \hat{\Theta}^{ID})=o_p(1)$, where $d_H$ is the Hausdorff distance.
\end{prop}
\begin{proof}
	Recall that $\mathbb{F}_n$ is the empirical distribution sampled from the observed distribution $F$.
	
	If $\tilde{\theta}\in \Theta_{\tilde{A}}^{ID}(F)$, there exists an $s$ such that $\theta(s)=\tilde{\theta}$, and $F\in M^s(G^s)$. Therefore, we have that
	\begin{equation}\label{eq:append, convergence of the identified set 1}
	\inf_{F^*\in M^s(G^s)} \sqrt{a_n}d_{\tilde{\mathcal{F}}}(\mathbb{F}_n,F^*) \le \sqrt{a_n}d_{\tilde{\mathcal{F}}}(\mathbb{F}_n,F) \le c_n
	\end{equation} 
	holds with probability approaching 1 by the assumption of this proposition. In equation (\ref{eq:append, convergence of the identified set 1}), the right hand side $c_n$ does not depend on the value of $\tilde{\theta}$. Therefore $\Theta_{\tilde{A}}^{ID}(F)\subseteq \hat{\Theta}^{ID}$ with probability approaching 1. This proves the first claim.
	
	Next, I show $d_{H}(\hat{\Theta}^{ID},\Theta_{\tilde{A}}^{ID}(F))\rightarrow_p 0$ by showing that $\hat{\Theta}^{ID}$ does not intersect $\Theta\backslash \Theta^{ID,\epsilon}$ with probability approaching 1 for all $\epsilon>0$. By the definition of $\hat{\Theta}^{ID}(F)$, it suffices to show that 	
	\[\inf_{s:\theta(s)\in \Theta\backslash \Theta^{ID,\epsilon}_{\tilde{A}}(F)}\left[ \inf_{F^*\in M^s(G^s)} d_{\tilde{\mathcal{F}}}(\mathbb{F}_n,F^*)\right] >c_n/\sqrt{a_n}\]
	holds with probability approaching 1. Note that 
	\begin{equation*}
	\begin{split}
	&\quad\inf_{s:\theta(s)\in \Theta\backslash \Theta^{ID,\epsilon}_{\tilde{A}}(F)}\left[ \inf_{F^*\in M^s(G^s)} d_{\tilde{\mathcal{F}}}(\mathbb{F}_n,F^*)\right]\\
	&\ge \inf_{s:\theta(s)\in \Theta\backslash \Theta^{ID,\epsilon}_{\tilde{A}}(F)}\left[ \inf_{F^*\in M^s(G^s)} d_{\tilde{\mathcal{F}}}(F,F^*)-d_{\tilde{\mathcal{F}}}(\mathbb{F}_n,F)\right]\\
	&\ge \delta(\epsilon) -O_p(1/\sqrt{a_n}),
	\end{split}
	\end{equation*}
	where the last inequality follows from Lemma \ref{lem: implication of upper hemi-continuity}. Since $c_n/\sqrt{a_n}\rightarrow 0$, $Pr(\delta(\epsilon) -O_p(1/\sqrt{a_n})>c_n/\sqrt{a_n})\rightarrow 1$. Therefore, $\Theta^{ID}_{\tilde{A}}$ does not intersect $\Theta\backslash \Theta^{ID,\epsilon}$ with probability approaching 1. 
\end{proof}
\subsubsection{Main Proof of Proposition \ref{prop: existence of size-consistecy test stat for strongly bin decidable}}
\begin{proof}
	 Let $\hat{\Theta}^{ID}$ be the set given in (\ref{eq: General Est Inf method, est ID set}). We consider a test statistic $T_1(\mathbb{F}_n,\eta)$ such that $T_1(\mathbb{F}_n,\eta)=1$ if $\Theta^0\cap \hat{\Theta}^{ID}\ne \varnothing$, and $T_1(\mathbb{F}_n,\eta)=0$ otherwise.

	By Proposition \ref{prop: consistency of estimated identified set}, for any $\epsilon>0$, $\Theta^{ID}_{{\tilde{A}}}(F)\subset \hat{\Theta}^{ID}\subset \Theta^{ID,\epsilon}_{{\tilde{A}}}(F)$ holds with probability approaching 1, where $\Theta^{ID,\epsilon}_{\tilde{A}}(F)=\{\theta\in\Theta\big| d(\theta,\Theta_{\tilde{A}}^{ID}(F))<\epsilon\}$.
	
	By Lemma \ref{lem: append, strong binary decidable and id set}, for any $F\in \cup_{s\in H} M^s(G^s)$, $\Theta^{ID}_{{\tilde{A}}}(F)\subseteq \Theta^0$ holds. Therefore
	\[
	\begin{split}
	{\lim\inf}_{n\rightarrow \infty} Pr(T_1(\mathbb{F}_n,\mathbf{\eta})=1)&=	{\lim\inf}_{n\rightarrow \infty} Pr(\hat{\Theta}^{ID}\cap\Theta^0\ne \varnothing)\\
	&\ge {\lim\inf}_{n\rightarrow \infty} Pr(\Theta^{ID}_{{\tilde{A}}}(F)\subseteq\hat{\Theta}^{ID})=1.
	\end{split}
	\]
	This shows that the test statistic $T_1$ achieves pointwise size control.
	
	To show the test consistency, let $s\in H^c\cap \tilde{A}$ be any structure such that $\theta(s)\in [\Theta^{0}]^c$ and $F\in M^s(G^s)$. By definition $F\in \cup_{s\in [H^c\cap \tilde{A}]}M^s(G^s)$. Since $\Theta^{0}$ is closed, we can find an $\epsilon>0$ such that $\theta(s)\in [\Theta^{0,\epsilon}]^c$, where $\Theta^{0,\epsilon}$ is the $\epsilon$-enlargement of $\Theta^{0}$. By Lemma \ref{lem: append, strong binary decidable and id set}, $\Theta^{ID}_{{\tilde{A}}}(F)\cap \Theta^{0,\epsilon}=\varnothing$, or equivalently $\Theta^{ID,\epsilon}_{{\tilde{A}}}(F)\cap \Theta^{0}=\varnothing$ holds. Therefore 
	\[
	\begin{split}
	{\lim \sup}_{n\rightarrow \infty } Pr(T_1(\mathbb{F}_n,\mathbf{\eta})=0)&={\lim \sup}_{n\rightarrow \infty } Pr(\hat{\Theta}^{ID}\cap\Theta^{0}=\varnothing)\\
	&\ge {\lim \sup}_{n\rightarrow \infty } Pr(\hat{\Theta}^{ID}\subseteq \Theta^{ID,\epsilon}_{{\tilde{A}}}(F))=_{(*)}1.
	\end{split}
	\]
	where $(*)$ follows by Proposition  \ref{prop: consistency of estimated identified set}. As a result, the test statistic $T_1$ achieves the test consistency (\ref{eq: test consistency}).
\end{proof}

\subsection{Proof of Proposition \ref{prop: point identification implies strong binary decidable}}
\begin{proof}
	I first show the result for the point identified parameter $\theta$. By Lemma \ref{lem: charaterization of A set with scon}, it suffices to show $\mathcal{H}_{{\tilde{A}}}^{wnf}(H)=H$. Since $H\subseteq \mathcal{H}_{{\tilde{A}}}^{wnf}(H)$ it suffices to show $\mathcal{H}_{{\tilde{A}}}^{wnf}(H)\subseteq H$. 
	
	Take a structure $\tilde{s}\in \mathcal{H}_{{\tilde{A}}}^{wnf}(H)$, by definition, there exists an $s'\in H$ and an $F$ such that $F\in M^{\tilde{s}}(G^{\tilde{s}})\cap M^{s'}(G^{s'})$. By definition of the identified set $\Theta_{\tilde{A}}^{ID}(F)=\{\theta(s): s\in \tilde{A},\,\, F\in M^s(G^s)\}$, and therefore $\{\theta(\tilde{s}),\theta(s')\}\subseteq \Theta_{\tilde{A}}^{ID}(F)$. By point identification assumption,  $\Theta_{\tilde{A}}^{ID}(F)$ is a singleton, so $\theta(\tilde{s})=\theta(s')\in \Theta_0$. This shows $\tilde{s}\in H$.

	Now I show the result for the partially identified $\theta$. Without loss of generality, let's assume $\Theta_{\tilde{A}}^{ID}(F)\backslash \Theta_{\tilde{A}}^{ID}(F')\ne \emptyset$. Let $\theta_1\in \Theta_{\tilde{A}}^{ID}(F)\cap \Theta_{\tilde{A}}^{ID}(F')$ and $\theta_2 \in \Theta_{\tilde{A}}^{ID}(F)\backslash \Theta_{\tilde{A}}^{ID}(F')$. I claim that $H= \{s\in \tilde{A}: \theta(s)\in \Theta_0\}$ is not binary decidable for $\Theta_0=\{\theta_2\}$. 
	
	Indeed, let $S_k=\{s\in \tilde{A}: \theta(s)=\theta_k\}$ for $k=1,2$.  Then $H= S_2$ and $\mathcal{H}_{\tilde{A}}^{wnf}(H)=S_1\cup S_2$. By our assumption of Proposition \ref{prop: point identification implies strong binary decidable}, $S_k\ne \emptyset$ for all $k=1,2$, so $H\ne \mathcal{H}_{\tilde{A}}^{wnf}(H)$. By Lemma \ref{lem: charaterization of A set with scon}, $H$ is not strongly binary decidable.
\end{proof}

\subsection{Proof of Proposition \ref{Prop: smallest binary decidable extension (shrinkage)}}
\begin{proof}
	Let $H^{ext}$ be any non-trivial strongly binary decidable extension ($H^{ext}\ne \tilde{A}$). By Lemma \ref{lem: charaterization of strong binary detectable} 
	and \ref{lem: charaterization of A set with scon}, we have 
	$H^{ext}=\mathcal{H}^{wnf}_{{\tilde{A}}}(H^{ext})=\mathcal{H}^{scon}_{{\tilde{A}}}(H^{ext})$. By (\ref{eq: append, set relation for nf and con}), we have $
	H^{ext}=\mathcal{H}^{snf}_{{\tilde{A}}}(H^{ext})$.

	Since $H\subseteq H^{ext}$, we have \[\mathcal{H}^{snf}_{{\tilde{A}}}(H)\subseteq \mathcal{H}^{snf}_{{\tilde{A}}}(H^{ext})=H^{ext}.\]
	The above inclusion says that any binary decidable extension must includes $\mathcal{H}^{snf}_{{\tilde{A}}}(H)$. If we show $\mathcal{H}^{snf}_{{\tilde{A}}}(H)$ is a strongly binary decidable extension, it must be the smallest.
	
	Note that by definition of $\mathcal{H}_{{\tilde{A}}}^{snf}(H)$, 
$
	\cup_{s\in \mathcal{H}_{{\tilde{A}}}^{snf}(H) } M^s(G^s) =\cup_{s\in H } M^s(G^s).
$
	By applying Definition \ref{def: non-refutablity set in incomplete structure}, we have 
	\[
	\begin{split}
	\mathcal{H}_{{\tilde{A}}}^{wnf}(\mathcal{H}_{{\tilde{A}}}^{snf}(H)) 
	&=\left\{s\in {\tilde{A}}: M^s(G^s)\cap \left(\cup_{s^*\in \mathcal{H}_{{\tilde{A}}}^{snf}(H) } M^{s^*}(G^{s^*})\right)\ne \varnothing \right\}\\
	&=\left\{s\in {\tilde{A}}: M^s(G^s)\cap \left(\cup_{s^*\in H } M^{s^*}(G^{s^*})\right)\ne \varnothing \right\} \\
	&=\mathcal{H}_{{\tilde{A}}}^{wnf}(H) = \mathcal{H}_{{\tilde{A}}}^{snf}(H),
	\end{split} \]
	where the last equality holds by the assumption that  $\mathcal{H}_{{\tilde{A}}}^{wnf}(H) =\mathcal{H}_{{\tilde{A}}}^{snf}(H)$. The above equality implies Condition 3 in Lemma \ref{lem: charaterization of A set with scon} holds for $\mathcal{H}_{{\tilde{A}}}^{snf}(H)$. As a result, $\mathcal{H}_{{\tilde{A}}}^{wnf}(\mathcal{H}_{{\tilde{A}}}^{snf}(H))=\mathcal{H}_{{\tilde{A}}}^{scon}(\mathcal{H}_{{\tilde{A}}}^{snf}(H))$. So $\mathcal{H}_{{\tilde{A}}}^{snf}(H)$ is strongly binary decidable by Lemma \ref{lem: charaterization of strong binary detectable}.

	Let $H^{sub}$ be any non-trivial strongly binary decidable subset set ($H^{ext}\ne \emptyset$). By Lemma \ref{lem: charaterization of strong binary detectable} and \ref{lem: charaterization of A set with scon}, it implies 
	\[H^{sub}=\mathcal{H}^{scon}_{{\tilde{A}}}(H^{sub})=\mathcal{H}^{wcon}_{{\tilde{A}}}(H^{sub}).\]
	Since $H^{sub}\subseteq H$, we have 
	\[
	H^{sub}=\mathcal{H}^{scon}_{{\tilde{A}}}(H^{sub})\subseteq \mathcal{H}^{scon}_{{\tilde{A}}}(H).
	\]
	The above inclusion says any strongly binary decidable shrinkage must be included in  $\mathcal{H}^{scon}_{{\tilde{A}}}(H)$. If we can show that  $\mathcal{H}^{scon}_{{\tilde{A}}}(H)$ is a strongly binary decidable subset set, it must be the largest. By the same argument as shown in the proof of non-refutability set, we can show $\mathcal{H}^{scon}_{{\tilde{A}}}(\mathcal{H}^{wcon}_{{\tilde{A}}}(H))=\mathcal{H}^{scon}_{{\tilde{A}}}(H)=\mathcal{H}^{wcon}_{{\tilde{A}}}(H)$, so by Lemma \ref{lem: charaterization of A set with scon}, it is equivalent to 
	\[
	\mathcal{H}^{scon}_{{\tilde{A}}}(\mathcal{H}^{wcon}_{{\tilde{A}}}(H))=\mathcal{H}^{wnf}_{{\tilde{A}}}(\mathcal{H}^{wcon}_{{\tilde{A}}}(H)).
	\]
	As a result, $\mathcal{H}^{wcon}_{{\tilde{A}}}(H)$ is strongly binary decidable by Lemma \ref{lem: charaterization of strong binary detectable}.
\end{proof}

\subsection{Proof of Proposition \ref{prop: equivalent decision}}
\begin{proof}
	I first prove the first part of the proposition. 
	
	"$1\Rightarrow 2$": If $\exists s\in H$ and $F\in M^s(G^s)$, and since $F$ is what we observe, then $F\in M^{s^0}(G^{s^0})$ holds. By definition $s^0\in \mathcal{H}_{{\tilde{A}}}^{wnf}(H)$. By assumption $\left(\mathcal{H}_{{\tilde{A}}}^{wnf}(H)\backslash \mathcal{H}_{{\tilde{A}}}^{snf}(H)\right) =\varnothing$, we have $s^0\in \mathcal{H}_{{\tilde{A}}}^{snf}(H)$
	
	"$1\Leftarrow 2$": If $s^0\in \mathcal{H}_{{\tilde{A}}}^{snf}(H)$ and $F\in M^{s^0}(G^{s^0})$, by definition of $\mathcal{H}_{{\tilde{A}}}^{snf}(H)$, there exists an $s\in H$ such that $F\in M^s(G^s)$.
	
	I now show the second part of the proposition.
	
	"$1\Rightarrow 2$": If $\{s\in \tilde{A}: F\in M^s(G^s)\}\subset H$, and since $F$ is what we observe, then $F\in M^{s^0}(G^{s^0})$ holds. By definition $s^0\in \mathcal{H}_{{\tilde{A}}}^{wcon}(H)$. By assumption $\left(\mathcal{H}_{{\tilde{A}}}^{wcon}(H)\backslash \mathcal{H}_{{\tilde{A}}}^{scon}(H)\right) =\varnothing$, it implies that $s^0\in \mathcal{H}_{{\tilde{A}}}^{scon}(H)$.
	
	"$2\Rightarrow 1$":  Suppose 1 does not hold. So we can find a structure $s^*\in H^c$ such that $F\in M^{s^*}(G^{s^*})$. Since  $F\in M^{s_0}(G^{s_0})$, $F\in \left(\cup_{s\in [H^c\cap \tilde{A}]} M^{s}(G^{s})\right)\cap M^{s_0}(G^{s_0})$ holds. As a result, $M^{s_0}(G^{s_0})\nsubseteq \left(\cap_{s\in [H^c\cap \tilde{A}]} M^{s}(G^{s})^c\right)$, so 2 does not hold.	
\end{proof}

\section{Proof in Section 3}\label{section: Proofs in section 3}
\subsection{Proof of Theorem \ref{thm: testable implication in density form}}

\begin{proof}
	By the definition of $P(B,d)$ and $Q(B,d)$ in (\ref{eq: potential outcome, link between F and G}), they are measures since they are generated by probability measures. Moreover, they are finite measures, bounded above by 1. 
	
	Now, suppose the testable implication in the Radon-Nikodym form (\ref{eq: testable implication in density form}) holds, then for any Borel measurable set $B$,
	\[
	\begin{split}
	P(B,1)-Q(B,1)=\int_{B} p(y,1)-q(y,1) d\mu_F \ge 0\\
	Q(B,0)-P(B,0)=\int_{B} q(y,0)-p(y,0) d\mu_F \ge 0
	\end{split}
	\] 
	holds.
	
	Conversely, suppose testable implication in Radon-Nikodym form (\ref{eq: testable implication in density form}) fails. Without loss of generality, let $B_1$ be the set that $\mu_F(B_1)>0$ and $p(y,1)-q(y,1)<0$ for all  $y\in B_1$. 
	By Lemma \ref{lem: existence of positive measure set },  there exist a measurable set $B_1'\subseteq B_1$ with $\mu_F(B_1')>0$ such that 
	\[\int_{B_1'} p(y,1)-q(y,1) d\mu_F<0.\]
	As a result, 
	\[
	P(B_1',1)-Q(B_1',1)=\int_{B_1'} p(y,1)-q(y,1) d\mu_F <0
	\]
	So the testable implication (\ref{eq: testable implication of LATE}) fails. 
\end{proof}

\subsection{Proofs in Section 3.1}

\subsubsection{Proof of Lemma \ref{lem: alternative representation of IA-M assumption}}
\begin{proof}
	Let $A'=A^{ER}\cap A^{TI}\cap A^{EM-NTAT}\cap A^{ND}$ be the alternative representation, and let $A$ be the representation in (\ref{eq: IA instrument assumption }). Note that if $Z_i$ is an instrument independent of the potential outcomes, then: (1). $Z_i$ is also type independent; (2). The measure of always takers and never takers is independent of $Z_i$. Therefore, we have $A\subseteq A'$.
	
	Conversely, let $s\in A'$. It suffices to show the  condition $\{Y_i(d,z),D_i(z)\}_{d,z\in\{0,1\}}\perp Z_i$ holds for $s$. For any $B_1,B_0$ set, by the exclusion restriction of $s$ we have 
	\begin{equation}\label{eq: full independence under tilde A type C}
	\begin{split}
	&\quad Pr_{G^s}(Y_i(1,0)=Y_i(1,1)\in B_1, Y_i(0,0)=Y_i(0,1)\in B_0,D_i(1)=1,D_i(0)=0|Z_i=1)\\
	&=_{(1)} Pr_{G^s}(Y_i(1,0)=Y_i(1,1)\in B_1, Y_i(0,0)=Y_i(0,1)\in B_0|D_i(1)=1,D_i(0)=0,Z_i=1)\\
	&\quad \times Pr_{G^s}(D_i(1)=1,D_i(0)=0|Z_i=1)\\
	&=_{(2)} Pr_{G^s}(Y_i(1,0)=Y_i(1,1)\in B_1, Y_i(0,0)=Y_i(0,1)\in B_0|D_i(1)=1,D_i(0)=0,Z_i=1)\\
	&\times (1-Pr_{G^s}(D_i(1)=1,D_i(0)=1|Z_i=1)-Pr_{G^s}(D_i(1)=0,D_i(0)=0|Z_i=1))\\
	&=_{(3)} Pr_{G^s}(Y_i(1,0)=Y_i(1,1)\in B_1, Y_i(0,0)=Y_i(0,1)\in B_0|D_i(1)=1,D_i(0)=0,Z_i=0)\\
	&\times (1-Pr_{G^s}(D_i(1)=1,D_i(0)=1|Z_i=1)-Pr_{G^s}(D_i(1)=0,D_i(0)=0|Z_i=1))\\
	&=_{(4)} Pr_{G^s}(Y_i(1,0)=Y_i(1,1)\in B_1, Y_i(0,0)=Y_i(0,1)\in B_0|D_i(1)=1,D_i(0)=0,Z_i=0)\\
	&\times (1-Pr_{G^s}(D_i(1)=1,D_i(0)=1|Z_i=0)-Pr_{G^p}(D_i(1)=0,D_i(0)=0|Z_i=0))\\
	&=_{(5)}Pr_{G^s}(Y_i(1,0)=Y_i(1,1)\in B_1, Y_i(0,0)=Y_i(0,1)\in B_0,D_i(1)=1,D_i(0)=0|Z_i=0),
	\end{split}
	\end{equation}
	where (1) and (5) follow by the formula of conditional probability, (2) follows by $D_i(1)\ge D_i(0)$ almost surely under $s\in A'$, (3) follows by the type independence instrument assumption, (4) follows by the measure of always takers and never takers is independent of $Z_i$. We can similarly show for $j=0,1$:
	\begin{equation}\label{eq: full independence under tilde A type AT NT}
	\begin{split}
	&\quad Pr_{G^s}(Y_i(1,0)=Y_i(1,1)\in B_1, Y_i(0,0)=Y_i(0,1)\in B_0,D_i(1)=D_i(0)=j|Z_i=1)\\
	&=Pr_{G^s}(Y_i(1,0)=Y_i(1,1)\in B_1, Y_i(0,0)=Y_i(0,1)\in B_0,D_i(1)=D_i(0)=j|Z_i=0).
	\end{split}
	\end{equation}
	We then use (\ref{eq: full independence under tilde A type C}) and (\ref{eq: full independence under tilde A type AT NT}) to conclude the independence condition $\{Y_i(d,z),D_i(z)\}_{d,z\in\{0,1\}}\perp Z_i$ holds when there are no defiers. So $s\in A$ and $A'\subseteq A$ holds.
\end{proof} 

\subsubsection{Proof of Proposition \ref{prop: appli, maximal extension arbitrary instrument}}

\begin{proof}
	Let $\tilde{A}'$ be the extension constructed in Assumption \ref{assump: minimal dist to marg ind inst}, and let $\tilde{A}^{max}$ be the extension constructed in Proposition \ref{prop: appli, maximal extension arbitrary instrument}. By Proposition \ref{prop: minimal marg ind as consistent extension}, $\tilde{A}'$ is a well-defined extension, so $\mathcal{H}_{\mathcal{S}}^{snf}(\tilde{A}')=\mathcal{S}$.  By construction, $\tilde{A}'\subseteq \tilde{A}^{max}$, so  $\mathcal{S}=\mathcal{H}_{\mathcal{S}}^{snf}(\tilde{A}')\subseteq\mathcal{H}_{\mathcal{S}}^{snf}(\tilde{A}^{max})\subseteq\mathcal{S}$. Therefore, $\tilde{A}$ is a well-defined extension. Moreover, $\tilde{A}^{max}\cap \mathcal{H}_{\mathcal{S}}^{snf}(A)= A\cap A^{ER}\cap A^{ND}=A$, so $\tilde{A}^{max}$ is a strong extension. When (\ref{eq: testable implication in density form}) holds for $F$, the identification result follows by equation (\ref{eq: appli, ID set}) and Definition  \ref{def: consistent extension, theta and strong}. 
	
	We now show that the identified set in Proposition \ref{prop: appli, maximal extension arbitrary instrument} when the constraint (\ref{eq: testable implication in density form}) fails. First, conditional on the compliers group $D_i(1)=1,D_i(0)=0$ and $Z_i=1$, we know $Y_i(1,z)\in (\underline{\mathcal{Y}}_{P(B,1)},\bar{\mathcal{Y}}_{P(B,1)})$. Similarly, conditional on the compliers group and $Z_i=0$  and $Y_i(0,z)\in (\underline{\mathcal{Y}}_{Q(B,0)},\bar{\mathcal{Y}}_{Q(B,0)})$. Therefore, the identified set is valid. We now show the sharpness of this identified set by explicitly construct a sequence of structures that achieves the bound. 
	
	\paragraph{Step 1. Construction of a $G^s$.}To find the identified set $LATE_{\tilde{A}^{max}}^{ID}(F)$ when $F$ fails (\ref{eq: testable implication in density form}), we consider the following $G^s$ for all measurable sets $B_{dz}$:
	\begin{equation}
	\begin{split}
	&Pr_{G^s}(Y_i(d,z)\in B_{dz}\quad \forall d,z\in\{0,1\}, D_i(1)=1,D_i(0)=1|Z_i=z)\\
	&=\begin{cases}
	G^a(Y_i(0,0)\in B_{00}\cap B_{01})\times \int_{B_{10}\cap B_{11}} p(y,1)-g_c^{1s}(y) d\mu_F(y)\quad &if \quad z=1,\\
	G^a(Y_i(0,0)\in B_{00}\cap B_{01})\times \int_{B_{10}\cap B_{11}} q(y,1)d\mu_F(y)\quad &if \quad z=0,
	\end{cases}
	\end{split}
	\end{equation}
	where $G^a$ is any probability measure, and
	\begin{equation}
	\begin{split}
	&Pr_{G^s}(Y_i(d,z)\in B_{dz}\quad \forall d,z\in\{0,1\}, D_i(1)=0,D_i(0)=0|Z_i=z)\\
	&=\begin{cases}
	G^n(Y_i(1,1)\in B_{11}\cap B_{10})\times \int_{B_{01}\cap B_{00}} q(y,0)-g_c^{0s}(y) d\mu_F(y)\quad &if \quad z=0,\\
	G^n(Y_i(1,1)\in B_{11}\cap B_{10})\times \int_{B_{01}\cap B_{00}} p(y,0)d\mu_F(y)\quad &if \quad z=1,
	\end{cases}
	\end{split}
	\end{equation}
	where $G^n$ is any probability measure. We can consider the $g_c^{1s}$ and $g_{c}^{0s}$ to take the form 
	\[
	\begin{split}
	g_c^{1s}(y)= \alpha \mathbbm{1}(y\in [k_1-\epsilon,k_1])\times   p({y},1)/2 +(1-\alpha) \mathbbm{1}(y\in [k_1',k_1'+\epsilon])\times   p({y},1)/2,\\
	g_c^{0s}(y)=\beta \mathbbm{1}(y\in [k_0,k_0+\epsilon])\times   q({y},0)/2+(1-\beta)\mathbbm{1}(y\in [k_0'-\epsilon,k_0'])\times  q({y},0)/2,\\
	\end{split}
	\]
	where $\alpha,\beta,\epsilon$ are parameters that we can choose and $k_0=\underline{\mathcal{Y}}_{Q(B,0)}$, $k_1=\bar{\mathcal{Y}}_{P(B,1)}$, $k_0'=\bar{\mathcal{Y}}_{P(B,0)}$ and	 $k_1'=\underline{\mathcal{Y}}_{P(B,1)}$. Let 
	\begin{equation}
	\begin{split}
	C^{Z=1}=\int_{y} g_c^{1s}(y)d\mu_F(y)\quad and \quad	C^{Z=0}=\int_{y} g_c^{0s}(y)d\mu_F(y),\\
	\end{split}
	\end{equation}
	and define:
	\begin{equation}\label{eq: append, LATE, max extension, complier probability}
	\begin{split}
	&Pr_{G^s}(Y_i(d,z)\in B_{dz}\quad \forall d,z\in\{0,1\}, D_i(1)=1,D_i(0)=0|Z_i=z)\\
	&=\frac{\int_{B_{00}\cap B_{01}}g_c^{1s}(y) d\mu_F(y) \times \int_{B_{10}\cap B_{11}}g_c^{0s}(y) d\mu_F(y) }{C^{Z=1-z}},
	\end{split}
	\end{equation}
	\begin{equation}
	Pr_{G^s}(Y_i(d,z)\in B_{dz}\quad \forall d,z\in\{0,1\}, D_i(1)=0,D_i(0)=1|Z_i=z)\equiv 0.
	\end{equation}
	 We shall keep in mind that $g_c^{1s}$ and $g_c^{0s}$ are the densities of the potential outcomes conditional on the compliers. We will use $g_c^{1s}$ and $g_c^{0s}$ to calculate LATE.
	 
	 In this construction of $G^s$, we only allow the compliers to exist near the boundary of the support $(\underline{\mathcal{Y}}_{P(B,1)},\bar{\mathcal{Y}}_{P(B,1)})$ when $Z_i=1$ and  $Y_i(0,z)\in (\underline{\mathcal{Y}}_{Q(B,0)},\bar{\mathcal{Y}}_{Q(B,0)})$ when $Z_i=0$. The parameter $\epsilon$ control the closeness to the boundaries, and $\alpha,\beta$ control the share of mixture at the two boundaries. 
	
	\paragraph{Step 2. Check the constructed $G^s$ is in $\tilde{A}^{max}$.} We first show the constructed $G^s$ is a probability measure, since 
	\[
	\begin{split}
	&\quad \sum_{d_1,d_0\in\{0,1\}}Pr_{G^s}(Y_i(d,z)\in \mathcal{Y} \quad \forall d,z\in\{0,1\}, D_i(1)=d_1,D_i(0)=d_0|Z_i=1)\\
	&= P(\mathcal{Y},1)+P(\mathcal{Y},0)=1\\
	&\quad \sum_{d_1,d_0\in\{0,1\}}Pr_{G^s}(Y_i(d,z)\in \mathcal{Y} \quad \forall d,z\in\{0,1\}, D_i(1)=d_1,D_i(0)=d_0|Z_i=0)\\
	&= Q(\mathcal{Y},1)+Q(\mathcal{Y},0)=1.
	\end{split}
	\]
	By our construction of $G^s$, the `No Defiers' assumption holds. The exclusion restriction holds by the construction of $G^s$ and Lemma \ref{lem: auxiliary, criteria for exclusion restriction}. To keep the proof short, we omit the proof of $F\in M^s(G^s)$. The procedure is the same as equation (\ref{eq: append, LATE, mini defier, check F in Ms(Gs)}) in the proof of Lemma \ref{lem: minimal defiers construction for type ind instru}.
	
	\paragraph{Step 3. Find the LATE as a function of $(\alpha,\beta,\epsilon)$.}The LATE for this $G^s$ is 
	\begin{equation}\label{eq: appli, unbounded late, LATE(s)}
	\begin{split}
	LATE(s)&=\frac{\int yg_c^{1s}(y)d\mu_F(y)}{\int g_c^{1s}(y)d\mu_F(y)}- \frac{\int yg_c^{0s}(y)d\mu_F(y)}{\int g_c^{0s}(y)d\mu_F(y)}\\
	&=\frac{\alpha \int_{k_1-\epsilon}^{k_1}yp(y,1)/2 d\mu_F(y)+(1-\alpha) \int_{k^{\prime}_1}^{k^{\prime}_1+\epsilon}yp(y,1)/2 d\mu_F(y)}{\alpha \int_{k_1-\epsilon}^{k_1}p(y,1)/2 d\mu_F(y)+(1-\alpha) \int_{k^{\prime}_1}^{{k^{\prime}_1+\epsilon}}p(y,1)/2 d\mu_F(y)}\\
	&- \frac{\beta \int_{k_0}^{k_0+\epsilon}yq(y,0)/2 d\mu_F(y)+(1-\beta) \int_{k_0^\prime-\epsilon}^{k_0^\prime}yq(y,0)/2 d\mu_F(y)}{\beta \int_{k_0}^{k_0+\epsilon}q(y,0)/2 d\mu_F(y)+(1-\beta) \int_{k_0^\prime-\epsilon}^{k_0^\prime}q(y,0)/2 d\mu_F(y)}\\
	&\equiv \kappa(\alpha,\beta,\epsilon).
	\end{split}
	\end{equation}
	Since $F\in M^s(G^s)$, $\kappa(\alpha,\beta,\epsilon)\in LATE_{\tilde{A}^{max}}^{ID}(F)$ must hold. It is easy to see that $\kappa(1,0,\epsilon)\ge k_1-\epsilon-k_0'$ and $\kappa(0,1,\epsilon)\le k_1^\prime+\epsilon-k_0$. Since $\kappa$ is a continuous function ofn $(\alpha,\beta)$, by the intermediate value theorem, by continuously varying the values of $\alpha,\beta$, we can achieve all values of the identified LATE in the interval $[k_1^\prime+\epsilon-k_0,k_1-\epsilon-k_0']$. By taking $\epsilon\rightarrow 0$ we can see that by taking different combination of $(\alpha,\beta,\epsilon)$, $LATE(s)$ can achieve all value in $(k_1^\prime-k_0,k_1-k_0')$. This proves the form of the identified set in the Proposition. 

\end{proof}

\subsubsection{ Lemmas for Proposition \ref{prop: IA strong extension}}

\begin{lem}\label{lem: minimal defiers construction for type ind instru}
	Let $F$ be any distribution of outcome, and let $p(y,d),q(y,d)$ be the Radon-Nikodym derivatives with respect to $\mu_F$. Consider the following $G^s$ for all $d,z\in\{0,1\}$: let $Pr_{G^s}(Z_i=z)=Pr_F(Z_i=z)$, and
	\begin{equation}\label{eq: append, mini defier, AT construction}
	\begin{split}
	&Pr_{G^s}(Y_i(d,z)\in B_{dz},D_i(1)=1,D_i(0)=1|Z_i=1)= Pr_{G^s}(Y_i(d,z)\in B_{dz},D_i(1)=1,D_i(0)=1|Z_i=0)\\
	&= G^a(Y_i(0,0)\in B_{00}\cap B_{01})\times \int_{B_{11}\cap B_{10}}  (\min\{p(y,1),q(y,1)\})  d\mu_F(y),
	\end{split}
	\end{equation}
	where $G^a$ is any probability measure, and
	\begin{equation}\label{eq: append, mini defier, NT construction}
	\begin{split}
	&Pr_{G^s}(Y_i(d,z)\in B_{dz},D_i(1)=0,D_i(0)=0|Z_i=1)= Pr_{G^s}(Y_i(d,z)\in B_{dz},D_i(1)=0,D_i(0)=0|Z_i=0)\\
	&=G^n(Y_i(1,1)\in B_{11}\cap B_{10})\times \int_{B_{00}\cap B_{01}}	(\min\{p(y,0),q(y,0)\}) d\mu_F(y),
	\end{split}
	\end{equation}
	where $G^n$ is any probability measure. Let 
	\begin{equation}
	\begin{split}
	Pr_{G^s}(D_i(1)=1,D_i(0)=0|Z_i=1)=P(\mathcal{Y}_1,1)-Q(\mathcal{Y}_1,1),\\
	Pr_{G^s}(D_i(1)=1,D_i(0)=0|Z_i=0)=Q(\mathcal{Y}_0,0)-P(\mathcal{Y}_0,0),\\
	Pr_{G^s}(D_i(1)=0,D_i(0)=1|Z_i=0)=Q(\mathcal{Y}_1^c,1)-P(\mathcal{Y}_1^c,1),\\
	Pr_{G^s}(D_i(1)=0,D_i(0)=1|Z_i=1)=P(\mathcal{Y}_0^c,0)-Q(\mathcal{Y}_0^c,0),
	\end{split}
	\end{equation}
	be the $Z_i$-conditional probabilities of the compliers and defiers, and construct 
	\begin{equation}\label{eq: append, mini defier, CP construction}
	\begin{split}
	&Pr_{G^s}(Y_i(d,z)\in B_{dz}|D_i(1)=1,D_i(0)=0,Z_i=1)=Pr_{G^s}(Y_i(d,z)\in B_{dz}|D_i(1)=1,D_i(0)=0,Z_i=0)\\
	&=\frac{\int_{B_{00}\cap B{01}} \max\{q(y,0)-p(y,0),0\} d\mu_F(y)\times \int_{B_{11}\cap B_{10}} \max\{p(y,1)-q(y),0\} d\mu_F(y)}{(P(\mathcal{Y}_1,1)-Q(\mathcal{Y}_1,1))\times (Q(\mathcal{Y}_0,0)-P(\mathcal{Y}_0,0))} ,
	\end{split}
	\end{equation}
	and 
	\begin{equation}\label{eq: append, mini defier, DF construction}
	\begin{split}
	&Pr_{G^s}(Y_i(d,z)\in B_{dz}|D_i(1)=0,D_i(0)=1,Z_i=1)=Pr_{G^s}(Y_i(d,z)\in B_{dz}|D_i(1)=0,D_i(0)=1,Z_i=0)\\
	&= \frac{\int_{B_{00}\cap B{01}} \max\{p(y,0)-q(y,0),0\} d\mu_F(y)\times \int_{B_{11}\cap B_{10}} \max\{q(y,1)-p(y,1),0\} d\mu_F(y)}{(Q(\mathcal{Y}_1^c,1)-P(\mathcal{Y}_1^c,1))\times(P(\mathcal{Y}_0^c,0)-Q(\mathcal{Y}_0^c,0))} .
	\end{split}
	\end{equation}
	Then the constructed $G^s$ satisfies: (1). $G^s$ is a probability measure; (2). $G^s\in A^{TI}\cap A^{ER}\cap A^{EM-ATNT}$; (3). $F\in M^s(G^s)$; (4). $m^d(s)=m^{min}(F)$, with $m^{min}(F)$ defined in Assumption \ref{assumption:type ind instru}.
\end{lem}

\begin{proof}
	I first check that $G^s$ is a probability measure. Since the marginal distribution of $Z_i$ under $G^s$ coincide with the distribution of outcome $F$, it suffices to check the measure of $Y_{i}(d,z),D_i(1),D_i(0)$ is a probability measure conditional on $Z_i=1$ and $Z_i=0$. To do this, we have 
	\begin{equation*}
	\begin{split}
	&\quad \sum_{d_1,d_0\in\{0,1\}} Pr_{G^s}(Y_{dz}\in \mathcal{Y},D_i(1)=d_1,D_i(0)=d_0|Z_i=1)\\
	&=_{(1)} \int_{\mathcal{Y}}  \min\{p(y,1),q(y,1)\}  d\mu_F(y)+ \int_{\mathcal{Y}}	\min\{p(y,0),q(y,0)\} d\mu_F(y)\\
	&+ (P(\mathcal{Y}_1,1)-Q(\mathcal{Y}_1,1))+(P(\mathcal{Y}_0^c,0)-Q(\mathcal{Y}_0^c,0))\\
	&=_{(2)} (P(\mathcal{Y}_1^c,1)+Q(\mathcal{Y}_1,1))+ (P(\mathcal{Y}_0,0)+Q(\mathcal{Y}_0^c,0))+(P(\mathcal{Y}_1,1)-Q(\mathcal{Y}_1,1))+(P(\mathcal{Y}_0^c,0)-Q(\mathcal{Y}_0^c,0))\\
	&=P(\mathcal{Y},1)+P(\mathcal{Y},0)=1,
	\end{split}
	\end{equation*} 
	where equality (1) holds by construction of $G^s$, and equality (2) holds by the definition of $\mathcal{Y}_1$ and $\mathcal{Y}_0$. Similarly, for the measure conditional on $Z_i=0$, we have 
	\begin{equation*}
	\begin{split}
	&\quad \sum_{d_1,d_0\in\{0,1\}} Pr_{G^s}(Y_{dz}\in \mathcal{Y},D_i(1)=d_1,D_i(0)=d_0|Z_i=0)\\
	&= \int_{\mathcal{Y}}  \min\{p(y,1),q(y,1)\}  d\mu_F(y)+ \int_{\mathcal{Y}}	\min\{p(y,0),q(y,0)\} d\mu_F(y)\\
	&+ (Q(\mathcal{Y}_0,0)-P(\mathcal{Y}_0,0))+(Q(\mathcal{Y}_1^c,1)-Q(\mathcal{Y}_1^c,1))\\
	&=Q(\mathcal{Y},0)+Q(\mathcal{Y},1)=1.
	\end{split}
	\end{equation*} 
	This checks that $G^s$ is a probability measure.
	
	The type independence condition $s\in A^{TI}$ follows directly by the construction of $G^s$ in (\ref{eq: append, mini defier, AT construction})-(\ref{eq: append, mini defier, DF construction}). The property that measure of always taker and never taker is independent of $Z$ also follow directly from the construction of $G^s$ in (\ref{eq: append, mini defier, AT construction}) and (\ref{eq: append, mini defier, NT construction}).
	
	To show that exclusion restriction ($Y_i(1,1)=Y_i(0,1)$ and $Y_i(0,1)=Y_i(0,0)$ a.s.), I check the conditions in Lemma \ref{lem: auxiliary, criteria for exclusion restriction}. By construction,
	\begin{equation}\label{eq: LATE append, exclusion restriction}
	\begin{split}
	&Pr_{G^s}(Y_i(1,1)\in B_{11},Y_i(1,0)\in B_{10}|Z_i=1)=Pr_{G^s}(Y_i(1,1)\in B_{11},Y_i(1,0)\in B_{10}|Z_i=0)\\
	&= \int_{B_{11}\cap B_{10}}  (\min\{p(y,1),q(y,1)\})  d\mu_F(y)+ G^n(Y_i(1,1)\in B_{11}\cap B_{10})\\
	&+\int_{B_{11}\cap B_{10}} \max\{p(y,1)-q(y),0\} d\mu_F(y)+\int_{B_{11}\cap B_{10}} \max\{q(y,1)-p(y,1)\}d\mu_F(y),
	\end{split}
	\end{equation}
	where the right hand side of (\ref{eq: LATE append, exclusion restriction}) depends only on the set $B_{11}\cap B_{10}$. Therefore, by Lemma \ref{lem: auxiliary, criteria for exclusion restriction}, $Y_i(1,1)=Y_i(1,0)$ almost surely. Similarly, we can use Lemma \ref{lem: auxiliary, criteria for exclusion restriction} to check $Y_i(0,1)=Y_i(0,0)$ almost surely. As a result, the exclusion restriction holds.
	
	Then I check $G^s$ can generate the data distribution $F$, i.e. $F\in M^s(G^s)$. To do this, I check that the model-predicted observable distribution coincides with the observed data distribution. 
	\begin{equation}\label{eq: append, LATE, mini defier, check F in Ms(Gs)}
	\begin{split}
	\underbrace{Pr_{M^s(G^s)}(Y_i\in B,D_i=1|Z_i=1)}_{Model\,\, Predicted\,\, Outcome\,\, Distribution}&=_{(3)} \sum_{j=0}^1Pr_{G^s}(Y_i(1,1)\in B,Y_i(0,1)\in\mathcal{Y}, D_i(1)=1, D_i(0)=j|Z_i=1 )\\
	&=_{(4)}  [Q(B\cap \mathcal{Y}_1,1)+P(B\cap \mathcal{Y}_1^c,1)]+\left[P(B\cap\mathcal{Y}_1,1)-Q(B\cap\mathcal{Y}_1,1)\right]\\
	&= P(B,1)\\
	&=_{(5)}\underbrace{Pr_F(Y_i\in B, D_i=1| Z_i=1)}_{Observed\,\, Outcome\,\, Distribution},
	\end{split}
	\end{equation}
	where  equality (3) holds by the potential outcome framework (\ref{eq: potential outcome}), (4) holds by the construction of $G^s$, and  (5) holds by the definition of $P(B,1)$. Similarly, 
	\begin{equation*}
	\begin{split}
	\underbrace{Pr_{M^s(G^s)}(Y_i\in B,D_i=0|Z_i=1)}_{Model\,\, Predicted\,\, Outcome\,\, Distribution}&= \sum_{j=0}^1Pr_{G^s}(Y_i(0,1)\in B,Y_i(1,1)\in\mathcal{Y}, D_i(1)=0, D_i(0)=j|Z_i=1 )\\
	&=  [P(B\cap \mathcal{Y}_0,0)+Q(B\cap \mathcal{Y}_0^c,0)]+\left[P(B\cap\mathcal{Y}_0^c,0)-Q(B\cap\mathcal{Y}_0^c,0)\right]\\
	&= P(B,0)\\
	&=\underbrace{Pr_F(Y_i\in B, D_i=0| Z_i=1)}_{Observed\,\, Outcome\,\, Distribution}.
	\end{split}
	\end{equation*}
	Similar relations between $G^s$ and $Pr_F$ also hold when $Z_i=0$. This checks $F\in M^s(G^s)$.

	In the last step, I check that $G^s$ achieves the minimal probability of defiers. We first find a lower bound for $m^{min}(F)$, and show that $m^d(s)$ achieves this lower bound. 
	
	Consider any $s^*\in A^{ER}\cap A^{TI}\cap A^{EM-ATNT}$, and $F\in  M^{s^*}(G^{s^*})$. We have
	{ \footnotesize
	\begin{equation} \label{eq: identity for always takers }
	\begin{split}
	&Pr_{G^{s^*}} (Y_i(1,0)=Y_i(1,1)\in B_1, Y_i(0,0)=Y_i(0,1)\in \mathcal{Y},D_i(1)=1,D_i(0)=1|Z_i=1)\\
	&=_{(6)}Pr_{G^{s^*}} (Y_i(1,0)=Y_i(1,1)\in B_1, Y_i(0,0)=Y_i(0,1)\in \mathcal{Y}|D_i(1)=1,D_i(0)=1,Z_i=1)Pr(D_i(1)=1,D_i(0)=1|Z_i=1)\\
	&=_{(7)}Pr_{G^{s^*}} (Y_i(1,0)=Y_i(1,1)\in B_1, Y_i(0,0)=Y_i(0,1)\in \mathcal{Y}|D_i(1)=1,D_i(0)=1,Z_i=0)Pr(D_i(1)=1,D_i(0)=1|Z_i=0)\\
	&=Pr_{G^{s^*}} (Y_i(1,0)=Y_i(1,1)\in B_1, Y_i(0,0)=Y_i(0,1)\in \mathcal{Y},D_i(1)=1,D_i(0)=1|Z_i=0),
	\end{split}
	\end{equation}}
	where equality (6) holds by the Bayes' rule, (7) holds by $s^*\in A^{ER}\cap A^{TI}\cap A^{EM-ATNT}$. Now we consider the following decompositions:
	\begin{equation}
	\begin{split}
	P( B_1,1)&=Pr_{G^{s^*}} (Y_i(1,0)=Y_i(1,1)\in B_1, Y_i(0,0)=Y_i(0,1)\in \mathcal{Y},D_i(1)=1,D_i(0)=1|Z_i=1)\\
	&+Pr_{G^{s^*}} (Y_i(1,0)=Y_i(1,1)\in B_1, Y_i(0,0)=Y_i(0,1)\in \mathcal{Y},D_i(1)=1,D_i(0)=0|Z_i=1),\\
	Q( B_1,1)&=Pr_{G^{s^*}} (Y_i(1,0)=Y_i(1,1)\in B_1, Y_i(0,0)=Y_i(0,1)\in \mathcal{Y},D_i(1)=1,D_i(0)=1|Z_i=0)\\
	&+Pr_{G^{s^*}} (Y_i(1,0)=Y_i(1,1)\in B_1, Y_i(0,0)=Y_i(0,1)\in \mathcal{Y},D_i(1)=0,D_i(0)=1|Z_i=0),
	\end{split}
	\end{equation}  
	and use (\ref{eq: identity for always takers }) to get:
	\begin{equation} \label{eq: lem type ind const, p-q}
	\begin{split}
	P(B_1,1)-Q(B_1,1)&= Pr_{G^{s^*}} (Y_i(1,0)=Y_i(1,1)\in B_1, Y_i(0,0)=Y_i(0,1)\in \mathcal{Y},D_i(1)=1,D_i(0)=0|Z_i=1)\\
	&-Pr_{G^{s^*}} (Y_i(1,0)=Y_i(1,1)\in B_1, Y_i(0,0)=Y_i(0,1)\in \mathcal{Y},D_i(1)=0,D_i(0)=1|Z_i=0).
	\end{split}
	\end{equation}
	Take $B_1=\mathcal{Y}_1^c$, we have 
	{\footnotesize
	\begin{equation}\label{eq: lem type ind identity for defiers}
	\begin{split}
	&Pr(D_i(1)=0,D_i(0)=1|Z_i=0)\\
	&\ge Pr_{G^{s^*}} (Y_i(1,0)=Y_i(1,1)\in \mathcal{Y}_1^c, Y_i(0,0)=Y_i(0,1)\in \mathcal{Y},D_i(1)=0,D_i(0)=1|Z_i=0)\\&=Pr_{G^{s^*}} (Y_i(1,0)=Y_i(1,1)\in \mathcal{Y}_1^c, Y_i(0,0)=Y_i(0,1)\in \mathcal{Y},D_i(1)=1,D_i(0)=0|Z_i=1)+Q(\mathcal{Y}_1^c,1)-P(\mathcal{Y}_1^c,1)\\
	&\ge Q(\mathcal{Y}_1^c,1)-P(\mathcal{Y}_1^c,1),
	\end{split}
	\end{equation}}
	and similarly, we can show $Pr(D_i(1)=0,D_i(0)=1|Z_i=1)\ge P(\mathcal{Y}_0^c,0)-Q(\mathcal{Y}_0^c,0)$. So the total measure of defiers satisfies
	\[
	\begin{split}
	m^d(s^*)&\ge [Q(\mathcal{Y}_1^c,1)-P(\mathcal{Y}_1^c,1)]Pr(Z_i=0)+[P(\mathcal{Y}_0^c,0)-Q(\mathcal{Y}_0^c,0)]Pr(Z_i=1).
	\end{split}
	\]
	Therefore, we get a lower bound $m^{min}(F)\ge [Q(\mathcal{Y}_1^c,1)-P(\mathcal{Y}_1^c,1)]Pr(Z_i=0)+[P(\mathcal{Y}_0^c,0)-Q(\mathcal{Y}_0^c,0)]Pr(Z_i=1)$. 
	
	On the other hand, the by the construction of $G^s$, the measure of defiers under $G^s$ is 
	\[m^d(s)= [Q(\mathcal{Y}_1^c,1)-P(\mathcal{Y}_1^c,1)]Pr(Z_i=0)
	+[P(\mathcal{Y}_0^c,0)-Q(\mathcal{Y}_0^c,0)]Pr(Z_i=1).\]
	So the constructed $s$ achieves the minimal measure of defiers. 
\end{proof}

\subsubsection{Proof of Proposition \ref{prop: IA strong extension}}
\begin{proof}
	By Lemma \ref{lem: minimal defiers construction for type ind instru}, for any $F\in \mathcal{F}$, we can construct a $G^s$ that achieves the minimal measure of defiers. Therefore, $m^d$ is a well-behaved relaxation measure with respect to $A^{ER},A^{TI},A^{EM-NTAT}$. So by Proposition \ref{prop: minimal deviation as strong extension}, the minimal defier assumption with type independent instrument is a strong extension because $m^d(s)=0$ if and only if $D_i(1)\ge D_i(0)$ almost surely. 
\end{proof}

\subsection{Proofs in Section 3.2}

\subsubsection{Proof of Proposition \ref{prop: identified under minimal defiers type ind inst}}
\begin{proof}
	Because $P(B,d)$ and $Q(B,d)$ are absolutely continuous with respect to some denominating measure $\mu_F$, equation (\ref{eq: appli, M^s}) implies that  $Pr_{G^s}(Y_i(d,z)\in B,D_i(z)=d,D_i(1-z)=d'|Z_i=z)$ are also absolutely continuous with respect to $\mu_F$ for all $d,z,d'\in \{0,1\}$. So we will use the Radon-Nikodym density of $G^s$ with respect to $\mu_F$ throughout this proof: We use $g^s_{y_{dz}}(y,d,d'|Z_i)$ to denote the density of $G^s(Y_i(d,z)\in B_{dz},D_i(1)=d,D_i(0)=0|Z_i)$.
	
	Note that the exclusion restriction holds under Assumption \ref{assumption:type ind instru}. Given a structure $s$, the conditional density of $Y_i(1,1)$ given $D_i(1)-D_i(0)=1$ and $Z_i=1$ is 
	\[
	\frac{ g_{y_{11}}^s(y,1,0|Z_i=1)d\mu_F(y)}{\int_{\mathcal{Y}}  g_{y_{11}}^s(y,1,0|Z_i=1)d\mu_F(y)}.
	\]
	Similarly, conditional density of $Y_i(0,0)$ given $D_i(1)-D_i(0)=1$ and $Z_i=0$ is 
	\[
	\frac{ g_{y_{00}}^s(y,1,0|Z_i=0)d\mu_F(y)}{\int_{\mathcal{Y}}  g_{y_{00}}^s(y,1,0|Z_i=0)d\mu_F(y)}.
	\]
	So for any structure $s$, such that $F\in M^s(G^s)$, we have
	\[
	LATE(s)=\frac{\int_{\mathcal{Y}} y g_{y_{11}}^s(y,1,0|Z_i=1)d\mu_F(y)}{\int_{\mathcal{Y}}  g_{y_{11}}^s(y,1,0|Z_i=1)d\mu_F(y)}-\frac{\int_{\mathcal{Y}} y g_{y_{00}}^s(y,1,0|Z_i=0)d\mu_F(y)}{\int_{\mathcal{Y}}  g_{y_{00}}^s(y,1,0|Z_i=0)d\mu_F(y)}\in LATE^{ID}(F),
	\]
	where the denominator $\int_{\mathcal{Y}}  g_{y_{zz}}^s(y,1,0|Z_i=z)d\mu_F(y)$ represent the measure of compliers conditioned on $Z_i=z$. To show the identified LATE satisfies expression (\ref{eq: appli, formular}), it suffices to show 
	\[
	\begin{split}
	&g_{y_{11}}^s(y,1,0|Z_i=1)=\max\{p(y,1)-q(y,1),0\}\\
	and\quad &g_{y_{00}}^s(y,1,0|Z_i=0)=\max\{q(y,0)-p(y,0),0\}
	\end{split}
	\]
 	hold for all $s\in \tilde{A}$.
	
	First, $\tilde{A}$ satisfies Assumption \ref{assumption:type ind instru}, and  Lemma \ref{lem: minimal defiers construction for type ind instru} implies the minimal measure of defiers:
	\[
	\begin{split}
	m^{min}(F)=[Q(\mathcal{Y}_1^c,1)-P(\mathcal{Y}_1^c,1)]Pr(Z_i=0)+[P(\mathcal{Y}_0^c,0)-Q(\mathcal{Y}_0^c,0)]Pr(Z_i=1).
	\end{split}
	\]
	Now we look at the equation (\ref{eq: lem type ind identity for defiers}) from Lemma \ref{lem: minimal defiers construction for type ind instru}:
	\begin{equation*}
	\begin{split}
	&Pr(D_i(1)=0,D_i(0)=1|Z_i=0)\\
	&=Pr_{G^{s}} (Y_i(1,0)\in \mathcal{Y}_1^c, Y_i(0,0)\in \mathcal{Y},D_i(1)=0,D_i(0)=1|Z_i=0)\\&+Pr_{G^{s}} (Y_i(1,0)\in \mathcal{Y}_1, Y_i(0,0)\in \mathcal{Y},D_i(1)=0,D_i(0)=1|Z_i=0)\\
	&\ge_{(1)} Pr_{G^{s}} (Y_i(1,0)\in \mathcal{Y}_1^c, Y_i(0,0)\in \mathcal{Y},D_i(1)=0,D_i(0)=1|Z_i=0)\\
	&=_{(2)}Pr_{G^{s}} (Y_i(1,1)\in \mathcal{Y}_1^c, Y_i(0,1)\in \mathcal{Y},D_i(1)=1,D_i(0)=0|Z_i=1)+Q(\mathcal{Y}_1^c,1)-P(\mathcal{Y}_1^c,1)\\
	&\ge_{(3)} Q(\mathcal{Y}_1^c,1)-P(\mathcal{Y}_1^c,1),
	\end{split}
	\end{equation*}
	where the first inequality (1) holds with equality if and only if $g_{y_{10}}^s(y,0,1|Z=0)=0$ for all $y\in \mathcal{Y}_1$, equality (2) holds by (\ref{eq: lem type ind identity for defiers}), and  inequality (3) holds with equality if and only if $g_{y_{11}}^s(y,1,0|Z_i=1)=0$ for all $y\in \mathcal{Y}_1^c$. Similarly, we can write the condition for $Z_i=1$:
	\begin{equation*}
	\begin{split}
	&Pr(D_i(1)=0,D_i(0)=1|Z_i=1)\\
	&=Pr_{G^{s}} (Y_i(0,1)\in \mathcal{Y}_0^c, Y_i(1,1)\in \mathcal{Y},D_i(1)=0,D_i(0)=1|Z_i=1)\\&+Pr_{G^{s}} (Y_i(0,1)\in \mathcal{Y}_0, Y_i(1,1)\in \mathcal{Y},D_i(1)=0,D_i(0)=1|Z_i=1)\\
	&\ge Pr_{G^{s}} (Y_i(0,1)\in \mathcal{Y}_0^c, Y_i(1,1)\in \mathcal{Y},D_i(1)=0,D_i(0)=1|Z_i=1)\\&=Pr_{G^{s}} (Y_i(01)\in \mathcal{Y}_0^c, Y_i(1,1)\in \mathcal{Y},D_i(1)=1,D_i(0)=0|Z_i=0)+P(\mathcal{Y}_0^c,0)-Q(\mathcal{Y}_0^c,0)\\
	&\ge P(\mathcal{Y}_0^c,0)-Q(\mathcal{Y}_0^c,0),
	\end{split}
	\end{equation*}
	where the first inequality with equality holds if and only if $g_{y_{01}}^s(y,0,1|Z=1)=0$ for all $y\in \mathcal{Y}_0$ and the last inequality holds with equality if and only if $g_{y_{00}}^s(y,1,0|Z_i=0)=0$ for all $y\in \mathcal{Y}_0^c$.
	Therefore, $m^d(s)=m^{min}(F)$ if and only if the density conditions: 
	\begin{equation} \label{eq: appen, density constraint, min defiers, typ ind}
	\begin{split}
	g_{y_{10}}^s(y,0,1|Z_i=0)=0 \quad \forall y\in \mathcal{Y}_1\\
	g_{y_{11}}^s(y,1,0|Z_i=1)=0 \quad \forall y\in \mathcal{Y}_1^c\\
	g_{y_{01}}^s(y,0,1|Z_i=1)=0 \quad \forall y\in \mathcal{Y}_0\\
	g_{y_{00}}^s(y,1,0|Z_i=0)=0 \quad \forall y\in \mathcal{Y}_0^c\\
	\end{split}
	\end{equation}
	hold. Now, take Radon-Nikodym derivatives of (\ref{eq: lem type ind const, p-q}) with respect to $\mu_F$, we have 
	\[
	p(y,1)-q(y,1)= g_{y_{11}}^s(y,1,0|Z_i=1)-g_{y_{10}}^s(y,0,1|Z=0).
	\] 
	Combine the expression of $p(y,1)-q(y,1)$ and equation (\ref{eq: appen, density constraint, min defiers, typ ind}),  we have  $g_{y_{11}}^s(y,1,0|Z=1)=\max\{p(y,1)-q(y,1),0\}$ must hold for all $s\in \tilde{A}$. We can symmetrically get $g_{y_{00}}(y,0,1|Z=0)=\max\{q(y,0)-p(y,0),0\}$ must hold for all $s\in \tilde{A}$. \end{proof}

\subsubsection{Proof of Proposition \ref{prop: continuity of estimator in different cases}}
\begin{proof}
	To show that ${LATE}^{ID}_{\tilde{A}_1}(F)$ is not upper hemicontinuous with respect to $||\cdot ||_{1,\infty}$, let's consider an $F_0$ such that the corresponding densities $p^0(y,d)$ and $q^0(y,d)$ satisfy $(-1)^{1-d}(p^0(y,d)-q^0(y,d))\ge 0$ for all $y\in\mathcal{Y},d\in\{0,1\}$; Moreover, suppose (1). there exists a set $\mathcal{Y}^{zero}_1$ set such that $p^0(y,1)-q^0(y,1)=0$ for all $y\in \mathcal{Y}^{zero}_1$ and $\mu_F(\mathcal{Y}^{zero}_1)>0$; (2). there exists a constant $c>0$ such that the density $p^0(y,1)>c$ for all $y\in \mathcal{Y}^{zero}_1$.

	Then $F_0$ satisfies the IA-M assumption by Theorem \ref{thm: testable implication in density form}, but the density condition (\ref{eq: testable implication in density form}) is binding on the positively measured set $ \mathcal{Y}^{zero}_1$. For any $\epsilon>0$, we consider an $F^{\epsilon}$ such that the corresponding $p^{\epsilon}(y,d)$ and $q^{\epsilon}(y,d)$ satisfy:
	\begin{enumerate}
		\item $q^{\epsilon}(y,d)=q^0(y,d)$, $p^{\epsilon}(y,0)=p^0(y,0)$ for all $y\in\mathcal{Y},d\in\{0,1\}$;
		\item We find two sets $\mathcal{Y}^{sub,1}$ and $\mathcal{Y}^{sub,2}$ such that $\mathcal{Y}^{sub,j}\subseteq\mathcal{Y}^{zero}_1$ for $j=1,2$ and $\mu_F( \mathcal{Y}^{sub,1})=\mu_F(\mathcal{Y}^{sub,2})>0$; 
		\item $p^{\epsilon}(y,1)= p^0(y,1)-\min\{\epsilon,c/2\}$ for all $y\in \mathcal{Y}^{sub,1}$, and $p^{\epsilon}(y,1)= p^0(y,1)+\min\{\epsilon,c/2\}$ for all $y\in \mathcal{Y}^{sub,2}$, and $p^{\epsilon}(y,1)= p(y,1)$ for all $y\in \mathcal{Y}\backslash(\mathcal{Y}^{sub,1}\cup \mathcal{Y}^{sub,2} )$. 
	\end{enumerate}
	Then $F^{\epsilon}\in \mathcal{F}$ is a probability measure, and $||F_0-F^{\epsilon}||_{1,\infty}\le \epsilon$. However, on the set  $\mathcal{Y}^{sub,1}$, $F^{\epsilon}$ fails the density constraint (\ref{eq: testable implication in density form}) in Theorem \ref{thm: testable implication in density form}. By Proposition \ref{prop: appli, maximal extension arbitrary instrument}, \[LATE^{ID}_{\tilde{A}_1}(F^\epsilon)=\left(\underline{\mathcal{Y}}_{Q(B,0)}-\bar{\mathcal{Y}}_{P(B,1)},-\underline{\mathcal{Y}}_{Q(B,0)}+\bar{\mathcal{Y}}_{P(B,1)}\right).\] On the other hand, $LATE^{ID}_{\tilde{A}_1}(F_0)$ is a singleton. Since $\epsilon$ is arbitrarily small, $LATE^{ID}_{\tilde{A}_1}(F)$ is not upper hemicontinuous at $F=F_0$. 
	
	On the other hand, recall that under $\tilde{A}_2$, the point-identified LATE \begin{equation}
	{LATE}_{\tilde{A}_2}^{ID}(F)=\frac{\int_{\mathcal{Y}_1}{y (p(y,1)-q(y,1))}d\mu_F(y)}{P(\mathcal{Y}_1,1)-Q(\mathcal{Y}_1,1)}
	-\frac{\int_{\mathcal{Y}_0}{y (q(y,0)-p(y,0))}d\mu_F(y)}{Q(\mathcal{Y}_0,0)-P(\mathcal{Y}_0,0)}
	\end{equation}
	is the difference between two fractions. 	
	To show ${LATE}_{\tilde{A}_2}^{ID}(F)$ is continuous at $F_0$ with respect to $||\cdot||_{1,\infty}$, it suffices to show that all numerators and denominators in the fractions are continuous at $F_0$. Without loss of generality, I assume $\mu_F$ is a finite measure. \footnote{Note that for any measure $\mu(y)$ on $\mathcal{Y}$, we can construct a finite measure $\tilde{\mu}(y)$ from a positive and integrable function $\zeta(y)$ such that $\zeta(y)>0$ holds $\mu(y)$ almost surely, and $\tilde{\mu}(B)=\int_B \zeta(y) d\mu(y)$ for all measurable set $B$. Then $\tilde{\mu}$ is a finite measure. Moreover, $\mu$ is absolute continuous with respect to $\tilde{\mu}$. } Note that we can write 
	\[
	\begin{split}
	\int_{\mathcal{Y}_1}{y (p(y,1)-q(y,1))}d\mu_F(y)&= \int_{\mathcal{Y}} y\max\{p(y,1)-q(y,1),0\}d\mu_F(y),\\
	P(\mathcal{Y}_1,1)-Q(\mathcal{Y}_1,1)=\int_{\mathcal{Y}_1}{(p(y,1)-q(y,1))}d\mu_F(y)&= \int_{\mathcal{Y}} \max\{p(y,1)-q(y,1),0\}d\mu_F(y).
	\end{split}
	\]
	Therefore, for any $F^{\eta}$ such that $||F_0-F^{\eta}||_{1,\infty}<\epsilon$:
	\[
	\begin{split}
	&\quad \int_{\mathcal{Y}} y\left(\max\{p^0(y,1)-q^0(y,1),0\}-\max\{p^\eta(y,1)-q^\eta(y,1),0\}\right) d\mu_F(y)\\
	&\le 2\epsilon \int_{\mathcal{Y}} y d\mu_F(y) \le 2\epsilon\max\{|M_s^u|,|M_s^l|\}\mu_F(\mathcal{Y}),
	\end{split}
	\]
	which shows the numerator $\int_{\mathcal{Y}_1}{y (p(y,1)-q(y,1))}d\mu_F(y)$ is continuous in $F$. We also have 
	\[
	\begin{split}
	&\quad \int_{\mathcal{Y}} \left(\max\{p^0(y,1)-q^0(y,1),0\}-\max\{p^\eta(y,1)-q^\eta(y,1),0\}\right) d\mu_F(y)\\
	&\le 2\epsilon  d\mu_F(y) \le 2\epsilon\mu_F(\mathcal{Y}),
	\end{split}
	\]
	which shows the denominator $P(\mathcal{Y}_1,1)-Q(\mathcal{Y}_1,1)$ is continuous in $F$. We can show the other two terms corresponding to $\mathcal{Y}_0$ are continuous in $F$ by similar arguments. The continuity of $LATE_{\tilde{A}_2}^{ID}(F)$ follows by observing that $f(x_1,x_2,x_3,x_4)=\frac{x_1}{x_2}-\frac{x_3}{x_4}$ is continuous whenever $x_3,x_4\ne 0$. \end{proof}

\subsection{Proofs in Section 3.3}
\subsubsection{Lemmas for Theorem \ref{thm: asymptotic property of LATE}}
We  first define the following objects:
	\begin{equation}
	\begin{split}
		f^{l,m}_n(y)&\equiv \frac{1}{nh_n}\sum_{i=1}^n K\left(\frac{Y_i-y}{h_n}\right)\mathbbm{1}(D_i=l,Z_i=m),\\
		\bar{f}^{l,m}_n(y)&\equiv\frac{1}{h_n} E\left[K\left(\frac{Y_i-y}{h_n}\right)\mathbbm{1}(D_i=l,Z_i=m)\right ].
	\end{split}
\end{equation}
\begin{lem}\label{lem: convergence rate of density estimator 1}
	Let $h_n=n^{-\gamma}$ for some $\gamma\in (0,1)$, such that  $\frac{nh_n}{|\log h_n|}\rightarrow \infty$. Define $a_n=\min\{\sqrt{\frac{nh_n}{\log h_n^{-1}}}, h_n^{-2}\}$. Suppose assumptions \ref{assumption: kernel } and \ref{assumption: density} hold,  then there exists a constant $C$ such that for all $d,z\in\{0,1\}$, we have 
	\begin{equation}
	\begin{split}
	&\lim\sup_{n\rightarrow \infty } a_n \sup_{y}|\frac{1}{nh_n}\sum_{i=1}^n K\left(\frac{Y_i-y}{h_n}\right)\mathbbm{1}(D_i=d,Z_i=z)- p(y,1)Pr(Z_i=z)|\le C \quad a.s..
	\end{split}
	\end{equation}
\end{lem}

\begin{proof}
	I prove the inequality for $d=z=1$ and the rest inequalities follow similarly. Using the notation of $f_n^{1,1}$ and $\bar{f}_n^{1,1}$, by the triangular inequality 
	\begin{equation*}
	\begin{split}
	a_n \sup_y |f_n^{1,1}(y)-p(y,1)Pr(Z_i=1)|&\le a_n \sup_y |\bar{f}_n^{1,1}(y)-p(y,1)Pr(Z_i=1)| +a_n|f_n^{1,1}(y)-\bar{f}_n^{1,1}(y)|\\
	&\le a_n \sup_y |\bar{f}_n^{1,1}(y)-p(y,1)Pr(Z_i=1)|+ \sqrt{\frac{nh_n}{\log h_n^{-1}}}|f_n^{1,1}(y)-\bar{f}_n^{1,1}(y)|.\\
	\end{split}
	\end{equation*}
	The first term  $a_n \sup_y |\bar{f}_n^{1,1}(y)-p(y,1)Pr(Z_i=1)|$ is the bias term and can be bounded as: 
	\begin{equation}
	\begin{split}
	&\quad a_n \sup_y |\bar{f}_n^{1,1}(y)-p(y,1)Pr(Z_i=1)|\\
	&=a_n\left|\frac{1}{h_n}\int_t K(\frac{t-y}{h_n}) f(t|D_i=1,Z_i=1)-f(y|D_i=1,Z_i=1)dt\right| Pr(D_i=1,Z_i=1)\\
	&=a_n\left|\int_u K(u)[f(y+uh_n|D_i=1,Z_i=1)-f(y|D_i=1,Z_i=1)]dt\right| Pr(D_i=1,Z_i=1)\\
	&= a_n\left|\int_u K(u)uh_nf'(y|D_i=1,Z_i=1)+ K(u)u^2h_n^2f^{\prime\prime}(y|D_i=1,Z_i=1)+o(h_n^2)\right| Pr(D_i=1,Z_i=1)\\
	&\le h_n^{-2} \left|\int_u K(u)u^2h_n^2f^{\prime\prime}(y|D_i=1,Z_i=1)+o(h_n^2)\right| Pr(D_i=1,Z_i=1)\\
	&\le \left|\int_u K(u)u^2f^{\prime\prime}(y|D_i=1,Z_i=1)\right| Pr(D_i=1,Z_i=1)+o(1).
	\end{split}
	\end{equation}
	Let $\bar{C}$ be the constant in Lemma \ref{lem: strong convergence from Gine and Guillou}, and set set $C=\bar{C}+\left|\int_u K(u)u^2f^{\prime\prime}(y|D_i=1,Z_i=1)\right|$. The result follows.
\end{proof}

\begin{lem}\label{lem: convergence rate of density estimator 2}
	Let $h_n=n^{-\gamma}$ for some $\gamma\in (0,1)$, such that  $\frac{nh_n}{|\log h_n|}\rightarrow \infty$ and let $a_n=\min\{\sqrt{\frac{nh_n}{\log h_n^{-1}}}, h_n^{-2}\}$. If there exists a constant $c>0$ such that   $Pr(Z_i=1)\in [c,1-c]$, and $\sup_y[\max(p(y,d),q(y,d)]<\infty$, then for any $\epsilon>0$ such that 
	\begin{equation}
	\begin{split}
	n^{-\epsilon}a_n\sup_{y} |f_h(y,1)-(p(y,1)-q(y,1))|= o_p(1)\\
	n^{-\epsilon}a_n\sup_{y} |f_h(y,0)-(q(y,0)-p(y,0))|= o_p(1).
	\end{split}
	\end{equation}
\end{lem}
\begin{proof}
	Note that 
	\begin{equation}\label{eq: decomposition of f_h}
	\begin{split}
	|f_h(y,1)-(p(y,1)-q(y,1))|&\le \sup_y \left|\frac{f^{1,1}_n(y)}{\frac{1}{n}\sum_{i=1}^n \mathbbm{1}(Z_i=1)}-\frac{p(y,1)Pr(Z_i=1)}{Pr(Z_i=1)}\right| \\
	&+ \sup_y \left|\frac{f^{1,0}_n(y)}{\frac{1}{n}\sum_{i=1}^n \mathbbm{1}(Z_i=1)}-\frac{q(y,1)Pr(Z_i=1)}{Pr(Z_i=1)}\right|.
	\end{split}
	\end{equation}
	For the first term in (\ref{eq: decomposition of f_h}), we have 
	\begin{equation*}
	\begin{split}
	&n^{-\epsilon}a_n\sup_y \left|\frac{f^{1,1}_n(y)}{\frac{1}{n}\sum_{i=1}^n \mathbbm{1}(Z_i=1)}-\frac{p(y,1)Pr(Z_i=1)}{Pr(Z_i=1)}\right|\\
	&\le  n^{-\epsilon}a_n\left|\frac{f^{1,1}_n(y)}{\frac{1}{n}\sum_{i=1}^n \mathbbm{1}(Z_i=1)}- \frac{f^{1,1}_n(y)}{Pr(Z_i=1)}\right| +n^{-\epsilon}a_n\left|\frac{f^{1,1}_n(y)}{Pr(Z_i=1)}-\frac{p(y,1)}{Pr(Z_i=1)}\right|\\
	&=_{(*)}a_n n^{-\epsilon} \sup_y [p(y,1)+o_p(1)]\times\left[\frac{1}{1/n\sum_{i=1}^n\mathbbm{1}(Z_i=1)}-\frac{1}{Pr(Z_i=1)}\right]+ \frac{n^{-\epsilon}a_n\sup_y |f_n^{1,1}(y)-p(y,1)Pr(Z_i=1)|}{Pr(Z_i=1)}   \\
	&=_{\star} \sup_y[p(y,1)+o(1)]n^{-\epsilon}a_nO_p(\frac{1}{\sqrt{n}})+ \frac{1}{Pr(Z_i=1)} n^{-\epsilon}a_n\sup_y |f_n^{1,1}(y)-p(y,1)Pr(Z_i=1)|\\
	&\le \sup_y p(y,1) O_p\left(\sqrt{\frac{n^{-2\epsilon}h_n}{\log h_n^{-1}}}\right)+ O(n^{-\epsilon})\\
	&=o_p(1),
	\end{split}
	\end{equation*}
	where equality $(*)$ follows by Lemma \ref{lem: convergence rate of density estimator 1},  equality $\star$ follows by $|\sum_{i=1}^n \mathbbm{1}(Z_i=1)-Pr(Z_i=1)|=O_p(1/\sqrt{n})$ and continuous mapping holds when $Pr(Z_i=1)>0$. By the same argument, the second term in (\ref{eq: decomposition of f_h}) is also $o_p(1)$. The result follows. 
\end{proof}


\begin{lem}\label{lem: Limit Distribution of Infeasible Components}
	(Limit Distribution of Infeasible Components) Recall $f(y,1)=p(y,1)-q(y,1)$ and $f(y,0)=q(y,0)-p(y,0)$. Suppose $E[||Y_i||^{2+\delta}]<\infty$ for some $\delta>0$. Define the infeasible trimming set 
	\[
	\mathcal{Y}_d^{infsb}(b_n)=\{y\in\mathcal{Y}: f(y,d)\ge b_n\quad y\in[M_l,M_u]\}\cup \mathcal{Y}_d^{ut}\cup \mathcal{Y}_d^{lt}.
	\]
	Let $\bm{X}_i(b_n)$ and $\Sigma$ be 
	{\footnotesize
	\begin{equation}
	\bm{X}_i(b_n) = \begin{pmatrix}
	\mathbbm{1}(Z_i=0)\\
	\mathbbm{1}(Z_i=1)\\
	Y_i \mathbbm{1}(D_i=1,Z_i=1)\mathbbm{1}(Y_i\in \mathcal{Y}_1^{infsb}(b_n))\\
	Y_i \mathbbm{1}(D_i=1,Z_i=0)\mathbbm{1}(Y_i\in \mathcal{Y}_1^{infsb}(b_n))\\
	Y_i \mathbbm{1}(D_i=0,Z_i=0)\mathbbm{1}(Y_i\in \mathcal{Y}_0^{infsb}(b_n))\\
	Y_i \mathbbm{1}(D_i=0,Z_i=1)\mathbbm{1}(Y_i\in \mathcal{Y}_0^{infsb}(b_n))\\
	\mathbbm{1}(D_i=1,Z_i=1)\mathbbm{1}(Y_i\in \mathcal{Y}_1^{infsb}(b_n))\\
	\mathbbm{1}(D_i=1,Z_i=0)\mathbbm{1}(Y_i\in \mathcal{Y}_1^{infsb}(b_n))\\
	\mathbbm{1}(D_i=0,Z_i=0)\mathbbm{1}(Y_i\in \mathcal{Y}_0^{infsb}(b_n))\\
	\mathbbm{1}(D_i=0,Z_i=1)\mathbbm{1}(Y_i\in \mathcal{Y}_0^{infsb}(b_n))\\
	\end{pmatrix}\quad \quad and\quad\quad 
	\Sigma=Var\begin{pmatrix}
	\mathbbm{1}(Z_i=0)\\
	\mathbbm{1}(Z_i=1)\\
	Y_i \mathbbm{1}(D_i=1,Z_i=1)\mathbbm{1}(Y_i\in \mathcal{Y}_1)\\
	Y_i \mathbbm{1}(D_i=1,Z_i=0)\mathbbm{1}(Y_i\in \mathcal{Y}_1)\\
	Y_i \mathbbm{1}(D_i=0,Z_i=0)\mathbbm{1}(Y_i\in \mathcal{Y}_0)\\
	Y_i \mathbbm{1}(D_i=0,Z_i=1)\mathbbm{1}(Y_i\in \mathcal{Y}_0)\\
	\mathbbm{1}(D_i=1,Z_i=1)\mathbbm{1}(Y_i\in \mathcal{Y}_1)\\
	\mathbbm{1}(D_i=1,Z_i=0)\mathbbm{1}(Y_i\in \mathcal{Y}_1)\\
	\mathbbm{1}(D_i=0,Z_i=0)\mathbbm{1}(Y_i\in \mathcal{Y}_0)\\
	\mathbbm{1}(D_i=0,Z_i=1)\mathbbm{1}(Y_i\in \mathcal{Y}_0)\end{pmatrix}.
	\end{equation}	
	}
	Then for any $b_n\rightarrow 0$, $\frac{1}{\sqrt{n}}\sum_{i=1}^n
(\bm{X}_i(b_n)-E[\bm{X}_i(b_n)])\rightarrow_d N(0,\Sigma)$ where
\end{lem}
\begin{proof}
	 The sequence $\{\bm{X}_i(b_n)-E[\bm{X}_i(b_n)]\}_{i=1}^n$ forms a triangular array and since we assume $E[||Y_i||^{2+\delta}]<\infty$ holds, we have 
	 \[
	 E[||\bm{X}_{i}(b_n)||^{2+\delta}_2]\le \max\{E[Y_i^{2+\delta}],1\}<\infty.
	 \] and the variance of $\bm{X}_{i}(b_n)$ is 
	\[
	\begin{split}
	Var(\bm{X}_{i}(b_n))&=Var\begin{pmatrix}
\mathbbm{1}(Z_i=0)\\
	\mathbbm{1}(Z_i=1)\\
	Y_i \mathbbm{1}(D_i=1,Z_i=1)\mathbbm{1}(Y_i\in \mathcal{Y}_1^{infsb}(b_n))\\
	Y_i \mathbbm{1}(D_i=1,Z_i=0)\mathbbm{1}(Y_i\in \mathcal{Y}_1^{infsb}(b_n))\\
	Y_i \mathbbm{1}(D_i=0,Z_i=0)\mathbbm{1}(Y_i\in \mathcal{Y}_0^{infsb}(b_n))\\
	Y_i \mathbbm{1}(D_i=0,Z_i=1)\mathbbm{1}(Y_i\in \mathcal{Y}_0^{infsb}(b_n))\\
	\mathbbm{1}(D_i=1,Z_i=1)\mathbbm{1}(Y_i\in \mathcal{Y}_1^{infsb}(b_n))\\
	\mathbbm{1}(D_i=1,Z_i=0)\mathbbm{1}(Y_i\in \mathcal{Y}_1^{infsb}(b_n))\\
	\mathbbm{1}(D_i=0,Z_i=0)\mathbbm{1}(Y_i\in \mathcal{Y}_0^{infsb}(b_n))\\
	\mathbbm{1}(D_i=0,Z_i=1)\mathbbm{1}(Y_i\in \mathcal{Y}_0^{infsb}(b_n))\end{pmatrix}\rightarrow_p \Sigma
	\end{split}
	\]
	where the convergence holds by dominated convergence theorem: since $\mathbbm{1}(f(y,d)\ge b_n)\rightarrow \mathbbm{1}(f(y,d)\ge 0)$ pointwisely, and $X_i(b_n)$ is bounded by the integrable variable $Y_i^2$, so we can alternatively write:
	\[
	\mathcal{Y}_d= \{y\in\mathcal{Y}: f(y,d)\ge 0\quad y\in[M_l,M_u]\}\cup \mathcal{Y}_d^{ut}\cup \mathcal{Y}_d^{lt},
	\]
	therefore $\mathbbm{1}(y\in\mathcal{Y}_d^{infsb}(b_n))\rightarrow \mathbbm{1}(y\in\mathcal{Y}_d)$ pointwisely. Last, by Lyapunov CLT, the triangular array converges in distribution to $N(0,\Sigma)$.
\end{proof}

\begin{lem} \label{lem: trimming does not matter 1}
	Let $h_n=n^{-1/5}$ and $b_n=n^{-1/4}/\log n$, and Assumption \ref{assumption: well-defined LATE} - \ref{assumption: trimming bias 1} holds, then  for all $d,k\in\{0,1\}$ and $\underline{t}_F,\bar{t}_F\in \mathbb{R}_{-\infty,+\infty}$ on the extended real line, {\footnotesize
		\begin{equation}
			\begin{split}
				&\bigg|\frac{1}{n} \sum_{i=1}^n Y_i\left[\frac{\mathbbm{1}(D_i=d,Z_i=d)}{Pr(Z_i=d)}-\frac{\mathbbm{1}(D_i=d,Z_i=1-d)}{Pr(Z_i=1-d)}\right](\mathbbm{1}(Y_i\in\hat{\mathcal{Y}}_1(b_n))-\mathbbm{1}(Y_i\in\mathcal{Y}_1^{infsb}(b_n))\bigg|\equiv Term\,\, 1=o_p(1/\sqrt{n}),\\
				&\bigg|\frac{1}{n} \sum_{i=1}^n \left[\frac{\mathbbm{1}(D_i=d,Z_i=d)}{Pr(Z_i=d)}-\frac{\mathbbm{1}(D_i=d,Z_i=1-d)}{Pr(Z_i=1-d)}\right](\mathbbm{1}(Y_i\in\hat{\mathcal{Y}}_1(b_n))-\mathbbm{1}(Y_i\in\mathcal{Y}_1^{infsb}(b_n))\bigg|\equiv Term\,\, 2=o_p(1/\sqrt{n}),\\
				&\bigg|\frac{1}{n} \sum_{i=1}^n Y_i\mathbbm{1}(D_i=d,Z_i=k)(\mathbbm{1}(Y_i\in\hat{\mathcal{Y}}_1(b_n))-\mathbbm{1}(Y_i\in\mathcal{Y}_1^{infsb}(b_n))\bigg|\equiv Term\,\, 3=o_p(1),\\
				&\bigg|\frac{1}{n} \sum_{i=1}^n \mathbbm{1}(D_i=d,Z_i=k)(\mathbbm{1}(Y_i\in\hat{\mathcal{Y}}_1(b_n))-\mathbbm{1}(Y_i\in\mathcal{Y}_1^{infsb}(b_n))\bigg|\equiv Term\,\, 4=o_p(1).
			\end{split}
	\end{equation}}
\end{lem}
\begin{proof}
	I prove the case for $d=k=1$, the rest holds similarly. I look at \textit{$\sqrt{n}$Term 1} first. For any $\epsilon>0$,
	\footnotesize{
	\begin{equation}\label{eq: append, kernel does not affact rate}
		\begin{split}
			&Pr\Bigg(\Bigg|\frac{\sqrt{n}}{n} \sum_{i=1}^n Y_i\left[\frac{\mathbbm{1}(D_i=1,Z_i=1)}{Pr(Z_i=1)}-\frac{\mathbbm{1}(D_i=1,Z_i=0)}{Pr(Z_i=0)}\right] \times(\mathbbm{1}(Y_i\in\hat{\mathcal{Y}}_1(b_n))-\mathbbm{1}(Y_i\in\mathcal{Y}_1^{infsb}(b_n))\Bigg|>\epsilon\Bigg)\\
			&\le Pr( \sup_y|f_h(y,1)-f(y,1)|\ge cn^{-2/5+\epsilon} )\\
			&+  Pr\bigg(\frac{\sqrt{n}}{n} \sum_{i=1}^n \bigg|Y_i\left[\frac{\mathbbm{1}(D_i=1,Z_i=1)}{Pr(Z_i=1)}-\frac{\mathbbm{1}(D_i=1,Z_i=0)}{Pr(Z_i=0)}\right] \times \mathbbm{1}( |f(Y_i,1)|<b_n+cn^{-2/5+\epsilon},\,\,\,\,\,Y_i\in[M_l,M_u])\bigg|>\epsilon\bigg),
		\end{split}
	\end{equation}}
	where the inequality hold because on the event $\sup_y|f_h(y,1)-f(y,1)|< cn^{-2/5+\epsilon}$:
	{\footnotesize \[\left|\mathbbm{1}(Y_i\in\hat{\mathcal{Y}}_1(b_n))-\mathbbm{1}(Y_i\in\mathcal{Y}_1^{infsb}(b_n))\right| \le \mathbbm{1}( |f(Y_i,1)|<b_n+cn^{-2/5+\epsilon},\,\,\,\,\,Y_i\in[M_l,M_u]).\]}
	
	Note that 
	\begin{equation*}
		\footnotesize
		\begin{split}
			&Var\left(\left|\frac{\sqrt{n}}{n} \sum_{i=1}^n Y_i\left[\frac{\mathbbm{1}(D_i=1,Z_i=1)}{Pr(Z_i=1)}-\frac{\mathbbm{1}(D_i=1,Z_i=0)}{Pr(Z_i=0)}\right](\mathbbm{1}( |f(Y_i,1)|<b_n+cn^{-2/5+\epsilon},\,\,\,\,\,Y_i\in[M_l,M_u]))\right|\right)\\	
			&\le E  \left(\left|\frac{\sqrt{n}}{n} \sum_{i=1}^n Y_i\left[\frac{\mathbbm{1}(D_i=1,Z_i=1)}{Pr(Z_i=1)}-\frac{\mathbbm{1}(D_i=1,Z_i=0)}{Pr(Z_i=0)}\right](\mathbbm{1}( |f(Y_i,1)|<b_n+cn^{-2/5+\epsilon},Y_i\in[M_l,M_u]))\right|^2\right)\\
			&\le \underbrace{E  \left[\left|\ Y^2_i\left[\frac{\mathbbm{1}(D_i=1,Z_i=1)}{Pr(Z_i=1)}-\frac{\mathbbm{1}(D_i=1,Z_i=0)}{Pr(Z_i=0)}\right]^2(\mathbbm{1}( |f(Y_i,1)|<b_n+cn^{-2/5+\epsilon},\,\,\,\,\,Y_i\in[M_l,M_u]))\right|\right]}_A\\
			&+ \underbrace{(n-1)E\left[|Y_i|\left|\frac{\mathbbm{1}(D_i=1,Z_i=1)}{Pr(Z_i=1)}-\frac{\mathbbm{1}(D_i=1,Z_i=0)}{Pr(Z_i=0)}\right|\mathbbm{1}( |f(Y_i,1)|<b_n+cn^{-2/5+\epsilon},\,\,\,\,\,Y_i\in[M_l,M_u]\right]^2}_B,\\
		\end{split}
	\end{equation*}
	 Term $A=o(1)$ by the dominated convergence theorem since \[\mathbbm{1}( |f(Y_i,1)|<b_n+cn^{-2/5+\epsilon},\,\,\,\,\,Y_i\in[M_l,M_u])\rightarrow 0,\]
	and the second moment of $Y_i$ is bounded by assumption. For term $B$, by Assumption \ref{assumption: trimming bias 1}, {\footnotesize
		\[
		\begin{split}
			&E\left[Y_i\left|\frac{\mathbbm{1}(D_i=1,Z_i=1)}{Pr(Z_i=1)}-\frac{\mathbbm{1}(D_i=1,Z_i=0)}{Pr(Z_i=0)}\right|\mathbbm{1}( |f(Y_i,1)|<b_n+cn^{-2/5+\epsilon},\,\,\,\,\,Y_i\in[M_l,M_u]))\right]\\
			&\le \max\{|M_l|,|M_u|\}E\left[\left|\frac{\mathbbm{1}(D_i=1,Z_i=1)}{Pr(Z_i=1)}-\frac{\mathbbm{1}(D_i=1,Z_i=0)}{Pr(Z_i=0)}\right|\mathbbm{1}( |f(Y_i,1)|<b_n+cn^{-2/5+\epsilon},\,\,\,\,\,Y_i\in[M_l,M_u])\right]\\
			&=O((b_n+cn^{-2/5+\epsilon})^2)=O(\frac{1}{\sqrt{n}\log^2 n}),
		\end{split}\]}
	therefore $B=(n-1)O(\frac{1}{n\log^4n})=o(1)$. Therefore, the last term in (\ref{eq: append, kernel does not affact rate}) is o(1) by mean squared error convergence. Since $ Pr( \sup_y|f_h(y,1)-f(y,1)|\ge cn^{-2/5+\epsilon} )\rightarrow 0$ by Lemma \ref{lem: convergence rate of density estimator 2}, \textit{$\sqrt{n}$Term 1} is $o_p(1)$.  Proof of \textit{Term 2} is similar to \textit{Term 1}.
	
	Then I look at Term 3:
	\begin{equation*}
		\begin{split}
			&Pr\Bigg(\bigg|\frac{1}{n} \sum_{i=1}^n Y_i\mathbbm{1}(D_i=1,Z_i=1)(\mathbbm{1}(Y_i\in\hat{\mathcal{Y}}_1(b_n)))-\mathbbm{1}(Y_i\in\mathcal{Y}_1^{infsb}(b_n)))\bigg|>\epsilon\Bigg)\\	
			&\le Pr( \sup_y|f_h(y,1)-f(y,1)|\ge cn^{-2/5+\epsilon} )+ Pr\Bigg(\frac{1}{n} \sum_{i=1}^n\bigg| Y_i\mathbbm{1}(D_i=1,Z_i=1)\mathbbm{1}( |f(Y_i,1)|<b_n+cn^{-2/5+\epsilon},\,\,\,Y_i\in[M_l,M_u]))\bigg|>\epsilon\Bigg).
		\end{split}
	\end{equation*}
	Note that 
	\[
	\begin{split}
		&Var\left(\bigg|\frac{1}{n} \sum_{i=1}^n Y_i\mathbbm{1}(D_i=1,Z_i=1)\mathbbm{1}( |f(Y_i,1)|<b_n+cn^{-2/5+\epsilon},\,\,\,Y_i\in[M_l,M_u]))\bigg|\right)\\
		&\le E \left(\bigg|\frac{1}{n} \sum_{i=1}^n Y_i\mathbbm{1}(D_i=1,Z_i=1)\mathbbm{1}( |f(Y_i,1)|<b_n+cn^{-2/5+\epsilon},\,\,\,Y_i\in[M_l,M_u]))\bigg|^2\right)\\
		&\le \underbrace{\frac{1}{n}E \left(\bigg|Y_i^2\mathbbm{1}( |f(Y_i,1)|<b_n+cn^{-2/5+\epsilon},\,\,\,Y_i\in[M_l,M_u])\bigg|\right)}_{C}\\
		&+ \underbrace{\frac{n-1}{n} \left(E\bigg| Y_i\mathbbm{1}(D_i=1,Z_i=1)\mathbbm{1}( |f(Y_i,1)|<b_n+cn^{-2/5+\epsilon},\,\,\,Y_i\in[M_l,M_u])\bigg|\right)^2}_{D}.
	\end{split}
	\]
	Term $C\rightarrow 0$  and Term $D\rightarrow 0$ by dominated convergence theorem, since $b_n+cn^{-2/5+\epsilon}\rightarrow0$.
	The result for Term 3 in the lemma holds by  mean squared error convergence. The result for Term 4 holds by similar argument. 
\end{proof}

\begin{lem}\label{lem: asymptotic linear expansion of numerator and denominator}
	(Asymptotic Linear Expansion of numerators and denominators of (\ref{eq: estimator, type indep}) ) Let $h_n\asymp n^{-1/5}$, $b_n\asymp n^{-1/4}/\log n$, $c_n\asymp n^{-2/5+\epsilon}$ and $0<\epsilon<0.15$ as in Lemma \ref{lem: trimming does not matter 1}, And Assumption \ref{assumption: well-defined LATE} -\ref{assumption: trimming bias 1} holds, then:
	\begin{equation}\label{eq: append, LATE, asymptotic linear expansion equation 1}
	\footnotesize
	\begin{split}
	\bigg|&\quad\frac{1}{n} \sum_{i=1}^n Y_i \left[ \frac{\mathbbm{1}(D_i=1,Z_i=1)}{\frac{1}{n}\sum_{j=1}^n \mathbbm{1}(Z_j=1)}- \frac{\mathbbm{1}(D_i=1,Z_i=0)}{\frac{1}{n}\sum_{j=1}^n \mathbbm{1}(Z_j=0)} \right] \mathbbm{1}(Y_i\in\hat{\mathcal{Y}}_1(b_n))-\int_{\mathcal{Y}_1} y(p(y,1)-q(y,1))dy\\
	&-\frac{E[ Y_i \mathbbm{1}(D_i=1,Z_i=1)\mathbbm{1}(Y_i\in\mathcal{Y}_1) ]}{Pr(Z_i=1)Pr(Z_i=0)} \left[ {\frac{1}{n}\sum_{j=1}^n \mathbbm{1}(Z_j=0)}-{Pr(Z_j=0)}\right] \\
	&+\frac{E[Y_i\mathbbm{1}(D_i=1,Z_i=0)\mathbbm{1}(Y_i\in\mathcal{Y}_1)]}{Pr(Z_i=1)Pr(Z_i=0)}\left[ {\frac{1}{n}\sum_{j=1}^n \mathbbm{1}(Z_j=1)}-{Pr(Z_j=1)}\right]\\
	&-\frac{1}{n{Pr(Z_i=1)}}\sum_{i=1}^n \left(Y_i\mathbbm{1}(D_i=1,Z_i=1)\mathbbm{1}(Y_i\in \mathcal{Y}_1^{infsb}(b_n))-E[Y_i\mathbbm{1}(D_i=1,Z_i=1)\mathbbm{1}(Y_i\in \mathcal{Y}_1^{infsb}(b_n))]\right)\\
	&+\frac{1}{n{Pr(Z_i=0)}}\sum_{i=1}^n \left(Y_i\mathbbm{1}(D_i=1,Z_i=0)\mathbbm{1}(Y_i\in \mathcal{Y}_1^{infsb}(b_n))-E[Y_i\mathbbm{1}(D_i=1,Z_i=0)\mathbbm{1}(Y_i\in \mathcal{Y}_1^{infsb}(b_n))]\right)\bigg|=o_p(1/\sqrt{n})
	\end{split},
	\end{equation}

	\begin{equation}
	\footnotesize
	\begin{split}
	\bigg|&\quad\frac{1}{n} \sum_{i=1}^n Y_i \left[ \frac{\mathbbm{1}(D_i=0,Z_i=0)}{\frac{1}{n}\sum_{j=1}^n \mathbbm{1}(Z_j=0)}- \frac{\mathbbm{1}(D_i=0,Z_i=1)}{\frac{1}{n}\sum_{j=1}^n \mathbbm{1}(Z_j=1)} \right] \mathbbm{1}(Y_i\in\hat{\mathcal{Y}}_0(b_n))-\int_{\mathcal{Y}_0} y(q(y,0)-p(y,0))dy\\
	&- \frac{E[ Y_i \mathbbm{1}(D_i=0,Z_i=0)\mathbbm{1}(Y_i\in\mathcal{Y}_0) ]}{Pr(Z_i=1)Pr(Z_i=0)} \left[ {\frac{1}{n}\sum_{j=1}^n \mathbbm{1}(Z_j=1)}-{Pr(Z_j=1)}\right]\\
	&+\frac{E[Y_i\mathbbm{1}(D_i=0,Z_i=1)\mathbbm{1}(Y_i\in\mathcal{Y}_0)]}{Pr(Z_i=1)Pr(Z_i=0)} \left[ {\frac{1}{n}\sum_{j=1}^n \mathbbm{1}(Z_j=0)}-{Pr(Z_j=0)}\right]\\
	&-\frac{1}{n{Pr(Z_i=0)}}\sum_{i=1}^n \left(Y_i\mathbbm{1}(D_i=0,Z_i=0)\mathbbm{1}(Y_i\in \mathcal{Y}_0^{infsb}(b_n))-E[Y_i\mathbbm{1}(D_i=0,Z_i=0)\mathbbm{1}(Y_i\in \mathcal{Y}_0^{infsb}(b_n))]\right)\\
	&+\frac{1}{n{Pr(Z_i=1)}}\sum_{i=1}^n \left(Y_i\mathbbm{1}(D_i=0,Z_i=1)\mathbbm{1}(Y_i\in \mathcal{Y}_0^{infsb}(b_n))-E[Y_i\mathbbm{1}(D_i=0,Z_i=1)\mathbbm{1}(Y_i\in \mathcal{Y}_0^{infsb}(b_n))]\right)\bigg|=o_p(1/\sqrt{n}),
	\end{split}
	\end{equation}
	
	\begin{equation}
	\footnotesize
	\begin{split}
	\bigg|&\quad\frac{1}{n} \sum_{i=1}^n \left[ \frac{\mathbbm{1}(D_i=1,Z_i=1)}{\frac{1}{n}\sum_{j=1}^n \mathbbm{1}(Z_j=1)}- \frac{\mathbbm{1}(D_i=1,Z_i=0)}{\frac{1}{n}\sum_{j=1}^n \mathbbm{1}(Z_j=0)} \right] \mathbbm{1}(Y_i\in\hat{\mathcal{Y}}_1(b_n))-\int_{\mathcal{Y}_1} (p(y,1)-q(y,1))dy\\
	&-\frac{E[  \mathbbm{1}(D_i=1,Z_i=1)\mathbbm{1}(Y_i\in\mathcal{Y}_1) ]}{Pr(Z_i=1)Pr(Z_i=0)} \left[ {\frac{1}{n}\sum_{j=1}^n \mathbbm{1}(Z_j=0)}-{Pr(Z_j=0)}\right] \\
	&+\frac{E[\mathbbm{1}(D_i=1,Z_i=0)\mathbbm{1}(Y_i\in\mathcal{Y}_1)]}{Pr(Z_i=1)Pr(Z_i=0)}\left[ {\frac{1}{n}\sum_{j=1}^n \mathbbm{1}(Z_j=1)}-{Pr(Z_j=1)}\right]\\
	&-\frac{1}{n{Pr(Z_i=1)}}\sum_{i=1}^n \left(\mathbbm{1}(D_i=1,Z_i=1)\mathbbm{1}(Y_i\in \mathcal{Y}_1^{infsb}(b_n))-E[\mathbbm{1}(D_i=1,Z_i=1)\mathbbm{1}(Y_i\in \mathcal{Y}_1^{infsb}(b_n))]\right)\\
	&+\frac{1}{n{Pr(Z_i=0)}}\sum_{i=1}^n \left(\mathbbm{1}(D_i=1,Z_i=0)\mathbbm{1}(Y_i\in \mathcal{Y}_1^{infsb}(b_n))-E[\mathbbm{1}(D_i=1,Z_i=0)\mathbbm{1}(Y_i\in \mathcal{Y}_1^{infsb}(b_n))]\right)\bigg|=o_p(1/\sqrt{n})
	\end{split},
	\end{equation}

	\begin{equation}
	\footnotesize
	\begin{split}
	\bigg|&\quad\frac{1}{n} \sum_{i=1}^n \left[ \frac{\mathbbm{1}(D_i=0,Z_i=0)}{\frac{1}{n}\sum_{j=1}^n \mathbbm{1}(Z_j=0)}- \frac{\mathbbm{1}(D_i=0,Z_i=1)}{\frac{1}{n}\sum_{j=1}^n \mathbbm{1}(Z_j=1)} \right] \mathbbm{1}(\in\hat{\mathcal{Y}}_0(b_n))-\int_{\mathcal{Y}_0} y(q(y,0)-p(y,0))dy\\
	&- \frac{E[  \mathbbm{1}(D_i=0,Z_i=0)\mathbbm{1}(Y_i\in\mathcal{Y}_0) ]}{Pr(Z_i=1)Pr(Z_i=0)} \left[ {\frac{1}{n}\sum_{j=1}^n \mathbbm{1}(Z_j=1)}-{Pr(Z_j=1)}\right]\\
	&+\frac{E[\mathbbm{1}(D_i=0,Z_i=1)\mathbbm{1}(Y_i\in\mathcal{Y}_0)]}{Pr(Z_i=1)Pr(Z_i=0)} \left[ {\frac{1}{n}\sum_{j=1}^n \mathbbm{1}(Z_j=0)}-{Pr(Z_j=0)}\right]\\
	&-\frac{1}{n{Pr(Z_i=0)}}\sum_{i=1}^n \left(\mathbbm{1}(D_i=0,Z_i=0)\mathbbm{1}(Y_i\in \mathcal{Y}_0^{infsb}(b_n))-E[\mathbbm{1}(D_i=0,Z_i=0)\mathbbm{1}(Y_i\in \mathcal{Y}_0^{infsb}(b_n))]\right)\\
	&+\frac{1}{n{Pr(Z_i=1)}}\sum_{i=1}^n \left(\mathbbm{1}(D_i=0,Z_i=1)\mathbbm{1}(Y_i\in \mathcal{Y}_0^{infsb}(b_n))-E[\mathbbm{1}(D_i=0,Z_i=1)\mathbbm{1}(Y_i\in \mathcal{Y}_0^{infsb}(b_n))]\right)\bigg|=o_p(1/\sqrt{n}),
	\end{split}
	\end{equation}
	
\end{lem}
\begin{proof}
	I prove the first statement (\ref{eq: append, LATE, asymptotic linear expansion equation 1}), and the rest of statements 
	hold similarly by changing the value of $D_i$ and $Z_i$. 
	
	We look at the expansion 
	\begin{equation}
\footnotesize
\begin{split}
&\quad\frac{1}{n} \sum_{i=1}^n Y_i \left[ \frac{\mathbbm{1}(D_i=1,Z_i=1)}{\frac{1}{n}\sum_{j=1}^n \mathbbm{1}(Z_j=1)}- \frac{\mathbbm{1}(D_i=1,Z_i=0)}{\frac{1}{n}\sum_{j=1}^n \mathbbm{1}(Z_j=0)} \right] \mathbbm{1}(Y_i \in\hat{\mathcal{Y}}_1(b_n))-\int_{\mathcal{Y}_1} y(p(y,1)-q(y,1))dy\\
&= \underbrace{\frac{1}{n} \sum_{i=1}^n Y_i \left[ \frac{\mathbbm{1}(D_i=1,Z_i=1)}{\frac{1}{n}\sum_{j=1}^n \mathbbm{1}(Z_j=1)}-\frac{\mathbbm{1}(D_i=1,Z_i=1)}{Pr(Z_j=1)}\right]\mathbbm{1}(Y_i \in\hat{\mathcal{Y}}_1(b_n))}_{A_1}\\
&-\underbrace{\frac{1}{n}\sum_{i=1}^nY_i\left[ \frac{\mathbbm{1}(D_i=1,Z_i=0)}{\frac{1}{n}\sum_{j=1}^n \mathbbm{1}(Z_j=0)}  -\frac{\mathbbm{1}(D_i=1,Z_i=0)}{Pr(Z_j=0)}\right]\mathbbm{1}(Y_i \in\hat{\mathcal{Y}}_1(b_n))}_{A_2}\\
&+\underbrace{\frac{1}{n} \sum_{i=1}^n Y_i\left[\frac{\mathbbm{1}(D_i=1,Z_i=1)}{Pr(Z_i=1)}-\frac{\mathbbm{1}(D_i=1,Z_i=0)}{Pr(Z_i=0)}\right](\mathbbm{1}(Y_i \in\hat{\mathcal{Y}}_1(b_n))-\mathbbm{1}(Y_i\in\mathcal{Y}_1^{infsb}(b_n+c_n))) }_{B}\\
&+ \underbrace{\frac{1}{n} \sum_{i=1}^n Y_i\left[\frac{\mathbbm{1}(D_i=1,Z_i=1)}{Pr(Z_i=1)}-\frac{\mathbbm{1}(D_i=1,Z_i=0)}{Pr(Z_i=0)}\right]\mathbbm{1}(Y_i\in\mathcal{Y}_1^{infsb}(b_n+c_n))}_{C_1}\\
&-\underbrace{\int y(p(y,1)-q(y,1))\mathbbm{1}(y\in\mathcal{Y}_1^{infsb}(b_n+c_n))dy}_{C_2}\\
&+\underbrace{\int_{\mathcal{Y}} y(p(y,1)-q(y,1))(\mathbbm{1}(y\in\mathcal{Y}_1^{infsb}(b_n+c_n))-\mathbbm{1}(y\in\mathcal{Y}_1))dy}_{D}.
\end{split}
\end{equation}
	The expansion holds by adding and subtracting the same terms repeatedly. For term $A_1$, we can write it as{\footnotesize
	\[
	\begin{split}
	A_1&=\frac{1}{n} \sum_{i=1}^n Y_i \mathbbm{1}(D_i=1,Z_i=1)\mathbbm{1}(Y_i \in\hat{\mathcal{Y}}_1(b_n))\left[ \frac{1}{\frac{1}{n}\sum_{j=1}^n \mathbbm{1}(Z_j=1)}-\frac{1}{Pr(Z_j=1)}\right]\\
	&=_{(1)}\left[\frac{1}{n} \sum_{i=1}^n Y_i \mathbbm{1}(D_i=1,Z_i=1)\mathbbm{1}(Y_i\in\mathcal{Y}_1^{infsb}(b_n+c_n)+o_p(1)\right]\left[ \frac{1}{\frac{1}{n}\sum_{j=1}^n \mathbbm{1}(Z_j=1)}-\frac{1}{Pr(Z_j=1)}\right]\\
	&=_{(2)}\left[E[ Y_i \mathbbm{1}(D_i=1,Z_i=1)\mathbbm{1}(Y_i\in\mathcal{Y}_1^{infsb}(b_n+c_n)) ]+o_p(1)\right] \left[ \frac{1}{\frac{1}{n}\sum_{j=1}^n \mathbbm{1}(Z_j=1)}-\frac{1}{Pr(Z_j=1)}\right]\\
	&=_{(3)}\frac{E[ Y_i \mathbbm{1}(D_i=1,Z_i=1)\mathbbm{1}(Y_i\in\mathcal{Y}_1^{infsb}(b_n+c_n)) ]+o_p(1)}{Pr(Z_i=1)Pr(Z_i=0)} \left[ {\frac{1}{n}\sum_{j=1}^n \mathbbm{1}(Z_j=0)}-{Pr(Z_j=0)}\right]\\
	&\times \frac{Pr(Z_i=1)Pr(Z_i=0)}{\frac{1}{n}\sum_{j=1}^n \mathbbm{1}(Z_i=1)\frac{1}{n}\sum_{j=1}^n \mathbbm{1}(Z_i=0)} \\
	&=_{(4)}\frac{E[ Y_i \mathbbm{1}(D_i=1,Z_i=1)\mathbbm{1}(Y_i\in\mathcal{Y}_1^{infsb}(b_n+c_n)) ]+o_p(1)}{Pr(Z_i=1)Pr(Z_i=0)} \left[ {\frac{1}{n}\sum_{j=1}^n \mathbbm{1}(Z_j=0)}-{Pr(Z_j=0)}\right]\times(1+o_p(1))\\
	&=_{(5)} \frac{E[ Y_i \mathbbm{1}(D_i=1,Z_i=1)\mathbbm{1}(Y_i\in\mathcal{Y}_1) ]+o_p(1)}{Pr(Z_i=1)Pr(Z_i=0)} \left[ {\frac{1}{n}\sum_{j=1}^n \mathbbm{1}(Z_j=0)}-{Pr(Z_j=0)}\right]\times(1+o_p(1))\\
	&=_{(6)} \frac{E[ Y_i \mathbbm{1}(D_i=1,Z_i=1)\mathbbm{1}(Y_i\in\mathcal{Y}_1) ]}{Pr(Z_i=1)Pr(Z_i=0)} \left[ {\frac{1}{n}\sum_{j=1}^n \mathbbm{1}(Z_j=0)}-{Pr(Z_j=0)}\right]+o_p(1/\sqrt{n})
	\end{split}
	\]}
	where equality (1) holds by Lemma \ref{lem: trimming does not matter 1} \textit{Term 1} with $d=1$;  equality (2) holds by the Glivenko-Cantalli theorem for changing class of set; equality (3) holds because we multiply and divide the same term; equality (4) holds by the continuous mapping theorem; equality (5) holds by dominated convergence theorem since $\mathbbm{1}(y\in\mathcal{Y}_1^{infsb}(b_n+c_n))\rightarrow \mathbbm{1}(y\in\mathcal{Y}_1)$; equality (6) holds by observing that $ {\frac{1}{n}\sum_{j=1}^n \mathbbm{1}(Z_j=0)}-{Pr(Z_j=0)}=o_p(1/\sqrt{n})$ and then we apply Slutsky's theorem to get the equality. 
	
	Similarly, apply Lemma \ref{lem: trimming does not matter 1} \textit{Term 3} with $d=1,k=0$ we have 
	\[A_2= \frac{E[Y_i\mathbbm{1}(D_i=1,Z_i=0)\mathbbm{1}(Y_i\in\mathcal{Y}_1)]}{Pr(Z_i=1)Pr(Z_i=0)}\left[ {\frac{1}{n}\sum_{j=1}^n \mathbbm{1}(Z_j=1)}-{Pr(Z_j=1)}\right]+o_p(1/\sqrt{n}).\]
	By Lemma \ref{lem: trimming does not matter 1}, $B=o_p(1/\sqrt{n})$ \textit{Term 1}. By Assumption \ref{assumption: trimming bias 1} and the choice of $\epsilon$, \[
	D\le \max\{|M_u|,|M_l|\}O((b_n+c_n)^2)=O_p(n^{-0.5}/\log^2n)=o_p(1/\sqrt{n}).
	\]
	The result follows since $C_1-C_2$ term corresponds to the last two terms in (\ref{eq: append, LATE, asymptotic linear expansion equation 1}). 
	 
	The rest of the equalities in Lemma \ref{lem: asymptotic linear expansion of numerator and denominator} hold by applying different values of $d,k$ in Lemma  \ref{lem: trimming does not matter 1}.
\end{proof}

\subsubsection{Proof of Theorem \ref{thm: asymptotic property of LATE}}
\begin{proof}
	 Let $\bm{X}_i(b_n)$ be the vector in Lemma \ref{lem: Limit Distribution of Infeasible Components}. Now let 
	\begin{equation*}
	\hat{\pi}= \begin{pmatrix}
	\frac{1}{n} \sum_{i=1}^N Y_i \left[ \frac{\mathbbm{1}(D_i=1,Z_i=1)}{\frac{1}{N}\sum_{j=1}^N \mathbbm{1}(Z_j=1)}- \frac{\mathbbm{1}(D_i=1,Z_i=0)}{\frac{1}{N}\sum_{j=1}^N \mathbbm{1}(Z_j=0)} \right] \mathbbm{1}(Y_i\in\hat{\mathcal{Y}}_1(b_n))\\
	\frac{1}{n} \sum_{i=1}^n Y_i \left[ \frac{\mathbbm{1}(D_i=0,Z_i=0)}{\frac{1}{n}\sum_{j=1}^n \mathbbm{1}(Z_j=0)}- \frac{\mathbbm{1}(D_i=0,Z_i=1)}{\frac{1}{n}\sum_{j=1}^n \mathbbm{1}(Z_j=1)} \right] \mathbbm{1}(Y_i\in\hat{\mathcal{Y}}_0(b_n))\\
	\frac{1}{n} \sum_{i=1}^N \left[ \frac{\mathbbm{1}(D_i=1,Z_i=1)}{\frac{1}{N}\sum_{j=1}^N \mathbbm{1}(Z_j=1)}- \frac{\mathbbm{1}(D_i=1,Z_i=0)}{\frac{1}{N}\sum_{j=1}^N \mathbbm{1}(Z_j=0)} \right] \mathbbm{1}(Y_i\in\hat{\mathcal{Y}}_1(b_n))\\
	\frac{1}{n} \sum_{i=1}^n \left[ \frac{\mathbbm{1}(D_i=0,Z_i=0)}{\frac{1}{n}\sum_{j=1}^n \mathbbm{1}(Z_j=0)}- \frac{\mathbbm{1}(D_i=0,Z_i=1)}{\frac{1}{n}\sum_{j=1}^n \mathbbm{1}(Z_j=1)} \right] \mathbbm{1}(Y_i\in\hat{\mathcal{Y}}_0(b_n))
	\end{pmatrix}
	\quad\quad
	\pi= \begin{pmatrix}
	\int_{\mathcal{Y}_1} y(p(y,1)-q(y,1))dy\\
	\int_{\mathcal{Y}_0} y(q(y,0)-p(y,0))dy\\
	\int_{\mathcal{Y}_1} (p(y,1)-q(y,1))dy\\
	\int_{\mathcal{Y}_0} (q(y,0)-p(y,0))dy
	\end{pmatrix}
	\end{equation*}
	By Lemma \ref{lem: asymptotic linear expansion of numerator and denominator},
	\[\sqrt{n}(\hat{\pi}-\pi)=o_p(1)+ \Gamma\sqrt{n}(\bm{X}_i(b_n)-E[\bm{X}_i(b_n)]),\]
	where $\Gamma$ matrix is specified in Theorem \ref{thm: asymptotic property of LATE}.
	And we notice that 
	$\widehat{LATE}=\frac{\hat{\pi}_1}{\hat{\pi}_3}-\frac{\hat{\pi}_2}{\hat{\pi}_4}$, and $\widetilde{LATE}^{ID}=\frac{\pi_1}{\pi_3}-\frac{\pi_2}{\pi_4}$, and $\Pi$ in Theorem \ref{thm: asymptotic property of LATE} is the Jacobian matrix of function  $f(\pi)=\frac{\pi_1}{\pi_3}-\frac{\pi_2}{\pi_4}$. The result follows by delta method.

\end{proof}

\section{Proofs in Section 4}
\subsection{Proof of Proposition \ref{prop: roy model, non-refutable and confirmation set}}
\begin{proof}
	To show the non-refutability sets $\mathcal{H}_{\mathcal{S}}^{snf}(A^{Roy})$ and $\mathcal{H}_{\mathcal{S}}^{wnf}(A^{Roy})$, it suffices to show $\cup_{s\in A^{Roy}}M^s(G^s)=\mathcal{F}^{nf}$ by Definition \ref{def: non-refutablity set in incomplete structure}. Note that for any $s\in A^{Roy}$ such that $F\in M^s(G^s)$, 
	\[
	\begin{split}
	\frac{Pr_F(Y_i=0,Z_i=1)}{Pr_F(Z_i=1)}&=_{(1)} \frac{Pr_{G^s}(Y_i(1)=0,Y_i(0)=0,Z_i=1)}{Pr_{G^s}(Z_i=1)}\\
	&\le_{(2)} \frac{Pr_{G^s}(Y_i(1)=0,Y_i(0)=0,Z_i=0)}{Pr_{G^s}(Z_i=0)}\\
	&= \frac{Pr_F(Y_i=0,Z_i=0)}{Pr_F(Z_i=0)},
	\end{split}
	\]
	where (1) holds by the Roy sector selection rule, and inequality (2) holds by (\ref{eq: roy model, dominating instrument at best and worst outcome}). Therefore $\cup_{s\in A^{Roy}}M^s(G^s) \subseteq \mathcal{F}^{nf}$.
	
	Conversely, if $F$ satisfies $\frac{Pr_F(Y_i=0,Z_i=1)}{Pr_F(Z_i=1)}\le \frac{Pr_F(Y_i=0,Z_i=0)}{Pr_F(Z_i=0)}$, we consider the following $s$:
	\[
	\begin{split}
	Pr_{G^{s}}(Y_i(1)=0,Y_i(0)=0,Z_i=0)&=Pr_F(Y_i=0,Z_i=0)\\
	Pr_{G^{s}}(Y_i(1)=0,Y_i(0)=1,Z_i=0)&=Pr_F(Y_i=1,D_i=0,Z_i=0)\\
	Pr_{G^{s}}(Y_i(1)=1,Y_i(0)=0,Z_i=0)&=Pr_F(Y_i=1,D_i=1,Z_i=0)\\
	Pr_{G^{s}}(Y_i(1)=1,Y_i(0)=1,Z_i=0)&=0\\
	Pr_{G^{s}}(Y_i(1)=0,Y_i(0)=0,Z_i=1)&=Pr_F(Y_i=0,Z_i=1)\\
	Pr_{G^{s}}(Y_i(1)=0,Y_i(0)=1,Z_i=0)&=0\\
	Pr_{G^{s}}(Y_i(1)=1,Y_i(0)=0,Z_i=0)&=0\\
	Pr_{G^{s}}(Y_i(1)=1,Y_i(0)=1,Z_i=0)&=Pr_F(Y_i=1,Z_i=1)\\
	\end{split}
	\]
	and $M^s$ is the Roy sector selection rule (\ref{eq: roy model, perfect selection D_i mapping}) and $M^s(G^s)$ is implied by (\ref{eq: Roy appli, M^s}). By the construction of $G^s$, and set
	\[
	\begin{split}
	C^{110}_0=C^{110}_1&=C^{001}_0=C^{001}_1=0,\\
	C^{111}_0=Pr_F(Y_i=1,D_i=0,Z_i=1), \quad &\quad C^{111}_1=Pr_F(Y_i=1,D_i=1,Z_i=1),\\
	C^{000}_0=Pr_F(Y_i=0,D_i=0,Z_i=0), \quad &\quad C^{000}_1=Pr_F(Y_i=0,D_i=1,Z_i=0),
	\end{split}
	\]
	we can show $F\in M^s(G^s)$. As a result, $\mathcal{F}^{nf}\subseteq \cup_{s\in A^{Roy}}M^s(G^s)$. Therefore, $\cup_{s\in A^{Roy}}M^s(G^s)=\mathcal{F}^{nf}$. 
	
	To show $\mathcal{H}_{\mathcal{S}}^{wcon}(A^{Roy})=\mathcal{H}_{\mathcal{S}}^{scon}(A^{Roy})=\varnothing$, it suffices to show $\cup_{s\in (A^{Roy})^c}M^s(G^s)=\mathcal{F}$ by Definition \ref{def: confirmation sets}. 
	
	For any $F\in \mathcal{F}$, we consider the unrestricted sector selection rule $D^{unc}$, i.e. $D^{unc}(y_1,y_0,z)=\{0,1\}$ for all values of  $y_1,y_0,z\in\{0,1\}$.
	A unconstrained structure $s^{unc}$ corresponding to $D^{unc}$ satisfies:
	\begin{equation}\label{eq: roy model, confirmation set is empty}
	M^{s^{unc}}(G^{s^{unc}})=\{F: Pr_F(Z_i=z)=Pr_{G^{s^{unc}}}(Z_i=z)\}.
	\end{equation}
	In other words, the unconstrained selection rule does not specify how the job sector is picked for all cases.  Now, consider a structure $s=(M^s,G^s)$ such that: (1). $M^s$ corresponds to $D^{unc}$, and (2). $G^s$ satisfies $Pr_{G^s}(Z_i=z)=Pr_F(Z_i=z)$. By (\ref{eq: roy model, confirmation set is empty}), $F\in M^s(G^s)$ must hold. This shows that $\cup_{s\in A^c} M^s(G^s)=\mathcal{F}$. By definition of confirmation sets \ref{def: confirmation sets}, $	\mathcal{H}_{\mathcal{S}}^{scon}(A^{Roy})=\mathcal{H}_{\mathcal{S}}^{wcon}(A^{Roy})=\varnothing $ holds.
	
\end{proof}

\subsection{Proof of Proposition \ref{prop: roy model, sharp characterization of identified set} }
 I first set up some notations for the proof. For any completed structure $s^*$, let 
\[
D_i^{s^*}=\begin{cases}
1\quad \quad with\,\,probability \,\,p^{jkz,s^*}_1, \,\,when \,\,Y_i(1)=j,Y_i(0)=k,Z_i=z,\\
0\quad \quad with\,\,probability \,\,p^{jkz,s^*}_0, \,\,when \,\,Y_i(1)=j,Y_i(0)=k,Z_i=z,\\
\end{cases}
\]
where $p^{jkz,s^*}_1+p^{jkz,s^*}_0=1$ and $p^{jkz,s^*}_1\in [0,1]$. Also, let 
\[
C_d^{jkz}(s^*)=Pr(Y_i(1)=j,Y_i(0)=k,Z_i=z)\times p^{jkz,s^*}_d.
\]
The quantity $p^{jkz,s^*}_d$ under structure $s^*$ can be viewed as the probability of selecting sector $d$ when $Y_i(1)=j,Y_i(0)=k,Z_i=z$. Instead of using the selection rule to characterize $s^*$, we use the $C_d^{jkz,s^*}$ to characterize $s^*$: Each structure can then be represented by a $G^{s^*}$ and the numbers $\{C_d^{jkz}(s^*)\}_{d,j,k,z\in\{0,1\}}$. The set of structures that are consistent with observation $F$ and (\ref{eq: roy model, dominating instrument at best and worst outcome}) is given by 
\begin{equation*}
\begin{split}
\Bigg\{ s^*:\,\,
&C_1^{j1z}(s^*)+C_{1}^{j0z}(s^*)=Pr_F(Y_i=j,D_i=1,Z_i=z),\\
&C_0^{1kz}(s^*)+C_{0}^{0kz}(s^*)=Pr_F(Y_i=k,D_i=1,Z_i=z),\\
&C_1^{jkz}+C_0^{jkz}=Pr_{G^{s^*}}(Y_i(1)=j,Y_i(0)=k,Z_i=z),\\
&and\,\, (\ref{eq: roy model, dominating instrument at best and worst outcome})\,\,holds\Bigg\}.
\end{split}
\end{equation*}
Also, to abbreviate the notation of $G^{s^*}$, we let  $q^{y_1y_0z,s*}=Pr_{G^{s^*}}(Y_i(1)=y_1,Y_i(0)=y_0,Z_i=z)$. The proof of Proposition \ref{prop: roy model, sharp characterization of identified set} is based on the following lemmas. 

\subsubsection{Lemmas}
\begin{lem} \label{lem: no efficiency loss for z=0}
	Let $s^*$ be any structure such that $F\in M^{s^*}(G^{s^*})$ and (\ref{eq: roy model, dominating instrument at best and worst outcome}) holds, there exists an $\tilde{s}^*$ such that $F\in M^{\tilde{s}^*}(G^{\tilde{s}^*})$ such that $\tilde{s}^*$ satisfies (\ref{eq: roy model, dominating instrument at best and worst outcome}) and $C_d^{jk1}(s^*)=C_d^{jk1}(\tilde{s}^*)$ for all $d,j,k\in \{0,1\}$, and $C_0^{100}(\tilde{s}^*)=C_1^{010}(\tilde{s}^*)=0$. Moreover, $m^{EL}(s^*)\ge m^{EL}(\tilde{s}^*)$, with equality hold if and only if $C_0^{100}({s}^*)=C_1^{010}({s}^*)=0$.
\end{lem}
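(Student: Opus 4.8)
The plan is to prove the lemma by an explicit construction. I keep the entire $Z_i=1$ block of $s^*$ untouched and rebuild the $Z_i=0$ block as a Roy-efficient allocation that still reproduces the $Z_i=0$ cells of $F$. For the new structure $\tilde s^*$ I set $C^{jk1}_d(\tilde s^*)=C^{jk1}_d(s^*)$ for all $j,k,d\in\{0,1\}$, which gives the third claim directly. For $Z_i=0$ I route all observed mass to the Roy-optimal sector:
\begin{equation*}
\begin{split}
&C^{000}_0(\tilde s^*)=Pr_F(Y_i=0,D_i=0,Z_i=0),\quad C^{000}_1(\tilde s^*)=Pr_F(Y_i=0,D_i=1,Z_i=0),\\
&C^{100}_1(\tilde s^*)=Pr_F(Y_i=1,D_i=1,Z_i=0),\quad C^{010}_0(\tilde s^*)=Pr_F(Y_i=1,D_i=0,Z_i=0),
\end{split}
\end{equation*}
and $C^{100}_0(\tilde s^*)=C^{010}_1(\tilde s^*)=C^{110}_0(\tilde s^*)=C^{110}_1(\tilde s^*)=0$. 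The vanishing of $C^{100}_0$ and $C^{010}_1$ is exactly the fourth claim, so the remaining work is to check that $\tilde s^*$ is a legitimate element of $\mathcal S^*$ reproducing $F$ and still satisfying the dominating-instrument condition.

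First I would confirm $\tilde s^*\in\mathcal S^*$ and $F\in M^{\tilde s^*}(G^{\tilde s^*})$. Every assigned $C$-value is nonnegative (each is an observed probability or zero), and together with $C^{jk1}_d(\tilde s^*)=C^{jk1}_d(s^*)$ they determine a valid completed structure via the associated tie-breaking weights. The induced $Z_i=0$ primitive masses are $q^{000,\tilde s^*}=Pr_F(Y_i=0,Z_i=0)$, $q^{100,\tilde s^*}=Pr_F(Y_i=1,D_i=1,Z_i=0)$, $q^{010,\tilde s^*}=Pr_F(Y_i=1,D_i=0,Z_i=0)$ and $q^{110,\tilde s^*}=0$, whose sum is $Pr_F(Z_i=0)$; hence the $Z_i$-marginal is preserved. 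Substituting the $C$-values into the completed mapping (\ref{eq: roy model, completed structure mapping}) reproduces each $Pr_F(Y_i=y,D_i=d,Z_i=0)$ by direct cell-by-cell computation, and the $Z_i=1$ cells are unchanged, giving $F\in M^{\tilde s^*}(G^{\tilde s^*})$.

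The crux is verifying the dominating-instrument condition (\ref{eq: roy model, dominating instrument at best and worst outcome}) for $\tilde s^*$ (assuming $Pr_F(Z_i=z)>0$, so the conditional forms are defined). Because only the $Z_i=0$ block moved, the two inequalities reduce to comparing $q^{110,\tilde s^*}$ and $q^{000,\tilde s^*}$ at $Z_i=0$ against the untouched $q^{111,s^*}$ and $q^{001,s^*}$ at $Z_i=1$. The best-outcome inequality is immediate since $q^{110,\tilde s^*}=0$ forces $Pr_{\tilde s^*}(Y_i(1)=Y_i(0)=1\mid Z_i=0)=0$. The worst-outcome inequality is the one real obstacle: I must show $q^{001,s^*}/Pr_F(Z_i=1)\le q^{000,\tilde s^*}/Pr_F(Z_i=0)$. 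The key observation is a mass-monotonicity identity, $q^{000,\tilde s^*}=Pr_F(Y_i=0,Z_i=0)=q^{000,s^*}+C^{100}_0(s^*)+C^{010}_1(s^*)\ge q^{000,s^*}$, which holds because reading $Pr_F(Y_i=0,Z_i=0)$ off the original structure picks up precisely the two nonnegative efficiency-loss masses that I have zeroed out. Chaining this with the hypothesis that $s^*$ itself obeys (\ref{eq: roy model, dominating instrument at best and worst outcome}), namely $q^{001,s^*}/Pr_F(Z_i=1)\le q^{000,s^*}/Pr_F(Z_i=0)$, delivers the worst-outcome inequality. The intuition behind this main step is that forcing the $Z_i=0$ group to choose efficiently can only enlarge its share of worst-outcome types $Y_i(1)=Y_i(0)=0$, which keeps the monotone ordering across $Z_i$ intact.
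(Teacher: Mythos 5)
Your proof is correct, and its engine is the same as the paper's: hold the $Z_i=1$ block fixed, reallocate the $Z_i=0$ efficiency-loss mass $C^{100}_0(s^*)+C^{010}_1(s^*)$ into the $(Y_i(1),Y_i(0))=(0,0)$ type, and observe via the accounting identity $Pr_F(Y_i=0,Z_i=0)=q^{000,s^*}+C^{100}_0(s^*)+C^{010}_1(s^*)$ that $q^{000}$ can only increase, so the worst-outcome inequality in (\ref{eq: roy model, dominating instrument at best and worst outcome}) is preserved by chaining with the hypothesis on $s^*$. The one place you diverge is the best-outcome inequality: the paper performs a minimal perturbation that leaves $q^{110,\tilde s^*}=q^{110,s^*}$ and all other $Z_i=0$ cells untouched, so that inequality carries over verbatim from $s^*$; you instead rebuild the entire $Z_i=0$ block as the canonical Roy-efficient allocation with $q^{110,\tilde s^*}=0$, which makes that inequality trivial. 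Your version effectively merges this lemma's construction with the one the paper deploys in the subsequent Lemma \ref{lem: appen, roy model, a particular construction}; it proves slightly more in one step, at the cost of discarding information about $s^*$ that the paper's gentler construction retains (though nothing in the statement of the present lemma requires retaining it). Both routes are valid, and your cell-by-cell check that the reassigned $C^{jk0}_d$ reproduce the $Z_i=0$ cells of $F$ is exactly what is needed.
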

\begin{remark}
	This lemma shows that it suffices to consider structures such that the  efficiency loss at $Z_i=0$ is zero ($C_0^{100}(\tilde{s}^*)=C_1^{010}(\tilde{s}^*)=0$). 
\end{remark}

\begin{proof}
	Given $s^*$, we construct $\tilde{s}^*$ such that $q^{jk1,s^*}=q^{jk1,\tilde{s}^*}$ for $j,k\in\{0,1\}$ and $C_d^{jk1}(s^*)=C_d^{jk1}(\tilde{s}^*)$. The definition of ${s}^*$ implies 
	\[\begin{split}
	&C_1^{j1z}(s^*)+C_{1}^{j01}(s^*)=Pr_F(Y_i=j,D_i=1,Z_i=1),\\
	&C_0^{1kz}(s^*)+C_{0}^{0k1}(s^*)=Pr_F(Y_i=k,D_i=1,Z_i=1).\\
	\end{split}
	\]
	Now, let $C_0^{100}(\tilde{s}^*)=C_1^{010}(\tilde{s}^*)=0$, and let 
	\[
	\begin{split}
	C^{000}_0(\tilde{s}^*)&=C^{000}_0(s^*)+C^{100}_0(s^*),\\
	C^{000}_1(\tilde{s}^*)&=C^{000}_1(s^*)+C^{010}_1(s^*),\\
	C^{010}_0(\tilde{s}^*)=C^{010}_0({s}^*), \quad &\quad C^{100}_1(\tilde{s}^*)=C^{100}_1({s}^*), \\
	C^{110}_1(\tilde{s}^*)=C^{110}_1({s}^*), \quad &\quad C^{110}_0(\tilde{s}^*)=C^{110}_0({s}^*),\\
	q^{000,\tilde{s}^*}=q^{000,s^*}+C^{100}_0(s^*)+C^{010}_1(s^*),\quad &\quad 	q^{110,\tilde{s}^*}=q^{110,{s}^*},\\
	q^{100,\tilde{s}^*}=q^{100,s^*}-C^{100}_0(s^*), \quad &\quad
	q^{010,\tilde{s}^*}=q^{010,s^*}-C^{010}_1(s^*).\\
	\end{split}
	\]
	The construction of $\tilde{s}^*$ above essentially moves the original efficiency loss allocation $C^{100}_0(s^*)$ and $C^{010}_1(s^*)$ to the event that $Y_i(1)=Y_i(0)=0,Z_i=0$. By construction of $\tilde{s}^*$, $F\in M^{s^*}(G^{s^*})$: For example to check $Pr_F(Y_i=0,D_i=0,Z_i=0)$ can be generated by our structure $s$, we write:
	\[
	\begin{split}
	&\,C^{000}_0(\tilde{s}^*)+C^{100}_0(\tilde{s}^*)\\
	&= C^{000}_0(\tilde{s}^*)\\
	&=	C^{000}_0(s^*)+C^{100}_0(s^*)\\
	&= Pr_F(Y_i=0,D_i=0,Z_i=0),
	\end{split}
	\]
	where the first equality holds by construction $C^{100}_0(\tilde{s}^*)=0$, and the third equality holds because $F\in M^{s^*}(G^{s^*})$. By the definition of $m^{EL}(s^*)$ in (\ref{eq: Roy model, efficiency loss}), we can show $m^{EL}(s^*)-m^{EL}(\tilde{s}^*)= C^{100}_0(s^*)+C^{010}_1(s^*)$. So $m^{EL}(s^*)\ge m^{EL}(\tilde{s}^*)$ with equality holds iff $C^{100}_0(s^*)=C^{010}_1(s^*)=0$.

	Last, it suffices to check $\tilde{s}^*$ satisfies (\ref{eq: roy model, dominating instrument at best and worst outcome}). Since $Pr_F(Z_i=z)=Pr_{G^s}(Z_i=z)$ for all $F\in M^s(G^s)$, we will use $F$ to denote the marginal distribution of $Z_i$. To check the first inequality in  (\ref{eq: roy model, dominating instrument at best and worst outcome}), we want to show 
	\[
	\frac{q^{110,\tilde{s}^*}}{Pr_F(Z_i=0)} \le \frac{q^{111,\tilde{s}^*}}{Pr_F(Z_i=1)}.
	\]
	Note that since $s^*$ satisfies $\frac{q^{110,{s}^*}}{Pr_F(Z_i=0)} \le \frac{q^{111,{s}^*}}{Pr_F(Z_i=1)}$, and $q^{11z,\tilde{s}^*}=q^{11z,{s}^*}$ for $z=0,1$ by the construction of $\tilde{s}^*$, then $\frac{q^{110,\tilde{s}^*}}{Pr_F(Z_i=0)} \le \frac{q^{111,\tilde{s}^*}}{Pr_F(Z_i=1)}$ holds. For the second inequality in  (\ref{eq: roy model, dominating instrument at best and worst outcome}), we want to show 
	\[
	\frac{q^{001,\tilde{s}^*}}{Pr_F(Z_i=1)}\le \frac{q^{000,\tilde{s}^*}}{Pr_F(Z_i=0)}.
	\]
	Note that since $\frac{q^{001,{s}^*}}{Pr_F(Z_i=1)}\le \frac{q^{000,{s}^*}}{Pr_F(Z_i=0)}$ holds for $s^*$, and $q^{001,\tilde{s}^*}=q^{001,{s}^*}$, $q^{000,\tilde{s}^*}=q^{000,s^*}+C^{100}_0(s^*)+C^{010}_1(s^*)>q^{000,s^*}$, so $\frac{q^{001,\tilde{s}^*}}{Pr_F(Z_i=1)}\le \frac{q^{000,\tilde{s}^*}}{Pr_F(Z_i=0)}$ holds for $\tilde{s}^*$. The result follows.
\end{proof}

\begin{lem}\label{lem: appen, roy model, a particular construction}
	For any $s^*$ satisfying (\ref{eq: roy model, dominating instrument at best and worst outcome}) and $C_0^{100}({s}^*)=C_1^{010}({s}^*)=0$ and $F\in M^{s^*}(G^{s^*})$, consider a structure $\tilde{s}^*$:
	\begin{equation}\label{eq: append, roy model, particular const of s, 1}
	\begin{split}
	q^{000,\tilde{s}^*}=Pr_F(Y_i=0,Z_i=0),\quad &\quad q^{010,\tilde{s}^*}=Pr_F(Y_i=1,D_i=0,Z_i=0),\\
	q^{110,\tilde{s}^*}=0,\quad &\quad q^{100,\tilde{s}^*}=Pr_F(Y_i=1,D_i=1,Z_i=0),\\
	C^{000}_0(\tilde{s}^*)=C^{000}_1(\tilde{s}^*)=q^{000,\tilde{s}^*}/2, \quad &\quad C^{100}_0(\tilde{s}^*)=C^{010}_1(\tilde{s}^*)=C^{110}_0(\tilde{s}^*)=C^{110}_1(\tilde{s}^*)=0,\\
	C^{010}_0(\tilde{s}^*)=q^{010,\tilde{s}^*},\quad &\quad C^{100}_1(\tilde{s}^*)=q^{100,\tilde{s}^*},
	\end{split}	
	\end{equation}
	and 
	\begin{equation}\label{eq: append, roy model, particular const of s, 2}
	\begin{split}
	&C^{111}_0(\tilde{s}^*)=Pr_F(Y_i=1,D_i=0,Z_i=1),\quad \quad C^{111}_1(\tilde{s}^*)=Pr_F(Y_i=1,D_i=1,Z_i=1),\\
	&C^{101}_0(\tilde{s}^*)= C^{101}_0(s^*), \quad \quad C^{011}_1(\tilde{s}^*)=C^{011}_1({s}^*),\\
	&C^{101}_1(\tilde{s}^*)=C^{011}_0(\tilde{s}^*)=0,\\
	&C^{001}_0(\tilde{s}^*)=Pr_F(Y_i=0,D_i=0,Z_i=1)-C^{101}_0(\tilde{s}^*)=C^{001}_0({s}^*),\\ &C^{001}_1(\tilde{s}^*)=Pr_F(Y_i=0,D_i=1,Z_i=1)-C^{011}_1(\tilde{s}^*)=C^{001}_1({s}^*),\\
	&q^{jk1,\tilde{s}^*} = C^{jk1}_1(\tilde{s}^*)+C^{jk1}_0(\tilde{s}^*).
	\end{split}
	\end{equation}
	Then $F\in M^{\tilde{s}^*}(G^{\tilde{s}^*})$ and $G^{\tilde{s}^*}$ satisfies (\ref{eq: roy model, dominating instrument at best and worst outcome}). Moreover, $
	m^{EL}(s^*)=m^{EL}(\tilde{s}^*)$. 
\end{lem}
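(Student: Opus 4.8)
The plan is to verify, for the explicitly constructed structure $\tilde{s}^*$ in (\ref{eq: append, roy model, particular const of s, 1})--(\ref{eq: append, roy model, particular const of s, 2}), the three claimed properties in turn: that $F\in M^{\tilde{s}^*}(G^{\tilde{s}^*})$, that $G^{\tilde{s}^*}$ satisfies the dominating instrument condition (\ref{eq: roy model, dominating instrument at best and worst outcome}), and that the efficiency loss is preserved, $m^{EL}(\tilde{s}^*)=m^{EL}(s^*)$. Since $\tilde{s}^*$ is given by closed-form expressions in terms of $s^*$ and $F$, each property reduces to a bookkeeping check rather than an existence argument. I would first record that the construction leaves the $Z_i=1$ block essentially untouched in its observable content: the marginals $q^{jk1,\tilde{s}^*}$ are defined through the $C^{jk1}_d(\tilde{s}^*)$, which by (\ref{eq: append, roy model, particular const of s, 2}) either copy the corresponding quantities of $s^*$ (for $jk=10,01,00$) or reallocate the always-good / always-bad mass in a way that does not change the observed cell probabilities.

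To check $F\in M^{\tilde{s}^*}(G^{\tilde{s}^*})$ I would verify the defining identities of $M^{s^*}$ in (\ref{eq: Roy appli, M^s}) cell by cell, namely that for each $(y,d,z)$ the sum $(C^{y0z}_d+C^{y1z}_d)\mathbbm{1}(d=1)+(C^{0yz}_d+C^{1yz}_d)\mathbbm{1}(d=0)$ equals $Pr_F(Y_i=y,D_i=d,Z_i=z)$, and that $C^{jkz}_1+C^{jkz}_0=q^{jkz,\tilde{s}^*}$. For the $Z_i=0$ block this follows directly from the first three lines of (\ref{eq: append, roy model, particular const of s, 1}), where $q^{000,\tilde{s}^*}$, $q^{010,\tilde{s}^*}$, $q^{100,\tilde{s}^*}$ are matched to the observed cells and $q^{110,\tilde{s}^*}=0$; the sample computation already given in the proof of Lemma~\ref{lem: no efficiency loss for z=0} is the template. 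For the $Z_i=1$ block the identities in (\ref{eq: append, roy model, particular const of s, 2}) are arranged so that $C^{001}_d(\tilde{s}^*)=C^{001}_d(s^*)$ and the complier/always-taker masses add back to the observed $Pr_F(Y_i=y,D_i=d,Z_i=1)$; here I would lean on the hypothesis $F\in M^{s^*}(G^{s^*})$ to cancel the $s^*$-terms.

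For the dominating instrument condition I would compare $q^{110,\tilde{s}^*}=0$ against $q^{111,\tilde{s}^*}\ge 0$ to get the first inequality of (\ref{eq: roy model, dominating instrument at best and worst outcome}) for free, and compare $q^{001,\tilde{s}^*}=C^{001}_1+C^{001}_0$ against $q^{000,\tilde{s}^*}=Pr_F(Y_i=0,Z_i=0)$ for the second. The second inequality is where I would invoke that $s^*$ already generates $F$ and satisfies the testable implication $\mathcal{F}^{nf}$ of Proposition~\ref{prop: roy model, non-refutable and confirmable set}, i.e. $Pr_F(Y_i=0|Z_i=1)\le Pr_F(Y_i=0|Z_i=0)$; rewritten in terms of the $q$'s this is exactly $q^{001,\tilde{s}^*}/Pr_F(Z_i=1)\le q^{000,\tilde{s}^*}/Pr_F(Z_i=0)$. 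Finally, for $m^{EL}(\tilde{s}^*)=m^{EL}(s^*)$ I would note that the efficiency loss at $Z_i=0$ is zero for both structures (it is zero for $s^*$ by hypothesis and for $\tilde{s}^*$ because $C^{100}_0(\tilde{s}^*)=C^{010}_1(\tilde{s}^*)=0$ by construction), so only the $Z_i=1$ loss remains, and that is governed by $C^{101}_0$ and $C^{011}_1$, which are copied verbatim from $s^*$. The main obstacle I anticipate is purely clerical: keeping the eight superscript/subscript index combinations straight across the two $Z$-blocks and making sure the reallocations in (\ref{eq: append, roy model, particular const of s, 1}) genuinely conserve total probability (so that $G^{\tilde{s}^*}$ is a valid distribution with $\sum q^{jkz,\tilde{s}^*}=1$); I would handle this by tabulating the $q^{jkz}$ marginals before and after and confirming each column sum is unchanged.
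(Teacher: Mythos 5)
Most of your plan matches the paper's proof: the cell-by-cell verification that $F\in M^{\tilde{s}^*}(G^{\tilde{s}^*})$, the observation that $q^{110,\tilde{s}^*}=0$ makes the first inequality of (\ref{eq: roy model, dominating instrument at best and worst outcome}) automatic, and the efficiency-loss bookkeeping via $C^{101}_0$ and $C^{011}_1$ being copied from $s^*$ are all exactly what the paper does. But your argument for the \emph{second} inequality of (\ref{eq: roy model, dominating instrument at best and worst outcome}) has a genuine gap. You propose to invoke the testable implication (\ref{eq: roy model, testable implication}), $Pr_F(Y_i=0|Z_i=1)\le Pr_F(Y_i=0|Z_i=0)$, as if it were available because "$s^*$ already generates $F$." It is not available: the hypotheses of the lemma only give $C^{100}_0(s^*)=C^{010}_1(s^*)=0$ (no efficiency loss at $Z_i=0$) and $F\in M^{s^*}(G^{s^*})$, and the whole point of this construction is to cover outcome distributions $F$ that \emph{fail} (\ref{eq: roy model, testable implication}) — those are precisely the $F$ for which the minimal efficiency loss in Lemma \ref{lem: appen, roy model, min eff loss} is strictly positive. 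Under the stated hypotheses one only has
\begin{equation*}
Pr_F(Y_i=0|Z_i=1)=\frac{q^{001,s^*}+C^{101}_0(s^*)+C^{011}_1(s^*)}{Pr_F(Z_i=1)}\ \ge\ \frac{q^{001,s^*}}{Pr_F(Z_i=1)},
\end{equation*}
so the dominating condition on $s^*$ bounds the smaller quantity, not $Pr_F(Y_i=0|Z_i=1)$ itself, and your "rewritten in terms of the $q$'s this is exactly $q^{001,\tilde{s}^*}/Pr_F(Z_i=1)\le q^{000,\tilde{s}^*}/Pr_F(Z_i=0)$" conflates $Pr_F(Y_i=0,Z_i=1)$ with $q^{001,\tilde{s}^*}$; they differ by the efficiency-loss mass $C^{101}_0+C^{011}_1$.

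The correct route (the paper's) is to transfer the inequality directly from $s^*$ to $\tilde{s}^*$: the second inequality of (\ref{eq: roy model, dominating instrument at best and worst outcome}) is assumed (implicitly, from the context of Lemma \ref{lem: no efficiency loss for z=0}) to hold for $s^*$, i.e. $q^{001,s^*}/Pr(Z_i=1)\le q^{000,s^*}/Pr(Z_i=0)$; then one checks that the construction preserves both relevant marginals, $q^{001,\tilde{s}^*}=Pr_F(Y_i=0,Z_i=1)-C^{101}_0(s^*)-C^{011}_1(s^*)=q^{001,s^*}$ and $q^{000,\tilde{s}^*}=Pr_F(Y_i=0,Z_i=0)=q^{000,s^*}$ (the latter using $C^{100}_0(s^*)=C^{010}_1(s^*)=0$), so the same inequality holds verbatim for $\tilde{s}^*$. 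You should also note that the dominating condition on $s^*$ is a needed (if unstated) hypothesis; without it neither your argument nor any other can establish (\ref{eq: roy model, dominating instrument at best and worst outcome}) for $\tilde{s}^*$.
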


\begin{remark}
	This Lemma gives a `representative' of the structures $\tilde{s}^*$ in Lemma \ref{lem: no efficiency loss for z=0}. 
\end{remark}

\begin{proof}
	By the construction of $\tilde{s}^*$,
	\[
	\begin{split}
	&C_1^{j1z}(\tilde{s}^*)+C_{1}^{j0z}(\tilde{s}^*)=Pr_F(Y_i=j,D_i=1,Z_i=z)\\
	&C_0^{1kz}(\tilde{s}^*)+C_{0}^{0kz}(\tilde{s}^*)=Pr_F(Y_i=k,D_i=1,Z_i=z),\\
	\end{split}
	\]
	and $C^{jkz}_d\ge 0$ for all $j,k,z,d\in \{0,1\}$. Also, 
	\[
	\begin{split}
	m^{EL}(\tilde{s}^*)&=C^{100}_0(\tilde{s}^*)+C^{010}_1(\tilde{s}^*)+C^{101}_0(\tilde{s}^*)+C^{011}_1(\tilde{s}^*)\\
	&=C^{101}_0(\tilde{s}^*)+C^{011}_1(\tilde{s}^*)\\
	&=C^{101}_0(\tilde{s})+C^{011}_1(\tilde{s})=m^{EL}(s^*),
	\end{split}
	\]
	so it remains to check $G^{\tilde{s}^*}$ satisfies (\ref{eq: roy model, dominating instrument at best and worst outcome}). 
	
	Since by the construction of $\tilde{s}^*$,  $Pr_{G^{\tilde{s}^*}}(Y_i(1)=1,Y_i(0)=1,Z_i=0)=q^{110,\tilde{s}^*}=0$, so the first inequality in (\ref{eq: roy model, dominating instrument at best and worst outcome}) holds automatically. Also, since the second inequality in (\ref{eq: roy model, dominating instrument at best and worst outcome}) holds for $s^*$, we have 
	\[
	\begin{split}
	\frac{Pr_{G^{{s}^*}}(Y_i(1)=0,Y_i(0)=0,Z_i=1)}{Pr_{G^{{s}^*}}(Z_i=1)}
	\le  \frac{Pr_{G^{{s}^*}}(Y_i(1)=0,Y_i(0)=0,Z_i=0)}{Pr_{G^{{s}^*}}(Z_i=0)}.
	\end{split}
	\]
	By construction, $Pr_{G^{{s}^*}}(Y_i(1)=0,Y_i(0)=0,Z_i=0)=Pr_{G^{\tilde{s}^*}}(Y_i(1)=0,Y_i(0)=0,Z_i=0)$, and $Pr_{G^{\tilde{s}^*}}(Z_i=z)=Pr_{G^{{s}^*}}(Z_i=z)$. Also 
	\[
	\begin{split}
	Pr_{G^{\tilde{s}^*}}(Y_i(1)=0,Y_i(0)=0,Z_i=1)&=_{(1)}Pr_F(Y_i=0,Z_i=1)-C^{101}_0(\tilde{s}^*)-C^{011}_1(\tilde{s}^*)\\
	&=_{(2)} Pr_F(Y_i=0,Z_i=1)-C^{101}_0({s}^*)-C^{011}_1({s}^*)\\
	&=_{(3)} Pr_{G^{{s}^*}}(Y_i(1)=0,Y_i(0)=0,Z_i=1),
	\end{split}
	\]
	where (1) follows by the construction of $\tilde{s}^*$, (2) follows by $C^{101}_0(\tilde{s}^*)=C^{101}_0({s}^*)$ and $C^{011}_0(\tilde{s}^*)=C^{011}_0({s}^*)$, (3) follows by $F\in M^{s^*}(G^{s^*})$. Therefore, we have 
	\[
	\begin{split}
	\frac{Pr_{G^{\tilde{s}^*}}(Y_i(1)=0,Y_i(0)=0,Z_i=1)}{Pr_{G^{\tilde{s}^*}}(Z_i=1)}
	\le  \frac{Pr_{G^{\tilde{s}^*}}(Y_i(1)=0,Y_i(0)=0,Z_i=0)}{Pr_{G^{\tilde{s}^*}}(Z_i=0)}
	\end{split},
	\]
	which implies the second inequality in (\ref{eq: roy model, dominating instrument at best and worst outcome}) also holds $\tilde{s}^*$. The result follows.

\end{proof}

\begin{lem}\label{lem: appen, roy model, min eff loss}
	Given any $F\in\mathcal{F}$, for all structure $s^*$ that satisfies (\ref{eq: roy model, dominating instrument at best and worst outcome}) and $F\in M^{s^*}(G^{s^*})$, 
	\[m^{EL}(s^*)\ge \max\left\{0,{Pr_F(Y_i=0,Z_i=1)}-\frac{Pr_F(Y_i=0,Z_i=0)}{Pr_F(Z_i=0)}Pr(Z_i=1)\right\}. \]
	Moreover, there exists a structure $\tilde{s}^*$ achieves this lower bound. 
\end{lem}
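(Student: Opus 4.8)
The plan is to first collapse the efficiency loss to a single transparent quantity, obtain the bound from a two-line feasibility-plus-dominance argument, and finally exhibit a minimizer in two cases.

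\emph{Reduction.} Since $Y_i(1),Y_i(0)$ are binary, $\max\{Y_i(1),Y_i(0)\}=1-\mathbbm{1}(Y_i(1)=Y_i(0)=0)$, so from the definition (\ref{eq: Roy model, efficiency loss}) I would first rewrite
\[
m^{EL}(s^*)=Pr_F(Y_i=0)-Pr_{G^{s^*}}(Y_i(1)=Y_i(0)=0)=Pr_F(Y_i=0)-\big(q^{000,s^*}+q^{001,s^*}\big).
\]
Thus minimizing $m^{EL}(s^*)$ over admissible structures is exactly maximizing $q^{000,s^*}+q^{001,s^*}$, which is the pleasant form to work with.

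\emph{Lower bound.} I would bound the two terms separately. Summing the $d=0$ and $d=1$ feasibility identities from (\ref{eq: Roy appli, M^s}) at the cell $(Y_i=0,Z_i=0)$ gives $Pr_F(Y_i=0,Z_i=0)=q^{000,s^*}+C^{100}_0+C^{010}_1\ge q^{000,s^*}$, so $q^{000,s^*}\le Pr_F(Y_i=0,Z_i=0)$. The worst-outcome inequality of the dominance condition (\ref{eq: roy model, dominating instrument at best and worst outcome}) reads $q^{001,s^*}/Pr_F(Z_i=1)\le q^{000,s^*}/Pr_F(Z_i=0)$, i.e. $q^{001,s^*}\le \tfrac{Pr_F(Z_i=1)}{Pr_F(Z_i=0)}q^{000,s^*}$. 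Adding these,
\[
q^{000,s^*}+q^{001,s^*}\le\Big(1+\tfrac{Pr_F(Z_i=1)}{Pr_F(Z_i=0)}\Big)Pr_F(Y_i=0,Z_i=0)=\frac{Pr_F(Y_i=0,Z_i=0)}{Pr_F(Z_i=0)}.
\]
Substituting into the reduction and simplifying $Pr_F(Y_i=0)-Pr_F(Y_i=0,Z_i=0)/Pr_F(Z_i=0)$ produces $Pr_F(Y_i=0,Z_i=1)-\tfrac{Pr_F(Y_i=0,Z_i=0)Pr_F(Z_i=1)}{Pr_F(Z_i=0)}$, and combining with the trivial $m^{EL}(s^*)\ge 0$ yields the stated maximum.

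\emph{Achievability and the hard part.} Here I split into two cases. If the bound is nonpositive, then after clearing denominators it is equivalent to the testable implication (\ref{eq: roy model, testable implication}), so $F$ lies in the non-refutable set and the Roy structure built in the proof of Proposition \ref{prop: roy model, non-refutable and confirmable set} generates $F$ with $m^{EL}=0$, attaining the bound. If the bound is positive, I would construct $\tilde s^*$ making every inequality above an equality: using Lemma \ref{lem: no efficiency loss for z=0} to take zero efficiency loss at $Z_i=0$ (so $C^{100}_0=C^{010}_1=0$ and $q^{000}=Pr_F(Y_i=0,Z_i=0)$), setting $q^{001}=\tfrac{Pr_F(Z_i=1)}{Pr_F(Z_i=0)}Pr_F(Y_i=0,Z_i=0)$, and routing the residual mass $Pr_F(Y_i=0,Z_i=1)-q^{001}$ into $C^{101}_0+C^{011}_1$ via the $Z_i=1$ allocation of Lemma \ref{lem: appen, roy model, a particular construction}. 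The main obstacle is this construction: I must check that a single nonnegative allocation of the $C^{jkz}_d$ simultaneously reproduces both the $Z_i=0$ and $Z_i=1$ marginals of $F$, keeps the $Z_i=0$ loss at zero, saturates the worst-outcome dominance inequality (with the best-outcome inequality handled by taking $q^{110}=0$), and realizes exactly the residual loss. The delicate step is verifying nonnegativity of the split $C^{101}_0,C^{011}_1$ across the two observed $D_i$ cells, which follows because the available mass $Pr_F(Y_i=0,Z_i=1)$ exceeds the required residual $Pr_F(Y_i=0,Z_i=1)-q^{001}$; Lemmas \ref{lem: no efficiency loss for z=0} and \ref{lem: appen, roy model, a particular construction} carry most of this bookkeeping, leaving only the routine totaling of masses to confirm $G^{\tilde s^*}$ is a probability measure.
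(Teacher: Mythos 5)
Your proof is correct, and the lower-bound half takes a genuinely different and arguably cleaner route than the paper's. The paper first reduces to ``representative'' structures via Lemma \ref{lem: no efficiency loss for z=0} and Lemma \ref{lem: appen, roy model, a particular construction} and then poses the minimization of $C^{101}_0+C^{011}_1$ as an explicit three-constraint linear program whose optimal value is read off from the feasible region. You instead collapse the objective first: $m^{EL}(s^*)=Pr_F(Y_i=0)-q^{000,s^*}-q^{001,s^*}$, and then obtain the bound from exactly two constraints --- the feasibility identity $Pr_F(Y_i=0,Z_i=0)=q^{000,s^*}+C^{100}_0+C^{010}_1\ge q^{000,s^*}$ and the worst-outcome half of (\ref{eq: roy model, dominating instrument at best and worst outcome}) --- so no reduction to representative structures is needed for the inequality, and it is transparent which constraints bind at the optimum. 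What the paper's LP formulation buys is that it doubles as the derivation of the sharp identified set (\ref{eq: roy model, sharp indeitified set of G^s}) in Proposition \ref{prop: roy model, sharp characterization of identified set}; your argument isolates just the value of the minimum. For achievability you lean on the same two construction lemmas the paper uses, and your observation that only the total $C^{101}_0+C^{011}_1$ must equal the residual (with available mass $Pr_F(Y_i=0,D_i=0,Z_i=1)+Pr_F(Y_i=0,D_i=1,Z_i=1)=Pr_F(Y_i=0,Z_i=1)$ exceeding it, so a nonnegative split always exists) is actually slightly more careful than the paper's stated minimizer $(0,\max\{0,\cdot\})$, which need not respect the box constraint on $C^{101}_0$ when the residual is large. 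No gaps.
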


\begin{proof}
	By Lemma \ref{lem: no efficiency loss for z=0}, to find the lower bound of efficiency loss, it suffices to look at the class of structures that has no efficiency loss at $Z_i=0$, i.e. $C_0^{100}({s}^*)=C_1^{010}({s}^*)=0$. By Lemma \ref{lem: appen, roy model, a particular construction}, we can focus on the `representative' type of structures that satisfies (\ref{eq: append, roy model, particular const of s, 1}) and (\ref{eq: append, roy model, particular const of s, 2}) with $C^{101}_0(\tilde{s}^*)$ and $C^{011}_1(\tilde{s}^*)$ undetermined. Let $\tilde{s}^*$ be the representative structure in Lemma \ref{lem: appen, roy model, a particular construction}. Note that $\tilde{s}^*$ needs to be a structure, so in (\ref{eq: append, roy model, particular const of s, 2}), we must have 
	\[
	\begin{split}
	0\le C^{101}_0(\tilde{s}^*)\le Pr_F(Y_i=0,D_i=0,Z_i=1)\\
	0\le C^{011}_1(\tilde{s}^*)\le Pr_F(Y_i=0,D_i=1,Z_i=1).
	\end{split}	
	\]
	Also, $\tilde{s}^*$ has to satisfy  (\ref{eq: roy model, dominating instrument at best and worst outcome}). Since $q^{110,\tilde{s}^*}=0$, the first inequality in  (\ref{eq: roy model, dominating instrument at best and worst outcome}) holds automatically. The second inequality in (\ref{eq: roy model, dominating instrument at best and worst outcome}) requires  
	\[
	\frac{C^{000}_1(\tilde{s}^*)+C^{000}_0(\tilde{s}^*)}{Pr_F(Z_i=0)}\ge \frac{C^{001}_1(\tilde{s}^*)+C^{001}_0(\tilde{s}^*)}{Pr_F(Z_i=0)}.
	\]
	By the construction of $\tilde{s}^*$, this is equivalent to 
	\[
	\frac{Pr_F(Y_i=0,Z_i=0)}{Pr_F(Z_i=0)}\ge \frac{Pr_F(Y_i=0,Z_i=1)- C^{011}_1(\tilde{s}^*)-C^{101}_0(\tilde{s}^*)}{Pr_F(Z_i=1)}.
	\]
	The problem of finding the minimal efficiency loss becomes a linear programming problem:
	\begin{tcolorbox}	
		\begin{equation}
		\begin{split}
		&\quad \quad \quad \quad\quad \quad \quad \quad\quad \quad \quad \quad\min C^{011}_1(\tilde{s}^*)+C^{101}_0(\tilde{s}^*)\\
		&\quad \quad \quad \quad \quad \quad \quad \quad\quad \quad \quad \quad \quad \quad \quad \quad s.t. \\
		&0\le C^{101}_0(\tilde{s}^*)\le Pr_F(Y_i=0,D_i=0,Z_i=1),\\
		&0\le C^{011}_1(\tilde{s}^*)\le Pr_F(Y_i=0,D_i=1,Z_i=1),\\
		&C^{011}_1(\tilde{s}^*)+C^{101}_0(\tilde{s}^*)\ge  {Pr_F(Y_i=0,Z_i=1)}-\frac{Pr_F(Y_i=0,Z_i=0)}{Pr_F(Z_i=0)}Pr(Z_i=1).
		\end{split}
		\end{equation}
	\end{tcolorbox}
	Since 
	\[
	\begin{split}
	&\quad Pr_F(Y_i=0,D_i=1,Z_i=1)+Pr_F(Y_i=0,D_i=0,Z_i=1)\\ &-\left[{Pr_F(Y_i=0,Z_i=1)}-\frac{Pr_F(Y_i=0,Z_i=0)}{Pr_F(Z_i=0)}Pr(Z_i=1)\right]>0
	\end{split}\]
	always holds, so the feasible region is non-empty. This is a linear programming problem with bounded feasible set, so the minimal exists and can be achieved by some $C^{011}_1(\tilde{s}^*),C^{101}_0(\tilde{s}^*)$, which corresponds to the structure that achieves the minimal efficiency loss.  In particular, the minimum is achieved at the feasible point:
	\[
	(C^{011}_1(\tilde{s}^*),C^{101}_0(\tilde{s}^*))=\left(0,\max\left\{0,{Pr_F(Y_i=0,Z_i=1)}-\frac{Pr_F(Y_i=0,Z_i=0)}{Pr_F(Z_i=0)}Pr(Z_i=1)\right\}\right).
	\]
	The result follows. 
\end{proof}
\subsubsection{Main Proof}
\begin{proof}
	Lemma \ref{lem: no efficiency loss for z=0} shows that it the identified set for $G^{s^*}$ must satisfy $C^{010}_1(s^*)=C^{100}_0(s^*)=0$, i.e. no efficiency loss at $Z_i=0$. Lemma \ref{lem: appen, roy model, min eff loss} shows the minimal efficiency loss is $\max\{0,{Pr_F(Y_i=0,Z_i=1)}-\frac{Pr_F(Y_i=0,Z_i=0)}{Pr_F(Z_i=0)}Pr(Z_i=1)\}$. The result in Proposition (\ref{prop: roy model, sharp characterization of identified set}) follows by imposing the minimal efficiency loss and condition (\ref{eq: roy model, dominating instrument at best and worst outcome}).
\end{proof}

\subsection{Proof of Corollary \ref{Corollary: Roy model, ID set of $Pr(Y_i(1)=1|Z_i=z)$}}
\begin{proof}
	First note that $Pr(Y_i(1)=1|Z_i=z)=\frac{C^{11z}_1+C^{11z}_0+C^{10z}_1+C^{10z}_0}{Pr_F(Z_i=z)}$, and 
	\[
	C^{11z}_1+C^{10z}_1=Pr_F(Y_i=1,D_i=1,Z_i=z),
	\]
	so the following inequalities holds by $C^{11z}_0\ge 0$ and $C^{10z}_0\ge 0$:
	\[
	\begin{split}
	Pr(Y_i=1,D_i=1|Z_i=0)\le &Pr(Y_i(1)=1|Z_i=0),\\
	Pr(Y_i=1,D_i=1|Z_i=1)\le &Pr(Y_i(1)=1|Z_i=1).\\
	\end{split}
	\]
	Given the identified set for $G^{s^*}$ in Proposition \ref{prop: roy model, sharp characterization of identified set}, we know $C^{100}_0=0$,  $C^{110}_0\le Pr_F(Y_i=1,D_i=1,Z_i=0)$, and $C^{111}_0\le Pr_F(Y_i=1,D_i=1,Z_i=1)$, $C^{101}_0\le m^{EL,min}(F)$, so the following inequalities hold:
	\[
	\begin{split}
	&Pr(Y_i(1)=1|Z_i=0)\le Pr(Y_i=1|Z_i=0),\\
	&Pr(Y_i(1)=1|Z_i=1)\le Pr(Y_i=1|Z_i=1)+ \frac{m^{EL,min}(F)}{Pr(Z_i=1)}.\\
	\end{split}
	\]
	To show the display in the corollary is sharp, it suffices to show that the bounds can be achieved by some $G^{s^*}$ in the identified set. The $G^{s^*}$ that can achieve the bounds are listed in the following table. The column title denotes the bound we try to achieve in Corollary \ref{Corollary: Roy model, ID set of $Pr(Y_i(1)=1|Z_i=z)$}, and column entries specify the value of $C^{y_1,y_0,z}_d$ that will achieve the bound. 
	
	\begin{table}[H]
		\footnotesize
		\begin{tabular}{lll}
			& $Pr_F(Y_i=1,D_i=1|Z_i=0)= Pr_G(Y_i(1)=1|Z_i=0)$ & $Pr_F(Y_i=1|Z_i=0)=Pr_G(Y_i(1)=1|Z_i=0)$ \\ \hline
			$C^{000}_0$ & $Pr_F(Y_i=0,Z_i=0)/2$                       & $Pr_F(Y_i=0,Z_i=0)/2$                \\
			$C^{000}_1$ & $Pr_F(Y_i=0,Z_i=0)/2$                       & $Pr_F(Y_i=0,Z_i=0)/2$                \\
			$C^{010}_0$ & $Pr_F(Y_i=1,D_i=0,Z_i=0)$                   & 0                                    \\
			$C^{010}_1$ & 0                                           & 0                                    \\
			$C^{100}_0$ & 0                                           & 0                                    \\
			$C^{100}_1$ & $Pr_F(Y_i=1,D_i=1,Z_i=0)$                   & $Pr_F(Y_i=1,D_i=1,Z_i=0)$            \\
			$C^{110}_0$ & 0                                           & $Pr_F(Y_i=1,D_i=0,Z_i=0)$            \\
			$C^{110}_1$ & 0                                           & 0                                    \\
			$C^{001}_0$ & $Pr_F(Y_i=0,Z_i=0)/2-m^{EL,min}(F)/2$          & $Pr_F(Y_i=0,Z_i=0)/2-m^{EL,min}(F)/2$   \\
			$C^{001}_1$ & $Pr_F(Y_i=0,Z_i=0)/2-m^{EL,min}(F)/2$          & $Pr_F(Y_i=0,Z_i=0)/2-m^{EL,min}(F)/2$   \\
			$C^{011}_0$ & 0                                           & 0                                    \\
			$C^{011}_1$ & 0                                           & 0                                    \\
			$C^{101}_0$ & $m^{EL,min}(F)$                                & $m^{EL,min}(F)$                         \\
			$C^{101}_1$ & 0                                           & 0                                    \\
			$C^{111}_0$ & $Pr_F(Y_i=1,D_i=0,Z_i=1)$                   & $Pr_F(Y_i=1,D_i=0,Z_i=1)$            \\
			$C^{111}_1$ & $Pr_F(Y_i=1,D_i=1,Z_i=1)$                   & $Pr_F(Y_i=1,D_i=1,Z_i=1)$           
		\end{tabular}
		
	\end{table}
	\begin{table}[H]
		\footnotesize
		\begin{tabular}{lll}
			& $Pr_F(Y_i=1,D_i=1|Z_i=1)= Pr_G(Y_i(1)=1|Z_i=1)$ & $Pr_F(Y_i=1|Z_i=1)+ \frac{m^{EL,min}(F)}{Pr(Z_i=1)}=Pr_G(Y_i(1)=1|Z_i=1)$ \\ \hline
			$C^{000}_0$ & $Pr_F(Y_i=0,Z_i=0)/2$                       & $Pr_F(Y_i=0,Z_i=0)/2$                                              \\
			$C^{000}_1$ & $Pr_F(Y_i=0,Z_i=0)/2$                       & $Pr_F(Y_i=0,Z_i=0)/2$                                              \\
			$C^{010}_0$ & $Pr_F(Y_i=1,D_i=0,Z_i=0)$                   & $Pr_F(Y_i=1,D_i=0,Z_i=0)$                                          \\
			$C^{010}_1$ & 0                                           & 0                                                                  \\
			$C^{100}_0$ & 0                                           & 0                                                                  \\
			$C^{100}_1$ & $Pr_F(Y_i=1,D_i=1,Z_i=0)$                   & $Pr_F(Y_i=1,D_i=1,Z_i=0)$                                          \\
			$C^{110}_0$ & 0                                           & 0                                                                  \\
			$C^{110}_1$ & 0                                           & 0                                                                  \\
			$C^{001}_0$ & $Pr_F(Y_i=0,Z_i=0)/2-m^{EL,min}(F)/2$          & $Pr_F(Y_i=0,Z_i=0)/2-m^{EL,min}(F)/2$                                 \\
			$C^{001}_1$ & $Pr_F(Y_i=0,Z_i=0)/2-m^{EL,min}(F)/2$          & $Pr_F(Y_i=0,Z_i=0)/2-m^{EL,min}(F)/2$                                 \\
			$C^{011}_0$ & $Pr_F(Y_i=1,D_i=0,Z_i=1)$                   & 0                                                                  \\
			$C^{011}_1$ & $m^{EL,min}(F)$                                & 0                                                                  \\
			$C^{101}_0$ & 0                                           & $m^{EL,min}(F)$                                                       \\
			$C^{101}_1$ & 0                                           & 0                                                                  \\
			$C^{111}_0$ & 0                                           & $Pr_F(Y_i=1,D_i=0,Z_i=1)$                                          \\
			$C^{111}_1$ & $Pr_F(Y_i=1,D_i=1,Z_i=1)$                   & $Pr_F(Y_i=1,D_i=1,Z_i=1)$                                         
		\end{tabular}
	\end{table}

\end{proof}

\pagebreak
\section*{Appendices and Auxiliary Results for Online Publication}

\section{Auxiliary Lemmas and Proofs} 

\subsection{Lemmas Used in the Main Proofs}
\begin{lem}\label{lem: existence of positive measure set }
	Let $p(y,1)$ and $q(y,1)$ be the Radon-Nikodym derivatives defined in (\ref{eq: R-D derivatives of P Q}). Suppose there exists a set $B_1$ such that $\mu_F(B_1)>0$ and $q(y,1)-p(y,1)>0$ $\forall y\in B_1$.
	Then there exists a measurable set $B_1'\subseteq B_1$ with $\mu_F(B_1')>0$ such that $\int_{B_1'} q(y,1)-p(y,1) d\mu_F>0 $.
\end{lem}

\begin{proof}
	Since the Radon-Nikodym derivatives are measurable functions, the level set 
	\[B_1^t=\{y\in \mathcal{Y}: q(y,1)-p(y,1)\ge t\}\]
	is measurable. Consider the sequence of nested level set $\{B_1^{1/n}\}_{n=1}^\infty$, and we have $\mathbbm{1}(y\in B_1^{1/n})\rightarrow_{p.w.} \mathbbm{1}(y\in B_1)$. The dominated convergence theorem implies $\mu_F(B_1^{1/n})\rightarrow \mu_F(B_1)$. 
	
	Suppose there is no $\mu_F$-positively measured set $B_1'$ that satisfies the condition $\int_{B_1'} q(y,1)-p(y,1) d\mu_F>0 $, then either of the following conditions hold
	\begin{enumerate}
		\item $\int_{B_1^{1/n}} q(y,1)-p(y,1) d\mu_F=0$ for all $n$;
		\item $\mu_F(B_1^{1/n})=0$ for all $n$.
	\end{enumerate}
	We first show condition 1 above implies condition 2. By condition 1 and the definition of $B_1^{-1/n}$, we have
	\[
	\int_{B_1^{1/n}} q(y,1)-p(y,1) d\mu_F \ge \frac{1}{n}\mu_F(B_1^{1/n}).
	\]
 If  $\int_{B_1^{1/n}} q(y,1)-p(y,1) d\mu_F=0$ holds, then $\mu_F(B_1^{1/n})=0$ must hold for all $n$.
 
 Now, we can take the limit of $n\rightarrow \infty$ and use the dominated convergence theorem to show $\mu_F(B_1)=\lim_{n\rightarrow \infty} \mu_F(B_1^{-1/n})=0 $. This contradicts $\mu_F(B_1)>0$. 
	
\end{proof}

\begin{lem} \label{lem: auxiliary, criteria for exclusion restriction}
	Let $X,Y,Z$ be real random variables. Let $G$ be a probability measure of $X,Y$ and let $F$ the distribution of a real random variable $Z$. Suppose
	\[
	Pr_G(X\in B_x,Y\in B_y)= Pr_F(Z\in B_x\cap B_y)
	\]
	holds for all measurable set  $B_x,B_y$, then $X=Y$ holds $G$-a.s..
\end{lem}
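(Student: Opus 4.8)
The plan is to prove that the joint law $G$ of $(X,Y)$ is concentrated on the diagonal $\{X=Y\}$. The strategy has two parts: first extract the marginals of $X$ and $Y$ from the rectangle identity, and then use that identity once more to show that every ``off-diagonal'' rectangle carries no $G$-mass.

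First I would specialize the hypothesis. Taking $B_y=\mathbb{R}$ gives $Pr_G(X\in B_x)=Pr_F(Z\in B_x)$ for all measurable $B_x$, and symmetrically taking $B_x=\mathbb{R}$ gives $Pr_G(Y\in B_y)=Pr_F(Z\in B_y)$; thus both marginals of $G$ coincide with $F$. Next, taking $B_x=B_y=B$ for an arbitrary measurable set $B$ yields $Pr_G(X\in B,\,Y\in B)=Pr_F(Z\in B\cap B)=Pr_F(Z\in B)$.

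The key step is then the elementary decomposition $Pr_G(X\in B)=Pr_G(X\in B,\,Y\in B)+Pr_G(X\in B,\,Y\in B^c)$. Substituting the two computations above gives $Pr_G(X\in B,\,Y\in B^c)=Pr_F(Z\in B)-Pr_F(Z\in B)=0$, and the symmetric computation (using the $Y$-marginal together with $Pr_G(X\in B,\,Y\in B)=Pr_G(Y\in B,\,X\in B)$) gives $Pr_G(Y\in B,\,X\in B^c)=0$, for every measurable $B$.

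Finally I would pass from rectangles to the diagonal by a countable covering. Since $\{X<Y\}\subseteq\bigcup_{q\in\mathbb{Q}}\{X\in(-\infty,q),\,Y\in[q,\infty)\}$, and each event on the right is of the form $\{X\in B_q,\,Y\in B_q^c\}$ with $B_q=(-\infty,q)$, countable subadditivity together with the vanishing established above gives $Pr_G(X<Y)=0$; by the symmetric estimate $Pr_G(X>Y)=0$, so $Pr_G(X\neq Y)=0$, i.e. $X=Y$ holds $G$-almost surely. I do not expect a genuine obstacle here; the only point requiring care is this reduction to a countable family of rectangles, where one must use the density of $\mathbb{Q}$ and the correct choice of half-line so that the strict inequality event $\{X<Y\}$ is actually covered.
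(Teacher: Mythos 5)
Your proof is correct and follows essentially the same route as the paper's: both arguments reduce the claim to showing that rectangles $\{X\in B,\,Y\in B'\}$ with $B\cap B'=\emptyset$ carry no $G$-mass, and then cover the off-diagonal set by countably many such rectangles indexed by rationals and invoke countable subadditivity. Your detour through the marginals in the first two steps is logically superfluous (the hypothesis applied to $B_x=B$, $B_y=B^c$ gives $Pr_G(X\in B,\,Y\in B^c)=Pr_F(Z\in\emptyset)=0$ directly), but it is harmless, and your single-parameter covering $\bigcup_{q\in\mathbb{Q}}\{X<q\le Y\}$ is a clean variant of the paper's doubly-indexed one.
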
  
\begin{proof}
  	Let $\{x_n\}_{n=1}^\infty$ and $\{y_m\}_{m=1}^\infty$ be the exhausting lists of rational numbers. Then 
  	\[
  	\{(x,y):x\ne y\} \subseteq [\cup_{x_n>y_m} (x_n,\infty)\times (-\infty,y_m)]\cup [\cup_{x_n<y_m} (-\infty,x_n)\times (y_n,\infty)].
  	\]
	Indeed, for any pair of $(x,y)$, if $x>y$  and $x-y=\epsilon>0$, we can find rational numbers $x_n$ and $y_m$ such that $
	x>x_n-\epsilon/3>y_m+\epsilon/3>y
	$
	and $(x,y)\in (x_n,\infty)\times (-\infty,y_m)$. Symmetric arguments hold for $x<y$. Then 
	{\footnotesize
	\[
	\begin{split}
		Pr_{G}(X\ne Y)&= Pr_{G}((X,Y)\in \{\{(x,y):x\ne y\}\})\\
		&\le \sum_{x_n>y_m} Pr_{G}(X\in (x_n,\infty),Y_i\in (-\infty,y_m))\\
		&+\sum_{x_n<y_m} Pr_{G}(X\in (-\infty,x_n),Y_i\in (y_m,\infty))\\
		&=_{(1)}\sum_{x_n>y_m}Pr_F(Z\in (x_n,\infty)\cap(-\infty,y_m))\\
		&+ \sum_{x_n<y_m} Pr_F(Z\in (-\infty,x_n)\cap (y_m,\infty))\\
		&=_{(2)}\sum_{x_n<y_m} 0 +\sum_{x_n>y_m} 0=_{(3)}0,\\
	\end{split}
	\]}
	where equality (1) holds by the assumption of the Lemma \ref{lem: auxiliary, criteria for exclusion restriction}, (2) holds because $ (x_n,\infty)\cap(-\infty,y_m)=\varnothing $ when $x_n>y_m$, (3) holds because the summation over  $x_n<y_m$ and $x_n>y_m$ is countable, and a countable summation of zero is zero.

\end{proof}

\begin{lem}
	(Gine and Guillou, 2002) Let $\mathcal{G}$ be a measurable uniformly bounded VC class of functions, such that 
	\[N(\mathcal{G},L_2(P),\tau ||G||_{L_2(P)})\le \left(\frac{A}{\tau }^v\right),\]
	and let $\sigma$ and $U$ be the number such that $\sigma^2\ge \sup_{g\in\mathcal{G}} Var_p g$ and $U\ge \sup_{g\in \mathcal{G}} ||g||_\infty$, and $0<\sigma < U/2$, $\sqrt{n}\sigma\ge U\sqrt{\frac{U}{\log \sigma}}$. Then there exist constant $L$, $C$ that depends on $A$ and $v$ only such that 
	\begin{equation} \label{eq: maximal inequality from Gine and Guillou}
	\begin{split}
	Pr \Bigg( &\sup_{g\in\mathcal{G}}\left| \sum_{i=1}^n g(x_i)-Eg(x_i)\right|>C\sigma \sqrt{n}\sqrt{\log\frac{U}{\sigma}} \Bigg)\\
	&\le L \exp \left\{-\frac{C\log (1+C/(4L))}{L}\log \frac{U}{\sigma }\right\}.
	\end{split}
	\end{equation}
\end{lem}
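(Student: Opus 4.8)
The plan is to treat this as the classical exponential maximal inequality for the empirical process indexed by a uniformly bounded VC class, and to assemble it from three standard ingredients: symmetrization, a chaining bound on the expected supremum, and Talagrand's concentration inequality. Write $\|P_n - P\|_{\mathcal{G}} = \sup_{g\in\mathcal{G}}\left|\sum_{i=1}^n (g(x_i) - Eg)\right|$ for the quantity of interest. Since $\mathcal{G}$ is assumed measurable, I would work with a countable dense subclass and pass to the limit, so that the supremum is well defined and all the standard maximal inequalities apply verbatim.

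First I would bound the expected supremum $E\|P_n - P\|_{\mathcal{G}}$. By the symmetrization inequality this is controlled up to a factor $2$ by the Rademacher average $E\sup_{g}\left|\sum_i \varepsilon_i g(x_i)\right|$, and Dudley's entropy integral together with the polynomial uniform covering number hypothesis $N(\mathcal{G}, L_2(Q), \tau\|G\|_{L_2(Q)}) \le (A/\tau)^v$ gives, conditionally on the data, a bound of order $\sqrt{n}\,\sigma\sqrt{\log(U/\sigma)}$. The factor $\sqrt{\log(U/\sigma)}$ is precisely the Dudley integral $\int_0^{\sigma} \sqrt{\log(A\|G\|/\tau)}\,d\tau$ evaluated at the variance proxy $\sigma$, and the regime hypothesis $\sqrt{n}\sigma \ge U\sqrt{U/\log\sigma}$ guarantees that the fluctuation term dominates the bounded-range term, so that no extra $U\log(U/\sigma)$ summand survives. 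The output of this step is a constant $C'=C'(A,v)$ with $E\|P_n-P\|_{\mathcal{G}} \le C'\sigma\sqrt{n}\sqrt{\log(U/\sigma)}$.

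Next I would invoke Talagrand's concentration inequality for the supremum of a bounded empirical process: for $t>0$,
\begin{equation*}
Pr\left(\|P_n - P\|_{\mathcal{G}} \ge E\|P_n - P\|_{\mathcal{G}} + t\right) \le L\exp\left(-\frac{t}{LU}\log\left(1 + \frac{tU}{n\sigma^2 + U\,E\|P_n-P\|_{\mathcal{G}}}\right)\right).
\end{equation*}
I would then choose the constant $C$ large enough that the threshold $C\sigma\sqrt{n}\sqrt{\log(U/\sigma)}$ exceeds $2E\|P_n-P\|_{\mathcal{G}}$, so that taking $t = C\sigma\sqrt{n}\sqrt{\log(U/\sigma)} - E\|P_n-P\|_{\mathcal{G}} \ge \tfrac{1}{2}C\sigma\sqrt{n}\sqrt{\log(U/\sigma)}$ is admissible. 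Substituting this $t$ and using the regime condition once more to simplify both the prefactor $t/(LU)$ and the logarithmic argument (the ratio $tU/(n\sigma^2+\cdots)$ is bounded below by an absolute constant), the prefactor reduces to a constant multiple of $\log(U/\sigma)$ and the inner logarithm to a constant, which yields precisely the claimed bound $L\exp\{-\tfrac{C\log(1+C/(4L))}{L}\log(U/\sigma)\}$.

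The main obstacle will be bookkeeping rather than conceptual: matching the constants so that the abstract Talagrand form collapses to the exact exponent displayed, and verifying that the regime hypothesis is strong enough to push both the entropy bound into its variance-dominated form and the Talagrand exponent into its Gaussian (rather than Poissonian) tail. A secondary technical point is the reduction to a countable class and the selection of the particular version of Talagrand's inequality carrying explicit constants; since the statement is quoted verbatim from Giné and Guillou (2002), I would cite their formulation directly, as re-deriving the sharp constants is exactly the content of that reference.
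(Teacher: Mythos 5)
The paper offers no proof of this lemma at all --- it is imported verbatim as a known result from Gin\'e and Guillou (2002) and used as a black box in the proof of Lemma \ref{lem: strong convergence from Gine and Guillou} --- so your concluding decision to cite that reference directly is exactly what the paper does. Your sketch of the underlying argument (symmetrization plus a Dudley/VC entropy bound on the expected supremum, followed by Talagrand's concentration inequality in Bennett form, with the regime condition used to keep the variance term dominant and to collapse the exponent to a constant multiple of $\log(U/\sigma)$) is the correct standard route and is precisely the content of the cited reference, so no gap remains.
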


\begin{lem}
	(Montgomery-Smith's Maximal Inequality)
	\begin{equation}
	Pr\Bigg( \max_{k\le n} sup_{g\in\mathcal{G}}\left| \sum_{i=1}^k g(x_i)-Eg(x_i)\right|>t \Bigg)\le 9 Pr\Bigg( sup_{g\in\mathcal{G}}\left| \sum_{i=1}^n g(x_i)-Eg(x_i)\right|>t/30 \Bigg).
	\end{equation}
\end{lem}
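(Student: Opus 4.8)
The statement is the empirical-process form of the Montgomery-Smith maximal inequality, so the plan is to recast it as the classical Banach-space-valued version and invoke that result, sketching how the classical version is established. Write $X_i = g(x_i)-Eg(x_i)$, regarded as an element of the Banach space $B=\ell^\infty(\mathcal{G})$ equipped with the norm $\|h\|_{\mathcal{G}}=\sup_{g\in\mathcal{G}}|h(g)|$; the measurability and VC hypotheses carried through this section guarantee the relevant suprema are measurable, so that $S_k=\sum_{i=1}^k X_i$ are genuine $B$-valued random elements with $\|S_k\|_{\mathcal{G}}=\sup_{g\in\mathcal{G}}|\sum_{i\le k}(g(x_i)-Eg(x_i))|$. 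The $X_i$ are i.i.d. and centered, $EX_i=0$. The target is then exactly $P(\max_{k\le n}\|S_k\|_{\mathcal{G}}>t)\le 9\,P(\|S_n\|_{\mathcal{G}}>t/30)$, the universal-constant maximal inequality of Montgomery-Smith (1993) for i.i.d. summands.

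\textbf{Skeleton via first passage.} Let $\tau=\min\{k\le n:\|S_k\|_{\mathcal{G}}>t\}$, with $\tau=\infty$ otherwise, so that $\{\max_{k\le n}\|S_k\|_{\mathcal{G}}>t\}=\{\tau\le n\}$. Conditioning on $\{\tau=k\}$, which depends only on $X_1,\dots,X_k$, and writing $S_n=S_k+(S_n-S_k)$ with $S_n-S_k$ an independent sum of $n-k$ i.i.d. copies, the natural attempt is to argue that with probability bounded below the future increment does not destroy the norm already accumulated, i.e. $\|S_n\|_{\mathcal{G}}\gtrsim t$. Were the $X_i$ symmetric this would be Lévy's inequality and would yield the factor $2$; the whole difficulty is that here they are only i.i.d.

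\textbf{The comparison principle (the hard part).} The genuine content is a comparison principle: for i.i.d. centered summands, every partial sum has a tail comparable, up to universal constants, to that of the full sum, $P(\|S_k\|_{\mathcal{G}}>s)\le c\,P(\|S_n\|_{\mathcal{G}}>s/c)$ uniformly in $k\le n$. This is not automatic, since adjoining further i.i.d. terms can cause cancellation. The plan to prove it follows Montgomery-Smith: embed the sequence into $2n$ i.i.d. copies and exploit exchangeability to relocate the partial block $S_k$ to any set of $k$ coordinates; combine this with a symmetrization step (where $S_n-S_n'$ is symmetric and Lévy's inequality applies) and with the lower bound that a sum of two independent copies retains a fixed fraction of a single copy's tail, thereby transferring cancellation control from $S_n$ back to $S_k$. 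I expect this desymmetrization/comparison step to be the main obstacle, because the median shifts introduced when symmetrizing non-symmetric variables are precisely what force the loss from Lévy's constant $2$ up to the constants $9$ and $30$.

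\textbf{Assembly.} Finally I would feed the comparison principle into an Ottaviani-type first-passage bound: on $\{\tau=k\}$ the remaining increment $S_n-S_k$ is independent of $\tau$, and the comparison principle guarantees that with probability at least a universal constant the full sum exceeds $t/30$; summing over $k$ and tracking the accumulated constants yields the factor $9$ together with the threshold $t/30$. Since the result is standard, within the paper it suffices to cite Montgomery-Smith (1993) (equivalently, the maximal inequality in van der Vaart and Wellner), and the argument above is the route one would reproduce if a self-contained proof were desired.
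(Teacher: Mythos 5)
The paper states this lemma without proof, treating it as the known Montgomery--Smith (1993) maximal inequality for i.i.d.\ Banach-space-valued summands, which is exactly what you conclude; your sketch of the standard argument (first-passage decomposition plus the partial-sum/full-sum tail comparison via exchangeability, symmetrization, and L\'evy's inequality) is the correct route and matches the constants $9$ and $30$. Since the paper simply cites the result, your proposal is consistent with its treatment and adds a faithful outline of the underlying proof.
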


\begin{lem}\label{lem: strong convergence from Gine and Guillou}
	Let $h_n=n^{-\gamma}$ for some $\gamma\in (0,1)$, such that  $\frac{nh_n}{|\log h_n|}\rightarrow \infty$.  Denote the estimator and its expectation of the estimator as 
	\begin{equation}\label{eq: auxiliary lem: f_n and f_nbar}
	\begin{split}
	f^{l,m}_n(y)&\equiv \frac{1}{nh_n}\sum_{i=1}^n K\left(\frac{Y_i-y}{h_n}\right)\mathbbm{1}(D_i=l,Z_i=m)\\
	\bar{f}^{l,m}_n(y)&\equiv\frac{1}{h_n} E\left[K\left(\frac{Y_i-y}{h_n}\right)\mathbbm{1}(D_i=l,Z_i=m)\right ].
	\end{split}
	\end{equation}
	Then, the following uniform bounds holds for some constant $\bar{C}$:
	\begin{equation}
	\lim\sup_{n\rightarrow \infty} \left(\sup_y \sqrt{\frac{nh_n}{\log h_n^{-1}}}|f_n^{l,m}(y)-\bar{f}^{l,m}_n(y)|\right) \le \bar{C}\quad a.s..
	\end{equation}
\end{lem}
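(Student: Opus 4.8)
The plan is to recognize this as a standard uniform-in-bandwidth almost sure convergence rate for a kernel-type empirical process, and to obtain it by combining the Giné--Guillou exponential inequality (\ref{eq: maximal inequality from Gine and Guillou}), Montgomery--Smith's maximal inequality, and a Borel--Cantelli argument over dyadic blocks. First I would rewrite the centered statistic as a normalized empirical process: setting $g_{n,y}(Y,D,Z)=K\!\left(\frac{Y-y}{h_n}\right)\mathbbm{1}(D=l,Z=m)$, we have $f_n^{l,m}(y)-\bar f_n^{l,m}(y)=\frac{1}{nh_n}\sum_{i=1}^n\bigl(g_{n,y}(Y_i,D_i,Z_i)-E g_{n,y}\bigr)$, so that $\sup_y|f_n^{l,m}-\bar f_n^{l,m}|=\frac{1}{nh_n}\sup_{g\in\mathcal{G}_n}\bigl|\sum_{i=1}^n(g-Eg)\bigr|$ with $\mathcal{G}_n=\{g_{n,y}:y\in\mathbb{R}\}$. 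The target rate $\sqrt{nh_n/\log h_n^{-1}}$ is then exactly the normalization that turns the Giné--Guillou threshold into a constant.

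Next I would verify the hypotheses of the exponential inequality uniformly in $n$. Because $K$ is continuous with compact support (Assumption \ref{assumption: kernel }) it has bounded variation, so the collection of its translates and rescalings is a uniformly bounded VC-subgraph class; multiplying by the fixed indicator $\mathbbm{1}(D=l,Z=m)$ preserves this, giving a uniform envelope $U=\|K\|_\infty$ and VC constants $(A,v)$ that do not depend on $n$. For the variance proxy, boundedness of the conditional densities (Assumption \ref{assumption: density}) yields $\sup_{g\in\mathcal{G}_n}\operatorname{Var}(g)\le E[g^2]\le \|K\|_\infty\int K\!\left(\frac{y-\cdot}{h_n}\right)^2 f\,\lesssim h_n$, so I may take $\sigma_n^2=c'h_n$. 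The side conditions $0<\sigma_n<U/2$ and $\sqrt n\,\sigma_n\ge U\sqrt{U/\log\sigma_n^{-1}}$ hold for all large $n$ precisely because $nh_n/|\log h_n|\to\infty$.

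Then I would apply (\ref{eq: maximal inequality from Gine and Guillou}) with these $\sigma_n,U$: its right-hand threshold is $C\sigma_n\sqrt n\sqrt{\log(U/\sigma_n)}\asymp\sqrt{nh_n\log h_n^{-1}}$, and dividing by $nh_n$ shows $\sup_y|f_n-\bar f_n|\lesssim\sqrt{\log h_n^{-1}/(nh_n)}$ on the complement of an event whose probability is at most $L\exp\{-c(C)\log(U/\sigma_n)\}\asymp h_n^{\,c(C)/2}=n^{-\gamma c(C)/2}$. To upgrade ``with high probability'' to an a.s. $\limsup$ bound under the $n$-dependent normalization, I would block $\mathbb N$ into dyadic intervals $[2^j,2^{j+1})$, use Montgomery--Smith's inequality to reduce the maximum over $n$ within a block to a single empirical process at the block endpoint (here $h_n$ is comparable across a block since $h_{2^j}/h_{2^{j+1}}=2^{\gamma}$ is constant), apply (\ref{eq: maximal inequality from Gine and Guillou}) at that endpoint, and choose the constant $C$ large enough that $\gamma c(C)/2>1$, making the block probabilities summable over $j$. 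Borel--Cantelli then gives that a.s. only finitely many blocks violate the bound, and taking $\bar C$ proportional to $C\sqrt{c'}$ yields $\limsup_n\sup_y\sqrt{nh_n/\log h_n^{-1}}\,|f_n^{l,m}-\bar f_n^{l,m}|\le\bar C$ a.s.

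The main obstacle I anticipate is not any single estimate but the bookkeeping required to make the almost-sure statement rigorous while the function class $\mathcal{G}_n$, the variance proxy $\sigma_n$, and the normalization all move with $n$: one must verify the VC/entropy bound and the envelope uniformly in $n$, confirm that $h_n$ is genuinely comparable within each dyadic block so that a single Montgomery--Smith reduction controls the whole block, and, most delicately, choose the Giné--Guillou constant $C$ large enough to force summability of the tail bounds without the resulting limit constant $\bar C=\bar C(C)$ becoming vacuous. Checking the precise side conditions on $(\sigma_n,U)$ in (\ref{eq: maximal inequality from Gine and Guillou}) for all large $n$ is the other place where the growth condition $nh_n/|\log h_n|\to\infty$ must be used carefully.
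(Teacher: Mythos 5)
Your proposal is correct and follows essentially the same route as the paper's proof: recast the centered kernel statistic as a normalized empirical process over a uniformly bounded VC class with variance proxy $\sigma_n^2\asymp h_n$, apply the Gin\'e--Guillou exponential inequality together with Montgomery--Smith's maximal inequality over dyadic blocks (where the bandwidth is comparable), and conclude by Borel--Cantelli after choosing the constant large enough for summability. The only cosmetic difference is that the paper absorbs the within-block bandwidth range directly into the VC class $\{K((t-y)/h):h_{2^k}\le h\le h_{2^{k-1}}\}$ rather than arguing comparability separately, which changes nothing of substance.
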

\begin{proof}
	We prove the a.s. convergence result for $l=m=1$ and omit the superscript $l,m$ in $f_n$ and $\bar{f}_n$, and the rest inequalities hold similarly. Use Montgomery-Smith's Maximal inequality, we have 
	\begin{equation}
	\begin{split}
	&Pr\Bigg( \max_{2^{k-1}\le n\le 2^k } \sqrt{\frac{nh_n}{\log h_n^{-1}}}\sup_{y}|f_n(y)-\bar{f}_n(y)| >t\Bigg) \\
	&= Pr\Bigg( \max_{2^{k-1}\le n\le 2^k } \sup_{y}\bigg|\sum_{i=1}^n K\left(\frac{Y_i-y}{h_n}\right)\mathbbm{1}(D_i=1,Z_i=1)\\
	\quad \quad &-E\left[K\left(\frac{Y_i-y}{h_n}\right)\mathbbm{1}(D_i=1,Z_i=1)\right ]\bigg|>t\sqrt{nh_n\log  h_n^{-1}} \Bigg)\\
	&\le 9 Pr \Bigg( \sup_{y, h_{2^{k-1}}\le h\le h_{2^k}} \bigg| \sum_{i=1}^{2^k}K\left(\frac{Y_i-y}{h}\right)\mathbbm{1}(D_i=1,Z_i=1)\\
	&- E\left[K\left(\frac{Y_i-y}{h}\right)\mathbbm{1}(D_i=1,Z_i=1)\right ]\bigg|>t\sqrt{2^{k-1} h_{2^{k}}\log  h_{2^{k}}^{-1}}/30 \Bigg) .
	\end{split}
	\end{equation}
	
	By Gine and Guillou (2002), the class of function $\mathcal{K}_k=\{K(\frac{t-y}{h})\big| t\in R, h_{2^k}\le h\le h_{2^{k-1}}  \}$ is a VC class, and we multiply it by a fixed function $\mathbbm{1}(d=1,z=1)$, the class of function $\tilde{\mathcal{K}}_k=\{K(\frac{t-y}{h})\mathbbm{1}(d=1,z=1)\big| t\in R, h_{2^k}\le h\le h_{2^{k-1}}\}$ is still a VC class. So we take $U_k= ||K(y)||_\infty$, $\sigma_k^2=h_{2^{k-1}} \sup_y |f(y|D_i=1,Z_i=1)|\int_t K^2(t)dt$, then we have 
	\begin{equation*}
	\begin{split}
	&\sup_{y, h_{2^k}\le h\le h_{2^{k-1}}} Var\left[ K(\frac{Y_i-y}{h})\mathbbm{1}(D_i=1,Z_i=1)\right]\\
	&\le \sup_{y, h_{2^k}\le h\le h_{2^{k-1}}} E\left[ K^2(\frac{Y_i-y}{h})\mathbbm{1}(D_i=1,Z_i=1)\right]\\
	&\le \sup_{y, h_{2^k}\le h\le h_{2^{k-1}}} h \int_t K^2(t) f(y-th|D_i=1,Z_i=1)dt \times Pr(D_i=1,Z_i=1)\\
	&\le h ||f||_\infty \int_t K^2(t)dt\le \sigma_k^2,
	\end{split}
	\end{equation*}
	and 
	$\sup_{y, h_{2^k}\le h\le h_{2^{k-1}}} \left|  K(\frac{Y_i-y}{h})\mathbbm{1}(D_i=1,Z_i=1)\right|\le U_k$. 
	
	Then since $h_{2^{k-1}}\rightarrow 0$ and $\frac{2^kh_{2^k}}{\log h_{2^k}^{-1}}\rightarrow \infty$ as $k\rightarrow \infty$, we can find $k_0$ such that $\sigma_k^2\le U_k/2$ and $\sqrt{2^k}\sigma_k\ge U_k\sqrt{\frac{U_k}{\log\sigma_k}}$ for all $k\ge k_0$. So, we can apply 
	(\ref{eq: maximal inequality from Gine and Guillou}). 
	
	Take $t=30C \sqrt{2\times 2^{-\gamma}\sup_y |f(y|D_i=1,Z_i=1)|\int_t K^2(t)dt}$, then 
	\begin{equation*}
	\begin{split}
	t\sqrt{2^{k-1} h_{2^{k}}\log  h_{2^{k}}^{-1}}/30&=C\sqrt{2^k (ch_{2^k})\log  h_{2^{k}}^{-1}{\sup_y |f(y|D_i=1,Z_i=1)|\int_t K^2(t)dt}}\\
	&\ge C\sqrt{2^k}\sqrt{ h_{2^{k-1}}\log  h_{2^{k}}^{-1}{\sup_y |f(y|D_i=1,Z_i=1)|\int_t K^2(t)dt}}\\
	&= C\sqrt{2^k} \sigma_{k} \sqrt{\log{h_{2^k}^{-1}}},
	\end{split}
	\end{equation*}
	Since 
	\[\frac{h_{2^k}^{-1}}{U_k/\sigma_k}\ge \frac{h_{2^k}^{-0.5}\sqrt{\sup_y |f(y|D_i=1,Z_i=1)|\int_t K^2(t)dt}}{||K(y)||_\infty }\rightarrow \infty,\]
	where we use the construction of $\sigma_k$ and $h_{2^k}^{-1}\ge h_{2^{k-1}}^{-1}$, and $h_n\rightarrow 0$. So we can find $k_1$ such that $ C\sqrt{2^k} \sigma_{k} \sqrt{\log{h_{2^k}^{-1}}}>  C\sqrt{2^k} \sigma_{k}\sqrt{\frac{U_k}{\sigma_k}}$ holds for all $k>k_1$. Then for $k>\max\{k_0,k_1\}$, 
	
	\begin{equation}
	\begin{split}
	&9 Pr \Bigg( \sup_{y, h_{2^{k-1}}\le h\le h_{2^k}} \bigg| \sum_{i=1}^{2^k}K\left(\frac{Y_i-y}{h}\right)\mathbbm{1}(D_i=1,Z_i=1)\\
	&- E\left[K\left(\frac{Y_i-y}{h}\right)\mathbbm{1}(D_i=1,Z_i=1)\right ]\bigg|>t\sqrt{2^{k-1} h_{2^{k}}\log  h_{2^{k}}^{-1}}/30 \Bigg)\\
	&\le 9 Pr \Bigg( \sup_{y, h_{2^{k-1}}\le h\le h_{2^k}} \bigg| \sum_{i=1}^{2^k}K\left(\frac{Y_i-y}{h}\right)\mathbbm{1}(D_i=1,Z_i=1)\\
	&- E\left[K\left(\frac{Y_i-y}{h}\right)\mathbbm{1}(D_i=1,Z_i=1)\right ]\bigg|>  C\sqrt{2^k} \sigma_{k}\sqrt{\frac{U_k}{\sigma_k}} \Bigg)\\
	&\le L \exp \left\{-\frac{C\log (1+C/(4L))}{L}\log \frac{U_k}{\sigma_k }\right\}\\
	&\le L \exp \left\{-\frac{C\log (1+C/(4L))}{L}\log \frac{||K(y)||_\infty}{\sqrt{\sup_y |f(y|D_i=1,Z_i=1)|\int_t K^2(t)dt} }\right\} h_{2^{k-1}}\\
	&\le Constant\times \left(\frac{1}{2^\gamma}\right)^{k-1}.
	\end{split}
	\end{equation}
	Note that $\sum_{k=\max\{k_0,k_1\}+1}^\infty \left(\frac{1}{2^\gamma}\right)^{k-1} <\infty$ holds, so by Borel-Cantelli lemma, 
	\begin{equation}
	Pr\Bigg(\lim\sup_{n\rightarrow \infty }\sqrt{\frac{nh_n}{\log h_n^{-1}}}\sup_{y}|f_n(y)-\bar{f}_n(y)| >30C \sqrt{2\times 2^{-\gamma}\sup_y |f(y|D_i=1,Z_i=1)|\int_t K^2(t)dt}\Bigg)=0.
	\end{equation}
\end{proof}

\subsection{Proof of Proposition \ref{prop: continuity of the identified set}}\label{Proof of continuous id set condition}
\begin{proof}
	Recall the definition of the identified set from Definition \ref{def: Identification System}:
	\[
	\begin{split}
	\Theta_{\tilde{A}}^{ID}(F)&=\{\theta(s): \quad F\in M^s(G^s),\,\, s\in\tilde{A}\}\\
	&=\{\theta(s):s\in \cap_{l\ne j}A_l,\,\, F\in M^s(G^s),\,\,and \,\,m_j(s)=\min(F;m_j)\}\\
	&= \theta\circ m_j^{-1}\circ \min(F;m_j)
	\end{split}
	\]
	where the second equality holds by the construction of the minimal deviation extension in Definition \ref{def: minimal deviation extension}.
	
	Since $\theta:\cap_{l\ne j}A_l\rightarrow \Theta$ and $\min(F;m_j):\mathcal{F}\rightarrow \mathbb{R}$ are continuous functions, by applying Lemma \ref{lem: appen, composition of hemicontinuous correspondence} twice, we see that $Theta_{\tilde{A}}^{ID}(F)$  is an upper (resp. lower) hemicontinuous correspondence if $m_j^{-1}:\mathbb{R}\rightarrow \cap_{l\ne j}A_l$ is an upper (resp. lower) hemicontinuous correspondence.

\end{proof}
\begin{lem}\label{lem: appen, composition of hemicontinuous correspondence}
	Let $(X,\tau_X)$, $(Y,\tau_Y)$ and $(Z,\tau_Z)$ be three topological spaces. Let $h_{xy}:X\rightarrow Y$ and $h_{zx}:Z\rightarrow X$ be continuous functions. If $h_{yz}: Y \rightrightarrows Z$ is an upper hemicontinuous (resp. lower hemicontinuous) correspondence, then $h_{yz}\circ h_{xy}$ and $h_{yz}\circ h_{zx}$ are both uppe hemicontinuous (resp. lower hemicontinuous) correspondences.
\end{lem}

\begin{proof}	
	\textit{ 1. $h_{yz}\circ h_{xy}$ is upper hemicontinuous.}
	
	Let $O_z$ be any open set covering $h_{yz}\circ h_{xy}(x)$. Since $h_{yz}$ is upper hemicontinuous, then by definition there exists an open set $O_y$ containing $h_{xy}(x)$ such that for all $\tilde{y}\in O_y$, we have $h_{yz}(\tilde{y})\subset  O_z$. By continuity of $h_{xy}$, $O_x\equiv h_{xy}^{-1}(O_y)$ is an open set, and for any $\tilde{x}\in O_x$, $h_{yz}\circ h_{xy}(\tilde{x})\subset O_z$. By definition, $h_{yz}\circ h_{xy}$ is upper hemicontinuous.
	
	\textit{ 2. $h_{zx}\circ h_{yz}$ is upper hemicontinuous.}
	
	Let $O_x$ be any open set covering $h_{zx}\circ h_{yz}(y)$. Since $h_{zx}$ is continuous, $O_z\equiv h_{zx}^{-1}(O_x)$ is an open set containing $h_{yz}(y)$. Since $h_{yz}$ is upper hemicontinuous, we can find an open set $O_y$ containing $y$ such that for any $\tilde{y}\in O_y$, $h_{yz}(\tilde{y})\subset O_z$ holds. We can then conclude \[
	h_{zx}\circ h_{yz}(\tilde{y})\subset h_{zx}(O_z)\subset O_x.
	\]
	By definition, $h_{zx}\circ h_{yz}$ is upper hemicontinuous.
	
	\textit{ 3. $h_{yz}\circ h_{xy}$ is upper hemicontinuous.}
	
	Let $O_z$ be any open set such that  $O_z\cap h_{yz}\circ h_{xy}(x)\ne \varnothing$. Since $h_{yz}$ is lower hemicontinuous, then by definition there exists an open set $O_y$ containing $h_{xy}(x)$ such that for all $\tilde{y}\in O_y$, we have $O_z\cap h_{yz}(\tilde{y})\ne \varnothing$. By continuity of $h_{xy}$, $O_x\equiv h_{xy}^{-1}(O_y)$ is an open set, and for any $\tilde{x}\in O_x$, we have $ h_{xy}(\tilde{x})\subset O_y$. Therefore $h_{yz}\circ h_{xy}(\tilde{x})\cap O_z\ne \varnothing$. By definition, $h_{yz}\circ h_{xy}$ is lower hemicontinuous.
	
	\textit{ 4. $h_{zx}\circ h_{yz}$ is lower hemicontinuous.}
	
	Let $O_x$ be any open set such that  $O_x\cap h_{zx}\circ h_{yz}(y)\ne \varnothing$. Since $h_{zx}$ is continuous, $O_z\equiv h_{zx}^{-1}(O_x)$ is an open set containing $h_{yz}(y)$. Since $h_{yz}$ is lower hemicontinuous, we can find an open set $O_y$ containing $y$ such that for any $\tilde{y}\in O_y$, $h_{yz}(\tilde{y})\cap O_z\ne \varnothing$ holds. We can then conclude \[
	\varnothing \ne h_{zx}\circ h_{yz}(\tilde{y})\cap h_{zx}(O_z)\subset h_{zx}\circ h_{yz}(\tilde{y})\cap O_x.
	\]
	By definition, $h_{zx}\circ h_{yz}$ is upper hemicontinuous.
\end{proof}

\subsection{Proof of Propositions \ref{prop: minimal marg ind as consistent extension} and \ref{prop: minimal marg diff as LATE-consistent}}

\subsubsection{Lemmas}
\begin{lem}\label{lem: minimal marginal independence construction}
	Let $F$ be any distribution of outcome, and let $p(y,d),q(y,d)$ be the Radon-Nikodym derivatives with respect to $\mu_F$. Consider the following $G^s$:
	
	\begin{equation}
	\begin{split}
	&Pr_{G^s}(Y_i(d,z)\in B_{dz}\quad \forall d,z\in\{0,1\}, D_i(1)=1,D_i(0)=1|Z_i=z)\\
	&=\begin{cases}
	G^a(Y_i(0,0)\in B_{00}\cap B_{01})\times \int_{B_{10}\cap B_{11}} \min\{p(y,1),q(y,1)\} d\mu_F(y)\quad &if \quad z=1,\\
	G^a(Y_i(0,0)\in B_{00}\cap B_{01})\times \int_{B_{10}\cap B_{11}} q(y,1)d\mu_F(y)\quad &if \quad z=0,
	\end{cases}
	\end{split}
	\end{equation}
	where $G^a$ is any probability measure, and
	
	\begin{equation}
	\begin{split}
	&Pr_{G^s}(Y_i(d,z)\in B_{dz}\quad \forall d,z\in\{0,1\}, D_i(1)=0,D_i(0)=0|Z_i=z)\\
	&=\begin{cases}
	G^n(Y_i(0,0)\in B_{00}\cap B_{01})\times \int_{B_{10}\cap B_{11}} \min\{p(y,0),q(y,0)\} d\mu_F(y)\quad &if \quad z=0,\\
	G^n(Y_i(0,0)\in B_{00}\cap B_{01})\times \int_{B_{10}\cap B_{11}} p(y,0)d\mu_F(y)\quad &if \quad z=1,
	\end{cases}
	\end{split}
	\end{equation}
	where $G^n$ is any probability measure. Let 
	\begin{equation}
	\begin{split}
	Pr(D_i(1)=1,D_i(0)=0|Z_i=1)=P(\mathcal{Y}_1,1)-Q(\mathcal{Y}_1,1)\\
	Pr(D_i(1)=1,D_i(0)=0|Z_i=0)=Q(\mathcal{Y}_0,0)-P(\mathcal{Y}_0,0),\\
	\end{split}
	\end{equation}
	and let:
	\begin{equation}
	\begin{split}
	&Pr_{G^s}(Y_i(d,z)\in B_{dz}\quad \forall d,z\in\{0,1\}| D_i(1)=1,D_i(0)=0,Z_i=z),\\
	&=\frac{\int_{B_{00}\cap B_{01}}\min\{q(y,0)-p(y,0),0\} d\mu_F(y) \times \int_{B_{10}\cap B_{11}}\min\{p(y,1)-q(y,1),0\} d\mu_F(y) }{(P(\mathcal{Y}_1,1)-Q(\mathcal{Y}_1,1))(Q(\mathcal{Y}_0,0)-P(\mathcal{Y}_0,0))},
	\end{split}
	\end{equation}
	
	\begin{equation}
	Pr_{G^s}(Y_i(d,z)\in B_{dz}\quad \forall d,z\in\{0,1\}, D_i(1)=0,D_i(0)=1|Z_i=z)\equiv 0.
	\end{equation}
	Then the following results hold: (1). $G$ is a probability measure
 (2). $G^s\in A^{ER}\cap A^{ND}\cap A^{TI-CP}$; (3). $F\in M^s(G^s)$; (4). $m^{MD}(s)=m^{min}(F)$, where $m^{min}(F)$ is defined in Assumption \ref{assump: minimal dist to marg ind inst}.
\end{lem}
\begin{proof}
	I first check that $G$ is a probability measure. \begin{equation*}
	\begin{split}
	&\quad \sum_{d_1,d_0\in\{0,1\}} Pr_{G^s}(Y_{i}(d,1)\in \mathcal{Y},D_i(1)=d_1,D_i(0)=d_0|Z_i=1)\\
	&= \int_{\mathcal{Y}}  \min\{p(y,1),q(y,1)\}  d\mu_F(y)+ \int_{\mathcal{Y}}	p(y,0) d\mu_F(y)+ (P(\mathcal{Y}_1,1)-Q(\mathcal{Y}_1,1))\\
	&= (P(\mathcal{Y}_1^c,1)+Q(\mathcal{Y}_1,1))+P(\mathcal{Y},0)+(P(\mathcal{Y}_1,1)-Q(\mathcal{Y}_1,1))\\
	&=P(\mathcal{Y},1)+P(\mathcal{Y},0)=1.
	\end{split}
	\end{equation*} 
	We can check that the measure sum up to one for $Z_i=0$. This checks $G$ is a probability measure.
	
	Checking $F\in M^s(G^s)$ is similar to the proofs in Lemma
	\ref{lem: minimal defiers construction for type ind instru}. The type independence for compliers, `No Defiers' assumptions hold for $G^s$ by construction. Exclusion restriction holds by Lemma \ref{lem: auxiliary, criteria for exclusion restriction}.\footnote{See Lemma
		\ref{lem: minimal defiers construction for type ind instru} for the procedures for proof of this statement.}
	
	We now show a lower bound for the $m^{min}(F)$. Let $s^*$ be any structure in  $A^{ER}\cap A^{ND}\cap A^{TI-CP}$ and $F\in M^{s^*}(G^{s^*})$ . We use the following decomposition:
	\begin{equation}
	\begin{split}
	P( B_1,1)&=Pr_{G^{s^*}} (Y_i(1,1)\in B_1,D_i(1)=1,D_i(0)=1|Z_i=1)\\
	&+Pr_{G^{s^*}} (Y_i(1,1)\in B_1,D_i(1)=1,D_i(0)=0|Z_i=1),\\
	Q( B_1,1)&=Pr_{G^{s^*}} (Y_i(1,0)\in B_1, D_i(1)=1,D_i(0)=1|Z_i=0)\\
	&+Pr_{G^{s^*}} (Y_i(1,0)\in B_1, D_i(1)=0,D_i(0)=1|Z_i=0)\\
	&=_{(1)}Pr_{G^{s^*}} (Y_i(1,0)\in B_1,D_i(1)=1,D_i(0)=1|Z_i=0),
	\end{split}
	\end{equation}
	where equality (1) follows by the `No Defiers' condition.  Take the Radon-Nikodym derivatives with respect to $\mu_F$ on both sides to get 
	\[
	\begin{split}
	p(y,1)&= g^{s^*}_{y_{11}}(y,1,1|Z_i=1)+ g^{s^*}_{y_{10}}(y,1,0|Z_i=1)\\
	q(y,1)&= g^{s^*}_{y_{10}}(y,1,1|Z_i=0).
	\end{split}
	\]
	Take the difference between $p(y,1)$ and $q(y,1)$ to get 
	\begin{equation}\label{eq: lem marginal ind cont, marg diff for at}
	g^{s^*}_{y_{11}}(y,1,1|Z_i=1)-g^{s^*}_{y_{10}}(y,1,1|Z_i=0)=p(y,1)-q(y,1)-g^{s^*}_{y_{10}}(y,1,0|Z_i=1).
	\end{equation}
	Since $g^{s^*}_{y_{10}}(y,1,0|Z_i=1)\ge 0$ \footnote{ Note that $(x-t)^2\ge (\max\{-x,0\})^2$ when $t\ge 0$ holds} , we have 
	\[\left[g^{s^*}_{y_{11}}(y,1,1|Z_i=1)-g^{s^*}_{y_{10}}(y,1,1|Z_i=0)\right]^2\ge\max\{-(p(y,1)-q(y,1)),0\}^2. \]
	Similarly, using the decomposition of $Q(B,0)$ and $P(B,0)$, we have
	\[\left[g^{s^*}_{y_{00}}(y,0,0|Z_i=0)-g^{s^*}_{y_{01}}(y,0,0|Z_i=1)\right]^2\ge\max\{-(q(y,0)-p(y,0)),0\}^2. \]
	So the measure of deviation from marginal independence equals:
	\[
	\begin{split}
	m^{MI}(s^*)&= \int \left[g^{s^*}_{y_{11}}(y,1,1|Z_i=1)-g^{s^*}_{y_{10}}(y,1,1|Z_i=0)\right]^2\\
	&+\left[g^{s^*}_{y_{00}}(y,0,0|Z_i=0)-g^{s^*}_{y_{01}}(y,0,0|Z_i=1)\right]^2 d\mu_F(y)\\
	&\ge \int \max\{-(p(y,1)-q(y,1)),0\}^2+\max\{-(q(y,0)-p(y,0)),0\}^2 d\mu_F(y)\\
	&= m^{MI}(s),
	\end{split}
	\]
	where the last equality holds by construction of $G^s$. So this shows that $s$ achieves $m^{min}(F)$.
\end{proof}

\begin{lem}\label{lem: minimal marginal difference construction}
	Let $F$ be any distribution of outcome, and let $p(y,d),q(y,d)$ be the Radon-Nikodym derivatives with respect to $\mu_F$. Let 
	\begin{equation}
	\begin{split}
	&Pr_{G^s}(D_i(1)-D_i(0)=1|Z_i=1)=P(\mathcal{Y}_1,1)-Q(\mathcal{Y}_1,1),\\
	&Pr_{G^s}(D_i(1)-D_i(0)=1|Z_i=0)=Q(\mathcal{Y}_0,0)-P(\mathcal{Y}_0,0),\\
	&Pr_{G^s}(D_i(1)=D_i(0)=1|Z_i=0)=Q(\mathcal{Y},1),\\
	&Pr_{G^s}(D_i(1)=D_i(0)=1|Z_i=1)=P(\mathcal{Y},1)-Pr_{G^s}(D_i(1)-D_i(0)=1|Z_i=1),\\
	&Pr_{G^s}(D_i(1)=D_i(0)=0|Z_i=1)=P(\mathcal{Y},0)\\
	&Pr_{G^s}(D_i(1)=D_i(0)=1|Z_i=0)=Q(\mathcal{Y},0)-Pr_{G^s}(D_i(1)-D_i(0)=1|Z_i=0).\\
	\end{split}
	\end{equation}
	Consider the following  $G^s$:
	\begin{equation}
	\begin{split}
	&Pr_{G^s}(Y_{dz}\in B_{dz} \quad \forall d,z\in\{0,1\}|D_i(1)=1,D_i(0)=1,Z_i=z)\\
	&=\frac{G^a(Y_i(0,0)\in B_{00})\times G^a(Y_i(0,1)\in B_{01})\times \int_{B_{10}} q(y,1) d\mu_F(y) \times \int_{B_{11}}\min\{p(y,1),q(y,1)\}d\mu_F(y) }{ Pr_{G^s}(D_i(1)=D_i(0)=1|Z_i=0)\times Pr_{G^s}(D_i(1)=D_i(0)=1|Z_i=1) },
	\end{split}
	\end{equation}
	where $G^a$ is any probability distribution, and
	\begin{equation}
	\begin{split}
	&Pr_{G^s}(Y_{dz}\in B_{dz} \quad \forall d,z\in\{0,1\}|D_i(1)=0,D_i(0)=0,Z_i=z)\\
	&=\frac{\int_{B_{01}} p(y,0) d\mu_F(y) \times \int_{B_{00}}\min\{p(y,0),q(y,0)\}d\mu_F(y)\times G^n(Y_i(1,0)\in B_{10})\times G^n(Y_i(1,1)\in B_{11}) }{ Pr_{G^s}(D_i(1)=D_i(0)=0|Z_i=0)\times Pr_{G^s}(D_i(1)=D_i(0)=0|Z_i=1) },
	\end{split}
	\end{equation}
	where $G^n$ is any probability distribution, and
	\begin{equation}
	\begin{split}
	&Pr_{G^s}(Y_{dz}\in B_{dz} \quad \forall d,z\in\{0,1\}|D_i(1)=1,D_i(0)=0,Z_i=z)\\
	&=\frac{\int_{B_{01}\cap B_{00}} \max\{q(y,0)-p(y,0),0\} d\mu_F(y) \times \int_{B_{10}\cap B_{11}}\min\{p(y,1),q(y,1)\}d\mu_F(y) }{ Pr_{G^s}(D_i(1)-D_i(0)=1|Z_i=0)\times Pr_{G^s}(D_i(1)-D_i(0)=1|Z_i=1) },
	\end{split}
	\end{equation}
	and 
	\[Pr_{G^s}(Y_{dz}\in B_{dz} \quad \forall d,z\in\{0,1\},D_i(1)=0,D_i(0)=1|Z_i=z)=0.\]
	Then the following results hold: (1). $G$ is a probability measure
	(2). $G^s\in A^{ER-CP}\cap A^{ND}\cap A^{TI}$; (3). $F\in M^s(G^s)$; (4). $m^{MD}(s)=m^{min}(F)$, where $m^{min}(F)$ is defined in Assumption \ref{assump: minimal dist to zero marg diff }.
\end{lem}

\begin{proof}
 Checking conditions (1)-(3) in this Lemma is similar to the proof in Lemma \ref{lem: appen, composition of hemicontinuous correspondence}. 
 
 Now I check the minimal deviation condition. Let $s^*$ be any structure that satisfies $F\in M^{s^*}(G^{s^*})$ and $s^*\in A^{ER-CP}\cap A^{ND}\cap A^{TI}$. Note that
	\begin{equation} \label{eq: P,Q identity for AT}
	\begin{split}
	P( B_1,1)&=Pr_{G^{s^*}} (Y_i(1,1)\in B_1,D_i(1)=1,D_i(0)=1|Z_i=1)\\
	&+Pr_{G^{s^*}} (Y_i(1,1)\in B_1,D_i(1)=1,D_i(0)=0|Z_i=1),\\
	Q( B_1,1)&=Pr_{G^{s^*}} (Y_i(1,0)\in B_1, D_i(1)=1,D_i(0)=1|Z_i=0)\\
	&+Pr_{G^{s^*}} (Y_i(1,0)\in B_1, D_i(1)=0,D_i(0)=1|Z_i=0)\\
	&=_{(1)}Pr_{G^{s^*}} (Y_i(1,0)\in B_1,,D_i(1)=1,D_i(0)=1|Z_i=0),
	\end{split}
	\end{equation} 
	where equality (1) follows by the `No Defiers' assumption. Taking Radon-Nikodym derivatives on both side, and take difference between $p(y,1)$ and $q(y,1)$ to get 
	\begin{equation}\label{eq: lem marg diff const, p-q}
	\begin{split}
	p(y,1)-q(y,1)-g^{s^*}_{y_{11}}(y,1,0|Z_i=1)= g^{s^*}_{y_{11}}(y,1,1|Z_i=1)-g^{s^*}_{y_{10}}(y,1,0|Z_i=0).
	\end{split}
	\end{equation}
	Since $g^{s^*}_{y_{11}}(y,1,1|Z_i=1)\ge 0$, 
	\begin{equation}\label{eq: append, for derivation of p-q in minimal marginal diff}
	\begin{split}
	&\quad\left[g^{s^*}_{y_{11}}(y,1,1|Z_i=1)-g^{s^*}_{y_{10}}(y,1,1|Z_i=0)\right]^2\\
	&\ge \max\{-(p(y,1)-q(y,1)),0\}^2.
	\end{split}
	\end{equation}
	Similarly, we have 
	\[
	\left[g^{s^*}_{y_{00}}(y,0,0|Z_i=0)-g^{s^*}_{y_{01}}(y,0,0|Z_i=1)\right]^2\ge \max\{-(q(y,1)-p(y,1)),0\}^2.
	\]
	Therefore 
	\[
	\begin{split}
	m^{MI}(s^*)&= \int 	\left[g^{s^*}_{y_{11}}(y,1,1|Z_i=1)-g^{s^*}_{y_{10}}(y,1,1|Z_i=0)\right]^2d\mu_F(y)\\
	&+\int\left[g^{s^*}_{y_{00}}(y,0,0|Z_i=0)-g^{s^*}_{y_{01}}(y,0,0|Z_i=1)\right]^2 d\mu_F(y)\\
	&\ge \int \max\{-(p(y,1)-q(y,1)),0\}^2d\mu_F(y)\\
	&+\int\max\{-(q(y,0)-p(y,0)),0\}^2 d\mu_F(y)\\
	&= m^{MD}(s),
	\end{split}
	\]
	where the last equality holds by construction of $s$, so $s$ achieves $m^{min}(F)$. 
\end{proof}

\subsubsection{Main Proof of Propositions \ref{prop: minimal marg ind as consistent extension} and \ref{prop: minimal marg diff as LATE-consistent}}
\begin{proof}
	Lemma \ref{lem: minimal marginal independence construction} and \ref{lem: minimal marginal difference construction}, $m^{MD}$ and $m^{MI}$ are well defined extensions. So it suffices to check the condition of $\theta$-consistency in Proposition \ref{prop: minimal deviation as strong extension}. We break the proof into two steps.
	
	\paragraph{Step 1. Closed form expressions for LATE under $\tilde{A}$ in Assumption \ref{assump: minimal dist to marg ind inst} or \ref{assump: minimal dist to zero marg diff }}. We claim the identified LATE under $\tilde{A}$ is 
	\begin{equation}\label{eq: appen, expression using p and q}
	{LATE}^{ID}_{\tilde{A}}(F)=\frac{\int_{\mathcal{Y}_1}{y (p(y,1)-q(y,1))}d\mu_F(y)}{P(\mathcal{Y}_1,1)-Q(\mathcal{Y}_1,1)}
	-\frac{\int_{\mathcal{Y}_0}{y (q(y,0)-p(y,0))}d\mu_F(y)}{Q(\mathcal{Y}_0,0)-P(\mathcal{Y}_0,0)}.
	\end{equation}
	We should note that (\ref{eq: appen, expression using p and q}) is the same  as the identified LATE in Proposition \ref{prop: identified under minimal defiers type ind inst}. 
	
	Suppose $\tilde{A}$ satisfies Assumption \ref{assump: minimal dist to marg ind inst}. By Lemma \ref{lem: minimal marginal independence construction}, the minimal marginal independence deviation is:
	\[
	\begin{split}
		m^{min}(F)= \int \left(\max\{(p(y,1)-q(y,1))^2,0\}+\max\{(q(y,0)-p(y,0))^2,0\}\right) d\mu_F(y).
	\end{split}
	\]
	For any $s^*\in \tilde{A}$ and $F\in M^{s^*}(G^{s^*})$, use equation (\ref{eq: lem marginal ind cont, marg diff for at}), we have
	\begin{equation*}
		\begin{split}
			(g^{s^*}_{y_{11}}(y,1,1|Z_i=1)-g^{s^*}_{y_{10}}(y,1,1|Z_i=0))^2&=[p(y,1)-q(y,1)-g^{s^*}_{y_{11}}(y,1,0|Z_i=1)]^2\\
			&\ge_{(*)} \max\{-p(y,1)+q(y,1),0\}^2,
		\end{split}
	\end{equation*}
	where the inequality $(*)$ holds with equality if and only if
	\[
	g^{s^*}_{y_{11}}(y,1,0|Z_i=1)=\begin{cases}
		0\quad &if\quad  y\in \mathcal{Y}_1^c,\\
		p(y,1)-q(y,1)\quad  &if \quad y\in \mathcal{Y}_1.
	\end{cases}
	\] 
	Similarly, 
	\begin{equation*}
		\begin{split}
			(g^{s^*}_{y_{01}}(y,0,0|Z_i=1)-g^{s^*}_{y_{00}}(y,0,0|Z_i=0))^2&=(q(y,0)-p(y,0)-g^{s^*}_{y_{00}}(y,1,0|Z_i=0))^2\\
			&\ge_{(**)} \max\{-q(y,0)+p(y,0),0\}^2,
		\end{split}
	\end{equation*}
	where the inequality $(**)$ holds with equality if and only if $g^{s^*}_{y_{00}}(y,1,0|Z_i=0)=\max\{q(y,0)-p(y,0),0\}$. Since $s^*$ achieves the $m^{min}(F)$, the two density conditions holds. Given the expression  of $g^{s^*}_{y_{11}}$ and $g^{s^*}_{y_{00}}$, the expression of LATE follows from the proof of Proposition \ref{prop: identified under minimal defiers type ind inst}.
	
	Suppose $\tilde{A}$ satisfies Assumption \ref{assump: minimal dist to zero marg diff }. By Lemma \ref{lem: minimal marginal difference construction}, the minimal marginal difference measure: 
	\[
	\begin{split}
		m^{min}(F)= \int \left(\max\{p(y,1)-q(y,1),0\}^2+\max\{q(y,0)-p(y,0),0\}^2\right) d\mu_F(y)
	\end{split}
	\] 
	For any $s^*\in \tilde{A}$  and $F\in M^{s^*}(G^{s^*})$, by equation (\ref{eq: lem marg diff const, p-q}), we have
	\begin{equation}
		\begin{split}
			(g^{s^*}_{y_{11}}(y,1,1|Z_i=1)-g^{s^*}_{y_{10}}(y,1,0|Z_i=0))^2&=(p(y,1)-q(y,1)-g^{s^*}_{y_{11}}(y,1,0|Z_i=1))^2\\
			&\ge \max\{-(p(y,1)+q(y,1),0\}^2,
		\end{split}
	\end{equation}
	where the inequality holds with equality if and only if
	\[
	g^{s^*}_{y_{11}}(y,1,0|Z_i=1)=\begin{cases}
		0\quad &if\quad  y\in \mathcal{Y}_1^c,\\
		p(y,1)-q(y,1)\quad  &if \quad y\in \mathcal{Y}_1.
	\end{cases}
	\] 
	Similarly, 
	\begin{equation*}
		\begin{split}
			(g^{s^*}_{y_{01}}(y,0,0|Z_i=1)-g^{s^*}_{y_{00}}(y,0,0|Z_i=0))^2&=(q(y,0)-p(y,0)-g^{s^*}_{y_{00}}(y,1,0|Z_i=0))^2\\
			&\ge \max\{-q(y,0)+p(y,0),0\}^2,
		\end{split}
	\end{equation*}
	where the inequality holds with equality if and only if $g^{s^*}_{y_{00}}(y,1,0|Z_i=0)=\max\{q(y,0)-p(y,0),0\}$. Since $s^*$ achieves the $m^{min}(F)$, the two density conditions hold. Given the expression  of $g^{s^*}_{y_{11}}$ and $g^{s^*}_{y_{00}}$. Last, we use the exclusion restrictions for the compliers to get the expression of LATE from the proof of Proposition \ref{prop: identified under minimal defiers type ind inst}.

	\paragraph{Step 2. The LATE-consistent result.} The expression (\ref{eq: appen, expression using p and q}) is the same as the identified LATE as in  Proposition \ref{prop: identified under minimal defiers type ind inst}. Since the minimal defiers extensions in  Proposition \ref{prop: identified under minimal defiers type ind inst} is a strong extension, (\ref{eq: appen, expression using p and q}) must equal the identified LATE under the IA-M assumption, whenever the IA-M assumption is not rejected by $F$. 
\end{proof}

\subsection{Proof of Proposition \ref{prop: m^{MD} is not well behaved if use full independence}}

\begin{proof}
	
 \textbf{Statement (I).}	 It suffices to show $A^{ND}\cap A^{FI}$ is refutable. Note that by  the `No Defiers' assumption, we have $
	Pr_F(D_i=1|Z_i=0)=Pr_{G^s}(D_i(1)=1,D_i(0)=1|Z_i=0)$ and 
	\[
	Pr_F(D_i=1|Z_i=1)=Pr_{G^s}(D_i(1)=1,D_i(0)=1|Z_i=1)+Pr_{G^s}(D_i(1)=1,D_i(0)=0|Z_i=1).
	\]
	By the independent instrument assumption $A^{FI}$, $Pr_{G^s}(D_i(1)=1,D_i(0)=1|Z_i=1)=Pr_{G^s}(D_i(1)=1,D_i(0)=1|Z_i=0)$, therefore the following must hold:
	\[
	0\le Pr_{G^s}(D_i(1)=1,D_i(0)=0|Z_i=0)= Pr_F(D_i=1|Z_i=1)-Pr_F(D_i=1|Z_i=0).
	\]
	So  $A^{ND}\cap A^{FI}$ implies $Pr_F(D_i=1|Z_i=1)-Pr_F(D_i=1|Z_i=0)\ge 0$ must hold for $F\in \cap_{s\in A^{ND}\cap A^{FI}} M^s(G^s)$. So there is no structure in $A^{ND}\cap A^{FI}$ that can rationalize $F_0$ such that $Pr_F(D_i=1|Z_i=1)-Pr_F(D_i=1|Z_i=0)< 0$. In particular, for this $F_0$, $\inf\{m_j(s): F\in M^s(G^s)\quad and \quad s\in A^{FI}\cap A^{ND}\}=\infty$.

	 \textbf{Statement (II).} Let $s^*\in A^{TI}\cap A^{ND}\cap A^{EM-C}$ and $F\in M^{s^*}(G^{s^*})$. Note that the derivation of (\ref{eq: append, for derivation of p-q in minimal marginal diff}) holds for all $s\in A^{TI}\cap A^{ND}$, therefore  $s^*$ must satisfy
	\begin{equation}\label{eq:  B.2.4 marginal constraint}
	\begin{split}
	\quad\left[g^{s^*}_{y_{11}}(y,1,1|Z_i=1)-g^{s^*}_{y_{10}}(y,1,1|Z_i=0)\right]^1
	\ge \max\{-(p(y,1)-q(y,1)),0\}^2,\\
	\left[g^{s^*}_{y_{00}}(y,0,0|Z_i=0)-g^{s^*}_{y_{01}}(y,0,0|Z_i=1)\right]^2\ge \max\{-(q(y,1)-p(y,1)),0\}^2.
	\end{split}
	\end{equation}
	So we have \begin{equation}\label{eq: append, not well behaved measure proof, minimal m^{MD}}
	m^{MD}(s^*)\ge \int \max\{-(p(y,1)-q(y,1)),0\}^2d\mu_F(y)
	+\int\max\{-(q(y,0)-p(y,0)),0\}^2 d\mu_F(y).
	\end{equation}
	Without loss of generality, we look at an observed distribution $F$ such that the corresponding $P$ and $Q$ satisfy: $P(\mathcal{Y}_1,1)-Q(\mathcal{Y}_1,1)\ge Q(\mathcal{Y}_0,0)-P(\mathcal{Y}_0,0)$, and $\mathcal{Y}_1=[\underline{y},\infty)$ for some constant $\underline{y}$. Let 
	\begin{equation}
	\begin{split}
	A^{Z=0}&=Q(\mathcal{Y},1),\quad\quad  
	A^{Z=1}=P(\mathcal{Y},1)-[Q(\mathcal{Y}_0,0)-P(\mathcal{Y}_0,0)],\\
	N^{Z=0}&=P(\mathcal{Y},0),\quad\quad  
	N^{Z=1}=Q(\mathcal{Y},0)-[Q(\mathcal{Y}_0,0)-P(\mathcal{Y}_0,0)].\\
	\end{split}
	\end{equation}
	
	Consider the following sequence of structures ${s^n}$:
	{\footnotesize
		\begin{equation}
		\begin{split}
		&Pr_{G^{s^n}}(Y_i(d,z)\in B_{dz} \quad \forall d,z\in\{0,1\},D_i(1)=1,D_i(0)=1|Z_i=z)\\
		&=\frac{G^a(Y_i(0,0)\in B_{00})G^a(Y_i(0,1)\in B_{01})\times \int_{B_{10}}p(y,1) d\mu_F(y)\times \int_{B_{11}}\min \{p(y,1),q(y,1)\}+g_c^n(y) d\mu_F(y)}{A^{Z=1-z}},
		\end{split}
		\end{equation}}
	where $G^a$ is any probability distribution, and $g_c^n(y)$ satisfies 
	\[
	g_c^n(y)=\begin{cases}
	& \min\left\{\frac{[P(\mathcal{Y}_1,1)-Q(\mathcal{Y}_1,1)]-[ Q(\mathcal{Y}_0,0)-P(\mathcal{Y}_0,0)]}{n}, p(y,1)-q(y,1) \right\}\quad if \quad y\in[\underline{y},\underline{y}+n],\\
	&0\quad otherwise.
	\end{cases}
	\]
	So $g_c^n$ is uniformly distributed over $[\underline{y},\underline{y}+n]$ to correct for the difference between $[P(\mathcal{Y}_1,1)-Q(\mathcal{Y}_1,1)]$ and $[ Q(\mathcal{Y}_0,0)-P(\mathcal{Y}_0,0)]$. Note that 
	\[
	\begin{split}
	&\int_{\mathcal{Y}}\min\{p(y,1),q(y,1)\}+g_c^n(y)d\mu_F(y)=A^{Z=1},\\
	&\int_{\mathcal{Y}} p(y,1) d\mu_F(y)=A^{Z=0}
	\end{split}
	\]
	hold, so we have the following marginal densities of $Y_i(1,0)$ and $Y_i(1,1)$ with respect to $\mu_F$:
	\[\begin{split}
	&g^{s^n}_{y_{10}}(y,1,1|Z_i=0)=p(y_{10},1),\\
	&g^{s^n}_{y_{11}}(y,1,1|Z_i=1)=\min\{p(y,1),q(y,1)\}+g_c^n(y).
	\end{split}\] 
	For never takers, consider
	{\footnotesize
		\begin{equation}
		\begin{split}
		&Pr_{G^{s^n}}(Y_{dz}\in B_{dz} \quad \forall d,z\in\{0,1\},D_i(1)=0,D_i(0)=0|Z_i=z)\\
		&=\frac{\int_{B_{01}} p(y,0) d\mu_F(y) \times \int_{B_{00}}\min\{p(y,0),q(y,0)\}d\mu_F(y)\times G^n(Y_i(1,0)\in B_{10})\times G^n(Y_i(1,1)\in B_{11}) }{N^{Z=1-z}},
		\end{split}
		\end{equation}}
	where $G^n$ is any probability distribution. This construction gives
	\[
	\begin{split}
	&g^{s^n}_{y_{01}}(y,0,0|Z_i=1)=q(y,0),\\
	&g^{s^n}_{y_{00}}(y,0,0|Z_i=0)=\min\{q(y,0),p(y,0).\}
	\end{split}
	\]
	For compliers, consider:
	\begin{equation}
	\begin{split}
	&Pr_{G^{s^n}}(Y_{dz}\in B_{dz} \quad \forall d,z\in\{0,1\},D_i(1)=1,D_i(0)=0|Z_i=z)\\
	&=\frac{\int_{B_{00}\cap B_{01}} \max\{q(y,0)-p(y,0),0\}d\mu_F(y)\times \int_{B_{10}\cap B_{11}} \max\{p(y,1)-q(y,1),0\}-g_c^n(y,1) d\mu_F(y) }{Q(\mathcal{Y}_0,0)-P(\mathcal{Y}_0,0)}.
	\end{split}
	\end{equation}
	Since $g_c^n(y,1)$ is zero on $\mathcal{Y}_1^c$, the construction of $g_c^n(y,1)$ implies $\max\{p(y,1)-q(y,1),0\}-g_c^n(y,1)\ge 0$ holds for all $y$, so $Pr_{G^{s^n}}(Y_{dz}\in B_{dz} \quad \forall d,z\in\{0,1\},D_i(1)=0,D_i(0)=0|Z_i=z)\ge 0$ holds for all $B_{dz}$ sets.

	The construction of $s^n$ ensures that $Z_i$ is a type independent instrument under $s^n$, and there are no defiers. Moreover, the measure of compliers is independent of instrument $Z_i$:
	\[
	Pr_{G^{s^n}}(D_i(1)=1,D_i(0)=0|Z_i=1)=Pr_{G^{s^n}}(D_i(1)=1,D_i(0)=0|Z_i=0)=Q(\mathcal{Y}_0,0)-P(\mathcal{Y}_0,0).
	\] 
	
	For this sequence of $s^n$, by construction, we have {\footnotesize
		\[
		\begin{split}	
		&m^{MD}(s^n)= \int 	\left[g^{s^*}_{y_{11}}(y,1,1|Z_i=1)-g^{s^*}_{y_{10}}(y,1,1|Z_i=0)\right]^2d\mu_F +\int\left[g^{s^*}_{y_{00}}(y,0,0|Z_i=0)-g^{s^*}_{y_{01}}(y,0,0|Z_i=1)\right]^2 d\mu_F\\
		&\le \int \max\{-(p(y,1)-q(y,1)),0\}^2d\mu_F+\int\max\{-(q(y,0)-p(y,0)),0\}^2 d\mu_F+ \int_{\underline{y}}^{\underline{y}+n} \left(\frac{1}{n}\right)^2 dy\\
		&\rightarrow  \int \max\{-(p(y_{11},1)-q(y_{11},1)),0\}^2d\mu_F+\int\max\{-(q(y_{01},0)-p(y_{01},0)),0\}^2 d\mu_F\equiv m^{inf}(F)
		\end{split}.
		\]}
	By (\ref{eq: append, not well behaved measure proof, minimal m^{MD}}), we see the infimum is indeed $m^{inf}(F)$. 
	
	Now, suppose there exists some $s$ that achieves this infimum, by (\ref{eq:  B.2.4 marginal constraint}) it must be the case that $\mu_F(y)$ almost surely:  
	\[
	\begin{split}
	g^s_{y_{10}}(y,1,1|Z_i=0)=q(y,1),\quad and \quad  g^s_{y_{11}}(y,1,1|Z_i=1)=\min\{p(y,1),q(y,1)\},\\
	g^s_{y_{01}}(y,0,0|Z_i=1)=p(y,0),\quad and \quad  g^s_{y_{00}}(y,0,0|Z_i=0)=\min\{p(y,0),q(y,0)\}.
	\end{split}
	\]
	Then by equation (\ref{eq: P,Q identity for AT}), the marginal distribution for compliers is pinned down by the following equations:
	\[
	\begin{split}
	g^s_{y_{11}}(y,1,0|Z_i=1)=\max\{p(y,1)-q(y,1),0\},\\
	g^s_{y_{00}}(y,1,0|Z_i=0)=\max \{q(y,0)-p(y,0),0\}.
	\end{split}
	\]
	Integrate the densities above, then we have 
	\[
	\begin{split}
	Pr_{G^s}(D_i(1)=1,D_i(0)=0|Z_i=0)= Q(\mathcal{Y}_0,0)-P(\mathcal{Y}_0,0),\\
	Pr_{G^s}(D_i(1)=1,D_i(0)=0|Z_i=1)= P(\mathcal{Y}_1,1)-Q(\mathcal{Y}_1,1).
	\end{split}
	\]
	So the measure of compliers is dependent of $Z_i$ for $s$. So $s$ does not satisfy the assumption $A^{EM-C}$. 
\end{proof}

\section{Details of LATE Empirical Illustration } \label{section: Details of Implementation of Empirical Illustration}
I use the following bandwidth and trimming sequences:
\begin{equation*}
\begin{split}
h&=\frac{s.d(Y_i)\times \log(n)}{2 n^{1/5}},\\
b&= n^{-1/4}\times \frac{\sum_{i=1}^n f_h(Y_i,1)+f_h(Y_i,0)}{n},\\
M_u&= \mathbb{F}_Y^{-1}(0.99),\\
M_l&= \mathbb{F}_Y^{-1}(0.01),
\end{split}
\end{equation*}
so $M_u$ and $M_l$ are the empirical 99-th and 1-th quantile of $\{Y_i\}_{i=1}^n$. Note that we can write 
\[
\begin{split}
&p(y,d)={f(y|D_i=d,Z_i=1)}\times Pr(D_i=d|Z_i=1),\\
and\quad &q(y,d)={f(y|D_i=d,Z_i=0)}\times Pr(D_i=d|Z_i=0).
\end{split}
\]
For the tail conditions in  Assumption \ref{assump: tail density sign}, I set
{\footnotesize
	\[
	\mathcal{Y}^{ut}_d=\begin{cases}
	[M_u,+\infty) \quad &if\quad  \mathbb{V}_{0.9}(Y_i|D_i=d,Z_i=d)\mathbb{P}(D_i=d|Z_i=d)^2 \ge \mathbb{V}_{0.9}(Y_i|D_i=d,Z_i=1-d)\mathbb{P}(D_i=d|Z_i=1-d)^2, \\
	\varnothing \quad &otherwise,
	\end{cases}
	\]}
where $\mathbb{V}_{0.9}(Y_i|D_i=d,Z_i=z)$ is the conditional empirical variance of $Y_i$, conditioned on $D_i=d,Z_i=z$ and $Y_i$ being on the 10th upper quantile, i.e. $Y_i\ge \mathbb{F}_{Y|D_i=d,Z_i=z}^{-1}(0.9)$.  $\mathbb{P}(D_i=d|Z_i=z)$ is the empirical conditional probability. Intuitively, this is a selection of tail sign in Assumption \ref{assump: tail density sign} based on how fat the tail is. If the true density $p(y,d)$ and $q(y,d)$ has sub-Gaussian tails, larger conditional variances imply fatter tails. For example, $p(y,1)$ and $q(y,1)$ are Gaussian above the 90-th quantile, and the following variance 
\[
{V}_{0.9}(Y_i|D_i=1,Z_i=1){P}(D_i=1|Z_i=1)^2 \ge {V}_{0.9}(Y_i|D_i=1,Z_i=0){P}(D_i=1|Z_i=0)^2 \]
holds for the true conditional variance $V$, the tail of $p(y,1)$ is fatter than $q(y,1)$, and thus $p(y,1)>q(y,1)$ on $[M_u,+\infty)$. Similarly, we can set the lower-end set $\mathcal{Y}^{lt}_d$ as
{\footnotesize
	\[
	\mathcal{Y}^{lt}_d=\begin{cases}
	(-\infty,M_l] \quad &if\quad  \mathbb{V}_{0.1}(Y_i|D_i=d,Z_i=d)\mathbb{P}(D_i=d|Z_i=d)^2 \ge \mathbb{V}_{0.1}(Y_i|D_i=d,Z_i=1-d)\mathbb{P}(D_i=d|Z_i=1-d)^2, \\
	\varnothing \quad &otherwise,
	\end{cases}
	\]
}
where $\mathbb{V}_{0.1}(Y_i|D_i=d,Z_i=z)$ is the conditional empirical variance of $Y_i$, conditioned on $D_i=d,Z_i=z$ and $Y_i$ being on the 10th lower quantile, i.e. $Y_i\le \mathbb{F}_{Y|D_i=d,Z_i=z}^{-1}(0.1)$.  $\mathbb{P}(D_i=d|Z_i=z)$ is the empirical conditional probability. It should be noted that my trimming band $M_l,M_u$ and tail sign $\mathcal{Y}_d^{ut}, \mathcal{Y}_d^{lt}$ are data driven in this empirical application, but Theorem \ref{thm: asymptotic property of LATE} requires these quantities to be known. This is one limitation of my results. 
\end{document}